\newtheorem{vor}{Assumption}[section]
\newtheorem{theorem}[vor]{Theorem}
\newtheorem{lem}[vor]{Lemma}
\newtheorem{cor}[vor]{Corollary}
\theoremstyle{definition}
\newtheorem{defi}[vor]{Definition}
\newtheorem{note}[vor]{Remark}
\numberwithin{equation}{section}
\begin{document}
\title{Validity of Bogoliubov's approximation for translation-invariant Bose gases}
\author{\textsc{Morris Brooks and Robert Seiringer}}
\date{}
\maketitle

\begin{abstract} 
\textsc{Abstract}. We verify Bogoliubov's approximation for translation-invariant Bose gases in the mean field regime, i.e. we prove that the ground state energy $E_N$ is given by $E_N=Ne_\mathrm{H}+\inf \sigma\left(\mathbb{H}\right)+o_{N\rightarrow \infty}(1)$, where $N$ is the number of particles, $e_\mathrm{H}$ is the minimal Hartree energy and $\mathbb{H}$ is the Bogoliubov Hamiltonian. As an intermediate result we show the existence of approximate ground states $\Psi_N$, i.e. states satisfying $\langle H_N\rangle_{\Psi_N}=E_N+o_{N\rightarrow \infty}(1)$, exhibiting complete Bose--Einstein condensation with respect to one of the Hartree minimizers.
\end{abstract}

\section{Introduction and Main Results}

We study the Hamiltonian $H_N$ acting on the Hilbert space $L^2_{\mathrm{sym}}(\mathbb{R}^{N\times d})\simeq\bigotimes_\mathrm{s}^N L^2(\mathbb{R}^{d})$ of $N$ identical bosons in $\mathbb{R}^d$ for $d\geq 1$, given by
\begin{align}
\label{Equation: Hamilton Operator}
H_N:=\sum_{i=1}^N T_i+\frac{1}{N-1}\sum_{i<j}v(x_i-x_j),
\end{align}
where $T$ is a non-negative and translation-invariant operator defined on the single particle space $L^2(\mathbb{R}^d)$ and the interaction potential $v$ is an even function. Typically we will think of $T$ as the non-relativistic energy $T=-\Delta$ or the pseudo relativistic energy $T =\sqrt{m^2-\Delta}-m$, and of the interaction $v$ as being attractive. The most prominent features of this model are the mean field scaling $\frac{1}{N-1}$ of the interaction energy and the invariance of $H_N$ under translations, which especially means that the Hamiltonian $H_N$ describes an unconfined system of $N$ bosons. By choosing a product state $\Psi:=u^{\otimes^N}$ as a test function, we obtain the trivial upper bound on the ground state energy $E_N:=\inf \sigma\left(H_N\right)$ per particle
\begin{align*}
N^{-1}E_N\leq N^{-1}\braket{H_N}_\Psi=\braket{T}_u+\frac{1}{2}\int\int |u(x)|^2 v(x-y)|u(y)|^2 \mathrm{d}x\mathrm{d}y=:\mathcal{E}_\mathrm{H}[u],
\end{align*}
where $\mathcal{E}_\mathrm{H}[u]$ is referred to as the Hartree energy functional. This upper bound is independent of the particle number $N$ due to the scaling by $\frac{1}{N-1}$ of the interaction. It is known under quite general assumptions on $v$ and $T$ that the upper bound 
\begin{align}
\label{Equation: Hartree Energy}
e_\mathrm{H}:=\inf_{\|u\|=1}\mathcal{E}_\mathrm{H}[u]
\end{align}
on the ground state energy per particle is asymptotically correct in the mean field limit $N\rightarrow \infty$, see \cite{LNR}. Furthermore, the Bogoliubov approximation \cite{B} predicts that the next order term in the approximation $E_N\approx N\, e_\mathrm{H}$ is of order one and given by the ground state energy of the corresponding Bogoliubov Hamiltonian $\mathbb{H}$, which is formally the second quantization of the Hessian $\mathrm{Hess}|_{u_0}\mathcal{E}_\mathrm{H}$ at a minimizer $u_0$. In the past decade, this conjecture has been proven for a variety of mean field models \cite{GS,LNSS,NS,S}, and also for systems with more singular interactions \cite{DN,BBSS2,BBSS1,BCS,BSS,NT}.  However, the rigorous verification of Bogoliubov's approximation has so far been restricted to confined systems only. In the case of translation-invariant models, we face the problem that minimizers of the Hartree energy functional $\mathcal{E}_\mathrm{H}$ are not unique and that the Hessian $\mathrm{Hess}|_{u_0}\mathcal{E}_\mathrm{H}$ at a minimizer $u_0$ does not exhibit a gap, i.e. we do not have an inequality of the form $\mathrm{Hess}|_{u_0}\mathcal{E}_\mathrm{H}\geq c$ with $c>0$. Novel ideas and techniques are required in order to deal with these translation-invariance specific problems, which we will develop in the course of this paper allowing us to verify Bogoliubov's prediction $E_N=N\, e_\mathrm{H}+\inf \sigma\left(\mathbb{H}\right)+o_N(1)$ for translation-invariant systems. As an intermediate step, we will construct a sequence of approximate ground states $\Psi_N$ satisfying complete Bose--Einstein condensation, which we believe to be of independent interest.

Note that the situation is different for time-dependent problems, where it is already well-known that fluctuations around a product state $u^{\otimes^N}$ evolve according to a (time-dependent) Bogoliubov operator, even for translation-invariant systems \cite{LNS}.\\

Due to the translation-invariance, it is clear that $H_N$ has no ground state and therefore we have to restrict our attention to sequences of approximate ground states $\Psi_N$. We will use the convention that states $\Psi$ are normed Hilbert space elements, i.e. $\|\Psi\|=1$. In our first result we show the existence of a sequence of approximate ground states $\Psi_N$, with the property that $\Psi_N$ is close to a product state $u_0^{\otimes^N}$ where $u_0$ minimizes the Hartree energy $\mathcal{E}_\mathrm{H}$. In this context, close means that the sequence $\Psi_N$ satisfies complete Bose--Einstein condensation with respect to the state $u_0$, i.e. the corresponding one particle density matrices $\gamma_N^{(1)}$ satisfy $\braket{\gamma_N^{(1)}}_{u_0}\underset{N\rightarrow \infty}{\longrightarrow}1$. In general we define the $k$-particle density matrix $\gamma_\Psi^{(k)}$ corresponding to a state $\Psi\in \bigotimes_\mathrm{s}^N L^2\left(\mathbb{R}^{d}\right)$ by the equation $\mathrm{Tr}\left[\gamma_\Psi^{(k)}\ B\right]=\braket{B\otimes 1\otimes \dots\otimes 1}_\Psi$ for all bounded $k$-particle operators $B$. This means in particular that we use the normalization convention $\mathrm{Tr}\left[\gamma^{(k)}_N\right]=1$. In order to prove complete Bose--Einstein condensation, we need certain assumptions concerning the kinetic energy operator $T$ and the Hartree theory, as well as a relative bound of the interaction potential $v$ in terms of the kinetic energy.\\

\begin{vor}
\label{Assumption: Part I}
The kinetic energy is given by $T:=\left(m^2-\Delta\right)^s-m^{2s}$ with $m> 0$ and $s\in (0,1]$, the interaction potential $v$ satisfies $\lim_{|x|\rightarrow \infty} v(x)=0$ and the chain of inequalities
\begin{align}
\label{Equation: Relative Bounds}
-\lambda T-\Lambda\leq v\leq |v|\leq \Lambda(T+1)
\end{align}
for some $\lambda\in (0,2)$ and $\Lambda\in (0,\infty)$. Furthermore, the Hartree energy defined in Eq.~(\ref{Equation: Hartree Energy}) is strictly negative, i.e. $e_\mathrm{H}<0$, and there exists a real-valued function $u_0\in L^2\left(\mathbb{R}^d\right)$ that minimizes the Hartree energy, i.e. $e_\mathrm{H}=\mathcal{E}_\mathrm{H}[u_0]$, and satisfies $\int_{[x_r\leq t]}|u_0(x)|^2\ \mathrm{d}x=\frac{1}{2}$ if and only if $t=0$, where $x_r$ is the $r$-th component of the vector $x\in \mathbb{R}^d$. Up to a complex phase, all other Hartree minimizers are given by translations of $u_0$, i.e. all minimizers are of the form $e^{i\theta}u_{0,t}$ with $\theta\in [0,2\pi), t\in \mathbb{R}^d$ and $u_{0,t}(x):=u_0(x-t)$.\\
\end{vor}

By the translation-invariance of the Hartree energy, any shift of a Hartree minimizer $u_0(x-t)$ is again a minimizer. Therefore, we can always choose the Hartree minimizer such that it is centered around zero, i.e. such that $\int_{[x_r\leq 0]}|u_0(x)|^2\ \mathrm{d}x=\frac{1}{2}$ for all $r\in \{1,\dots,d\}$. In particular, in case the minimizers $u$ of $\mathcal{E}_H$ satisfy $u>0$, the existence of a $u_0$ satisfying $\int_{[x_r\leq t]}|u_0(x)|^2\ \mathrm{d}x=\frac{1}{2}$ if and only if $t=0$ is always granted. Furthermore,  most of our proofs do not depend on the concrete structure $T=\left(m^2-\Delta\right)^s-m^{2s}$ of the kinetic energy, and it is sufficient to assume instead that the operator $T$ is of the translation-invariant form $T=t(i\nabla)$ for some $t$ with $t(p)\underset{|p|\rightarrow \infty}{\longrightarrow}\infty$ such that the Hartree approximation $\frac{1}{N}E_N\underset{N\rightarrow \infty}{\longrightarrow}e_\mathrm{H}$ as well as the IMS localization formula in Lemma \ref{Lemma: IMS Localization} hold.

With Assumption \ref{Assumption: Part I} at hand, we obtain our first main result Theorem \ref{Theorem: Bose--Einstein condensation of Ground States}, which we will prove in Section \ref{Section: Bose--Einstein condensation of Ground States}.\\

\begin{theorem}
\label{Theorem: Bose--Einstein condensation of Ground States}
Given Assumption \ref{Assumption: Part I}, there exists a sequence of states $\Psi_N\in \bigotimes_\mathrm{s}^N L^2\left(\mathbb{R}^d\right)$ with $\braket{H_N}_{\Psi_N}=E_N+o_{N\rightarrow \infty}(1)$, exhibiting complete Bose--Einstein condensation with respect to the state $u_0$, i.e.
\begin{align}
\label{Equation: Bose--Einstein condensation}
\braket{\gamma_N^{(1)}}_{u_0}\underset{N\rightarrow \infty}{\longrightarrow} 1.
\end{align}
\end{theorem}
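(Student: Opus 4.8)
The plan is to combine the known Hartree asymptotics $\frac{1}{N}E_N \to e_{\mathrm{H}}$ with a localization argument that controls the "escape to infinity" caused by translation invariance. First I would take any sequence of approximate ground states $\Phi_N$ with $\langle H_N \rangle_{\Phi_N} = E_N + o(1)$; by the validity of the Hartree approximation, the one-particle density matrices $\gamma_{\Phi_N}^{(1)}$ are asymptotically Hartree-minimizing in energy, so by standard compactness/lower-semicontinuity arguments (or a quantitative stability estimate for $\mathcal{E}_{\mathrm H}$ near its minimizers) any weak-$*$ limit point of $\gamma_{\Phi_N}^{(1)}$ must be supported on the manifold of Hartree minimizers $\{ e^{i\theta} u_{0,t} : \theta \in [0,2\pi), t \in \mathbb{R}^d\}$. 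The obstruction to concluding BEC directly is that $\gamma_{\Phi_N}^{(1)}$ could spread its mass over many translates $u_{0,t}$, or lose mass to infinity, so $\langle \gamma_{\Phi_N}^{(1)} \rangle_{u_0}$ need not converge to $1$.

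The key idea is to \emph{break the translation symmetry by hand} using the normalization condition in Assumption 1.1, namely that $u_0$ is the unique (up to phase) minimizer with $\int_{[x_r \leq 0]} |u_0|^2 = \tfrac12$ for all $r$. Concretely, I would introduce a center-of-mass-type localization: given $\Phi_N$, use the translation invariance of $H_N$ to shift $\Phi_N$ so that its particle density $\rho_{\Phi_N}(x) = N \gamma_{\Phi_N}^{(1)}(x,x)$ is "centered", e.g. $\int_{[x_r \leq 0]} \rho_{\Phi_N} = N/2$ for each $r$ (possible by the intermediate value theorem applied to each coordinate, after shifting; since $H_N$ commutes with translations this does not change the energy). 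Then define $\Psi_N$ to be this recentered sequence. Now any weak limit point $\gamma$ of $\gamma_{\Psi_N}^{(1)}$ is still supported on the minimizer manifold, but the centering condition — passed to the limit, \emph{if} no mass escapes to infinity — should force the limit to be exactly $|u_0\rangle\langle u_0|$, because $u_{0,t}$ is centered iff $t=0$.

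The main obstacle, and where the real work lies, is ruling out \emph{loss of mass to infinity}: a priori $\gamma_{\Psi_N}^{(1)}$ could converge weakly to $c\, |u_0\rangle\langle u_0|$ with $c<1$, the missing mass $1-c$ having run off to infinity (this is precisely the "no gap in $\mathrm{Hess}\,\mathcal{E}_{\mathrm H}$" / non-compactness phenomenon flagged in the introduction). To exclude this I would use the IMS localization formula (Lemma, referenced as Lemma \ref{Lemma: IMS Localization}) together with the relative bound \eqref{Equation: Relative Bounds} and the decay $v(x)\to 0$: split the $N$ particles according to whether they lie in a ball of radius $R$ or outside, estimate the energy cost of the split, and use a binding inequality of the form $E_N \leq E_{N-k} + E_k - (\text{something})$ together with $e_{\mathrm H}<0$ and strict binding to show that any macroscopic piece of the system sitting far from the bulk strictly raises the energy above $E_N + o(1)$, contradicting approximate optimality. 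Once full mass is retained, the limiting density matrix is $|u_0\rangle\langle u_0|$; upgrading weak convergence of $\gamma_{\Psi_N}^{(1)}$ to the convergence $\langle \gamma_{\Psi_N}^{(1)}\rangle_{u_0} \to 1$ is then immediate since the trace is preserved in the limit. A modified sequence $\Psi_N$ obtained from this procedure (possibly after a further diagonal extraction) satisfies all the claimed properties.
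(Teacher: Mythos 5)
Your overall architecture is right (Hartree asymptotics, de Finetti, IMS-type localization to prevent mass escape, then fix the translation), and the mass-escape part is in the same spirit as the paper's Lemma~\ref{Lemma: Localized Energy}/Corollary~\ref{Corollary: Density Expectation}. The genuine gap is in the centering step. You recenter so that the \emph{one-particle} density satisfies $\int_{[x_r\leq 0]}\rho_{\Psi_N}=N/2$ and then claim that in the limit this forces the de Finetti measure $\mu$ to be $\delta_0$. It does not. After strong convergence the limiting one-particle density is $\int \rho(x-t)\,\mathrm{d}\mu(t)$ with $\rho=|u_0|^2$, and your centering condition only gives the \emph{linear} constraint $\int f_r(-t_r)\,\mathrm{d}\mu_r(t_r)=\tfrac12$, where $f_r(s)=\int_{-\infty}^s\rho_r$. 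Since $f_r$ is a continuous strictly increasing CDF with $f_r(0)=\tfrac12$, this constraint is satisfied by many non-trivial mixtures, for instance $\mu_r=\tfrac12(\delta_{-a}+\delta_{b})$ with $a,b>0$ chosen so that $f_r(a)+f_r(-b)=1$. So the one-body centering cannot rule out that the condensate is spread over several translates $u_{0,t}$, which is exactly the failure mode you need to exclude.

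The paper resolves this by localizing in \emph{configuration space} rather than at the level of the one-particle density: in Lemma~\ref{Lemma: Localized Ground State} it builds $\Psi_N$ supported on configurations $x\in\mathbb{R}^{N\times d}$ whose regularized median $M_{N,k_N}$ (Definition~\ref{Definition: Localization}) is bounded by $\alpha_N\to 0$. This is a genuinely many-body constraint: it guarantees that at least $N/2-k_N$ of the particles have $x_r\leq\alpha_N$ \emph{and} at least $N/2-k_N$ have $x_r\geq-\alpha_N$ simultaneously. Feeding this into a \emph{two-particle} observable (Lemma~\ref{Lemma: Using the median}) produces, after passing to the strong limit of $\gamma_N^{(2)}$ (Lemma~\ref{Lemma: Strong Convergence}), the \emph{quadratic} constraint $\int f_r(t_r)\big(1-f_r(t_r)\big)\,\mathrm{d}\mu_r(t_r)\geq\tfrac14$. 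Since $q(1-q)\leq\tfrac14$ with equality only at $q=\tfrac12$, and by Assumption~\ref{Assumption: Part I} the only $t_r$ with $f_r(t_r)=\tfrac12$ is $t_r=0$, this forces $\mu_r=\delta_0$ (Lemma~\ref{Lemma: Delta Measure}). This is the nonlinear information your linear centering condition is missing, and it is precisely why the median of the configuration, rather than any centering of the mean density, is the right statistic here. You would need to replace your centering step by an argument of this type, controlling the second moment of the empirical centering statistic or an equivalent two-body quantity, to close the proof.
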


Since Assumption \ref{Assumption: Part I} implies the validity of the Hartree approximation in the form $\frac{1}{N}E_N\underset{N\rightarrow \infty}{\longrightarrow}e_\mathrm{H}$, see \cite{LNR}, it is clear that the product state $u_0^{\otimes^N}$, which trivially satisfies perfect Bose--Einstein condensation, approximates the ground state energy to leading order, i.e. $\braket{H_N}_{u_0^{\otimes^N}}=E_N+o_{N\rightarrow \infty}(N)$. In Theorem \ref{Theorem: Bose--Einstein condensation of Ground States} we improve this result by constructing a Bose--Einstein condensate that approximates $E_N$ even up to terms $o_{N\rightarrow \infty}(1)$. Note, however, that Theorem \ref{Theorem: Bose--Einstein condensation of Ground States} claims nothing about the rate of convergence in Eq.~(\ref{Equation: Bose--Einstein condensation}). One can improve this result a posteriori by using the trial states in our proof of the upper bound in Theorem \ref{Theorem: Main Theorem}, which yields for any given sequence $c_N\underset{N\rightarrow \infty}{\longrightarrow}\infty$ a sequence of approximate ground states $\widetilde{\Psi}_N$ satisfying
\begin{align*}
|\braket{\widetilde{\gamma}_N^{(1)}}_{u_0}-1|\leq \frac{c_N}{N}.
\end{align*}
It follows from our proof of the lower bound in Theorem \ref{Theorem: Main Theorem} that this result is optimal in the sense that any sequence with $|\braket{\widetilde{\gamma}_N^{(1)}}_{u_0}-1|=O_{N\rightarrow \infty}\left(\frac{1}{N}\right)$ cannot be a sequence of approximate ground states.

Furthermore it follows from the proof of Theorem \ref{Theorem: Bose--Einstein condensation of Ground States} that for any sequence $c_N\underset{N\rightarrow \infty}{\longrightarrow}\infty$, there exist states $\Psi'_N$ exhibiting complete Bose--Einstein condensation with $\braket{H_N}_{\Psi'_N}\leq E_N+\frac{c_N}{N}$. Again it is a consequence of our proof of the lower bound that this result is optimal in the sense that any sequence with $\braket{H_N}_{\Psi'_N}= E_N+O_{N\rightarrow \infty}\left(\frac{1}{N}\right)$ does not satisfy complete Bose--Einstein condensation.\\

\textbf{Proof strategy of Theorem \ref{Theorem: Bose--Einstein condensation of Ground States}}. With Assumption \ref{Assumption: Part I} at hand, we can apply the results in \cite{LNR} which tell us that the Hartree asymptotics $\frac{1}{N}E_N\underset{N\rightarrow \infty}{\longrightarrow}e_\mathrm{H}$ holds true and that any sequence of approximate ground states $\Psi_N$ has a subsequence such that the $k$-particle density matrices converge weakly to a mixture of not necessarily normed Hartree minimizers. This means that there exists a probability measure $\mu$ supported on functions $u$ with $\|u\|\leq 1$ and $\mathcal{E}_\mathrm{H}[u]=\inf_{\|v\|=\|u\|}\mathcal{E}_\mathrm{H}[v]$, such that the $k$-particle density matrix of the subsequence $\Psi_{N_j}$ satisfies
\begin{align}
\label{Equation: Quantum De Finetti}
\mathrm{Tr}\left[\gamma_{N_j}^{(k)}\ K\right]\underset{j\rightarrow \infty}{\longrightarrow}\int \mathrm{Tr}\left[\big(\! \ket{u}\bra{ u}\! \big)^{\otimes^k} K\right] \mathrm{d}\mu(u)
\end{align}
for any compact $k$ particle operator $K$. The proofs in \cite{LNR} rely on the quantum de Finetti theorem (see also \cite{St,HM}), which identifies states on the infinite symmetric tensor product as the convex hull of product states. In order to prove Theorem \ref{Theorem: Bose--Einstein condensation of Ground States}, we have to construct a sequence of approximate ground states $\Psi_N$ such that the corresponding measure $\mu$ in Eq.~(\ref{Equation: Quantum De Finetti}) is equal to the delta measure $\delta_{u_0}$. In particular this means that $\mu$ has to be supported on the set of normed elements $\|u\|=1$, or equivalently we have to make sure that mass cannot escape to infinity. For confined systems satisfying a binding inequality, it has been shown in \cite{LNR} that $\mu$ is always supported on normed elements. For translation-invariant systems this is no longer the case, since one can always find $y_N\in \mathbb{R}^{d}$ such that $\widetilde{\Psi}_N\underset{N\rightarrow \infty}{\rightharpoonup} 0$ where 
\begin{align*}
\widetilde{\Psi}_N\left(x^{(1)},\dots,x^{(N)}\right):=\Psi_N\left(x^{(1)}-y_N,\dots,x^{(N)}-y_N\right)
\end{align*}
for all $\left(x^{(1)},\dots,x^{(N)}\right)\in \mathbb{R}^{N\times d}$, and therefore the corresponding measure is supported on $\{0\}$ only. While one could  circumvent this issue by factoring out the center-of-mass variable,  we avoid doing this since there is no straightforward analogue of product states and Bose--Einstein condensation in the space of relative coordinates. Alternatively we overcome this problem by localizing a sequence of approximate ground states $\Psi_N$ only to configurations that are centered around zero. It turns out that the median of a configuration $x=\left(x^{(1)},\dots,x^{(N)}\right)\in \mathbb{R}^{N\times d}$, respectively a regularized version of the median, is the right statistical quantity to measure whether a configuration is centered around the origin or not. Furthermore, we will energetically rule out configurations where the mass is split up in two or multiple parts, e.g. we will rule out configurations where $\frac{N}{2}$ particles are very far from the other $\frac{N}{2}$ particles. We conclude that the mass is concentrated at the origin and therefore it does not escape to infinity.

 In order to identify the support of the measure $\mu$ in Eq.~(\ref{Equation: Quantum De Finetti}), note that all Hartree minimizers are up to a complex phase translations of the minimizer $u_0$, which is a function centered around zero. Consequently, up to this complex phase, $u_0$ is the only minimizer with the property of being centered around zero. Using the support property of $\Psi_N$, this already suggests that the measure $\mu$ should be supported on states of the form $\{e^{i\theta}u_0:\theta\in [0,2\pi)\}$ only. Since $\ket{e^{i\theta}u_0}\bra{e^{i\theta}u_0}=\ket{u_0}\bra{u_0}$ defines the same density matrix for all complex phases $e^{i\theta}$, this support property of the measure $\mu$ implies the convergence of the density matrix $\gamma_{N}^{(k)}$ to a single condensate $\big(\! \ket{u_0}\bra{ u_0}\! \big)^{\otimes^k}$.\\

Having a sequence of approximate ground states at hand that satisfies complete Bose--Einstein condensation is a crucial prerequisite in identifying the sub-leading term in the energy asymptotics $E_N=N\ e_\mathrm{H}+o(N)$. In the following, let $u_0,u_1,\dots,u_d,u_{d+1},\dots$ be a real orthonormal basis of $L^2\!\left(\mathbb{R}^d\right)$, where $u_0$ is the Hartree minimizer from Assumption \ref{Assumption: Part I} and $u_1,\dots,u_d$ a basis of the vector space spanned by the partial derivatives $\braket{\partial_{x_1}u_0,\dots,\partial_{x_d}u_0}$. Since the functional $\mathcal{E}_\mathrm{H}$ is invariant under a phase change $u\mapsto e^{i\theta}u$, we can restrict ourself to states $u$ with $\braket{u_0,u}\geq 0$. Then, the Hessian $\mathrm{Hess}|_{u_0}\mathcal{E}_\mathrm{H}$ of the Hartree energy is a real quadratic form defined on $\{u_0\}^\perp\subset L^2\left(\mathbb{R}^d\right)$, and consequently there exist coefficients $Q_{i,j},G_{i,j}\in \mathbb{C}$, $i,j\in \mathbb{N}$, such that $\mathrm{Hess}|_{u_0}\mathcal{E}_\mathrm{H}[z]=\sum_{i,j=1}^\infty \left(Q_{i,j} \overline{z}_i z_j+\overline{G}_{i,j}z_i z_j+G_{i,j}\overline{z}_i\overline{z}_j\right)$, where $z_i$ are the coordinates of $z\in \{u_0\}^\perp$. In order to define the Bogoliubov operator $\mathbb{H}$, let $a_i,a_i^\dagger$ be the annihilation/creation operators corresponding to the state $u_i\in L^2\left(\mathbb{R}^d\right)$. Following \cite{LNSS} we formally define $\mathbb{H}$ as the second quantization of the Hessian $\mathrm{Hess}|_{u_0}\mathcal{E}_\mathrm{H}$, i.e. 
\begin{align}
\label{Equation: Bogoliubov}
\mathbb{H}:=\sum_{i,j=1}^\infty \left(Q_{i,j}\ a^\dagger_i a_j+\overline{G}_{i,j}\ a_i  a_j+G_{i,j}\ a_i^\dagger a_j^\dagger\right).
\end{align}
For a rigorous construction see Definition \ref{Definition: Bogoliubov Operator}.

Note that due to the translation-invariance, the Hessian $\mathrm{Hess}|_{u_0}\mathcal{E}_\mathrm{H}$ is degenerate in the directions $u_j$ for $j\in \{1,\dots,d\}$, i.e. $\mathrm{Hess}|_{u_0}\mathcal{E}_\mathrm{H}[u_j]=0$. The following Assumption makes sure that $\mathrm{Hess}|_{u_0}\mathcal{E}_\mathrm{H}$ is non-degenerate in all other directions.
\begin{vor}
\label{Assumption: Part II}
The partial derivatives of $u_0$ are in the form domain of $T$, % i.e. we have $\partial_{x_1}u_0,\dots,\partial_{x_d}u_0\in \mathrm{dom}\left(T\right)$,
 and there exists a constant $\eta>0$ such that
\begin{align}
\label{Equation: Non-degenerate Hessian}
\mathrm{Hess}|_{u_0}\mathcal{E}_\mathrm{H}[z]\geq \eta\, \|z\|^2
\end{align}
for all $z$ of the form $z=i\sum_{j=1}^d s_j u_j+z_{>d}$ with $s_j\in \mathbb{R}$ and $z_{>d}\in \{u_0,\partial_{x_1}u_0,\dots,\partial_{x_d}u_0\}^\perp$. Furthermore, the Hartree minimizer $u_0$ is an element of $H^2(\mathbb{R}^d)$.\\
\end{vor}

With the Assumption \ref{Assumption: Part II} at hand, we arrive at our second main Theorem, which identifies the sub-leading term in the energy asymptotics as the ground state energy $\inf \sigma\left(\mathbb{H}\right)$ of the Bogoliubov operator $\mathbb{H}$.\\

\begin{theorem}
\label{Theorem: Main Theorem}
Let $E_N$ be the ground state energy of the Hamiltonian $H_N$ defined in Eq.~(\ref{Equation: Hamilton Operator}), $e_\mathrm{H}$ the Hartree energy defined in Eq.~(\ref{Equation: Hartree Energy}) and let $\mathbb{H}$ be the Bogoliubov operator defined in Eq.~(\ref{Equation: Bogoliubov}). Given Assumption \ref{Assumption: Part I} and Assumption \ref{Assumption: Part II}, we have
\begin{align}
\label{Equation: Energy asymptotics}
E_N=N\, e_\mathrm{H}+\inf \sigma\left(\mathbb{H}\right)+o_{N\rightarrow \infty}\left(1\right).\\
\nonumber
\end{align}
\end{theorem}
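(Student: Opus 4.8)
The plan is to prove the two matching bounds $E_N \leq N\,e_\mathrm{H} + \inf\sigma(\mathbb{H}) + o(1)$ and $E_N \geq N\,e_\mathrm{H} + \inf\sigma(\mathbb{H}) + o(1)$ separately, using the condensation result of Theorem \ref{Theorem: Bose--Einstein condensation of Ground States} as the essential input for the lower bound.

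\textbf{Upper bound.} For the upper bound I would construct an explicit trial state built on the condensate $u_0$. The natural choice is a Bogoliubov-type state: take a ground state (or near-ground state) $\Phi$ of the Bogoliubov Hamiltonian $\mathbb{H}$ on the excitation Fock space over $\{u_0\}^\perp$, truncate it to particle number $\lesssim c_N$ for a slowly diverging sequence $c_N \to \infty$, and map it back into $\bigotimes_\mathrm{s}^N L^2(\mathbb{R}^d)$ via the (inverse) excitation map $U_N$ of Lewin--Nam--Serfaty--Solovej, i.e.\ $\Psi_N := U_N^* (\text{truncated } \Phi)$. Expanding $\braket{H_N}_{\Psi_N}$ in powers of $N^{-1}$ using that $u_0$ solves the Hartree equation (so the linear term vanishes), the zeroth order gives $N e_\mathrm{H}$, the next order gives $\braket{\mathbb{H}}_\Phi$, and the remainder is controlled by the relative bound $|v| \leq \Lambda(T+1)$ together with the particle-number cutoff. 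A subtlety here, absent in the confined case, is the degeneracy of $\mathrm{Hess}|_{u_0}\mathcal{E}_\mathrm{H}$ in the translation directions $u_1,\dots,u_d$: one must check that $\inf\sigma(\mathbb{H})$ is still finite despite these zero modes (Assumption \ref{Assumption: Part II} guarantees non-degeneracy in the complementary directions, and the zero modes contribute a free quantized momentum which does not lower the infimum), and one should project out or carefully handle the center-of-mass directions when building the trial state so that the cutoff does not interact badly with the soft modes.

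\textbf{Lower bound.} This is where Theorem \ref{Theorem: Bose--Einstein condensation of Ground States} is indispensable. Start from a sequence of approximate ground states $\Psi_N$ exhibiting complete BEC with respect to $u_0$, so that $\braket{\gamma_N^{(1)}}_{u_0} \to 1$. Apply the excitation map $U_N$ to pass to a state $\xi_N$ on the truncated Fock space over $\{u_0\}^\perp$ with $\braket{\mathcal{N}}_{\xi_N} = N(1 - \braket{\gamma_N^{(1)}}_{u_0}) = o(N)$. Conjugating $H_N$ by $U_N$ and expanding, one obtains $U_N H_N U_N^* = N e_\mathrm{H} + \mathbb{H} + \mathcal{E}_N$, where $\mathbb{H}$ is the Bogoliubov Hamiltonian and $\mathcal{E}_N$ collects terms that are either higher order in $N^{-1}$ or vanish in expectation against states with $\braket{\mathcal{N}} = o(N)$. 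The relative bounds in \eqref{Equation: Relative Bounds} (with $\lambda < 2$, crucial for stability) provide the operator estimates needed to bound $\mathcal{E}_N$ and to localize the number of excitations. Taking expectation in $\xi_N$ and using $\inf\sigma(\mathbb{H}) \leq \braket{\mathbb{H}}_{\xi_N}$ would naively finish — but $\mathbb{H}$ is not bounded below by a gap, and worse, the zero modes $u_1,\dots,u_d$ mean one cannot simply drop $\braket{\mathbb{H} - \inf\sigma(\mathbb{H})}_{\xi_N} \geq 0$ with good control.

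\textbf{Main obstacle.} The hard part is precisely the absence of a spectral gap in $\mathrm{Hess}|_{u_0}\mathcal{E}_\mathrm{H}$ along the translation directions. In the confined case, the a priori bound on excitations is bootstrapped from a gap; here that bootstrap fails in $d$ directions. I expect the resolution to mirror the strategy already used for Theorem \ref{Theorem: Bose--Einstein condensation of Ground States}: decompose the excitation space as (soft modes spanned by $u_1,\dots,u_d$) $\oplus$ (the rest), and treat the soft-mode degrees of freedom as an emergent quantized center-of-mass momentum. One shows that the full many-body Hamiltonian, restricted to the configurations centered at the origin (via the regularized-median localization of Theorem \ref{Theorem: Bose--Einstein condensation of Ground States}), effectively fixes the total momentum near zero, which pins down the soft-mode contribution to $\mathbb{H}$ and recovers an effective coercivity $\mathbb{H} \geq \inf\sigma(\mathbb{H}) + \eta'\,\mathcal{N}_{>d} - o_N(1)$ on the relevant sector. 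Combined with the IMS localization (Lemma \ref{Lemma: IMS Localization}) to patch together the centered and non-centered regions at negligible energy cost, this closes the lower bound. Matching the two bounds then yields \eqref{Equation: Energy asymptotics}.
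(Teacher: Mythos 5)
Your plan for the upper bound is essentially the paper's (Theorem \ref{Theorem: Upper Bound}): take a near-minimizer $\Phi$ of $\mathbb{H}$ living in $\mathcal{F}_{\leq M}$ for some fixed $M$, pull back via $U_N^{-1}$, and expand using the Hartree equation to kill the linear terms. Your worry about the zero modes wrecking the truncation is in fact benign here, since $\mathbb{H}$ is degenerate in $q$ but still has a well-defined $\inf\sigma(\mathbb{H})>-\infty$ (the paper's Theorem \ref{Theorem: Bogoliubov Ground State Energy}) and Lemma \ref{Lemma: Upper Bound Bogoliubov} delivers the needed $O\bigl((M/N)^{3/2}\bigr)$ error without any soft-mode surgery. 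So this part is fine.

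\textbf{Lower bound: a genuine gap.} You have correctly located the hard part — after conjugating by $U_N$, the residuum $U_N H_N U_N^{-1} - Ne_\mathrm{H} - \mathbb{H}$ contains cubic (and higher) terms in the soft modes, schematically $q_j^3$, which cannot be absorbed by $\mathbb{H}$ (degenerate in those directions) nor by $\frac{1}{N}\mathcal{N}$ (BEC gives only $\braket{\mathcal{N}} = o(N)$, not $O(1)$). But the resolution you propose does not close this gap. The regularized-median localization of Theorem \ref{Theorem: Bose--Einstein condensation of Ground States} constrains the real-space support of the wavefunction; it does not give operator control of the excitation-space quantities $q_j$ (these are $\sim N^{-1/2}\mathfrak{Re}\,a_{u_j}$, and a constraint on $\mathrm{med}(x^{(1)}_r,\dots,x^{(N)}_r)$ does not translate into a bound on $\braket{q_j^3}_{U_N\Psi_N}$, let alone the operator-inequality version one needs). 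Moreover, your claimed conclusion $\mathbb{H} \geq \inf\sigma(\mathbb{H}) + \eta'\,\mathcal{N}_{>d} - o_N(1)$ bears on the modes $>d$ only and would not touch $q_j^3$ anyway, since those live entirely in the modes $1,\dots,d$. In short: the obstacle is in the \emph{residuum}, not in $\mathbb{H}$, and a coercivity estimate for $\mathbb{H}$ alone cannot repair it.

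\textbf{What the paper actually does.} The missing ingredient is a second unitary $\mathcal{W}_N$ (Definition \ref{Definition: Unitary Transformation}), a Gross-type transformation that is the quantization of a classical change of variables $F$ (Eq.~\eqref{Equation: Definition F}) which \emph{flattens the manifold of Hartree minimizers}: $F$ is built so that $\iota\circ F\bigl(\sum_{j\le d}t_j u_j\bigr) = u_{0,\lambda^{-1}(t)}$, a genuine Hartree minimizer, for all small $t$. Classically this yields $\mathcal{E}_\mathrm{H}[(\iota\circ F)(z)] \geq e_\mathrm{H} + (1-\epsilon)\mathrm{Hess}|_{u_0}\mathcal{E}_\mathrm{H}[z]$ (Eq.~\eqref{Inequality: Commutative}), and quantizing gives Theorem \ref{Theorem: Decomposition}: $\left(\mathcal{W}_N U_N\right) N^{-1}H_N \left(\mathcal{W}_N U_N\right)^{-1} = e_\mathrm{H} + N^{-1}\mathbb{H} + o_*(\mathbb{T}_N)$, where the transformed residuum has \emph{no} bad $q$-dependence and is controlled by $\mathbb{T}_N = p^\dagger p + b_{>d}^\dagger(T+1)b_{>d} + \frac{1}{N}$, which in turn is dominated by $\mathbb{H}$ via Theorem \ref{Theorem: Bogoliubov Ground State Energy}. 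The BEC and the IMS cutoff on $\mathcal{N}$ (your first steps) are still needed to land in the regime $M/N \to 0$ where the $o_*$ estimate applies, but by themselves they do not fix the degeneracy. Without $\mathcal{W}_N$ (or an equivalent mechanism that removes the soft-mode cubic terms from the residuum), the lower bound does not close.
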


Examples of systems satisfying both Assumptions \ref{Assumption: Part I} and \ref{Assumption: Part II}, and hence our Theorem \ref{Theorem: Main Theorem} applies to, are as follows.\\

\textbf{Example (I)}. Let us first consider a system of $N$ non-relativistic bosons in $\mathbb{R}^3$ interacting with each other via a Newtonian potential
\begin{align*}
H_N:=-\sum_{i=1}^N \Delta_i-\frac{g}{N-1}\sum_{i<j}\frac{1}{|x_i-x_j|}
\end{align*}
with $g>0$. Existence and uniqueness of the Hartree minimizer $u_0$, in the sense of Assumption \ref{Assumption: Part I}, have been shown in \cite{Li}. Moreover, $u_0$ is strictly positive and smooth, hence satisfies all the other requirements of Assumptions~\ref{Assumption: Part I} and~\ref{Assumption: Part II}. The non-degeneracy of the Hessian follows from the results in \cite{Le} by standard arguments, see for instance \cite{FLS}. Furthermore, it is clear by a scaling argument that $e_\mathrm{H}<0$ and that we can bound the interaction energy in terms of the kinetic energy by $\frac{1}{|x|}\leq -\epsilon \Delta+\frac{1}{4\epsilon}$ for all $\epsilon>0$.\\

\textbf{Example (II)}. As a second example let us consider a system of $N$ pseudo-relativistic bosons in $\mathbb{R}^3$ with positive mass $m>0$, interacting with each other via a Newtonian potential
\begin{align*}
H_N:=\sum_{i=1}^N \left(\sqrt{m^2-\Delta_i}-m\right)-\frac{g}{N-1}\sum_{i<j}\frac{1}{|x_i-x_j|},
\end{align*}
where we assume that the coupling strength satisfies $g\in (0,g_*)$ for a suitable positive constant $g_*>0$. It has been shown in \cite{LY} that there exists a Hartree minimizer $u_0$ as long as the coupling $g$ is below a critical value, in which case the Hartree approximation $\lim_{N\rightarrow \infty} N^{-1}E_N=e_\mathrm{H}$ holds true. The chain of operator inequalities in Assumption \ref{Assumption: Part I} holds as long as the coupling is below the critical value $\frac{4}{\pi}$, see \cite{H,K}. By restricting the attention to possibly smaller couplings $g\in (0,g_*)$ it has been shown in \cite{Le,GZ} that minimizers $u_0$ are unique in the sense of Assumption \ref{Assumption: Part I}. Furthermore it follows from the results in \cite{Le,GZ} that the Hessian is non-degenerate in the sense of Assumption \ref{Assumption: Part II} for couplings $g$ below a critical value. We will verify this explicitly in Appendix \ref{Appendix: Non-degenerate Hessian}, using an argument similar to the one in \cite{FLS} for non-relativistic systems. (The argument in \cite{FLS} is based on scaling the coordinates and hence not directly applicable in the pseudo-relativistic case.) \\

\textbf{Example (III)}. As a third example let us consider the exactly solvable model of $N$ non-relativistic bosons on the real line $\mathbb{R}$, interacting with each other via an attractive delta potential
\begin{align*}
H_N:=-\sum_{i=1}^N \partial_i^2-\frac{\lambda}{N-1}\sum_{i<j}\delta(x_i-x_j),
\end{align*}
where $\lambda>0$, see \cite{M} for an explicit expression of the ground state energy. In this case the Hartree energy $\mathcal{E}_\mathrm{H}$ is given by
\begin{align*}
\mathcal{E}_\mathrm{H}[u]=\int_{-\infty}^\infty |u'(x)|^2\mathrm{d}x-\frac{\lambda}{2}\int_{-\infty}^\infty |u(x)|^4\mathrm{d}x.
\end{align*}
For $d=1$ we have $\delta\leq -\epsilon\, \partial^2+\frac{1}{4\epsilon}$ for all $\epsilon>0$ in the sense of quadratic forms, and therefore Eq.~(\ref{Equation: Relative Bounds}) in Assumption \ref{Assumption: Part I} holds. By a scaling argument it is clear that $e_\mathrm{H}<0$ and minimizers of the Hartree energy are unique in the sense of Assumption \ref{Assumption: Part I}, see \cite{Kw} where the uniqueness of solutions to the corresponding Euler-Lagrange equation % $u''-\mu u+u^3=0$, $\mu>0$, 
is verified. Furthermore the coercivity assumption in Eq.~(\ref{Equation: Non-degenerate Hessian}) is a consequence of the slightly different coercivity result in \cite{W} (arguing, e.g., as in Appendix~\ref{Appendix: Non-degenerate Hessian}).\\

We remark that in Examples (I) and (III), the value of the coupling constant, and hence also the factor $1/(N-1)$ in front of the interaction term, is irrelevant, since it can be replaced by any other value by a simple scaling of the coordinates. This does not apply to Example (II), however. \\

\textbf{Proof strategy of Theorem \ref{Theorem: Main Theorem}}. We will verify the upper bound in our main result (\ref{Equation: Energy asymptotics}) analogously to the proof of the energy asymptotics for confined systems in \cite{LNSS}. The more difficult lower bound will be based on the correspondence between the Hartree energy $\mathcal{E}_\mathrm{H}$ and the Hamiltonian $H_N$. This correspondence becomes evident when we rewrite $H_N$ in the language of second quantization. For this purpose, let us define the rescaled creation operators $b_j^\dagger:=\frac{1}{\sqrt{N}}a_{u_j}^\dagger$, where we suppress the $N$ dependence in our notation for simplicity. Then we can write
\begin{align}
\label{Equation: H_N in second quantization}
N^{-1}H_N=\sum_{i,j=0}^\infty T_{i,j}\ b_i^\dagger b_j+\frac{N}{N-1}\frac{1}{2}\sum_{ij,k\ell}\hat{v}_{ij,k\ell}\ b_i^\dagger b_j^\dagger b_k b_{\ell},
\end{align}
where $T_{i,j}$ are the matrix entries of the operator $T$ with respect to the basis $\{u_i:i\in \mathbb{N}_0\}$ and $\hat{v}_{ij,k\ell}$ are the ones of the two body multiplication operator $\hat{v}=v(x-y)$ with respect to the basis $\{u_i\otimes u_j:i,j\in \mathbb{N}_0\}$. Up to the factor $\frac{N}{N-1}$, the Hartree energy $\mathcal{E}_\mathrm{H}[u]$ 
\begin{align*}
\mathcal{E}_\mathrm{H}\left[u\right]=\sum_{i,j=0}^\infty T_{i,j}\ \overline{c_i}\ c_j+\frac{1}{2}\sum_{ij,k\ell}\hat{v}_{ij,k\ell}\ \overline{c_i}\ \overline{c_j}\ c_k\ c_{\ell}
\end{align*}
is represented by the same symbolic expression as in Eq.~(\ref{Equation: H_N in second quantization}), i.e. we plug in the complex numbers $c_i$ instead of the operators $b_i$. Before investigating the next order term in the energy asymptotics, let us discuss the next order expansion of the commutative counterpart $\mathcal{E}_\mathrm{H}[u]= e_\mathrm{H}+o\left(\|u-u_0\|\right)$, which is given by the Hessian of the functional $\mathcal{E}_\mathrm{H}$. Since the Hartree energy is defined on the infinite dimensional manifold $\{u\in L^2\left(\mathbb{R}^d\right):\|u\|=1,\braket{u_0,u}\geq 0\}\subset L^2\left(\mathbb{R}^d\right)$, it is convenient to introduce the embedding
\begin{align}
\label{Equation: Embedding}
\iota:\begin{cases}\{z\in \{u_0\}^\perp:\|z\|\leq 1\}\longrightarrow \{u\in L^2\left(\mathbb{R}^d\right):\|u\|=1,\braket{u_0,u}\geq 0\},\\
\ \ z\mapsto \iota(z):=\sqrt{1-\|z\|^2}\ u_0+z.\end{cases}
\end{align}
Using the chart $\iota$, we can express the Hessian as $\mathrm{Hess}|_{u_0}\mathcal{E}_\mathrm{H}=D^2|_0 \left(\mathcal{E}_\mathrm{H}\circ \iota\right)$ and the second order expansion at $z=0$ is given by
\begin{align*}
\mathcal{E}_\mathrm{H}[\iota(z)]=e_\mathrm{H}+\mathrm{Hess}|_{u_0}\mathcal{E}_\mathrm{H}[z]+o\left(\|z\|^2\right).
\end{align*}
In contrast to confined systems, the Hessian for translation-invariant systems is always degenerate in the directions $u_1,\dots,u_d$, i.e. $\mathrm{Hess}|_{u_0}\mathcal{E}_\mathrm{H}\left[u_j\right]=0$ for $j\in \{1,\dots,d\}$. It is important to observe that the manifold of minimizers $\mathcal{M}:=\{z: \mathcal{E}_\mathrm{H}[\iota(z)]=e_\mathrm{H}\}$ is not contained in the null space of the Hessian $\{z:\mathrm{Hess}|_{u_0}\mathcal{E}_\mathrm{H}[z]=0\}$. Therefore, we do not have the crucial estimate $\mathcal{E}_\mathrm{H}[\iota(z)]\geq e_\mathrm{H}+(1-\epsilon)\mathrm{Hess}|_{u_0}\mathcal{E}_\mathrm{H}[z]$, $0<\epsilon<1$, not even in an arbitrary small neighborhood of zero. In order to obtain such an inequality, we will introduce yet another transformation $F$ on the ball $\{z\in \{u_0\}^\perp:\|z\|\leq 1\}$, such that $D|_0 F$ is the identity and such that $F$ flattens the manifold of minimizers $\mathcal{M}$, i.e. $\mathcal{E}_\mathrm{H}\left[(\iota\circ F)\left(\sum_{j=1}^d t_j u_j\right)\right]=e_\mathrm{H}$ for all $t_j\in \mathbb{R}$. For a concrete construction of $F$ see Eq.~(\ref{Equation: Definition F}) in Section \ref{Section: Energy Asymptotic}. Under the assumption that the Hessian is only degenerate in the directions $u_j$, see Assumption \ref{Assumption: Part II}, we obtain for any fixed $\epsilon>0$ and $z$ small enough the important estimate
\begin{align}
\label{Inequality: Commutative}
\mathcal{E}_\mathrm{H}[\left(\iota\circ F\right)(z)]\geq e_\mathrm{H}+(1-\epsilon)\mathrm{Hess}|_{u_0}\mathcal{E}_\mathrm{H}[z].
\end{align} 
Returning to the Hamiltonian $H_N$, we will introduce non-commutative counterparts to the embedding $\iota$ and the transformation $F$. The counterpart to $\iota$ is the excitation map $U_N$ introduced in \cite{LNSS}, where it has already been used to verify the next order approximation of the ground state energy for confined systems. It is defined as
\begin{align}
\label{Equation: Excitation Map}
U_N\left(u_0^{\otimes^{i_0}}\otimes_\mathrm{s} u_1^{\otimes^{i_1}}\otimes_\mathrm{s}\dots\otimes_\mathrm{s} u_m^{\otimes^{i_m}}\right):=u_1^{\otimes^{i_1}}\otimes_\mathrm{s}\dots\otimes_\mathrm{s} u_m^{\otimes^{i_m}}
\end{align}
for non-negative integers $i_0+\dots+i_m=N$, mapping the $N$ particle space $\bigotimes_\mathrm{s}^N L^2\left(\mathbb{R}^d\right)$ into the truncated Fock space $\mathcal{F}_{\leq N}\left(\{u_0\}^\perp\right):=\bigoplus_{n\leq N}\bigotimes_\mathrm{s}^n \{u_0\}^\perp$ over modes orthogonal to $u_0$, where the symmetric tensor product $\otimes_\mathrm{s}$ is defined as
\begin{align*}
\psi_k\! \otimes_s\! \psi_\ell\! \left(\! x^{(1)},\dots,x^{(k+\ell)}\! \right)\! : =\! \frac{1}{\sqrt{\ell! k!(k\! +\! \ell)!}}\!  \sum_{\sigma\in S_{k+\ell}}\! \! \! \! \psi_k\! \left(\! x^{(\sigma_1)},\dots,x^{(\sigma_k)}\! \right)\! \psi_\ell\! \left(\! x^{(\sigma_{k+1})},\dots,x^{(\sigma_{k+\ell})}\! \right)
\end{align*}
for $\psi_k\in \bigotimes_\mathrm{s}^k L^2\!\left(\mathbb{R}^3\right)$ and $\psi_\ell\in \bigotimes_\mathrm{s}^\ell L^2\!\left(\mathbb{R}^3\right)$, and $S_{n}$ is the set of permutations on $\{1,\dots,n\}$. Regarding the transformation $F$, we construct the counterpart $\mathcal{W}_N$ in Definition \ref{Definition: Unitary Transformation} as a certain transformation reminiscent of the Gross transformation in \cite{G,N}, operating on the space $\mathcal{F}\left(\{u_0\}^\perp\right)$. Based on these correspondences and the observation that the Bogoliubov operator is the non-commutative analogue of the Hessian $\mathrm{Hess}|_{u_0}\mathcal{E}_\mathrm{H}$, we obtain the following inequality analogous to Eq.~(\ref{Inequality: Commutative})
\begin{align}
\label{Inequality: Non-Commutative}
\left(\mathcal{W}_N U_N\right) N^{-1}H_N \left(\mathcal{W}_N U_N\right)^{-1}\gtrsim e_\mathrm{H}+(1-\epsilon)N^{-1}\mathbb{H}.
\end{align}
We write $\gtrsim$ for two reasons: There are errors of order $o\left(\frac{1}{N}\right)$ coming from the non-commutative nature of $H_N$; moreover Eq.~(\ref{Inequality: Non-Commutative}) only holds for states $\Psi$ that satisfy a strengthened version of Bose--Einstein condensation of the form $U_N \Psi \in \mathcal{F}_{\leq M_N}\left(\{u_0\}^\perp\right)$ with $M_N\ll N$, which corresponds to the fact that Inequality (\ref{Inequality: Commutative}) only holds for small $z$. The rigorous verification of inequality (\ref{Inequality: Non-Commutative}) will be the content of Sections \ref{Section: Energy Asymptotic} and \ref{Section: Results in the transformed picture}.\\

Our construction of $\mathcal{W}_N$ and the proof of Inequality (\ref{Inequality: Non-Commutative}) do not rely on the specific structure of $H_N$ or $L^2(\mathbb{R}^d)$, and they can be generalized for various mean field models with continuous symmetries. The essential assumption is that the dimension of the symmetry group agrees with the nullity of the Hessian, i.e. the Hessian is as non-degenerate as possible in the presence of a continuous symmetry, see Assumption \ref{Assumption: Part II}.\\

\textbf{Outline}. The paper is structured as follows. In Section \ref{Section: Bose--Einstein condensation of Ground States} we construct a sequence of approximate ground states satisfying complete Bose--Einstein condensation, which verifies our first main Theorem \ref{Theorem: Bose--Einstein condensation of Ground States}. The methods and results of Section \ref{Section: Bose--Einstein condensation of Ground States} can be read independently of the rest of the paper, which is dedicated to the proof of our second main Theorem \ref{Theorem: Main Theorem}. In Section \ref{Secction: Fock Space Formalism}, we will introduce the relevant Fock spaces as well as a useful notation for second quantized operators, which we believe to be intuitive and natural for our problem. With the basic notions at hand, we will follow the strategy in \cite{LNSS} and reformulate our problem in a Fock space language using the excitation map $U_N$. In Section \ref{Section: Energy Asymptotic} we will discuss the energy asymptotics of $H_N$, starting with a precise definition of the Bogoliubov operator $\mathbb{H}$ in Subsection \ref{Subsection: Construction of the Bogoliubov Operator}, the verification of the upper bound in Subsection \ref{Subsection: Upper Bound} and the proof of the lower bound in Subsection \ref{Subsection: Lower Bound}, up to the proof of the main technical inequality Eq.~(\ref{Inequality: Non-Commutative}). The proof of the latter is the content of Section \ref{Section: Results in the transformed picture}.

\section{Bose--Einstein Condensation of Ground States}
\label{Section: Bose--Einstein condensation of Ground States}
In this section we will prove Theorem \ref{Theorem: Bose--Einstein condensation of Ground States} by constructing a sequence $\Psi_N$ of approximate ground states satisfying complete Bose--Einstein condensation. The concrete construction of $\Psi_N$ will be part of Subsection \ref{Subsection: Localization of the Ground State}, where we introduce a suitable localization method and verify that mass does not escape to infinity. In the following Subsection \ref{Subsection: Convergence to a single Condensate}, we will use this to verify complete Bose--Einstein condensation of the sequence $\Psi_N$.

\subsection{Localization of the Ground State}
\label{Subsection: Localization of the Ground State}
In the following we are constructing a sequence of states $\Psi_N$, i.e. elements satisfying $\|\Psi_N\|=1$, localized only to configurations $x\in \mathbb{R}^{N\times d}$ centered at zero, such that $\braket{H_N}_{\Psi_N}=E_N+o_{N\rightarrow \infty}(1)$. For such a sequence we will verify that mass cannot escape to infinity. As it turns out, the regularized median $M_N$, which we will define in the subsequent Definition \ref{Definition: Localization}, is the right statistical quantity to measure the center 
\begin{align*}
x_\mathrm{center}:=\left(M_{N,k}\left(x_1^{(1)},\dots,x_1^{(N)}\right),\dots,M_{N,k}\left(x_d^{(1)},\dots,x_d^{(N)}\right)\right)\in \mathbb{R}^d
\end{align*}
of a configuration $x=\left(x^{(1)},\dots,x^{(N)}\right)\in \mathbb{R}^{N\times d}$, where $x^{(j)}=\left(x^{(j)}_1,\dots,x^{(j)}_d\right)\in \mathbb{R}^d$ is the coordinate vector of the $j$-th particle.

\begin{defi}[Localization]
\label{Definition: Localization}
Given $N\in \mathbb{N}$ and $k$ such that $k+\frac{N}{2}\in \mathbb{N}$, we define the regularized median $M_{N,k}:\mathbb{R}^N\longrightarrow \mathbb{R}$ as the unique permutation-invariant function that is defined for all $x^{(1)}\leq \dots\leq x^{(N)}$ as
\begin{align*}
M_{N,k}\left(x^{(1)},\dots,x^{(N)}\right):=\frac{1}{2k+1}\sum_{j=\frac{N}{2}-k}^{\frac{N}{2}+k}x^{(j)}.
\end{align*}
\end{defi}

In the IMS-type estimate of the following Lemma \ref{Lemma: IMS Localization}, which has been proven in \cite[Lemma 7]{LL}, we will make use of the specific structure of the operator $T=\left(m^2-\Delta\right)^s-m^{2s}$. Note that this is the only place where the specific structure is relevant for us.

\begin{lem}
\label{Lemma: IMS Localization}
Let $T=\left(m^2-\Delta\right)^s-m^{2s}$ be as in Assumption \ref{Assumption: Part I} and let $\{\chi_i:i\in I\}$ be a family of $W^{1,\infty}\! \left(\mathbb{R}^d\right)$ functions with $\sum_i \chi_i^2=1$. With the definition $C:=m^{2(s-1)}s$ we have for all states $u\in L^2\left(\mathbb{R}^d\right)$ 
\begin{align*}
\sum_{i\in I} \braket{T}_{\chi_i u}\leq \braket{T}_u+C\ \left\| \sum_{i\in I}|\nabla\chi_i|^2 \right\|_\infty.
\end{align*}
\end{lem}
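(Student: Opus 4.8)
The plan is to reduce the fractional case $s\in(0,1)$ to the classical IMS identity for a shifted Laplacian by means of a Balakrishnan-type integral representation of $A^s$, where $A:=m^2-\Delta\geq m^2>0$ and $T=A^s-m^{2s}$. Two preliminary reductions come first. If $\langle T\rangle_u=+\infty$ there is nothing to prove, so we may assume $u\in H^s(\mathbb{R}^d)$, the form domain of $T$; then each $\chi_iu\in H^s$ as well, since multiplication by a $W^{1,\infty}$ function is bounded on $H^{\sigma}$ for $\sigma\in[0,1]$. The boundary value $s=1$ is the standard IMS formula: the operator identity $\sum_i\chi_i(-\Delta)\chi_i=-\Delta+\sum_i|\nabla\chi_i|^2$, valid because $\sum_i\chi_i^2=1$ forces $\sum_i\chi_i\nabla\chi_i=0$, gives the claim with constant $1$, which is the value of $s\,m^{2(s-1)}$ at $s=1$.

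For $s\in(0,1)$ I would use the representation
\[
A^s=\frac{\sin(\pi s)}{\pi}\int_0^\infty\frac{A}{A+t}\,t^{s-1}\,\mathrm{d}t,\qquad \frac{A}{A+t}=1-t\,(A+t)^{-1},
\]
applied to $\langle A^s\rangle_u$ and to each $\langle A^s\rangle_{\chi_iu}$; since $\frac{A}{A+t}\geq 0$, Tonelli's theorem lets one interchange $\sum_i$ with $\int_0^\infty$. Because $T=A^s-m^{2s}$ and $\sum_i\|\chi_iu\|^2=\|u\|^2$, the $m^{2s}$-terms and the $\|u\|^2$-terms cancel, and the problem reduces to bounding, for each fixed $t>0$,
\[
\langle B_t^{-1}\rangle_u-\sum_i\langle B_t^{-1}\rangle_{\chi_iu},\qquad B_t:=A+t=m^2+t-\Delta,
\]
from above. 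The one genuine step is here. For $h$ in the form domain of $B_t$ one has the elementary inequality $\langle\psi,B_t^{-1}\psi\rangle\geq 2\,\mathrm{Re}\,\langle\psi,h\rangle-\langle h,B_th\rangle$, obtained by expanding $\|B_t^{-1/2}\psi-B_t^{1/2}h\|^2\geq0$. Applying it with $\psi=\chi_iu$ and $h=h_i:=\chi_i\,B_t^{-1}u$ and summing over $i$: using $\sum_i\chi_i^2=1$ one gets $\sum_i2\,\mathrm{Re}\,\langle\chi_iu,h_i\rangle=2\langle u,B_t^{-1}u\rangle$, while a direct computation with $\sum_i\chi_i^2=1$, $\sum_i\chi_i\nabla\chi_i=0$ and $W:=\sum_i|\nabla\chi_i|^2$ gives $\sum_i\langle h_i,B_th_i\rangle=\langle u,B_t^{-1}u\rangle+\langle B_t^{-1}u,W\,B_t^{-1}u\rangle$. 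Hence
\[
\sum_i\langle B_t^{-1}\rangle_{\chi_iu}\geq\langle B_t^{-1}\rangle_u-\langle B_t^{-1}u,\,W\,B_t^{-1}u\rangle,
\]
and since $B_t\geq m^2+t$ this yields $\langle B_t^{-1}\rangle_u-\sum_i\langle B_t^{-1}\rangle_{\chi_iu}\leq\|W\|_\infty\,(m^2+t)^{-2}\,\|u\|^2$.

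Inserting this back into the representation,
\[
\sum_i\langle T\rangle_{\chi_iu}-\langle T\rangle_u=\sum_i\langle A^s\rangle_{\chi_iu}-\langle A^s\rangle_u\leq\frac{\sin(\pi s)}{\pi}\,\|W\|_\infty\,\|u\|^2\int_0^\infty\frac{t^{s}}{(m^2+t)^2}\,\mathrm{d}t,
\]
and the substitution $t=m^2\tau$ turns the remaining integral into $m^{2(s-1)}\int_0^\infty\tau^s(1+\tau)^{-2}\,\mathrm{d}\tau=m^{2(s-1)}\,\Gamma(s+1)\Gamma(1-s)=m^{2(s-1)}\,s\,\pi/\sin(\pi s)$, so that the prefactors combine to exactly $C=s\,m^{2(s-1)}$ (recall $\|u\|=1$ for states). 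The only real obstacle is bookkeeping: justifying the Balakrishnan formula on the quadratic form of $A^s$, checking that the $\chi_iu$ stay in that form domain, and legitimizing the various series — all controlled by $u\in H^s$ and by the standing hypothesis $\|W\|_\infty<\infty$, which makes $\sum_i\|\nabla\chi_i\,g\|^2$ and $\sum_i\chi_i\nabla\chi_i$ converge. The analytic heart is the completion-of-squares bound for $B_t^{-1}$ combined with the shifted-Laplacian IMS identity, and the clean Beta integral is what pins the constant down to its sharp value; this is precisely the argument of \cite[Lemma 7]{LL}.
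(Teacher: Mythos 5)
Your proof is correct. The paper itself gives no proof of this lemma — it simply cites \cite[Lemma~7]{LL} — and your argument (Balakrishnan representation of $A^s$ with $A=m^2-\Delta$, the completion-of-squares bound $\langle\psi,B_t^{-1}\psi\rangle\geq 2\,\mathrm{Re}\langle\psi,h\rangle-\langle h,B_th\rangle$ applied with $h_i=\chi_iB_t^{-1}u$ together with the IMS identity $\sum_i\chi_iB_t\chi_i=B_t+W$, and the Beta-integral evaluation yielding $C=s\,m^{2(s-1)}$) is precisely a correct, self-contained reproduction of that cited argument, including the sharp constant.
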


\begin{lem}
\label{Lemma: Localized Ground State}
Let $E_N$ denote the ground state energy of $H_N$ and let $k_N$ be a sequence with $\sqrt{N}\ll k_N\ll N$ such that $k_N+\frac{N}{2}\in \mathbb{N}$. Then there exists a sequence of states $\Psi_N$ in $L_{\mathrm{sym}}^2\left(\mathbb{R}^{N\times d}\right)$ with $\braket{H_N}_{\Psi_N}-E_N\underset{N\rightarrow \infty}{\longrightarrow} 0$ and a sequence $0<\alpha_N\ll 1$, such that
\begin{align*}
\left|M_{N,k_N}\left(x^{(1)}_r,\dots,x^{(N)}_r\right)\right|\leq \alpha_N
\end{align*}
for all $x\in \mathrm{supp}\left(\Psi_N\right)\subset \mathbb{R}^{N\times d}$ and $r\in \{1,\dots,d\}$.
\end{lem}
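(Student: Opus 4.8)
The plan is to take any minimizing sequence $\Phi_N\in L^2_\mathrm{sym}(\mathbb{R}^{N\times d})$ for $E_N=\inf\sigma(H_N)$, so $\|\Phi_N\|=1$ and $\braket{H_N}_{\Phi_N}=E_N+o_{N\to\infty}(1)$ (finiteness of the kinetic energy then follows from the relative bound $v\ge-\lambda T-\Lambda$), and to carve out of $\Phi_N$ a piece on which the regularized median of each coordinate is pinned near a single point. First I would fix a sequence $\delta_N$ with $\sqrt N/k_N\ll\delta_N\ll1$, which exists precisely because $\sqrt N\ll k_N$. Let $\{\theta_z\}_{z\in(\delta_N\mathbb{Z})^d}$ be a partition of unity on $\mathbb{R}^d$ with $\sum_z\theta_z^2=1$, $\mathrm{supp}\,\theta_z\subset z+[-\delta_N,\delta_N]^d$, bounded overlap, and $\big\|\sum_z|\nabla\theta_z|^2\big\|_\infty\le C\delta_N^{-2}$ (e.g.\ a tensor product of a one-dimensional partition), and set, on $\mathbb{R}^{N\times d}$, $\chi_z:=\theta_z\circ\Xi$ with $\Xi(x):=\big(M_{N,k_N}(x^{(1)}_1,\dots,x^{(N)}_1),\dots,M_{N,k_N}(x^{(1)}_d,\dots,x^{(N)}_d)\big)$ the center of the configuration $x$. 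Permutation-invariance of $M_{N,k_N}$ keeps the multiplication operators $\chi_z$ on $L^2_\mathrm{sym}$, and $\sum_z\chi_z^2=1$. The decisive point is a gradient bound coming from the \emph{smoothing}: for a.e.\ $x$ the relevant order statistics are distinct, $\partial_{x^{(i)}_r}\Xi_{r'}$ vanishes unless $r=r'$ and then lies in $\{0,(2k_N+1)^{-1}\}$, so $\sum_z\big|\nabla_{x^{(i)}}\chi_z(x)\big|^2\le(2k_N+1)^{-2}\big\|\sum_z|\nabla\theta_z|^2\big\|_\infty\le C'\delta_N^{-2}(2k_N+1)^{-2}$, uniformly in $x$ and $i$.

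Next I would run an IMS-type localization. The interaction $\frac1{N-1}\sum_{i<j}v(x_i-x_j)$ is a multiplication operator and commutes with every $\chi_z$, while applying Lemma~\ref{Lemma: IMS Localization} to the $i$-th particle (with the remaining coordinates as parameters and then integrating, using $\|\Phi_N\|=1$) controls the kinetic part; summing over $i$ gives
\begin{align*}
\sum_z\big\langle\chi_z\Phi_N,H_N\,\chi_z\Phi_N\big\rangle\le\braket{H_N}_{\Phi_N}+C\sum_{i=1}^N\sup_x\sum_z\big|\nabla_{x^{(i)}}\chi_z(x)\big|^2\le\braket{H_N}_{\Phi_N}+\frac{C''N}{\delta_N^2\,k_N^2}.
\end{align*}
Since $\delta_N\gg\sqrt N/k_N$ the last term is $o_{N\to\infty}(1)$, so $\sum_z\big\langle\chi_z\Phi_N,H_N\,\chi_z\Phi_N\big\rangle=E_N+o(1)$. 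As $H_N\ge E_N$ each summand is at least $E_N\,p_z$ with $p_z:=\|\chi_z\Phi_N\|^2$, and $\sum_z p_z=1$, so a Markov/pigeonhole argument with weights $p_z$ produces, for all large $N$, an index $z_N$ with $p_{z_N}>0$ and $p_{z_N}^{-1}\big\langle\chi_{z_N}\Phi_N,H_N\,\chi_{z_N}\Phi_N\big\rangle=E_N+o(1)$.

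Finally, $\widehat\Psi_N:=\chi_{z_N}\Phi_N/\|\chi_{z_N}\Phi_N\|$ is a state with $\braket{H_N}_{\widehat\Psi_N}=E_N+o_{N\to\infty}(1)$ supported in $\{x:\Xi(x)\in z_N+[-\delta_N,\delta_N]^d\}$, and translating all particles simultaneously by $z_N$, $\Psi_N(x^{(1)},\dots,x^{(N)}):=\widehat\Psi_N(x^{(1)}+z_N,\dots,x^{(N)}+z_N)$, leaves $\braket{H_N}$ unchanged by translation invariance while, by the equivariance $M_{N,k}(y^{(1)}+c,\dots,y^{(N)}+c)=M_{N,k}(y)+c$ applied coordinatewise, forces $|M_{N,k_N}(x^{(1)}_r,\dots,x^{(N)}_r)|\le\delta_N$ on $\mathrm{supp}\,\Psi_N$ for every $r$; thus $\alpha_N:=\delta_N$ works. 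I expect the main difficulty to be the quantitative balance in the IMS step: smoothing the median (so that $\chi_z$ is even differentiable) costs a factor $(2k_N+1)^{-1}$ in each coordinate derivative, hence a localization error of total size $N/(\delta_N^2 k_N^2)$ after summing over the $N$ particles, and making this vanish while still forcing $\delta_N\to0$ is exactly what carves out the window $\sqrt N/k_N\ll\delta_N\ll1$ — and is where the hypotheses $\sqrt N\ll k_N\ll N$ are consumed.
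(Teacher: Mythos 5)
Your proof is correct and follows essentially the same route as the paper's: a smooth partition of unity in the regularized-median coordinates, an IMS localization whose error $\sim N/(\alpha_N^2 k_N^2)$ vanishes precisely when $\alpha_N\gg\sqrt N/k_N$ (using the key gradient bound $\|\partial_{j}M_{N,k_N}\|_\infty\lesssim k_N^{-1}$), a pigeonhole/Markov selection of a good cell, and a final translation of the selected piece back to the origin via the equivariance of the median. The paper's construction rescales the median by $1/\alpha_N$ and indexes the partition by $\mathbb{Z}^d$ whereas you use cells of size $\delta_N$ directly, but this is only a cosmetic difference.
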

\begin{proof}
Let $0< \alpha_N\leq 1$ be a sequence with $\frac{\sqrt{N}}{k_N}\ll \alpha_N\ll 1$ and let $\nu_\ell:\mathbb{R}\rightarrow \mathbb{R}$, $\ell\in \mathbb{Z}$, be a family of $C^\infty$ functions with $\sum_{\ell\in \mathbb{Z}}\nu_\ell^2=1$, $\mathrm{supp}(\nu_\ell)\subset (\ell-1,\ell+1)$ and $\nu_{\ell}(x)=\nu_0(x-\ell)$. Then we define the family of functions $\chi_{\ell,r}:\mathbb{R}^{N\times d}\longrightarrow \mathbb{R}$ with $\ell\in \mathbb{Z}$ and $r\in \{1,\dots,d\}$ as
\begin{align*}
\chi_{\ell,r}\left(x\right):=\nu_\ell\left(\frac{1}{\alpha_N} M_{N,k_N}\left(x_r^{(1)},\dots,x_r^{(N)}\right)\right)
\end{align*}
and for $\ell=(\ell_1,\dots,\ell_d)\in \mathbb{Z}^d$ we define $\chi_{\ell}:=\chi_{\ell_1,1} \dots \chi_{\ell_d,d}$. First of all $\sum_{\ell\in \mathbb{Z}^d}\chi_\ell^2=\left(\sum_{\ell_1\in \mathbb{Z}}\chi_{\ell_1,1}^2\right) \dots \left(\sum_{\ell_d\in \mathbb{Z}}\chi_{\ell_d,d}^2\right)=1$. Furthermore, for any $x\in \mathbb{R}^{N\times d}$ the family of smooth functions $\{\chi_\ell:\ell\in \mathbb{Z}^d\}$ satisfies $\# \{\ell\in \mathbb{Z}^d:\chi_{\ell}(x)\neq 0\}=\# \prod_{r=1}^d \{z\in \mathbb{Z}:\chi_{z,r}(x)\neq 0\}\leq 2^d$. With the definition $C_d:=2^d C$, where $C$ is the constant from Lemma \ref{Lemma: IMS Localization}, we obtain any state $\Psi\in L^2\left(\mathbb{R}^{N\times d}\right)$
\begin{align*}
\sum_{j=1}^N\braket{T_j}_{\Psi}\!\geq\! \sum_{j=1}^N\sum_{\ell\in \mathbb{Z}^d}\braket{T_j}_{\chi_\ell \Psi}\!-\!C_d\sum_{j=1}^N\sup_{\ell\in \mathbb{Z}^d}\left\||\nabla_j \chi_\ell|^2\right\|_\infty\!\geq\! \sum_{j=1}^N\sum_{\ell\in \mathbb{Z}^d}\braket{T_j}_{\chi_\ell \Psi}-N\frac{C_d d}{\alpha_N^2 k_N^2}\|\nu'_0\|^2_{\infty},
\end{align*}
where we used the fact that $|\nabla_j \chi_\ell|^2\leq \sum_{r=1}^d|\partial_j \chi_{\ell_r,r}|^2\leq \sum_{r=1}^d\frac{1}{\alpha^2_N}\|\nu'_{\ell_r}\|^2_\infty \|\partial_{j} M_{N,k_N}\|^2_\infty$, $\|\partial_{j} M_{N,k_N}\|_\infty\leq \frac{1}{k_N}$ and $\|\nu'_z\|_\infty=\|\nu'_0\|_\infty$ for any $z\in \mathbb{Z}$. By our choice of $\alpha_N$ it is clear that $\epsilon_N:=N\frac{C_d d}{\alpha_N^2 k_N^2}\|\nu'_0\|^2_{\infty}\underset{N\rightarrow \infty}{\longrightarrow}0$. In the following let $\Phi_N$ be a sequence of states with $\braket{H_N}_{\Phi_N}-E_N\underset{N\rightarrow \infty}{\longrightarrow} 0$, and let us define $\rho_{N,\ell}:=\|\chi_\ell\Phi_N\|^2$ as well as $\Phi_{N,\ell}:=\rho_{N,\ell}^{-\frac{1}{2}}\ \chi_\ell\Phi_N$. Since $\Phi_N$ is a state, it is clear that $\sum_\ell \rho_{N,\ell}=1$. We have the estimate
\begin{align*}
\sum_{\ell\in \mathbb{Z}^d} \rho_{N,\ell} \braket{H_N}_{\Phi_{N,\ell}}\leq \sum_{j=1}^N\braket{T_j}_{\Phi_N}+\epsilon_N+\frac{1}{N-1}\sum_{i<j}\braket{v(x_i-x_j)}_{ \Phi_N}=\braket{H_N}_{\Phi_N}+\epsilon_N,
\end{align*}
and therefore there exists at least one $l\in \mathbb{Z}^d$ such that $\braket{H_N}_{\Phi_{N,\ell}}\leq \braket{H_N}_{\Phi_N}+\epsilon_N$. We can finally define $\Psi_N\left(x^{(1)},\dots,x^{(N)}\right):=\Phi_{N,\ell}\left(x^{(1)}+\xi,\dots,x^{(N)}+\xi\right)$ with $\xi:=\alpha_N \ell$. By translation-invariance of $H_N$, we have $\braket{H_N}_{\Psi_N}\leq \braket{H_N}_{\Phi_N}+\epsilon_N$ and consequently $\braket{H_N}_{\Psi_N}-E_N\underset{N\rightarrow \infty}{\longrightarrow} 0$. Furthermore, $\Psi_N\left(x^{(1)},\dots,x^{(N)}\right)\neq 0$ implies for all $r\in \{1,\dots,d\}$
\begin{align*}
\frac{1}{\alpha_N}M_{N,k_N}\left(x_r^{(1)}\!+\!\xi_r,\dots,x_r^{(N)}\!+\!\xi_r\right)=\frac{1}{\alpha_N}M_{N,k_N}\left(x_r^{(1)},\dots,x_r^{(N)}\right)\!+\!\ell_r\in \mathrm{supp}(\nu_{\ell_r}),
\end{align*}
and therefore $M_{N,k_N}\left(x_r^{(1)},\dots,x_r^{(N)}\right)\in (-\alpha_N,\alpha_N)$.
\end{proof}

Recall the inequality $-\left(\lambda T+\Lambda\right)\leq v\leq |v|\leq \Lambda(T+1)$ from Assumption \ref{Assumption: Part I}. Let us denote with $\hat{v}:=v(x-y)$ the two body multiplication operator associated to the interaction potential $v$. Due to the translation-invariance of $T$, we can promote the one body operator inequality from above to the two body operator inequality 
\begin{align*}
-\left(\lambda T+\Lambda\right)\otimes 1_{ L^2\left(\mathbb{R}^{d}\right)}\leq \hat{v}\leq |\hat{v}|\leq \Lambda(T+1)\otimes 1_{ L^2\left(\mathbb{R}^{d}\right)}.
\end{align*}
As an immediate consequence of this inequality we have the following Lemma.\\

\begin{lem}
\label{Lemma: Balance of Energy}
Given Assumption \ref{Assumption: Part I}, there exist constants $c$ and $\delta>0$ such that
\begin{align*}
\delta\sum_{j=1}^N (T_j-c)\leq H_N\leq \delta^{-1}\sum_{j=1}^N (T_j+c),
\end{align*}
as well as $\frac{1}{N-1}\sum_{i<j}|v(x_i-x_j)|\leq c\left(H_N+N\right)$.
\end{lem}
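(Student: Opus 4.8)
The plan is to obtain all three inequalities directly from the two--body operator bound $-(\lambda T+\Lambda)\otimes 1\leq\hat v\leq|\hat v|\leq\Lambda(T+1)\otimes 1$ recorded just above, by symmetrizing it in the two particles and summing over pairs; the prefactor $\frac1{N-1}$ is exactly what converts these pair sums back into single sums $\sum_{j=1}^N T_j$. The one observation worth isolating is that symmetrization halves the relative bound: since $v$ is even, $\hat v$ commutes with the exchange of its two arguments, so $v(x_i-x_j)\geq-(\lambda T_i+\Lambda)$ and $v(x_i-x_j)\geq-(\lambda T_j+\Lambda)$ both hold, and averaging them (together with the analogous step for $|v|$) gives
\begin{align*}
-\frac\lambda2(T_i+T_j)-\Lambda\ \leq\ v(x_i-x_j)\ \leq\ |v(x_i-x_j)|\ \leq\ \frac\Lambda2(T_i+T_j)+\Lambda .
\end{align*}
Summing over $i<j$ and dividing by $N-1$, and using that each index lies in exactly $N-1$ of the $\binom N2$ pairs, I obtain
\begin{align*}
-\frac\lambda2\sum_{j=1}^N T_j-\frac{\Lambda N}2\ \leq\ \frac1{N-1}\sum_{i<j}v(x_i-x_j)\ \leq\ \frac\Lambda2\sum_{j=1}^N T_j+\frac{\Lambda N}2 ,
\end{align*}
and likewise $\frac1{N-1}\sum_{i<j}|v(x_i-x_j)|\leq\frac\Lambda2\sum_{j=1}^N T_j+\frac{\Lambda N}2$.

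Adding $\sum_{j=1}^N T_j$ to the two--sided estimate yields $\big(1-\frac\lambda2\big)\sum_j T_j-\frac{\Lambda N}2\leq H_N\leq\big(1+\frac\Lambda2\big)\sum_j T_j+\frac{\Lambda N}2$. This is the point where the hypothesis $\lambda<2$ is used decisively: it makes the coefficient $1-\frac\lambda2$ of $\sum_j T_j$ in the lower bound strictly positive, so that $H_N$ controls $\sum_j T_j$ from above up to an $O(N)$ error. Setting $\delta:=\min\big\{1-\frac\lambda2,(1+\frac\Lambda2)^{-1}\big\}>0$ and then $c:=\frac\Lambda{2\delta}$ (enlarging $c$ later if convenient), and using $T_j\geq0$, the lower bound $\delta\sum_j(T_j-c)\leq H_N$ and the upper bound $H_N\leq\delta^{-1}\sum_j(T_j+c)$ follow immediately.

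For the remaining estimate I would feed this coercivity back in: rewriting the lower bound as $\sum_j T_j\leq\delta^{-1}H_N+cN$ and inserting it into $\frac1{N-1}\sum_{i<j}|v(x_i-x_j)|\leq\frac\Lambda2\sum_j T_j+\frac{\Lambda N}2$ gives $\frac1{N-1}\sum_{i<j}|v(x_i-x_j)|\leq\frac\Lambda{2\delta}H_N+CN$ with $C$ depending only on $\lambda$ and $\Lambda$; since in addition $H_N\geq-\delta cN$, the right--hand side is of the claimed form $c(H_N+N)$ once $c$ is taken large enough. I do not foresee a genuine obstacle here: as stated in the paper, the lemma really is an immediate consequence of the promoted relative bound, and the only thing to watch is that one must pick a single $\delta$ that simultaneously serves the lower and the upper bound, which is possible precisely because $\lambda<2$ (i.e. the symmetrized relative bound $\frac\lambda2$ is subcritical).
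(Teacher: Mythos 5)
Your two--sided bound is handled correctly: the symmetrization, the pair--counting, and the choice $\delta=\min\{1-\lambda/2,(1+\Lambda/2)^{-1}\}$, $c=\Lambda/(2\delta)$ all check out. The gap is in the final paragraph, where you pass from
\begin{align*}
\frac1{N-1}\sum_{i<j}|v(x_i-x_j)|\ \leq\ \frac{\Lambda}{2\delta}\,H_N+C N,\qquad C=\tfrac{\Lambda}{2}(1+c),
\end{align*}
to the form $c'(H_N+N)$ by ``taking $c'$ large enough, using $H_N\geq-\delta cN$.'' Rearranging, that final step requires $\bigl(c'-\tfrac{\Lambda}{2\delta}\bigr)(-H_N)\leq(c'-C)N$ on all of state space, and feeding in $-H_N\leq\delta cN=\tfrac{\Lambda}{2}N$ reduces it, after substituting your values of $c$ and $C$, to $c'\bigl(\tfrac{\Lambda}{2}-1\bigr)\leq-\tfrac{\Lambda}{2}$. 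This has no positive solution $c'$ when $\Lambda\geq 2$; increasing $c'$ only makes it worse, and shrinking $c$ in the lower bound is not available. So the step fails exactly when the additive constant in the relative bound is not itself subcritical.

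This is not a defect of your argument that a cleverer choice of constants would repair: as written, the inequality $\frac1{N-1}\sum_{i<j}|v|\leq c(H_N+N)$ is genuinely false in that regime. In Example~(I), $e_\mathrm{H}$ scales like $-\kappa g^2$, so for $g$ large one has $e_\mathrm{H}<-1$; on approximate ground states $\langle H_N\rangle+N=N(e_\mathrm{H}+1+o(1))<0$ while $\big\langle\frac1{N-1}\sum_{i<j}|v|\big\rangle>0$, and no $c$ exists. The correct and usable statement is the one your computation actually delivers, namely $\frac1{N-1}\sum_{i<j}|v|\leq c_1H_N+c_2N$ with constants $c_1=\Lambda/(2\delta)$ and $c_2=\Lambda(1+c)/2$ that in general must be allowed to differ; this is precisely the form that is invoked downstream (in the proofs of Lemma~\ref{Lemma: Localized Energy} and Lemma~\ref{Lemma: Strong Convergence} only a bound of the shape $c_1N^{-1}\langle H_N\rangle+c_2$ is used). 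You should therefore stop at the two--constant bound rather than asserting the single--constant form, and flag that the ``no genuine obstacle'' remark overlooks the case $\Lambda\geq 2$.
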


\begin{defi}
\label{Definition: Omega}
Let us define $n_{N,r,L}:\mathbb{R}^{N\times d}\longrightarrow \mathbb{R}$ as the density of particles $x^{(j)}\in \mathbb{R}^d$ that satisfy $x^{(j)}_r\geq L$, i.e. for a configuration $x=\left(x^{(1)},\dots,x^{(N)}\right)\in \mathbb{R}^{N\times d}$ with $x^{(j)}=\left(x^{(j)}_1,\dots,x^{(j)}_d\right)\in \mathbb{R}^d$ we define
\begin{align*}
n_{N,r,L}(x):=\frac{1}{N}\sum_{j=1}^N\mathds{1}_{[L,\infty)}\left(x^{(j)}_r\right).
\end{align*}
Furthermore, let $\Omega_{N,r,L,\delta}$ be the set of all $x\in \mathbb{R}^{N\times d}$ that satisfy $n_{N,r,L}(x)\geq \delta$ and $M_{N,k_N}\left(x^{(1)}_r,\dots,x^{(N)}_r\right)\leq  \xi_0$, where $k_N$ is the sequence introduced in Lemma \ref{Lemma: Localized Ground State} and $\xi_0$ is some fixed positive number. Let $E_{N,r,L,\delta}$ denote the ground state energy of $H_N$ restricted to states $\Phi$ with $\mathrm{supp}(\Phi)\subset \Omega_{N,r,L,\delta}$.
\end{defi}

\begin{lem}
\label{Lemma: Localized Energy}
Given Assumption \ref{Assumption: Part I}, there exist for all $\delta>0$ constants $\gamma_\delta>0$, $L_0(\delta)$ and $N_0(\delta)$, such that for all $r\in \{1,\dots,d\}$, $L\geq L_0(\delta)$ and $N\geq N_0(\delta)$
\begin{align}
\label{Equation: Energy in a local area}
E_{N,r,L,\delta}\geq E_N+\gamma_\delta N.
\end{align}
\end{lem}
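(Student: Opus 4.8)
The plan is to show that if a configuration $x$ lies in $\Omega_{N,r,L,\delta}$, then a macroscopic fraction of the particles (at least $\delta N$ of them) sit in the half-space $[x_r \geq L]$, while the regularized median constraint $M_{N,k_N}(x_r^{(1)},\dots,x_r^{(N)}) \leq \xi_0$ forces at least a comparable fraction (roughly $\tfrac12 - O(k_N/N)$, hence $\geq c_0 N$ for some fixed $c_0 > 0$ once $N$ is large) to sit in the half-space $[x_r \leq \xi_0]$. Thus any such configuration has two ``chunks'' of order $N$ particles separated by a distance at least $L - \xi_0$. The point is that two well-separated macroscopic chunks cannot be energetically favorable: the interaction between the chunks is weak because $v(x)\to 0$, while splitting the cloud costs a definite amount of Hartree-type energy per particle. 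Quantitatively, I would compare $E_{N,r,L,\delta}$ with the energy of a system that is genuinely split, and use the binding-type inequality implicit in $e_\mathrm{H} < 0$ together with strict subadditivity of the Hartree energy.

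Concretely, here are the steps. \textbf{Step 1 (geometry).} From $n_{N,r,L}(x) \geq \delta$ and the median bound, extract disjoint index sets $A, B \subset \{1,\dots,N\}$ with $|A| \geq \delta N$, $|B| \geq c_0 N$ (with $c_0$ depending only on the fact that $k_N \ll N$), such that $x_r^{(j)} \geq L$ for $j \in A$ and $x_r^{(j)} \leq \xi_0$ for $j \in B$. \textbf{Step 2 (localization of the kinetic/interaction energy).} Use the IMS-type formula of Lemma~\ref{Lemma: IMS Localization}, applied to a smooth partition of unity in the variable $n_{N,r,L}$ (or directly in the median variable), to split $H_N$ restricted to $\Omega_{N,r,L,\delta}$ into a piece living on the $A$-particles, a piece on the $B$-particles, and a cross term; the localization error is $o(N)$, in fact $o(1)$, by the same computation as in Lemma~\ref{Lemma: Localized Ground State} since the relevant length scales are $\gg \sqrt N$. \textbf{Step 3 (the cross term is small).} The interaction between a particle with $x_r \geq L$ and one with $x_r \leq \xi_0$ is bounded by $\sup_{|z_r| \geq L - \xi_0}|v(z)|$, which tends to $0$ as $L \to \infty$ since $v(x)\to 0$; there are at most $N^2$ such pairs, and the prefactor is $\tfrac{1}{N-1}$, so the total cross interaction is $o(N)$ as $L \to \infty$, uniformly in $N$. (One must handle the relative bound $|v| \leq \Lambda(T+1)$ carefully if $v$ is unbounded near the origin, but between the two chunks $|z_r| \geq L-\xi_0$ so $v$ is small and bounded there, e.g. by splitting $v = v\mathds{1}_{|z|\leq R} + v\mathds{1}_{|z|>R}$ and using Lemma~\ref{Lemma: Balance of Energy} only for the short-range part, which does not act between the chunks at all.) \textbf{Step 4 (subadditivity).} What remains is (essentially) $H_{|A|}$ acting on the $A$-particles plus $H_{|B|}$ on the $B$-particles, but with the mean-field prefactor $\tfrac{1}{N-1}$ rather than the natural $\tfrac{1}{|A|-1}$, $\tfrac{1}{|B|-1}$; this mismatch means each sub-cluster sees a \emph{weakened} interaction, so by a scaling/Hartree argument its energy per particle is at least $e_\mathrm{H} \cdot (\text{fraction}) + (\text{strictly positive correction})$. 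Summing, $E_{N,r,L,\delta} \geq N e_\mathrm{H} + \gamma_\delta N - o(N)$, and since $E_N = N e_\mathrm{H} + o(N)$ this gives (\ref{Equation: Energy in a local area}) with a possibly smaller $\gamma_\delta$, for $L \geq L_0(\delta)$ and $N \geq N_0(\delta)$.

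The main obstacle is Step~4: making precise that a cloud of $\delta N$ particles interacting with the wrong (too small) mean-field coupling $\tfrac{1}{N-1}$ has energy strictly above $\delta N e_\mathrm{H}$ by an amount proportional to $N$. The clean way is to note that $H_{|A|}$ with coupling $\tfrac{1}{N-1}$ equals, up to relabeling, the standard mean-field Hamiltonian for $|A|$ particles with interaction $v$ rescaled by $\tfrac{|A|-1}{N-1} < 1$; its ground state energy per particle converges (by \cite{LNR}) to $\inf_{\|u\|=1}\big(\langle T\rangle_u + \tfrac12 \tfrac{|A|-1}{N-1}\int\!\!\int |u|^2 v |u|^2\big)$, and since $e_\mathrm{H} < 0$ the map $\theta \mapsto \inf_u(\langle T\rangle_u + \tfrac{\theta}{2}\int\!\!\int|u|^2 v|u|^2)$ is concave, equals $e_\mathrm H$ at $\theta = 1$, and is $> e_\mathrm H$ for $\theta < 1$ — the strict inequality being exactly what $e_\mathrm H < 0$ (equivalently, the nonvanishing of the interaction term at the minimizer) guarantees. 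One must also control the split of $N - |A| - |B|$ leftover particles (those with $\xi_0 < x_r^{(j)} < L$), but they can simply be absorbed into whichever chunk via a slightly coarser partition of unity, changing $c_0$ and $\gamma_\delta$ only by constants; alternatively one shows their number is itself $o(N)$ or bounds their contribution below using Lemma~\ref{Lemma: Balance of Energy}.
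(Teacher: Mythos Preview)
Your overall strategy is the paper's: configurations in $\Omega_{N,r,L,\delta}$ carry two macroscopic chunks separated by at least $L-\xi_0$; localize to decompose the energy into left/right contributions plus a small cross term; then exploit that each chunk sees a weakened mean-field coupling $\frac{\ell-1}{N-1}<1$, so its energy per particle exceeds $e_\mathrm{H}$ by a fixed amount. Your Step~4 is correct and coincides with the paper's, which phrases it as the operator inequality $H_\ell^{(\lambda_1)}\ge\frac{\lambda_1}{\lambda_2}H_\ell^{(\lambda_2)}$ for $\lambda_1\le\lambda_2$ (immediate from $T\ge0$); at the level of ground-state energies this gives $E_\ell^{(\theta)}\ge\theta E_\ell$, and since $\frac{\ell-1}{N-1}\le\kappa_\delta<1$ for the relevant $\ell$ one gets $E_-+E_+\ge\kappa_\delta\min_\ell\frac{E_\ell}{\ell}\to\kappa_\delta e_\mathrm{H}>e_\mathrm{H}$.

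The gap is in Step~2. Your index sets $A,B$ are \emph{configuration-dependent}, so there is no fixed tensor factorization into ``$A$-particles'' and ``$B$-particles'' against which one could compare $H_{|A|}+H_{|B|}$; and IMS ``in the variable $n_{N,r,L}$'' only slices states by the \emph{value} of $n_{N,r,L}$, it does not separate particles into sub-systems. The paper instead localizes each particle with a one-body partition $f_{\xi,R,-}^2+f_{\xi,R,+}^2=1$ at a cut $\xi\in(\xi_0,L-R)$, and then expresses the localized reduced density matrices via \emph{geometric localization} \cite{Lew,LNR} as convex combinations $\gamma^{(k)}_{\xi,R,\pm}=\binom{N}{k}^{-1}\sum_\ell\binom{\ell}{k}G_{\ell,\pm}^{(k)}$ of mixed $\ell$-particle states $G_{\ell,\pm}$. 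The support condition on $\Psi$ forces $G_{\ell,\pm}=0$ unless $N-N_*(N)\le\ell\le N_*(N)$ with $N_*(N)=\max\{\tfrac{N}{2}+k_N,\lceil(1-\delta)N\rceil\}$; this is precisely what makes the coupling ratio uniformly $\le\kappa_\delta<1$ and also handles your ``leftover'' particles automatically, since they are simply distributed across all $\ell$ in the mixture. A second point your Step~3 does not cover: the interaction within the overlap strip $\{|x_r-\xi|<R\}$ is not reached by your $|z_r|\ge L-\xi_0$ bound and may involve the singular part of $v$. The paper controls it by a pigeonhole over $n$ candidate cut positions $\xi_m=\xi_0+2Rm$ together with Lemma~\ref{Lemma: Balance of Energy}, producing one $\xi_{m_*}$ where this contribution is $O(1/n)$.
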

\begin{proof}
According to Definition \ref{Definition: Omega}, for any configuration $x=\left(x^{(1)},\dots,x^{(N)}\right)\in \Omega_{N,r,L,\delta}$ there are at least $\frac{N}{2}-k_N$ particles $x^{(j)}$ such that $x^{(j)}_r\leq \xi_0$ and at least $\delta N$ particles $x^{(k)}$ such that $x^{(k)}_r\geq L$. Heuristically, this means that $\frac{N}{2}$ particles do not interact with $\delta N$ particles in case $L-\xi_0$ is large compared to the range of the interaction $v$. Since the interaction in Eq.~(\ref{Equation: Hamilton Operator}) scales like $\frac{1}{N}$, the absence of $\frac{N}{2}\times \delta N$ interaction pairs corresponds to an increase in energy of order $N$. In order to make this rigorous, i.e. in order to verify Eq.~(\ref{Equation: Energy in a local area}), we will apply the ideas of geometric localization from \cite{Lew,LNR}. In the first step, we decompose the energy $\braket{H_N}_{\Psi}$ of a state $\Psi$ into a term $E_-$ covering contributions from the left side $x^{(j)}_r\leq \xi+R$ with $\xi>\xi_0$ and $\xi+R<L$, a term $E_+$ covering contributions from the right side $x^{(j)}_r\geq \xi$ and a localization error depending on the length $R$ of the overlap $[\xi,\xi+R]$ of the two regions, which can be neglected for large separations $R\gg 1$. In the second step, we will verify that the sum of the local energies $E_- + E_+$ is indeed larger than the ground state energy $E_N$ by a contribution of order $N$, which corresponds to the observation that $E_- + E_+$ does not involve any interactions between particles on the left side and particles on the right side.\\

In the following let us fix an $r\in \{1,\dots,d\}$, and let $f_-,f_+:\mathbb{R}\longrightarrow [0,1]$ be smooth functions with $f_-^2+f_+^2=1$, $f_-(t)=1$ for $t\leq 0$ and $f_+(t)=1$ for $t\geq 1$. Then we define for $\xi\in \mathbb{R}$ and $R>0$ the functions $f_{\xi,R,\pm}:\mathbb{R}^d\longrightarrow [0,1]$ as $f_{\xi,R,\pm}(x):=f_\pm\left(\frac{x_r-\xi}{R}\right)$. This family of functions clearly satisfies $f_{\xi,R,-}(x)=1$ for $x_r\leq \xi$, $f_{\xi,R,-}(x)=0$ for $x_r\geq \xi+R$, $f_{\xi,R,+}(x)=1$ for $x_r\geq \xi+R$ and $f_{\xi,R,+}(x)=0$ for $x_r\leq \xi$. Furthermore, there exists a constant $k>0$ such that $\left|\nabla f_{\xi,R,\pm}\right|^2\leq \frac{k}{R^2}$. By Lemma \ref{Lemma: IMS Localization} we have the IMS localization formula $T\geq f_{\xi,R,-} T f_{\xi,R,-}+f_{\xi,R,+} T f_{\xi,R,+}-\frac{K}{R^2}$, $K:=2kC$. For a state $\Psi\in \bigotimes_\mathrm{s}^N L^2\left(\mathbb{R}^d\right)$, let us denote with $\gamma^{(k)}$ its reduced density matrices and with $\rho^{(k)}$ the corresponding density functions, and let us further define the localized objects $\gamma^{(k)}_{\xi,R,\pm}:=f_{\xi,R,\pm}^{\otimes^k} \gamma^{(k)} f_{\xi,R,\pm}^{\otimes^k}$ and the corresponding density functions $\rho^{(k)}_{\xi,R,\pm}(x_1,\dots,x_k):=\rho^{(k)}(x_1,\dots,x_k)f_{\xi,R,\pm}(x_1)^2\dots f_{\xi,R,\pm}(x_k)^2$. Then,
\begin{align*}
&\frac{1}{N}\braket{H_N}_\Psi=\mathrm{Tr}\left[\gamma^{(1)}\ T\right]+\frac{1}{2}\int\int \rho^{(2)}(x,y)v(x-y)\mathrm{d}x\mathrm{d}y\\
&\ \ \ =\!\mathrm{Tr}\left[\gamma^{(1)} T\right]\!+\!\frac{1}{2}\int\int\! \rho^{(2)}(x,y)\left[f_{\xi,R,-}^2\!+\!f_{\xi,R,+}^2\right](x)\left[f_{\xi,R,-}^2\!+\!f_{\xi,R,+}^2\right](y)\ v(x-y)\mathrm{d}x\mathrm{d}y\\
&\ \ \ \geq E_- + E_+ +\!\int\int\! \rho^{(2)}(x,y)f_{\xi,R,-}(x)^2f_{\xi,R,+}(y)^2 v(x\!-\!y)\mathrm{d}x\mathrm{d}y\!-\!\frac{K}{R^2},
\end{align*}
where we define
\begin{align}
\label{Equation: E left/right}
E_\pm=\mathrm{Tr}\left[\gamma^{(1)}_{\xi,R,\pm}\ T\right]+\frac{1}{2}\int\int \rho_{\xi,R,\pm}^{(2)}(x,y)v(x-y)\mathrm{d}x\mathrm{d}y.
\end{align}
 Note that we have $v_R:=\sup_{|x|\geq R} |v(x)|\underset{R\rightarrow \infty}{\longrightarrow}0$ by Assumption \ref{Assumption: Part I}, and therefore we can estimate the localization error $\left|\int\int \rho^{(2)}(x,y)f_{\xi,R,-}(x_r)^2f_{\xi,R,+}(y_r)^2 v(x-y)\right|$ by
\begin{align*}
\int\int_{[|x_r-y_r|< R]}& \rho^{(2)}(x,y)f_{\xi,R,-}(x)^2f_{\xi,R,+}(y)^2 |v(x-y)|\mathrm{d}x\mathrm{d}y+v_R\int\int \rho^{(2)}(x,y)\mathrm{d}x\mathrm{d}y\\
&\leq \int\int_{[|x_r-\xi|< R]} \rho^{(2)}(x,y)|v(x-y)|\mathrm{d}x\mathrm{d}y+v_R,
\end{align*}
where we used the fact that $x\in \mathrm{supp}\left(f_{\xi,R,-}\right)$, $y\in \mathrm{supp}\left(f_{\xi,R,+}\right)$ and $|x_r-y_r|< R$ is only possible in case $|x_r-\xi|< R$. Let us now define for $n\in \mathbb{N}$ and $m\leq n$ the points $\xi_m:=\xi_0+2Rm$. Clearly, the intervals $[|x_r-\xi_m|< R]$ are disjoint and therefore Lemma \ref{Lemma: Balance of Energy} yields
\begin{align*}
\sum_{m=1}^n\!\underset{[|x_r\!-\!\xi_m|< R]}{\int\int}\! \rho^{(2)}(x,y)|v(x\!-\!y)|\mathrm{d}x\mathrm{d}y\!\leq\! \int\int \rho^{(2)}(x,y)|v(x\!-\!y)|\mathrm{d}x\mathrm{d}y\!\leq\! \frac{2c}{N}\braket{H_N}_\Psi+2c.
\end{align*}
Hence, there exists an $m_*\leq n$ such that $\int\int_{[|x_r-\xi_{m_*}|< R]} \rho^{(2)}(x,y)|v(x-y)|\leq \frac{2c}{n N}\braket{H_N}_\Psi+\frac{2c}{n}$. We conclude that for $n\in \mathbb{N}$, there exists a $\xi$ with $\xi_0\leq \xi\leq \xi_0+2nR$ such that
\begin{align}
\frac{1+\frac{2c}{n}}{N}\braket{H_N}_\Psi&\geq E_- + E_+ -\frac{K}{R^2}-v_R-\frac{2c}{n}\label{Inequality: Localized Energy}.
\end{align}

Let us now investigate the local energy contributions $E_\pm$. As a first step, we follow the framework in \cite{LNR} and define the mixed $\ell$ particle states
\begin{align*}
G_{\ell,\pm}=\binom{N}{\ell}\mathrm{Tr}_{\ell+1\rightarrow N}\left[f_{\xi,R,\pm}^{\otimes^\ell}\otimes f_{\xi,R,\mp}^{\otimes^{N-\ell}}\ \ket{\Psi}\bra{\Psi}\  f_{\xi,R,\pm}^{\otimes^\ell}\otimes f_{\xi,R,\mp}^{\otimes^{N-\ell}}\right],
\end{align*}
where we used the notion $\mathrm{Tr}_{\ell+1\rightarrow N}\left[\ .\ \right]$ for the partial trace over the indices $\ell+1,\dots,N$. These mixed states satisfy $\mathrm{Tr}[G_{\ell,-}]=\mathrm{Tr}[G_{N-\ell,+}]$ as well as $\sum_{\ell=0}^N \mathrm{Tr}[G_{\ell,-}]=1$. Furthermore, it was shown in \cite{LNR} that we can use these mixed states to express the localized density matrices as
\begin{align}
\label{Equation: Mixed Density Matrix Hierarchy}
f_{\xi,R,\pm}^{\otimes^k}\ \gamma^{(k)}\ f_{\xi,R,\pm}^{\otimes^k}=\binom{N}{k}^{-1}\sum_{\ell=k}^N\binom{\ell}{k}G_{\ell,\pm}^{(k)},
\end{align}
where $G_{\ell,\pm}^{(k)}$ is the $k$-th reduced density matrix of $G_{\ell,\pm}$. In the following, let us assume that the state $\Psi$ satisfies $\mathrm{supp}\left(\Psi\right)\subset \Omega_{N,r,L_0,\delta}$ with $\delta>0$ and $L_0>\xi_0+R$, i.e. all $x\in \mathrm{supp}\left(\Psi\right)$ satisfy $M_{N,k_N}(x)\leq \xi_0$ and $n_{N,r,L_0}(x)\geq \delta$. The first condition $M_{N,k_N}(x)\leq \xi_0$ implies that at most $\frac{N}{2}+k_N$ indices $j$ satisfy $x^{(j)}_r> \xi_0$ and the second condition $n_{N,r,L_0}(x)\geq \delta$ is equivalent to the fact that at most $\lceil(1-\delta) N\rceil$ indices satisfy $x^{(j)}_r< L_0$. Let us denote $N_*(N):=\max\left(\frac{N}{2}+k_N,\lceil(1-\delta) N\rceil\right)$. From the support properties of $f_{\xi,R,\pm}$ we obtain for all $\xi$ with $\xi_0<\xi<L_0-R$ and $x\in \mathrm{supp}\left(\Psi\right)$, that $f_{\xi,R,+}\left(x^{(1)}\right) \dots f_{\xi,R,+}\left(x^{(\ell)}\right)=0$ for all $\ell>N_*(N)$ and $f_{\xi,R,-}\left(x^{(\ell+1)}\right) \dots f_{\xi,R,-}\left(x^{(N)}\right)=0$ for all $N-\ell>N_*(N)$. Hence, we obtain for all $\ell$ with either $\ell>N_*(N)$ or $\ell<N-N_*(N)$, and $\xi$ with $\xi_0<\xi<L_0-R$
\begin{align*}
&\binom{N}{\ell}^{-1}\mathrm{Tr}\left[G_{\ell,+}\right]=\mathrm{Tr}\left[f_{\xi,R,+}^{\otimes^\ell}\otimes f_{\xi,R,-}^{\otimes^{N-\ell}}\ \ket{\Psi}\bra{\Psi}\ f_{\xi,R,+}^{\otimes^\ell}\otimes f_{\xi,R,-}^{\otimes^{N-\ell}}\right]\\
&\ =\!\underset{\mathrm{supp}\left(\Psi\right)}{\int}\! f_{\xi,R,+}\left(x^{(1)}\right)^2\! \dots
  f_{\xi,R,+}\left(x^{(\ell)}\right)^2 f_{\xi,R,-}\left(x^{(\ell+1)}\right)^2\! \dots f_{\xi,R,-}\left(x^{(N)}\right)^2\!\left|\Psi\right|^2 \mathrm{d}x=0,
\end{align*}
and since $G_{\ell,+}\geq 0$ this implies $G_{\ell,+}=0$ for all such $\ell$. Using $\mathrm{Tr}[G_{\ell,-}]=\mathrm{Tr}[G_{N-\ell,+}]$, we also obtain $G_{\ell,-}=0$ for all $\ell$ with $\ell>N_*(N)$, respectively $\ell<N-N_*(N)$. 

Let us define rescaled versions $H_\ell^{(\lambda)}:=\sum_{j=1}^\ell T_j+\frac{1}{\ell-1}\sum_{i<j}^\ell \lambda v(x_{i}-x_{j})$ of the Hamiltonian $H_N$ and let us denote the corresponding ground state energy by $E_\ell^{(\lambda)}:=\inf \sigma\left(H_\ell^{\lambda}\right)$. Note that there exists a $\delta$-dependent $\kappa_\delta<1$ and $N_1\in \mathbb{N}$, such that $\frac{N_*(N)-1}{N-1}\leq \kappa_\delta$ for all $N\geq N_1$. Applying Eq.~(\ref{Equation: Mixed Density Matrix Hierarchy}) together with the identity $\mathrm{Tr}\left[G_{\ell,\pm}^{(1)}\ T\right]+\frac{l-1}{N-1}\frac{1}{2}\mathrm{Tr}\left[G_{\ell,\pm}^{(2)}\ \hat{v}\right]=\mathrm{Tr}\left[\frac{1}{\ell}H_\ell^{\left(\frac{\ell-1}{N-1}\right)}\ G_{\ell,\pm}\right]$ yields for all $N\geq N_1$ and $\xi$ with $\xi_0<\xi <L_0-R$
\begin{align*}
&E_{\pm}=\mathrm{Tr}\left[\gamma^{(1)}_{\xi,R,\pm}\ T\right]+\frac{1}{2}\int\int \rho_{\xi,R,\pm}^{(2)}(x,y)v(x-y)\mathrm{d}x\mathrm{d}y=\frac{1}{N}\sum_{\ell=N-N_*(N)}^{N_*(N)}\mathrm{Tr}\left[H_\ell^{\left(\frac{\ell-1}{N-1}\right)}\ G_{\ell,\pm}\right]\\
&\ \ \ \ \geq \frac{1}{N}\sum_{\ell=N-N_*(N)}^{N_*(N)} E_\ell^{\left(\frac{l-1}{N-1}\right)}\ \mathrm{Tr}\left[G_{\ell,\pm}\right]\geq \frac{1}{N}\sum_{l=N-N_*(N)}^N \kappa_\delta E_{\ell}\ \mathrm{Tr}\left[G_{\ell,\pm}\right]\\
&\ \ \ \ \geq \kappa_\delta \min_{\ell\geq N-N_*(N)}\left(\frac{1}{\ell}E_\ell\right)\frac{1}{N}\sum_{l=0}^N \ell\ \mathrm{Tr}\left[G_{\ell,\pm}\right],
\end{align*}
where we used $H_k^{(\lambda_1)}\geq \frac{\lambda_1}{\lambda_2}H_k^{(\lambda_2)}$ for all $\lambda_1\leq \lambda_2$ as well as the fact that $E_\ell=E_\ell^{(1)}<0$, which is a direct consequence of the assumption $e_\mathrm{H}<0$. Observe that
\begin{align*}
\frac{1}{N}\sum_{l=0}^N \ell\ \mathrm{Tr}\left[G_{\ell,-}\right]&+\frac{1}{N}\sum_{l=0}^N \ell\ \mathrm{Tr}\left[G_{\ell,+}\right]=\frac{1}{N}\sum_{l=0}^N \ell\ \mathrm{Tr}\left[G_{\ell,-}\right]+\frac{1}{N}\sum_{l=0}^N (N-\ell)\ \mathrm{Tr}\left[G_{\ell,-}\right]=1,
\end{align*}
and consequently we obtain for all $N\geq N_1$ and $\xi$ with $\xi_0<\xi<L_0-R$ the estimate
\begin{align}
\label{Equation: Local Energies}
E_- + E_+\geq \kappa_\delta\min_{\ell\geq N-N_*(N)} \frac{1}{\ell}E_\ell,
\end{align}
where $E_\pm$ is defined in Eq.~(\ref{Equation: E left/right}). Furthermore, Assumption \ref{Assumption: Part I} enables us to apply the results in \cite{LNR}, which tell us that $\lim_\ell\frac{1}{\ell}E_\ell=e_\mathrm{H}$, and since $N-N_*(N)\underset{N\rightarrow \infty}{\longrightarrow}\infty$, we obtain that $\min_{\ell\geq N-N_*(N)} \frac{1}{\ell}E_\ell\underset{N\rightarrow \infty}{\longrightarrow}e_\mathrm{H}$ as well. For $R>0$ and $n\in \mathbb{N}$, let us define $L_0:=\xi_0+(2n+1)R$. Combining Inequalities (\ref{Inequality: Localized Energy}) and (\ref{Equation: Local Energies}), we obtain
\begin{align*}
\liminf_{N\rightarrow \infty}\frac{1}{N}\left[\left(1+\frac{2c}{n}\right)E_{N,r,L_0,\delta}-E_N\right]\geq (\kappa_\delta-1)e_\mathrm{H}-\frac{K}{R^2}-v_R-\frac{2c}{n}.
\end{align*}
Since $\kappa_\delta<1$ and $e_\mathrm{H}<0$, we can choose $R_\delta$ and $n_\delta$ large enough, such that $\beta_\delta:=(\kappa_\delta-1)e_\mathrm{H}-\frac{K}{R_\delta^2}-v_{R_\delta}-\frac{2c}{n_\delta}>0$. With the choice $L_0(\delta):=\xi_0+(2n_\delta+1)R_\delta$ we conclude
\begin{align*}
\liminf_{N\rightarrow \infty}\frac{1}{N}\left[E_{N,r,L_0(\delta),\delta}-E_N\right]&\geq \liminf_{N\rightarrow \infty}\frac{1}{N}\left(\min\left[\left(1+\frac{2c}{n_\delta}\right)E_{N,r,L_0(\delta),\delta},\ 0\right]-E_N\right)\\
&\geq \min\left(\beta_\delta,-e_\mathrm{H}\right)>0.
\end{align*}
\end{proof}

\begin{cor}
\label{Corollary: Density Expectation}
Let Assumption \ref{Assumption: Part I} hold and $\Psi_N$ be a sequence as in Lemma \ref{Lemma: Localized Ground State}. Then,
\begin{align*}
\underset{L\rightarrow \infty}{\lim}\underset{N\rightarrow \infty}{\limsup}\braket{n_{N,r,L}}_{\Psi_N}=0
\end{align*}
for any $r\in \{1,\dots,d\}$.
\end{cor}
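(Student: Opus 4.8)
The plan is to rule out energetically the configurations in which a fraction $\delta$ of the particles has $r$-th coordinate beyond $L$. Since $\braket{n_{N,r,L}}_{\Psi_N}$ is non-increasing in $L$ and nonnegative, it suffices to show that for each fixed $\delta>0$ one has $\limsup_{N\to\infty}\braket{n_{N,r,L}}_{\Psi_N}\le\delta$ once $L$ is large enough (depending on $\delta$), and then let $\delta\to0$.

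Fix $\delta>0$. I would first smooth the step density: choose $g\in C^\infty(\mathbb{R};[0,1])$ with $g\equiv0$ on $(-\infty,-1]$ and $g\equiv1$ on $[0,\infty)$ and set $\tilde n_{N,r,L}(x):=\frac1N\sum_{j=1}^N g(x_r^{(j)}-L)$, so that $n_{N,r,L}\le\tilde n_{N,r,L}\le n_{N,r,L-1}$ pointwise and $|\nabla_j\tilde n_{N,r,L}|\le\|g'\|_\infty/N$ for every $j$. Then pick smooth $\phi_\pm:\mathbb{R}\to[0,1]$ with $\phi_-^2+\phi_+^2=1$, $\phi_+\equiv1$ on $[\delta/2,\infty)$ and $\phi_+\equiv0$ on $(-\infty,\delta/4]$, and put $F_\pm:=\phi_\pm(\tilde n_{N,r,L})$ on $\mathbb{R}^{N\times d}$. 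These satisfy $F_-^2+F_+^2=1$, $\mathds{1}_{[\tilde n_{N,r,L}\ge\delta/2]}\le F_+^2$, $\{F_+\ne0\}\subset\{\tilde n_{N,r,L}>\delta/4\}\subset\{n_{N,r,L-1}\ge\delta/4\}$, and $|\nabla_j F_\pm|\le c_\delta/N$ for a constant $c_\delta$ depending only on $\delta$ and $g$. Applying Lemma~\ref{Lemma: IMS Localization} to each $T_j$ with the cutoff pair $(F_-,F_+)$ regarded as functions of the variable $x^{(j)}$, summing over $j$, and using that the interaction is a multiplication operator which commutes with $F_\pm$ while $F_-^2+F_+^2=1$, I obtain, with $p_N:=\|F_+\Psi_N\|^2$ and $\Phi_N^\pm$ the normalisations of $F_\pm\Psi_N$,
\[
p_N\braket{H_N}_{\Phi_N^+}+(1-p_N)\braket{H_N}_{\Phi_N^-}\le\braket{H_N}_{\Psi_N}+\frac{2C c_\delta^2}{N},
\]
where $C$ is the constant from Lemma~\ref{Lemma: IMS Localization}; the left-hand side is finite because $\braket{H_N}_{\Psi_N}<\infty$ controls $\sum_j\braket{T_j}_{\Psi_N}$ by Lemma~\ref{Lemma: Balance of Energy}.

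The states $\Phi_N^\pm$ are again bosonic, since $F_\pm$ is a symmetric function of the particle positions. By Lemma~\ref{Lemma: Localized Ground State}, $\mathrm{supp}(\Psi_N)$ is contained in the set where $|M_{N,k_N}(x_r^{(1)},\dots,x_r^{(N)})|\le\alpha_N$ with $\alpha_N\to0$, so for $N$ large this median is $\le\xi_0$ on $\mathrm{supp}(\Psi_N)$; together with the support property of $F_+$ this gives $\mathrm{supp}(\Phi_N^+)\subset\Omega_{N,r,L-1,\delta/4}$ (see Definition~\ref{Definition: Omega}). Hence Lemma~\ref{Lemma: Localized Energy} yields $\braket{H_N}_{\Phi_N^+}\ge E_{N,r,L-1,\delta/4}\ge E_N+\gamma_{\delta/4}N$ as soon as $L-1\ge L_0(\delta/4)$ and $N\ge N_0(\delta/4)$, while trivially $\braket{H_N}_{\Phi_N^-}\ge E_N$. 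Substituting these bounds into the displayed inequality and using $\braket{H_N}_{\Psi_N}=E_N+o_{N\to\infty}(1)$ gives $p_N\gamma_{\delta/4}N\le o_{N\to\infty}(1)$, so $p_N\to0$. Finally $\braket{n_{N,r,L}}_{\Psi_N}\le\braket{\tilde n_{N,r,L}}_{\Psi_N}\le p_N+\delta/2$, so $\limsup_{N\to\infty}\braket{n_{N,r,L}}_{\Psi_N}\le\delta/2<\delta$ for all $L\ge L_0(\delta/4)+1$, which is what was needed.

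The only genuinely delicate point is the energy cost of the localisation. Since $T=\left(m^2-\Delta\right)^s-m^{2s}$ is non-local, a sharp cutoff onto $\{n_{N,r,L}\ge\delta\}$ would be uncontrollable, which is precisely why the step density has to be smoothed to $\tilde n_{N,r,L}$ before composing with the partition $\phi_\pm$; only then does Lemma~\ref{Lemma: IMS Localization}---the sole place the concrete form of $T$ is used---apply. Because each $F_\pm$ varies by $O(1/N)$ in every particle coordinate, this produces merely an $O_\delta(1/N)$ error, negligible against the gain $\gamma_{\delta/4}N$ provided by Lemma~\ref{Lemma: Localized Energy}; all remaining steps are routine bookkeeping.
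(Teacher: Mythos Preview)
Your proof is correct and follows essentially the same approach as the paper: both smooth the particle-fraction indicator, compose with a smooth partition of unity, apply the IMS estimate of Lemma~\ref{Lemma: IMS Localization} with an $O_\delta(1/N)$ localization error, and then invoke the energy gap of Lemma~\ref{Lemma: Localized Energy} to force the mass of the ``bad'' piece to vanish. The only cosmetic difference is that the paper smooths the single-particle indicator by scaling, $\chi(2x_r^{(j)}/L)$, whereas you smooth by translation, $g(x_r^{(j)}-L)$; this changes the localization error from $O(1/(\delta^2 N L^2))$ to $O_\delta(1/N)$, but either is negligible against $\gamma_\delta N$.
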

\begin{proof}
In the following, let $\chi:\mathbb{R}\longrightarrow [0,1]$ be a function with $\chi(x)=0$ for $x\leq 1$ and $\chi(x)=1$ for $x\geq 2$, such that $\chi$ and $\sqrt{1-\chi^2}$ are $C^\infty$. Then we define
\begin{align*}
f_{N,r,L,\delta}(x):=\chi\left(\frac{1}{\delta N}\sum_{j=1}^N \chi\left(\frac{2 x_r^{(j)}}{L}\right)\right),
\end{align*}
$g_{N,r,L,\delta}(x):=\sqrt{1-f_{N,r,L,\delta}^2}$ and $\alpha:=\|\chi'\|_\infty^2 \left(\|\chi'\|_\infty^2+\|\sqrt{1-\chi^2}'\|_\infty^2\right)$. Note that we have $\mathrm{supp}\left(f_{N,r,L,\delta}\Psi_N\right)\subset \Omega_{N,r,\frac{L}{2},\delta}$. Therefore the localization formula from Lemma \ref{Lemma: IMS Localization} and the result from Lemma \ref{Lemma: Localized Energy} tell us that there exists a $\gamma_\delta>0$ such that for all $L\geq 2L_0(\delta)$ and $N\geq N_0(\delta)$
\begin{align*}
\braket{H_N}_{\Psi_N}&\geq \braket{H_N}_{f_{N,r,L,\delta}\Psi_N}+\braket{H_N}_{g_{N,r,L,\delta}\Psi_N}-\frac{4C}{\delta^2 N L^2}\alpha\\
&\geq \left(E_N+\gamma_\delta N\right)\|f_{N,r,L,\delta}\Psi_N\|^2+E_N\left(1-\|f_{N,r,L,\delta}\Psi_N\|^2\right)-\frac{4C}{\delta^2 N L^2}\alpha.
\end{align*}
Consequently, $0\leq \|f_{N,r,L,\delta}\Psi_N\|^2\leq \frac{\braket{H_N}_{\Psi_N}-E_N+\frac{4C}{\delta^2 N L^2}\alpha}{\gamma_\delta N}\underset{N\rightarrow \infty}{\longrightarrow}0$. Furthermore, note that $x\in \mathrm{supp}\left(g_{N,r,L,\delta}\right)$ implies $n_{N,r,L}(x)\leq \frac{1}{N}\sum_{j=1}^N \chi\left(\frac{2 x_r^{(j)}}{L}\right)\leq 2\delta$ and therefore
\begin{align*}
0\leq \braket{n_{N,r,L}}_{\Psi_N}=\braket{n_{N,r,L}}_{f_{N,r,L,\delta}\Psi_N}+\braket{n_{N,r,L}}_{g_{N,r,L,\delta}\Psi_N}\leq \|f_{N,r,L,\delta}\Psi_N\|^2+2\delta\underset{N\rightarrow \infty}{\longrightarrow}2\delta
\end{align*}
for all $L\geq 2L(\delta)$. Hence $\underset{L\rightarrow \infty}{\lim}\underset{N\rightarrow \infty}{\limsup}\braket{n_{N,r,L}}_{\Psi_N}=0$.
\end{proof}

\subsection{Convergence to a Single Condensate}
\label{Subsection: Convergence to a single Condensate}

It was shown in \cite{LNR} that under quite general assumptions, including ours, on the decay and regularity of the interaction potential $v$, there exists for any sequence of states $\Phi_N$ with $\braket{H_N}_{\Phi_N}=E_N+o(N)$ a probability measure $\nu$ supported on the set of (not necessarily normed) Hartree minimizers $\{u\in \mathcal{H}: \mathcal{E}_\mathrm{H}[u]=e_\mathrm{H}(\|u\|)\}$, where $e_\mathrm{H}(s):=\inf_{\|v\|=s}\mathcal{E}_\mathrm{H}[v]$, such that a subsequence of the sequence $\gamma_{\Phi_N}^{(k)}$ converges weakly to the state $\int \big(\! \ket{u}\bra{ u}\! \big)^{\otimes^k} \ \mathrm{d}\nu(u)$ for all $k\in \mathbb{N}$, i.e.
\begin{align}
\label{Equation: Quantum de Finetti}
\mathrm{Tr}\left[\gamma_{\Phi_{N_j}}^{(k)}\ B\right]\underset{j\rightarrow \infty}{\longrightarrow}\int\mathrm{Tr}\left[\big(\! \ket{u}\bra{ u}\! \big)^{\otimes^k} B\right] \mathrm{d}\nu(u)
\end{align}
for any compact $k$ particle operator $B$. In Lemma \ref{Lemma: Strong Convergence}, we will lift this weak convergence to a strong one for the sequence of approximate ground states $\Psi_N$ constructed in Lemma \ref{Lemma: Localized Ground State}, by using the fact that mass cannot escape to infinity as a consequence of Corollary \ref{Corollary: Density Expectation}. In this context, strong convergence means that Eq.~(\ref{Equation: Quantum de Finetti}) holds for all bounded $k$ particle operators $B$, and not only compact ones. In particular, $\|u\|=1$ on the support of $\nu$.

\begin{lem}[Strong Convergence]
\label{Lemma: Strong Convergence}
Let $\Psi_N$ be the sequence from Lemma \ref{Lemma: Localized Ground State} and let $\gamma_N^{(k)}$ denote the corresponding reduced density matrices. Given Assumption \ref{Assumption: Part I}, there exists a probability measure $\mu$ supported on $\mathbb{R}^d$ and a subsequence $N_j$, such that for any bounded $k$ particle operator $B$
\begin{align*}
\mathrm{Tr}\left[\gamma_{N_j}^{(k)}\ B\right]\underset{j\rightarrow \infty}{\longrightarrow} \int_{\mathbb{R}^d} \mathrm{Tr}\left[\big(\! \ket{u_{0,t}}\bra{ u_{0,t}}\! \big)^{\otimes^k} B\right] \mathrm{d}\mu(t),
\end{align*}
where $u_{0,t}$ is defined in Assumption \ref{Assumption: Part I}.
\end{lem}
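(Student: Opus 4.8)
The plan is to combine the abstract quantum de Finetti statement from \cite{LNR}, recalled in Eq.~(\ref{Equation: Quantum de Finetti}), with the tightness information provided by Corollary \ref{Corollary: Density Expectation}. First I would apply the result of \cite{LNR} to the sequence $\Psi_N$ from Lemma \ref{Lemma: Localized Ground State}, which is legitimate since $\braket{H_N}_{\Psi_N}=E_N+o(N)$: this yields a subsequence $N_j$ and a probability measure $\nu$ supported on the set of (possibly sub-normalized) Hartree minimizers $\{u:\mathcal{E}_\mathrm{H}[u]=e_\mathrm{H}(\|u\|)\}$ so that $\gamma_{N_j}^{(k)}$ converges weakly-$*$ to $\int (\ket{u}\bra{u})^{\otimes k}\,\mathrm{d}\nu(u)$ against compact operators. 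The first key step is to upgrade this to convergence against all bounded operators, equivalently to show $\nu$ is supported on the unit sphere $\|u\|=1$; by the standard argument (testing against $B=1^{\otimes k}$ minus compact cutoffs, i.e.\ no loss of mass in the limit) this is exactly the statement that mass does not escape to infinity. Since translations are norm-preserving and the interaction decays, the only mechanism for mass loss is escape to spatial infinity, and this is precisely ruled out: Corollary \ref{Corollary: Density Expectation} gives $\lim_{L\to\infty}\limsup_{N\to\infty}\braket{n_{N,r,L}}_{\Psi_N}=0$, and by the same argument applied to the reflected coordinate (or by using that $M_{N,k_N}\le\alpha_N\to 0$ controls the lower tail) one gets the two-sided tightness $\lim_{L\to\infty}\limsup_{N}\mathrm{Tr}[\gamma_N^{(1)}\mathds{1}_{|x_r|\ge L}]=0$ for every $r$. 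Feeding this into Eq.~(\ref{Equation: Quantum de Finetti}) with $k=1$ forces $\int \|u\|^2\,\mathrm{d}\nu(u)=1$, hence $\|u\|=1$ $\nu$-a.e., which gives strong convergence.

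The second, more delicate, step is to identify the support of $\nu$. By Assumption \ref{Assumption: Part I}, every normalized Hartree minimizer is of the form $e^{i\theta}u_{0,t}$ for some $\theta\in[0,2\pi)$ and $t\in\mathbb{R}^d$; since $\ket{e^{i\theta}u_{0,t}}\bra{e^{i\theta}u_{0,t}}=\ket{u_{0,t}}\bra{u_{0,t}}$, the phase is irrelevant and $\nu$ descends to a probability measure $\mu$ on $\mathbb{R}^d$ via the parametrization $t\mapsto u_{0,t}$. It remains to show $\mu$ is a genuine pushforward, i.e.\ that $\nu$ assigns zero mass to any pair $\{e^{i\theta}u_{0,t}: \theta\}$ with two different $t$'s collapsing — but this is automatic once we realize the only content is that $\mu$ is supported on $\mathbb{R}^d$, which it is by construction. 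So the real work is just converting the measure on the minimizer manifold into a measure on the translation parameter, using the classification in Assumption \ref{Assumption: Part I}; concretely, one checks that the map $(\theta,t)\mapsto \ket{u_{0,t}}\bra{u_{0,t}}$ is Borel and that $\nu$, being supported on $\{e^{i\theta}u_{0,t}\}$, can be disintegrated so that the resulting one-particle (and $k$-particle) limit reads $\int_{\mathbb{R}^d}(\ket{u_{0,t}}\bra{u_{0,t}})^{\otimes k}\,\mathrm{d}\mu(t)$.

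I expect the main obstacle to be the tightness/no-escape argument, namely turning the one-sided statement of Corollary \ref{Corollary: Density Expectation} (density of particles with $x_r\ge L$) into genuine two-sided spatial tightness of $\gamma_N^{(1)}$ and then correctly relating $\braket{n_{N,r,L}}_{\Psi_N}$ to $\mathrm{Tr}[\gamma_N^{(1)}\mathds{1}_{[L,\infty)}(x_r)]$ — these are literally equal by the definition of $\gamma_N^{(1)}$ and the permutation symmetry of $\Psi_N$, so the subtlety is only in handling the lower tail. For the lower tail I would either run the whole machinery of Section \ref{Subsection: Localization of the Ground State} with $x_r$ replaced by $-x_r$ (the hypotheses are symmetric under $x\mapsto -x$ up to relabeling, and the regularized median transforms covariantly), or, more cheaply, note that $|M_{N,k_N}(x_r^{(1)},\dots,x_r^{(N)})|\le\alpha_N$ on $\mathrm{supp}(\Psi_N)$ together with the upper-tail control already pins the bulk of the mass in a bounded window: if too many particles sat at $x_r\le -L$, the median constraint would force compensating mass at large positive $x_r$, contradicting the upper-tail bound. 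Once two-sided tightness is in hand the rest is soft functional analysis: a diagonal extraction to get a single subsequence working for all $k$, and the bookkeeping to pass from $\nu$ on $L^2(\mathbb{R}^d)$ to $\mu$ on $\mathbb{R}^d$ via Assumption \ref{Assumption: Part I}.
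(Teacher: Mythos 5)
Your outline correctly identifies the high-level structure: invoke the quantum de Finetti result of \cite{LNR} to obtain a measure $\nu$ on sub-normalized Hartree minimizers, establish tightness to rule out mass escape, use the reflection $x\mapsto -x$ to upgrade Corollary \ref{Corollary: Density Expectation} to a two-sided bound, and push $\nu$ forward along $t\mapsto u_{0,t}$ (quotienting by the phase) to obtain $\mu$ on $\mathbb{R}^d$. The final ``phase is irrelevant, push forward'' step matches the paper's $\mu(A):=\nu\big(\{u_{0,t}:t\in A,\theta\in[0,2\pi)\}\big)$.

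However, there is a genuine gap in the step you call ``the standard argument (testing against $B=1^{\otimes k}$ minus compact cutoffs)''. The cutoff $\mathds{1}_{[-L,L]^d}$ is bounded but \emph{not} compact, and Eq.~(\ref{Equation: Quantum de Finetti}) only gives convergence against compact operators, so you cannot directly feed the spatial tightness bound into it. Moreover, spatial tightness alone is not sufficient for trace preservation: the states $\phi_N(x)=e^{iN\xi\cdot x}\phi(x)$ have the same position density for all $N$ yet $|\phi_N\rangle\langle\phi_N|\rightharpoonup 0$ against compacts. A momentum-tightness ingredient is required. The paper obtains it from Lemma~\ref{Lemma: Balance of Energy}: the uniform bound $\mathrm{Tr}\big[(T+1)^{1/2}\gamma_{N_j}^{(1)}(T+1)^{1/2}\big]\le C$ lets one apply Banach--Alaoglu to the dressed density matrices, identify the dressed weak-$*$ limit, and then test against $K=(T+1)^{-1/2}\mathds{1}_{[-L,L]^d}(T+1)^{-1/2}$, which \emph{is} compact because $T=t(i\nabla)$ with $t(p)\to\infty$. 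This is the decisive technical move your sketch omits; without it, $\int\|u\|^2\,\mathrm{d}\nu(u)=1$ does not follow, and neither does strong convergence. (The paper then cites \cite{We} to pass from trace preservation to convergence against all bounded operators, which you implicitly use as well.)

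A smaller point: your ``cheaper'' alternative for lower-tail control (arguing that a small median plus the upper-tail bound controls the lower tail) is not correct. If a fraction $\delta<1/2$ of the particles sit at $x_r\le -L$ while the rest are near zero, the regularized median $M_{N,k_N}$ still lies near zero, and nothing forces compensating mass at large positive $x_r$. Your primary suggestion — applying Corollary \ref{Corollary: Density Expectation} to the reflected configuration, using that $H_N$ and the regularized median transform covariantly under $x\mapsto -x$ — is what the paper does and is the argument that actually works.
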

\begin{proof}
As was shown in \cite{LNR}, any sequence of approximate ground states, such as $\Psi_{N}$, has a subsequence $N_j$ that converges weakly to a convex combination of product states over Hartree minimizers, i.e. there exists a probability measure $\nu$ supported on the set of Hartree minimizers $u$ with $\|u\|\leq 1$, such that Eq.~(\ref{Equation: Quantum de Finetti}) holds for any compact $k$ particle operator $B$. As the central step of this proof, we will verify that the measure $\nu$ satisfies the identity $\int \|u\|^2\mathrm{d}\nu(u)=1$. By Corollary \ref{Corollary: Density Expectation}, we know that
\begin{align*}
\underset{L\rightarrow \infty}{\lim}\ \underset{j\rightarrow \infty}{\limsup}\ \mathrm{Tr}\left[\gamma^{(1)}_{N_j}\ \mathds{1}_{[x_r> L]}\right]=\underset{L\rightarrow \infty} {\lim}\ \underset{j\rightarrow \infty}{\limsup}\braket{n_{N_{j},r,L}}_{\Psi_{N_{j}}}=0.
\end{align*}
Since the reflected states $x\mapsto \Psi_N(-x)$ still satisfy the conditions of Corollary \ref{Corollary: Density Expectation}, we obtain $\underset{L\rightarrow \infty}{\lim}\ \underset{j\rightarrow \infty}{\limsup}\ \mathrm{Tr}\left[\gamma^{(1)}_{N_j}\ \mathds{1}_{[x_r<-L]}\right]=0$ as well. Consequently, 
\begin{align*}
\underset{L\rightarrow \infty}{\lim}\ \underset{j\rightarrow \infty}{\liminf}\ \mathrm{Tr}\left[\gamma^{(1)}_{N_j}\ \mathds{1}_{[-L,L]^d}\right]=1.
\end{align*}
Since the operator $\mathds{1}_{[-L,L]^d}$ is not compact, we cannot immediately apply the convergence (\ref{Equation: Quantum de Finetti}) for $B:=\mathds{1}_{[-L,L]^d}$. In order to obtain a convergence in a stronger sense, note that by Lemma \ref{Lemma: Balance of Energy} we have a uniform bound on the kinetic energy of $\gamma_{N_j}^{(1)}$ , i.e. there exists a constant $C<\infty$, such that
\begin{align*}
\mathrm{Tr}\left[(T+1)^{\frac{1}{2}}\ \gamma_{N_j}^{(1)}\ (T+1)^{\frac{1}{2}}\right]\leq C
\end{align*}
for all $j\in \mathbb{N}$. Since the trace class operators are the dual space of the compact operators, there exists by the Banach-Alaoglu theorem a trace class operator $\gamma$ and a subsequence, which we will still denote by $N_j$ for the sake of readability, such that for any compact one particle operator $K$
\begin{align*}
\mathrm{Tr}\left[(T+1)^{\frac{1}{2}}\ \gamma_{N_j}^{(1)}\ (T+1)^{\frac{1}{2}}\ K\right]\underset{j\rightarrow \infty}{\longrightarrow}\mathrm{Tr}\left[\gamma\ K\right].
\end{align*}
This in particular yields $\mathrm{Tr}\left[ \gamma_{N_j}^{(1)}\ B\right]\underset{j\rightarrow \infty}{\longrightarrow}\mathrm{Tr}\left[(T+1)^{-\frac{1}{2}}\ \gamma\ (T+1)^{-\frac{1}{2}}\ B\right]$ for any compact $B$, and consequently $(T+1)^{-\frac{1}{2}}\ \gamma\ (T+1)^{-\frac{1}{2}}=\int \ket{u}\bra{ u} \mathrm{d}\nu(u)$ by Eq.~(\ref{Equation: Quantum de Finetti}). Since the kinetic energy is of the form $T=t\left(i \nabla\right)$ with $t(p)\underset{|p|\rightarrow \infty}{\longrightarrow}\infty$, the operator $K:=\left(T+1\right)^{-\frac{1}{2}}\ \mathds{1}_{[-L,L]^d}\ \left(T+1\right)^{-\frac{1}{2}}$ is compact. Collecting all the results we have obtained so far yields
\begin{align*}
1&=\underset{L\rightarrow \infty}{\lim}\ \underset{j\rightarrow \infty}{\liminf}\ \mathrm{Tr}\left[\gamma^{(1)}_{N_j}\ \mathds{1}_{[-L,L]^d}\right]=\underset{L\rightarrow \infty}{\lim}\ \underset{j\rightarrow \infty}{\liminf}\ \mathrm{Tr}\left[(T+1)^{\frac{1}{2}}\ \gamma_{N_j}^{(1)}\ (T+1)^{\frac{1}{2}}\ K\right]\\
&=\underset{L\rightarrow \infty}{\lim}\mathrm{Tr}\left[(T+1)^{-\frac{1}{2}}\ \gamma\ (T+1)^{-\frac{1}{2}}\ \mathds{1}_{[-L,L]^d}\right]=\underset{L\rightarrow \infty}{\lim}\int \mathrm{Tr}\left[\ket{u}\bra{ u}\ \mathds{1}_{[-L,L]^d}\right]\mathrm{d}\nu(u)\\
&=\int \mathrm{Tr}\big[\ket{u}\bra{ u}\big]\mathrm{d}\nu(u)=\int \|u\|^2\mathrm{d}\nu(u).
\end{align*}
As an immediate consequence we obtain that $\nu$ is supported on Hartree minimizers $u$ with $\|u\|=1$. By Assumption \ref{Assumption: Part I}, we know that all such Hartree minimizers are given by $e^{i\theta}u_{0,t}$ with $\theta\in [0,2\pi)$ and $t\in \mathbb{R}^d$. Recall that $\ket{e^{i\theta}u_{0,t}}\bra{e^{i\theta}u_{0,t}}=\ket{u_{0,t}}\bra{u_{0,t}}$ defines the same density matrix for all complex phases $e^{i\theta}$. Therefore, defining the measure $\mu(A):=\nu\big(\{u_{0,t}:t\in A,\theta\in [0,2\pi)\}\big)$ yields
\begin{align}
\label{Equation: Weak Convergence with t}
\mathrm{Tr}\left[\gamma_{N_j}^{(k)}\ B\right]\underset{j\rightarrow \infty}{\longrightarrow} \int_{\mathbb{R}^d} \mathrm{Tr}\left[\big(\! \ket{u_{0,t}}\bra{ u_{0,t}}\! \big)^{\otimes^k} B\right] \mathrm{d}\mu(t)
\end{align}
for all compact operators $B$. Since $\lim_j \mathrm{Tr}\left[\gamma_{N_j}^{(1)}\right]=1=\int_{\mathbb{R}^d}\mathrm{Tr}\big[ \ket{u_{0,t}}\bra{u_{0,t}}\big]\mathrm{d}\mu(t)$, this convergence holds even in the strong sense, see \cite{We}, i.e. the convergence (\ref{Equation: Quantum de Finetti}) holds for all bounded operator $B$.
\end{proof}

\begin{lem}
\label{Lemma: Using the median}
Let $\Psi_N$ be the sequence from Lemma \ref{Lemma: Localized Ground State}. For any $\epsilon>0$ and $r\in \{1,\dots,d\}$, consider the bounded two particle operator $B_{\epsilon,r}:=\mathds{1}_{[x_r\leq \epsilon]}\ \mathds{1}_{[y_r\geq  -\epsilon]}+\mathds{1}_{[y_r\leq \epsilon]}\ \mathds{1}_{[x_r\geq  -\epsilon]}$. Then 
\begin{align*}
\underset{N\rightarrow \infty}{\liminf}\ \mathrm{Tr}\left[\gamma_N^{(2)}\ B_{\epsilon,r}\right]\geq \frac{1}{2}.
\end{align*}
\end{lem}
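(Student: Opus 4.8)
The plan is to reduce the statement to an elementary counting estimate on the $r$-th coordinates of the particles, and then to invoke the median localization of Lemma \ref{Lemma: Localized Ground State}.

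First I would unfold the two-particle density matrix. Since $\Psi_N$ is symmetric, $\mathrm{Tr}\big[\gamma_N^{(2)}B_{\epsilon,r}\big]=\frac{1}{N(N-1)}\sum_{i\neq j}\braket{(B_{\epsilon,r})_{ij}}_{\Psi_N}$, where $(B_{\epsilon,r})_{ij}$ denotes $B_{\epsilon,r}$ acting on the coordinates $x^{(i)},x^{(j)}$. For a configuration $x\in\mathbb{R}^{N\times d}$ set $A_r(x):=\#\{j:x^{(j)}_r\leq\epsilon\}$ and $B_r(x):=\#\{j:x^{(j)}_r\geq-\epsilon\}$, so that, counting ordered pairs and subtracting the diagonal,
\begin{align*}
\sum_{i\neq j}\mathds{1}_{[x^{(i)}_r\leq\epsilon]}\,\mathds{1}_{[x^{(j)}_r\geq-\epsilon]}=A_r(x)\,B_r(x)-\#\{j:|x^{(j)}_r|\leq\epsilon\}\geq A_r(x)\,B_r(x)-N.
\end{align*}
Because $B_{\epsilon,r}$ is the symmetrization of $\mathds{1}_{[x_r\leq\epsilon]}\mathds{1}_{[y_r\geq-\epsilon]}$ under $x\leftrightarrow y$, this yields $\mathrm{Tr}\big[\gamma_N^{(2)}B_{\epsilon,r}\big]\geq\frac{2}{N(N-1)}\big(\braket{A_rB_r}_{\Psi_N}-N\big)$.

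Second, I would bound $A_r$ and $B_r$ from below on $\mathrm{supp}(\Psi_N)$. Fix $N$ large enough that $\alpha_N<\epsilon$, with $\alpha_N$ the sequence from Lemma \ref{Lemma: Localized Ground State}. For $x\in\mathrm{supp}(\Psi_N)$ order the $r$-th coordinates $y_1\leq\dots\leq y_N$; then $M_{N,k_N}(x^{(1)}_r,\dots,x^{(N)}_r)$ is the mean of $y_{N/2-k_N},\dots,y_{N/2+k_N}$, hence the smallest of these satisfies $y_{N/2-k_N}\leq M_{N,k_N}\leq\alpha_N<\epsilon$ and the largest satisfies $y_{N/2+k_N}\geq M_{N,k_N}\geq-\alpha_N>-\epsilon$. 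Consequently at least $N/2-k_N$ particles have $r$-coordinate $\leq\epsilon$ and at least $N/2-k_N$ have $r$-coordinate $\geq-\epsilon$, i.e. $A_r(x),B_r(x)\geq N/2-k_N$ pointwise on the support, so $\braket{A_rB_r}_{\Psi_N}\geq(N/2-k_N)^2$.

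Combining the two steps, $\mathrm{Tr}\big[\gamma_N^{(2)}B_{\epsilon,r}\big]\geq\frac{2}{N(N-1)}\big((N/2-k_N)^2-N\big)$, and since $k_N\ll N$ the right-hand side tends to $2\cdot\frac14=\frac12$ as $N\to\infty$, which is the claim. I do not anticipate a genuine obstacle here: the only bookkeeping to watch is the subtraction of the diagonal count $\#\{j:|x^{(j)}_r|\leq\epsilon\}$, which is harmless since it contributes $O(1/N)$ after dividing by $N(N-1)$ while the main term is $\Theta(1)$. The entire content of the lemma is the observation that the median constraint of Lemma \ref{Lemma: Localized Ground State} forces essentially half of the particles to lie on each side of any fixed neighborhood of the origin in the $r$-th coordinate.
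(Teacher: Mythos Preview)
Your proof is correct and follows essentially the same approach as the paper: express $\mathrm{Tr}[\gamma_N^{(2)}B_{\epsilon,r}]$ as an expectation of a configuration-counting function, use the median localization $|M_{N,k_N}|\leq\alpha_N<\epsilon$ on $\mathrm{supp}(\Psi_N)$ to get at least $N/2-k_N$ particles on each side, and pass to the limit. You are actually slightly more careful than the paper in that you explicitly subtract the diagonal contribution $\#\{j:|x^{(j)}_r|\leq\epsilon\}\leq N$, which the paper's displayed bound $f_{N,\epsilon,r}(x)\geq\frac{2}{N(N-1)}(N/2-k_N)^2$ omits; this does not affect the limit, but your version is cleaner.
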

\begin{proof}
With the help of the function $f_{N,\epsilon,r}:=\frac{2}{N(N-1)}\sum_{i\neq j}\mathds{1}_{\left[x^{(i)}_r\leq \epsilon\right]}\mathds{1}_{\left[x^{(j)}_r\geq -\epsilon\right]}$ we have
\begin{align*}
\mathrm{Tr}\left[\gamma_N^{(2)} B_{\epsilon,r} \right]=\underset{\mathbb{R}^{N\times d}}{\int} f_{N,\epsilon,r}(x)|\Psi_N|^2 \mathrm{d}x.
\end{align*}
Let $\alpha_N$ and $k_N$ be the sequences introduced in Lemma \ref{Lemma: Localized Ground State} and let $N$ be large enough such that $\alpha_N<\epsilon$. Then, $\left|M_{N,k_N}\left(x_r^{(1)},\dots,x_r^{(N)}\right)\right|<\epsilon$ for all $x\in \mathrm{supp}\left(\Psi_N\right)$, and therefore at least $\frac{N}{2}-k_N$ particles satisfy $x_r\leq\epsilon$ and at least $\frac{N}{2}-k_N$ particles satisfy $-\epsilon\leq x_r$. Consequently
\begin{align*}
f_{N,\epsilon,r}(x)\geq \frac{2}{N(N-1)}\left(\frac{N}{2}-k_N\right)^2\underset{N\rightarrow \infty}{\longrightarrow}\frac{1}{2},
\end{align*}
and therefore $\underset{N\rightarrow \infty}{\liminf} \int_{\mathbb{R}^{N\times d}} f_{N,\epsilon,r}(x)|\Psi_N|^2\ \mathrm{d}x\geq \frac{1}{2}$.
\end{proof}

\begin{lem}
\label{Lemma: Delta Measure}
The measure $\mu$ from Lemma \ref{Lemma: Strong Convergence} is supported on $\{0\}\subset \mathbb{R}^d$, i.e. $\mu=\delta_0$.
\end{lem}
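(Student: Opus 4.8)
The plan is to combine the strong convergence of the reduced density matrices (Lemma \ref{Lemma: Strong Convergence}) with the median-localization estimate of Lemma \ref{Lemma: Using the median}, exploiting the defining property of $u_0$ in Assumption \ref{Assumption: Part I} that $\int_{[x_r\leq t]}|u_0|^2=\frac{1}{2}$ holds \emph{only} at $t=0$. First I would observe that the operator $B_{\epsilon,r}$ from Lemma \ref{Lemma: Using the median} is bounded (a sum of products of indicator functions), so by the \emph{strong} convergence established in Lemma \ref{Lemma: Strong Convergence} we may pass to the limit and compute
\begin{align*}
\lim_{j\rightarrow\infty}\mathrm{Tr}\left[\gamma_{N_j}^{(2)}\ B_{\epsilon,r}\right]=\int_{\mathbb{R}^d}\mathrm{Tr}\left[\big(\!\ket{u_{0,t}}\bra{u_{0,t}}\!\big)^{\otimes^2}B_{\epsilon,r}\right]\mathrm{d}\mu(t)=2\int_{\mathbb{R}^d}\Big(\!\int_{[x_r\leq\epsilon]}|u_{0,t}(x)|^2\mathrm{d}x\Big)\Big(\!\int_{[y_r\geq-\epsilon]}|u_{0,t}(y)|^2\mathrm{d}y\Big)\mathrm{d}\mu(t),
\end{align*}
using the product structure of the limiting two-particle density matrix and the symmetry of $B_{\epsilon,r}$ under exchanging the two particles. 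Writing $u_{0,t}(x)=u_0(x-t)$ and substituting, the inner factors become $\int_{[x_r\leq\epsilon-t_r]}|u_0|^2$ and $\int_{[x_r\geq -\epsilon-t_r]}|u_0|^2=1-\int_{[x_r< -\epsilon-t_r]}|u_0|^2$, so that along with Lemma \ref{Lemma: Using the median} we get, for every $\epsilon>0$,
\begin{align*}
\int_{\mathbb{R}^d}g_\epsilon(t_r)\,\mathrm{d}\mu(t)\geq\frac14,\qquad g_\epsilon(s):=F(\epsilon-s)\big(1-F(-\epsilon-s)\big),
\end{align*}
where $F(a):=\int_{[x_r\leq a]}|u_0(x)|^2\mathrm{d}x$ is the (continuous, nondecreasing) marginal distribution function in the $r$-th coordinate.

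Next I would analyze the function $g_\epsilon$. As $\epsilon\downarrow 0$ we have $g_\epsilon(s)\rightarrow F(s)(1-F(s))=:g_0(s)$ pointwise, and $g_0(s)\leq\frac14$ for all $s$ with equality exactly when $F(s)=\frac12$; by the uniqueness clause in Assumption \ref{Assumption: Part I}, $F(s)=\frac12$ holds if and only if $s=0$. Since $0\leq g_\epsilon\leq\frac14$ uniformly (as $F\in[0,1]$ and $g_\epsilon$ is a product of the form $a(1-b)$ with $a\le b$... more carefully one checks $F(\epsilon-s)(1-F(-\epsilon-s))\le\frac14+O(\epsilon)$ using $F(\epsilon-s)-F(-\epsilon-s)\to 0$), dominated convergence gives $\int g_0(t_r)\,\mathrm{d}\mu(t)\geq\frac14$. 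Combined with $g_0\leq\frac14$ pointwise, this forces $g_0(t_r)=\frac14$ for $\mu$-almost every $t$, hence $F(t_r)=\frac12$ $\mu$-a.e., hence $t_r=0$ $\mu$-a.e. Running this for every $r\in\{1,\dots,d\}$ shows $\mu$ is supported on $\{0\}$, i.e. $\mu=\delta_0$.

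The main obstacle is the interchange of limits: Lemma \ref{Lemma: Using the median} is a statement about a \emph{fixed} $\epsilon$ with $N\to\infty$, whereas the conclusion requires understanding the $\epsilon\downarrow 0$ behavior, and the subsequence $N_j$ furnished by Lemma \ref{Lemma: Strong Convergence} must be compatible with both. The clean way to handle this is to note that the strong convergence of $\gamma_{N_j}^{(2)}$ holds for \emph{every} bounded operator simultaneously along one fixed subsequence, so for that subsequence we obtain $\int g_\epsilon\,\mathrm{d}\mu\ge\frac14$ for \emph{all} $\epsilon>0$ at once, and only then let $\epsilon\downarrow 0$ — no diagonal argument across $\epsilon$ is needed. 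A secondary technical point is verifying that $g_\epsilon$ is uniformly bounded by something converging to $\frac14$ so that dominated convergence applies cleanly; this is immediate from $0\le g_\epsilon\le 1$ together with the pointwise limit, since one may even apply dominated convergence with the constant dominating function $1$ and only use $g_0\le\frac14$ at the very end to conclude equality $\mu$-a.e. I would also remark that the continuity of $F$ (hence the well-definedness of ``$F(s)=\tfrac12$ iff $s=0$'' without ambiguity at jumps) follows from $u_0\in L^2$, so the marginal $|u_0|^2\,\mathrm{d}x_r$ has no atoms only if... in fact $F$ is automatically continuous as the cumulative of an $L^1$ density, so no issue arises.
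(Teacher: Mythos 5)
Your argument is correct and is essentially identical to the paper's own proof: both pass to the limit along the subsequence from Lemma \ref{Lemma: Strong Convergence} using that $B_{\epsilon,r}$ is bounded, compute the resulting integral against $\mu$ in terms of the marginal CDF of $|u_0|^2$, let $\epsilon\downarrow 0$ by dominated convergence, and then exploit that $q(1-q)\le\frac14$ with equality only at $q=\frac12$ together with the uniqueness clause in Assumption \ref{Assumption: Part I}. (Minor slip: with your substitution $g_\epsilon(s)=F(\epsilon-s)(1-F(-\epsilon-s))$, the pointwise limit is $F(-s)(1-F(-s))$ rather than $F(s)(1-F(s))$; since $F(-s)=\tfrac12\Leftrightarrow s=0$, the conclusion is unaffected.)
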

\begin{proof}
Let us define the density function $\rho(x):=|u_0(x)|^2$, as well as the marginal density function $\rho_r(x_r):=\int \rho(x)\ \mathrm{d}x_1\dots\mathrm{d}x_{r-1}\mathrm{d}x_{r+1}\dots\mathrm{d}x_d$ and the marginal measure $\mu_r(A):=\mu\left([x_r\in A]\right)$. Note that the two particle density function corresponding to $\big(\! \ket{u_{0,t}}\bra{ u_{0,t}}\!\big)^{\otimes^2}$ is given by $\rho(x-t)\rho(y-t)$, and therefore Lemmata \ref{Lemma: Strong Convergence} and \ref{Lemma: Using the median} imply
\begin{align*}
&\frac{1}{2}\leq \lim_j \mathrm{Tr}\left[\gamma_{N_j}^{(2)}\ B_{\epsilon,r}\right]=\int\limits_{\mathbb{R}^d} \mathrm{Tr}\left[\big(\! \ket{u_{0,t}}\bra{ u_{0,t}}\!\big)^{\otimes^2}B_{\epsilon,r}\right] \mathrm{d}\mu(t)\\
&=2\int\limits_\mathbb{R} \left(\int\limits_{-\infty}^{t_r+\epsilon} \rho_r(x_r) \mathrm{d}x_r\right)\left(\ \int\limits_{t_r-\epsilon}^\infty \rho_r(x_r) \mathrm{d}x_r\right) \mathrm{d}\mu_r(t_r)\\
&=2\int\limits_\mathbb{R} f_r\left(t_r+\epsilon\right)\left(1-f_r\left(t_r-\epsilon\right)\right) \ \mathrm{d}\mu_r(t_r)\underset{\epsilon\rightarrow 0}{\longrightarrow} 2\int\limits_\mathbb{R} f_r(t_r)\left(1-f_r(t_r)\right) \ \mathrm{d}\mu_r(t_r) 
\end{align*}
with the definition $f_r(s):=\int_{-\infty}^{s} \rho_r(x_r)\ \mathrm{d}x_r$, where we have used dominated convergence and continuity of $f_r$. Hence we obtain the inequality
\begin{align*}
\int_\mathbb{R} f_r(t_r)\left(1-f_r(t_r)\right) \ \mathrm{d}\mu_r(t_r)\geq \frac{1}{4}.
\end{align*}
Since the function $h(q):=q(1-q)$ is bounded by $\frac{1}{4}$ and attains its maximum only for $q=\frac{1}{2}$, we conclude $f_r(s)=\frac{1}{2}$ $\mu_r$-almost everywhere. On the other hand, by Assumption \ref{Assumption: Part I} we know that $\int_{-\infty}^s\rho_r(x_r)\ \mathrm{d}x_r=\frac{1}{2}$ if and only if $s= 0$ and therefore $f_r(s)\neq \frac{1}{2}$ for all $s\neq 0$. This together with the fact $f_r(s)=\frac{1}{2}$ $\mu_r$-almost everywhere, implies $\mu_r=\delta_0$. Since this holds for all marginal measures $\mu_r$ with $r\in \{1,\dots,d\}$, we conclude $\mu=\delta_0$.
\end{proof}

By choosing the bounded one particle operator $B$ as the projection onto the state $u_0$, Theorem \ref{Theorem: Bose--Einstein condensation of Ground States} is a direct consequence of Lemmata \ref{Lemma: Strong Convergence} and \ref{Lemma: Delta Measure}.

\section{Fock Space Formalism}
\label{Secction: Fock Space Formalism}
In order to prove Theorem \ref{Theorem: Main Theorem}, we will make use of the correspondence between the Hartree energy $\mathcal{E}_\mathrm{H}$ and the Hamiltonian $H_N$. For a rigorous treatment of this correspondence, we first need to formulate our problem in the language of second quantization. In the subsequent Definition \ref{Definition: Basic Fock} we will define the necessary formalism including the relevant Fock spaces with the corresponding creation and annihilation operators. Following \cite{LNSS}, we will use the excitation map $U_N$ in order to arrive at an operator $U_N H_N U_N^{-1}$ that only depends on modes $a_i$, $i>0$,  describing excitations, and not on the mode $a_0$ corresponding to the condensate $u_0$. The usefulness of this stems from the fact that all the modes $a_i$, $i>0$, can be thought of as being small due to Bose--Einstein condensation.\\

Before we start introducing the Fock space formalism, let us fix some notation. In the following we will repeatedly use the notation $A\cdot B$ for the composition of an operator $B:\mathcal{H}_1\longrightarrow \mathcal{H}_2$ with an operator $A:\mathcal{H}_2\longrightarrow \mathcal{H}_3$, especially when we want to stress that the involved operators map different Hilbert spaces. In order to have a consistent notation, we will occasionally write expectation values as operator products by identifying an element $u\in L^2(\mathbb{R^d})$ with a linear map $\mathbb{C}\longrightarrow L^2(\mathbb{R^d})$, e.g. we write $u^\dagger\cdot T\cdot u$ for the expectation value $\braket{T}_u$. Furthermore, recall the real orthonormal basis $u_0,u_1,\dots,u_d,u_{d+1},\dots$ from the introduction, where $u_0$ is the Hartree minimizer from Assumption \ref{Assumption: Part I} and $u_1,\dots,u_d$ form a basis of the vector space spanned by the partial derivatives $\partial_{x_1}u_0,\dots,\partial_{x_d}u_0$. Moreover, let us define the spaces
\begin{align*}
\mathcal{H}:&=L^2\big(\mathbb{R}^d\big),\\
\mathcal{H}_0:&=\{u_0\}^\perp\subset \mathcal{H}.
\end{align*}

\begin{defi}
\label{Definition: Basic Fock}
Let us denote with $a_j:=a_{u_j}$ the annihilation operator corresponding to $u_j\in \mathcal{H}$ and $\mathcal{N}_{\geq k}:=\sum_{j=k}^\infty a_j^\dagger a_j$. In the following, we will repeatedly use the Fock spaces $\mathcal{F}:=\mathcal{F}\left(\mathcal{H}\right)$, $\mathcal{F}_0:=\mathcal{F}\left(\mathcal{H}_0\right)$ and $\mathcal{F}_{\leq M}:=\mathds{1}_{[\mathcal{N}\leq M]} \mathcal{F}_0\subset \mathcal{F}_0$, where $\mathcal{N}:=\mathcal{N}_{\geq 1}$. For any $k\in \mathbb{N}_0$ we define the operator $ a_{\geq k}:\mathrm{dom}\left(\sqrt{\mathcal{N}_{\geq k}}\, \right)\longrightarrow \mathcal{F}\otimes \mathcal{H}$ as
\begin{align*}
a_{\geq k}:=\sum_{j=k}^\infty  a_j\otimes u_j,
\end{align*}
as well as the re-scaled operator $b_{\geq k}:=\sum_{j=k}^\infty  b_j\otimes u_j:=\frac{1}{\sqrt{N}}a_{\geq k}$, and the re-scaled and restricted operator $\mathbb{L}:=\frac{1}{N}\mathcal{N}\big|_{\mathcal{F}_{\leq N}}:\mathcal{F}_{\leq N}\longrightarrow \mathcal{F}_{\leq N}$, where we suppress the $N$ dependence of $b_{\geq k}$ and $\mathbb{L}$ in our notation. Furthermore, given two operators $X=\sum_{i=0}^\infty  X_i\otimes u_i:\mathrm{dom}(X)\longrightarrow \mathcal{F}\otimes \mathcal{H}$ and $Y=\sum_{i=0}^\infty  Y_i\otimes u_i:\mathrm{dom}\left(Y\right)\longrightarrow \mathcal{F}\otimes \mathcal{H}$ defined on subsets $\mathrm{dom}(X),\mathrm{dom}(Y)\subset \mathcal{F}$, we define the product operator $X\ \underline{\otimes}\ Y:\mathcal{D}\longrightarrow  \mathcal{F}\otimes \mathcal{H}\otimes \mathcal{H}$, with $\mathcal{D}:=\{\Psi\in \mathcal{F}:\sum_{i,j=0}^\infty \|X_i Y_j \Psi\|^2<\infty\}$, as
\begin{align*}
X\ \underline{\otimes}\ Y:=X\otimes 1_{\mathcal{H}}\cdot Y=\sum_{i,j=0}^\infty  \left(X_i Y_j\right)\otimes u_i\otimes u_j,
\end{align*}
where we use the convention that tensor products are performed before operator products, i.e. $X\otimes 1_{\mathcal{H}}\cdot Y:=\left(X\otimes 1_{\mathcal{H}}\right)\cdot Y$. 
\end{defi}

\begin{note}
Recall that $T$ is an operator acting on the one particle space $\mathcal{H}$ and $\hat{v}:=v(x-y)$ is an operator acting on the two particle space $\mathcal{H}\otimes \mathcal{H}$. Then, $1_\mathcal{F}\otimes T$ is an operator on $\mathcal{F}\otimes \mathcal{H}$ and $1_\mathcal{F}\otimes \hat{v}$ operates on $\mathcal{F}\otimes \mathcal{H}\otimes \mathcal{H}$. With this, we have a convenient way to express double and four fold sums of creation and annihilation operators
\begin{align*}
b_{\geq 0}^\dagger\cdot\ &1_\mathcal{F}\otimes T\cdot b_{\geq 0}=\sum_{i,j=0}^\infty T_{i,j}\ b_i^\dagger b_j,\\
\left(b_{\geq 0}\ \underline{\otimes}\ b_{\geq 0}\right)^\dagger\cdot \ &1_\mathcal{F}\otimes \hat{v}\cdot b_{\geq 0}\ \underline{\otimes}\ b_{\geq 0}=\sum_{ij,k\ell=0}^\infty\hat{v}_{ij,k\ell}\ b_i^\dagger b_j^\dagger b_k b_\ell.
\end{align*}
 In order to avoid issues with operator domains, we will define products of the form $\big(b_{\geq 0}\ \underline{\otimes}\ b_{\geq 0}\big)^\dagger\cdot 1_\mathcal{F}\otimes \hat{v}\cdot b_{\geq 0}\ \underline{\otimes}\ b_{\geq 0}$ as quadratic forms, i.e. we define the quadratic form
\begin{align*}
\Big\langle \big(b_{\geq 0}\ \underline{\otimes}\ b_{\geq 0}\big)^\dagger\cdot \ \big(1_\mathcal{F}\otimes \hat{v}\big)\cdot \big(b_{\geq 0}\ \underline{\otimes}\ b_{\geq 0}\big)\Big\rangle_\Psi:=\Big\langle 1_\mathcal{F}\otimes \hat{v}\Big\rangle_{b_{\geq 0} \underline{\otimes} b_{\geq 0}\Psi}.
\end{align*}
For the sake of readability, we will suppress the tensor with the identity in our notation, i.e. we will simply write $b_{\geq 0}^\dagger\cdot\ T\cdot b_{\geq 0}$ and $\left(b_{\geq 0}\ \underline{\otimes}\ b_{\geq 0}\right)^\dagger\cdot \ \hat{v}\cdot b_{\geq 0}\ \underline{\otimes}\ b_{\geq 0}$.\\
\end{note}

In the following, we will make use of the fact that we can express the Hamiltonian in Eq.~(\ref{Equation: Hamilton Operator}) in terms of the rescaled creation and annihilation operators as
\begin{align}
\label{Equation: Hamilton New}
N^{-1}\!H_N&=b_{\geq 0}^\dagger\cdot\ T\cdot b_{\geq 0}+\frac{N}{2(N-1)}\left(b_{\geq 0}\ \underline{\otimes}\ b_{\geq 0}\right)^\dagger\cdot \ \hat{v}\cdot b_{\geq 0}\ \underline{\otimes}\ b_{\geq 0}.
\end{align}
Since the Hamiltonian $H_N$ is only defined on the subset $\bigotimes_\mathrm{s}^N \mathcal{H}\subset \mathcal{F}$, the equation above only holds in this subspace of fixed particle number $N$. In order to focus on excitations above the condensate, we follow the strategy in \cite{LNSS} and map the Hamiltonian $H_N$ to an operator which acts on the truncated Fock space $\mathcal{F}_{\leq N}$ of modes orthogonal to $u_0$ with the help of the excitation map $U_N$. We will think of this map $U_N$ as the quantum counterpart to the embedding of the disc $\{z\in \{u_0\}^\perp:\|z\|\leq 1\}$ into the sphere $\{u\in \mathcal{H}:\|u\|=1\}$ via the map $\iota$ defined in Eq.~(\ref{Equation: Embedding}). The proof of the  following properties of $U_N$ is elementary and  is left to the reader.

\begin{lem}
\label{Lemma: Excitation Map}
Recall the definition of the operator $\mathbb{L}$ in Definition \ref{Definition: Basic Fock} and the excitation map $U_N:\bigotimes_\mathrm{s}^N \mathcal{H}\longrightarrow \mathcal{F}_{\leq N}$ from Eq.~(\ref{Equation: Excitation Map})
\begin{align*}
U_N\left(u_0^{\otimes^{i_0}}\otimes_\mathrm{s} u_1^{\otimes^{i_1}}\otimes_\mathrm{s}\dots\otimes_\mathrm{s} u_m^{\otimes^{i_m}}\right):=u_1^{\otimes^{i_1}}\otimes_\mathrm{s}\dots\otimes_\mathrm{s} u_m^{\otimes^{i_m}},
\end{align*}
for non-negative integers $i_0+\dots+i_m=N$. Under conjugation with this unitary map $U_N$, we have for all $i,j\geq 1$ the following transformation laws
\begin{align*}
U_N\, b_0^\dagger b_0 \, U_N^{-1}&=1-\mathbb{L},\\
U_N\, b_j^\dagger b_0\, U_N^{-1}&=b^\dagger_j \sqrt{1-\mathbb{L}},\\
U_N\, b_j^\dagger b_i\, U_N^{-1}&=b_j^\dagger b_i.
\end{align*}
\end{lem}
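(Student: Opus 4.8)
The plan is to verify the three identities directly on the canonical occupation-number basis, after first recording why $U_N$ is unitary. Recall that $\bigotimes_\mathrm{s}^N\mathcal{H}$ carries the orthonormal basis of normalized occupation-number states $|n_0,n_1,n_2,\dots\rangle$ with $n_k\in\mathbb{N}_0$ and $\sum_{k\geq 0}n_k=N$, while $\mathcal{F}_{\leq N}$ carries the orthonormal basis $|n_1,n_2,\dots\rangle$ with $\sum_{k\geq 1}n_k\leq N$. Since the constraint $\sum_{k\geq 0}n_k=N$ determines $n_0$ uniquely from $(n_1,n_2,\dots)$, the defining relation for $U_N$ says precisely that $U_N|n_0,n_1,\dots\rangle=|n_1,n_2,\dots\rangle$, i.e.\ $U_N$ is a bijection between these two orthonormal bases; hence it extends to a unitary $\bigotimes_\mathrm{s}^N\mathcal{H}\to\mathcal{F}_{\leq N}$, and in particular it intertwines $\mathcal{N}=\sum_{j\geq 1}a_j^\dagger a_j$ on $\bigotimes_\mathrm{s}^N\mathcal{H}$ with $\mathcal{N}$ on $\mathcal{F}_{\leq N}$, since on both sides $\mathcal{N}$ acts as multiplication by $\sum_{k\geq 1}n_k$ (everything here is bounded by $N$, so there are no domain issues).

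For the first identity, note that on $\bigotimes_\mathrm{s}^N\mathcal{H}$ the total number operator $a_0^\dagger a_0+\mathcal{N}$ equals $N$, so $a_0^\dagger a_0=N-\mathcal{N}$ there and $b_0^\dagger b_0=1-\tfrac{1}{N}\mathcal{N}$; conjugating by $U_N$ and using the intertwining just noted gives $U_N b_0^\dagger b_0 U_N^{-1}=1-\tfrac{1}{N}\mathcal{N}\big|_{\mathcal{F}_{\leq N}}=1-\mathbb{L}$. For the second identity I would apply both sides to a generic basis vector and compare. Using $a_0|n_0,n_1,\dots\rangle=\sqrt{n_0}\,|n_0-1,n_1,\dots\rangle$, $a_j^\dagger|\dots,n_j,\dots\rangle=\sqrt{n_j+1}\,|\dots,n_j+1,\dots\rangle$ and $n_0=N-\sum_{k\geq 1}n_k$, one finds for $j\geq 1$ that $a_j^\dagger a_0|n_0,n_1,\dots\rangle=\sqrt{n_0}\sqrt{n_j+1}\,|n_0-1,n_1,\dots,n_j+1,\dots\rangle$, so $U_N$ sends this to $\sqrt{n_0}\sqrt{n_j+1}\,|n_1,\dots,n_j+1,\dots\rangle$; on the other hand $a_j^\dagger\sqrt{N-\mathcal{N}}$ applied to $U_N|n_0,n_1,\dots\rangle=|n_1,n_2,\dots\rangle$ produces first $\sqrt{N-\sum_{k\geq 1}n_k}\,|n_1,n_2,\dots\rangle=\sqrt{n_0}\,|n_1,n_2,\dots\rangle$ and then $\sqrt{n_0}\sqrt{n_j+1}\,|n_1,\dots,n_j+1,\dots\rangle$, which agrees. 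Dividing by $N$ and writing $b_j^\dagger=\tfrac{1}{\sqrt N}a_j^\dagger$ and $\sqrt{1-\mathbb{L}}=\tfrac{1}{\sqrt N}\sqrt{N-\mathcal{N}}$ yields $U_N b_j^\dagger b_0 U_N^{-1}=b_j^\dagger\sqrt{1-\mathbb{L}}$; here the order matters, since $\sqrt{1-\mathbb{L}}$ must act before $b_j^\dagger$, which shifts $\mathcal{N}$ by one. The third identity is immediate: for $i,j\geq 1$ the operator $a_j^\dagger a_i$ does not touch the condensate occupancy $n_0$, hence it commutes with the relabelling carried out by $U_N$, giving $U_N b_j^\dagger b_i U_N^{-1}=b_j^\dagger b_i$.

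There is no genuine obstacle here; the content is purely combinatorial bookkeeping, as the paper's remark already indicates. The only points needing a little care are (i) the unitarity of $U_N$, which reduces as above to its being a bijection of the canonical orthonormal bases, and (ii) operator ordering together with the truncation to $\mathcal{F}_{\leq N}$ — on which $b_j^\dagger\sqrt{1-\mathbb{L}}$ is automatically well defined because $\sqrt{1-\mathbb{L}}$ annihilates the top sector $\mathcal{N}=N$, so the product lands back in $\mathcal{F}_{\leq N}$. If one prefers to avoid occupation-number states, the identical computation can be organized through the orthogonal decomposition $\Psi=\sum_{k=0}^N u_0^{\otimes(N-k)}\otimes_\mathrm{s}\phi_k$ with $\phi_k\in\bigotimes_\mathrm{s}^k\mathcal{H}_0$, for which $U_N\Psi=(\phi_0,\phi_1,\dots,\phi_N)$; reading off the actions of $a_0^\dagger a_0$, $a_j^\dagger a_0$ and $a_j^\dagger a_i$ on such a decomposition leads to the same three formulas.
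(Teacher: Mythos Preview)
Your proposal is correct and follows exactly the elementary occupation-number bookkeeping that the paper has in mind; indeed the paper gives no proof at all, simply stating that the verification ``is elementary and is left to the reader.'' Your remarks on unitarity, operator ordering, and the truncation issue (that $\sqrt{1-\mathbb{L}}$ kills the top sector so $b_j^\dagger\sqrt{1-\mathbb{L}}$ lands back in $\mathcal{F}_{\leq N}$) are precisely the small points one should check, and the alternative phrasing via the decomposition $\Psi=\sum_k u_0^{\otimes(N-k)}\otimes_\mathrm{s}\phi_k$ is the standard equivalent formulation used in \cite{LNSS}.
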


We can summarize the transformation laws from Lemma \ref{Lemma: Excitation Map} as follows: In any product of the form $b_i^\dagger b_j$ we exchange $b_0$ with the operator $\sqrt{1-\mathbb{L}}$. In analogy to this, the zero component of the embedding $\iota(z)$ defined in Eq.~(\ref{Equation: Embedding}) is given by $u_0^\dagger\cdot \iota(z)=\sqrt{1-\|z\|^2}$. In order to express $U_N  H_N U_N^{-1}$, let us first compute
\begin{align*}
U_N  &\left(b_{\geq 0}^\dagger\cdot T\cdot b_{\geq 0}\right) U_N^{-1}=U_N\, \mathfrak{Re}\left[T_{0,0}\ b_0^\dagger b_0\!+\!2\sum_{i=1}^\infty T_{i,0}\ b_i^\dagger b_0+\sum_{i,j=1}^\infty T_{i,j}\ b_i^\dagger b_j\right] U_N^{-1}\\
&=\mathfrak{Re}\left[T_{0,0}\ (1-\mathbb{L})+2\sum_{i=1}^\infty T_{i,0}\ b_i^\dagger \sqrt{1-\mathbb{L}}+\sum_{i,j=1}^\infty T_{i,j}\ b_i^\dagger b_j\right]\\
&=\mathfrak{Re}\left[u_0^\dagger\cdot  T\cdot  u_0\ (1-\mathbb{L})+2\ b_{\geq 1}^\dagger\cdot T\cdot   u_0\cdot \sqrt{1-\mathbb{L}}+b_{\geq 1}^\dagger\cdot T\cdot b_{\geq 1}\right],
\end{align*}
where the real part of an operator is defined as $\mathfrak{Re}\left[X\right]:=\frac{X+X^\dagger}{2}$. Similarly, we can express the transformed operator $U_N \left(\frac{N}{2(N-1)}\left(b_{\geq 0}\ \underline{\otimes}\ b_{\geq 0}\right)^\dagger\cdot \ \hat{v}\cdot b_{\geq 0}\ \underline{\otimes}\ b_{\geq 0}\right) U_N^{-1}$ as
\begin{align}
\label{Equation: Parameterized Hamiltonian}
\nonumber \mathfrak{Re}&\Big[\frac{1}{2}\big( u_0\ \underline{\otimes}\   u_0\big)^\dagger\cdot \hat{v}\cdot  u_0\ \underline{\otimes}\   u_0\ f_0\left(\mathbb{L}\right)+2\ \big(b_{\geq 1}\ \underline{\otimes}\   u_0\big)^\dagger\cdot \hat{v}\cdot  u_0\ \underline{\otimes}\   u_0\, f_1\left(\mathbb{L}\right)\\
\nonumber&\ \ \ \ \ \ +\big(b_{\geq 1}\ \underline{\otimes}\  b_{\geq 1}\big)^\dagger\cdot \hat{v}\cdot  u_0\ \underline{\otimes}\   u_0\, f_2\left(\mathbb{L}\right)+\big(b_{\geq 1}\ \underline{\otimes}\  u_0 \big)^\dagger\cdot \hat{v}\cdot b_{\geq 1}\ \underline{\otimes}\   u_0\, f_3\left(\mathbb{L}\right)\\
\nonumber&\ \ \ \ \ \ +\big( u_0\ \underline{\otimes}\  b_{\geq 1}\big)^\dagger\cdot \hat{v}\cdot b_{\geq 1}\ \underline{\otimes}\   u_0\, f_4\left(\mathbb{L}\right)+\!2\ \big(b_{\geq 1}\ \underline{\otimes}\  b_{\geq 1}\big)^\dagger\cdot \hat{v}\cdot b_{\geq 1}\ \underline{\otimes}\   u_0\, f_5\left(\mathbb{L}\right)\\
&\ \ \ \ \ \ +\frac{1}{2}\big(b_{\geq 1}\ \underline{\otimes}\  b_{\geq 1}\big)^\dagger\cdot \hat{v}\cdot b_{\geq 1}\ \underline{\otimes}\  b_{\geq 1}\, f_6\left(\mathbb{L}\right)\Big],
\end{align}
with $f_0(x):=\frac{N}{N-1}(1-x)(1-x-N^{-1})$, $f_1(x):=\frac{N}{N-1}(1-x-N^{-1})\sqrt{1-x}$, $f_2(x):=\frac{N}{N-1}\sqrt{1-x-N^{-1}}\sqrt{1-x}$, $f_3(x):=f_4(x):=\frac{N}{N-1}(1-x)$, $f_5(x):=\frac{N}{N-1}\sqrt{1-x}$ and $f_6(x):=\frac{N}{N-1}$. In order to keep the notation compact, let us name the essential building blocks involved in the expressions above.

\begin{defi}
\label{Definition: A and B}
We define $A_0:= u_0^\dagger\cdot  T\cdot  u_0$, $A_1:=2\ b_{\geq 1}^\dagger\cdot T\cdot   u_0$ and $A_2:=b_{\geq 1}^\dagger\cdot T\cdot b_{\geq 1}$, as well as $B_0:=\frac{1}{2}\big( u_0\ \underline{\otimes}\   u_0\big)^\dagger\cdot \hat{v}\cdot  u_0\ \underline{\otimes}\   u_0$ and
\begin{align*}
B_1:&=2\ \big(b_{\geq 1}\ \underline{\otimes}\   u_0\big)^\dagger\cdot \hat{v}\cdot  u_0\ \underline{\otimes}\   u_0,\ \ \ \ \ B_4:=\big( u_0\ \underline{\otimes}\  b_{\geq 1}\big)^\dagger\cdot \hat{v}\cdot b_{\geq 1}\ \underline{\otimes}\   u_0,\\
B_2:&=\big(b_{\geq 1}\ \underline{\otimes}\  b_{\geq 1}\big)^\dagger\cdot \hat{v}\cdot  u_0\ \underline{\otimes}\   u_0,\ \ \ \ \ \ B_5:=2\ \big(b_{\geq 1}\ \underline{\otimes}\  b_{\geq 1}\big)^\dagger\cdot \hat{v}\cdot b_{\geq 1}\ \underline{\otimes}\   u_0,\\
B_3:&=\big(b_{\geq 1}\ \underline{\otimes}\  u_0 \big)^\dagger\cdot \hat{v}\cdot b_{\geq 1}\ \underline{\otimes}\   u_0,\ \ \ \ \ \  B_6:=\frac{1}{2}\big(b_{\geq 1}\ \underline{\otimes}\  b_{\geq 1}\big)^\dagger\cdot \hat{v}\cdot b_{\geq 1}\ \underline{\otimes}\  b_{\geq 1}.
\end{align*}
\end{defi}
With these building blocks at hand, we can express the transformed Hamiltonian as
\begin{align}
\label{Equation: U representation of H}
U_N N^{-1}\!H_N U_N^{-1}= \sum_{r=0}^2\mathfrak{Re}\left[A_r \sqrt{1-\mathbb{L}}^{2-r}\right]+\sum_{r=0}^6\mathfrak{Re}\left[B_r f_r\left(\mathbb{L}\right)\right].
\end{align}

In the subsequent Lemma \ref{Lemma: Auxiliary Estimates} we will derive estimates for operator expressions of the form $B_r\, f\left(\mathbb{L}\right)$. Such estimates will be useful for the identification of lower order terms in the energy asymptotics in Eq.~(\ref{Equation: Energy asymptotics}).

\begin{lem}
\label{Lemma: Auxiliary Estimates}
Let us denote with $\pi_{M}$ the orthogonal projection onto $\mathcal{F}_{\leq M}$. Given Assumption \ref{Assumption: Part I}, there exists a constant $c$ such that for functions $f:[0,1]\longrightarrow \mathbb{R}$
\begin{align}
\label{Equation: B1 estimate}
\pm\pi_M\, \mathfrak{Re}\left[B_1 f\left(\mathbb{L}\right)\right] \pi_M\leq c \sup_{x\leq \frac{M}{N}}|f(x)|\ \sqrt{\frac{M}{N}}
\end{align}
for all $M\leq N$, and for all $t>0$ and $i\in \{2,3,4\}$ we have
\begin{align*}
\pm\pi_M\, \mathfrak{Re}\left[B_i f\left(\mathbb{L}\right)\right]\pi_M&\leq c \sup_{x\leq \frac{M}{N}}|f(x)|\ \sqrt{\frac{M}{N}}\left(t+t^{-1}\ b_{\geq 1}^\dagger \cdot (T+1)\cdot b_{\geq 1}\right),\\
\pm\pi_M\, \mathfrak{Re}\left[B_5 f\left(\mathbb{L}\right)\right] \pi_M&\leq c \sup_{x\leq \frac{M}{N}}|f(x)|\ \frac{M}{N}\left(t+t^{-1}\ b_{\geq 1}^\dagger \cdot (T+1)\cdot b_{\geq 1}\right),\\
-\frac{M}{2N}\ b^\dagger_{\geq 1}\cdot \left(\lambda\ T+\Lambda\right)\cdot b_{\geq 1} &\leq \pi_M\, \mathfrak{Re}\left[B_6\right] \pi_M\leq \frac{M}{2N}b^\dagger_{\geq 1}\cdot (\Lambda T+\Lambda)\cdot b_{\geq 1},
\end{align*}
where the constants $\lambda,\Lambda$ are as in Assumption \ref{Assumption: Part I}.
\end{lem}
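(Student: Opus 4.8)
The plan is to handle every term $B_r\,f(\mathbb L)$, $r\ge 1$, by one common scheme, working throughout at the level of quadratic forms on $\mathcal F_{\le M}$ (all the $B_r$ being defined only as forms). I abbreviate $g:=\sup_{x\le M/N}|f(x)|$, so that $f(\mathbb L)^2\le g^2$ on $\mathcal F_{\le M}$ and $f(\mathbb L)$ commutes with every number-conserving operator below; I write $\mathbb T:=b_{\ge 1}^\dagger\cdot(T+1)\cdot b_{\ge 1}$; and I set $c_0:=\Lambda\,\langle T+1\rangle_{u_0}$, which is finite because $u_0$, being a Hartree minimizer, lies in the form domain of $T$, and which also gives $\langle|\hat v|\rangle_{u_0\otimes u_0}\le c_0$ by the two-body inequality $-(\lambda T+\Lambda)\otimes 1_{\mathcal H}\le\hat v\le|\hat v|\le\Lambda(T+1)\otimes 1_{\mathcal H}$ of Assumption \ref{Assumption: Part I} — which, $v$ being even, holds equally on either tensor slot.

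First I would record two elementary identities. For any self-adjoint $S$ on $\mathcal H_0$ (I only need $S=T+1$ and $S=\lambda T+\Lambda$), writing $b_{\ge 1}^\dagger\cdot S\cdot b_{\ge 1}=\sum_{i,j\ge 1}S_{ij}b_i^\dagger b_j$, one has $b_{\ge 1}^\dagger\cdot 1_{\mathcal H}\cdot b_{\ge 1}=\mathbb L$ and, from the bosonic commutators $[\,b_{\ge 1}^\dagger\cdot S\cdot b_{\ge 1},\,b_j]=-\tfrac1N\sum_{i}S_{ij}b_i$ and $[\mathbb L,b_j]=-\tfrac1N b_j$, the pull-through identity
\[
\big(b_{\ge 1}\ \underline{\otimes}\ b_{\ge 1}\big)^\dagger\cdot\big(S\otimes 1_{\mathcal H}\big)\cdot\big(b_{\ge 1}\ \underline{\otimes}\ b_{\ge 1}\big)\ =\ \tfrac{\mathcal N-1}{N}\ b_{\ge 1}^\dagger\cdot S\cdot b_{\ge 1}\ \le\ \tfrac{M}{N}\,b_{\ge 1}^\dagger\cdot S\cdot b_{\ge 1}
\]
on $\mathcal F_{\le M}$ when $S\ge 0$; contracting a lone $b_{\ge 1}$ against a $u_0$-slot instead produces the scalar $\langle S\rangle_{u_0}$ times $\mathbb L$. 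With these, the bound on $B_6=\tfrac12(b_{\ge 1}\,\underline{\otimes}\,b_{\ge 1})^\dagger\,\hat v\,(b_{\ge 1}\,\underline{\otimes}\,b_{\ge 1})$ is immediate: $B_6$ is self-adjoint, so $\mathfrak{Re}[B_6]=B_6$, and applying $-(\lambda T+\Lambda)\otimes 1\le\hat v\le\Lambda(T+1)\otimes 1$ on the first slot followed by the displayed identity with $S=\Lambda(T+1)$ (upper bound) resp.\ $S=\lambda T+\Lambda$ (lower bound) yields the claimed two-sided estimate, using $\tfrac{\mathcal N-1}{N}\le\tfrac MN$ and positivity of $b_{\ge 1}^\dagger\cdot(\lambda T+\Lambda)\cdot b_{\ge 1}$.

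For $r\in\{1,\dots,5\}$ I would write $B_r$, up to its numerical prefactor, as $C_r^\dagger\cdot\hat v\cdot D_r$, split $\hat v=\hat v_+-\hat v_-$ with $\hat v_\pm\ge 0$, and use Cauchy--Schwarz with a free parameter $t>0$ to get, as forms,
\[
\pm\,\mathfrak{Re}\big[C_r^\dagger\cdot\hat v\cdot D_r\,f(\mathbb L)\big]\ \le\ t\,C_r^\dagger\cdot|\hat v|\cdot C_r\ +\ t^{-1}f(\mathbb L)\,D_r^\dagger\cdot|\hat v|\cdot D_r\,f(\mathbb L).
\]
Each of $C_r^\dagger\cdot|\hat v|\cdot C_r$ and $D_r^\dagger\cdot|\hat v|\cdot D_r$ is then bounded by $\Lambda$ times the two-body kinetic majorant, applied on whichever of its two $\mathcal H$-slots is convenient: on a slot carrying $u_0$ this produces a \emph{bounded} term ($c_0\mathbb L\le c_0\tfrac MN$ if the operator contains one $b_{\ge 1}$, or the constant $c_0$ if none), while on a slot carrying $b_{\ge 1}$ it produces $\Lambda\mathbb T$ (one $b_{\ge 1}$) or, via the pull-through identity, $\Lambda\tfrac{\mathcal N-1}{N}\mathbb T\le\Lambda\tfrac MN\mathbb T$ (two $b_{\ge 1}$'s). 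Concretely, for $B_1$ both factors are bounded ($c_0\tfrac MN$ and $c_0$), so minimizing over $t$ yields $c\,g\sqrt{M/N}$; for $B_2,B_3,B_4$ one factor is $\le c_0\tfrac MN$ and the other $\le\Lambda\mathbb T$ (for $B_3$, where $C_3=D_3$, one uses the $u_0$-slot for one occurrence and the $b_{\ge 1}$-slot for the other), and rescaling $t$ turns $t\,c_0\tfrac MN+t^{-1}\Lambda g^2\mathbb T$ into the claimed $c\,g\sqrt{M/N}\,(t+t^{-1}\mathbb T)$; for $B_5$ one factor is $\le c_0\tfrac MN$ and the other $\le\Lambda\tfrac MN\mathbb T$, and the analogous rescaling gives $c\,g\tfrac MN\,(t+t^{-1}\mathbb T)$. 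Taking $c$ to be the largest of these finitely many constants gives the single $c$ of the statement.

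The hard part is not analysis but bookkeeping: identifying $C_r,D_r$ and, above all, deciding which of their two $\mathcal H$-slots is occupied by $u_0$ and which by $b_{\ge 1}$, so that the two-body inequality is applied on the slot that routes the \emph{bounded} majorant to the side where it is needed (the $t^{-1}$-side for $B_1$--$B_4$, and paired against $\mathbb T$ for $B_5$); and verifying the pull-through identities with care, since everything lives only as a quadratic form and one must track the commutators $[\,\cdot\,,b_j]$ and the $1/\sqrt N$ normalisation of the $b$'s. Once the two majorants for each term are in hand, the choice of $t$ matching the displayed forms is a routine optimisation.
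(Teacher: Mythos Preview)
Your argument is correct and is exactly the approach the paper takes: the paper writes out only the $B_1$ case (Cauchy--Schwarz with $Q=1_{\mathcal F_0}\otimes\hat v$, then $|\hat v|\le\Lambda(T+1)\otimes 1$ on the convenient slot, then optimize the parameter) and says ``the other inequalities can be derived similarly,'' which is precisely the uniform scheme you spell out, including the pull-through identity for the two-$b$ terms and the direct two-sided bound for $B_6$. One minor slip: for $B_2$ your own recipe gives the factors $c_0$ (no $b$'s) and $\Lambda\tfrac{M}{N}\mathbb T$ (two $b$'s), not ``$c_0\tfrac{M}{N}$ and $\Lambda\mathbb T$'' as in your summary sentence---but since the product of the two majorants is the same, the rescaling of $t$ you describe yields the claimed bound either way.
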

\begin{proof}
Using the Cauchy--Schwarz inequality as in Lemma \ref{Lemma: O_* results} with $Q:=1_{\mathcal{F}_0}\otimes \hat{v}$, $A:=b_{\geq 1}\ \underline{\otimes}\  u_0\, \pi_M$ and $B:=2 u_{0}\ \underline{\otimes}\  u_0\, f\left(\mathbb{L}\right) \pi_M$, and defining $k:=\left(u_{0}\otimes u_0\right)^\dagger \cdot |\hat{v}|\cdot u_{0}\otimes u_0$, we obtain for any $s>0$
\begin{align*}
\pm\pi_M &\, \mathfrak{Re}\left[B_1 f\left(\mathbb{L}\right)\right] \pi_M=\pm \mathfrak{Re}\left[A^\dagger\cdot Q\cdot B\right]\leq s\ A^\dagger\cdot |Q|\cdot A+s^{-1}\ B^\dagger\cdot |Q|\cdot B\\
&=s\ \pi_M \left(b_{\geq 1}\ \underline{\otimes}\  u_0\right)^\dagger \cdot  |\hat{v}|\cdot b_{\geq 1}\ \underline{\otimes}\  u_0\, \pi_M+s^{-1}4k\, \pi_M f\left(\mathbb{L}\right)^2 \pi_M.
\end{align*}
By Assumption \ref{Assumption: Part I}, $|\hat{v}|\leq \Lambda\ 1_{\mathcal{H}}\otimes (T+1)$. Let $K:=\Lambda\ u_0^\dagger\cdot (T+1)\cdot u_0$, then
\begin{align*}
\pi_M \left(b_{\geq 1}\ \underline{\otimes}\  u_0\right)^\dagger \cdot  |\hat{v}|\cdot b_{\geq 1}\ \underline{\otimes}\  u_0\cdot \pi_M\leq K\ \pi_M\, b_{\geq 1}^\dagger\cdot b_{\geq 1}\, \pi_M\leq K\ \frac{M}{N}.
\end{align*}
Using $\pi_M f\left(\mathbb{L}\right)^2 \pi_M\leq \left(\sup_{x\leq \frac{M}{N}}|f(x)|\right)^2$ and choosing $s:=\sqrt{\frac{N}{M}}\sup_{x\leq \frac{M}{N}}|f(x)|$ yields Eq.~(\ref{Equation: B1 estimate}). The other inequalities can be derived similarly.
\end{proof}

The following two Lemmata will be useful tools in the verification of the lower bound of the energy asymptotics in Theorem \ref{Theorem: Lower Bound}.

\begin{lem}
\label{Lemma: Decomposition}
There exist constants $c,\delta>0$, such that for $N\geq 2$ 
\begin{align}
\label{Equation: Relative Bound}
\delta\ b_{\geq 1}^\dagger\cdot  T\cdot b_{\geq 1}-c \leq U_NN^{-1}\!H_NU_N^{-1} \leq c\left(b_{\geq 1}^\dagger\cdot  T\cdot b_{\geq 1}+1\right).
\end{align} 
Let us further denote with $P_n$ the orthogonal projection onto $\mathds{1}_{[\mathcal{N}=n]}\mathcal{F}_0$. Then there exists a constant $k$, such that for $N\geq 2$
\begin{align*}
\sum_{n=0}^N P_n \left(U_N N^{-1}\!H_N U_N^{-1}\right) P_n\leq k\left(U_N N^{-1}\!H_N U_N^{-1}+k\right).
\end{align*}
\end{lem}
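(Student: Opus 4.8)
The plan is to recognise the block-diagonal operator $\sum_{n=0}^N P_n\big(U_N N^{-1}H_N U_N^{-1}\big)P_n$ as the $\mathcal{N}$-conserving part of $U_N N^{-1}H_N U_N^{-1}$, to bound this part from above by a multiple of $b_{\geq 1}^\dagger\cdot T\cdot b_{\geq 1}+1$ using Lemma~\ref{Lemma: Auxiliary Estimates}, and finally to invoke the relative bound~(\ref{Equation: Relative Bound}) established in the first part of the lemma.

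First I would go through the representation~(\ref{Equation: U representation of H}) and record how each building block of Definition~\ref{Definition: A and B} shifts the excitation number: $A_0$ and $B_0$ are scalars, the operators $A_2,B_3,B_4,B_6$ commute with $\mathcal{N}$, while $A_1$ and $B_1$ raise $\mathcal{N}$ by one, $B_2$ raises it by two, and $B_5$ raises it by one. Since $\sqrt{1-\mathbb{L}}$ and all the $f_r(\mathbb{L})$ commute with $\mathcal{N}$, the summands $\mathfrak{Re}[A_1\sqrt{1-\mathbb{L}}]$, $\mathfrak{Re}[B_1 f_1(\mathbb{L})]$, $\mathfrak{Re}[B_2 f_2(\mathbb{L})]$ and $\mathfrak{Re}[B_5 f_5(\mathbb{L})]$ map $\mathds{1}_{[\mathcal{N}=n]}\mathcal{F}_0$ into $\bigoplus_{m\neq n}\mathds{1}_{[\mathcal{N}=m]}\mathcal{F}_0$ and are annihilated by $\sum_n P_n(\cdot)P_n$, whereas the remaining summands commute with every $P_n$ and, because $\sum_{n=0}^N P_n$ is the identity on $\mathcal{F}_{\leq N}$, survive unchanged. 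This yields
\begin{align*}
\sum_{n=0}^N P_n\big(U_N N^{-1}H_N U_N^{-1}\big)P_n=\mathfrak{Re}\big[A_0(1-\mathbb{L})\big]+A_2+\sum_{r\in\{0,3,4,6\}}\mathfrak{Re}\big[B_r f_r(\mathbb{L})\big].
\end{align*}

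Next I would estimate the right-hand side from above. One uses that $A_0=\braket{T}_{u_0}\geq 0$ and $0\leq 1-\mathbb{L}\leq 1$ on $\mathcal{F}_{\leq N}$, that the scalar $B_0$ is finite since $|\hat v|\leq\Lambda(T+1)$, and that $f_0,f_3,f_4,f_6$ are uniformly bounded on $[0,1]$ because $N/(N-1)\leq 2$. Applying Lemma~\ref{Lemma: Auxiliary Estimates} with $M=N$ (so that $\pi_N$ is the identity on $\mathcal{F}_{\leq N}$) and $t=1$ bounds $\mathfrak{Re}[B_r f_r(\mathbb{L})]$ for $r\in\{3,4\}$ by a constant times $1+b_{\geq 1}^\dagger\cdot(T+1)\cdot b_{\geq 1}$ and $\mathfrak{Re}[B_6 f_6(\mathbb{L})]$ by a constant times $b_{\geq 1}^\dagger\cdot(T+1)\cdot b_{\geq 1}$. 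Since $b_{\geq 1}^\dagger\cdot(T+1)\cdot b_{\geq 1}=A_2+\mathbb{L}\leq A_2+1$ with $A_2=b_{\geq 1}^\dagger\cdot T\cdot b_{\geq 1}\geq 0$, collecting these estimates gives a constant $K$ with
\begin{align*}
\sum_{n=0}^N P_n\big(U_N N^{-1}H_N U_N^{-1}\big)P_n\leq K\big(b_{\geq 1}^\dagger\cdot T\cdot b_{\geq 1}+1\big).
\end{align*}
The lower bound in~(\ref{Equation: Relative Bound}) then gives $b_{\geq 1}^\dagger\cdot T\cdot b_{\geq 1}\leq\delta^{-1}\big(U_N N^{-1}H_N U_N^{-1}+c\big)$, so the left-hand side is at most $K\delta^{-1}\big(U_N N^{-1}H_N U_N^{-1}+c\big)+K$; since $U_N N^{-1}H_N U_N^{-1}\geq-c$ is bounded below, this is dominated by $k\big(U_N N^{-1}H_N U_N^{-1}+k\big)$ for any sufficiently large $k$, which is the assertion.

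I expect the only genuinely delicate point to be making the grading argument rigorous: because $U_N N^{-1}H_N U_N^{-1}$ is treated as a quadratic form, the splitting into $\mathcal{N}$-conserving and off-diagonal parts and the identity for $\sum_n P_n(\cdot)P_n$ should be carried out at the level of forms on a suitable core of $\mathcal{F}_{\leq N}$ (e.g.\ vectors of bounded particle number), together with a check that the conserving terms $A_2$ and $\mathfrak{Re}[B_r f_r(\mathbb{L})]$ are form-bounded relative to $b_{\geq 1}^\dagger\cdot T\cdot b_{\geq 1}$ so that all manipulations are legitimate. Everything else is routine bookkeeping, and the choices $M=N$, $t=1$ in Lemma~\ref{Lemma: Auxiliary Estimates} serve only to fix the constants.
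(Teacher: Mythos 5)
Your proposal only addresses the second inequality of the lemma and treats the relative bound~(\ref{Equation: Relative Bound}) as already established; a complete proof would need to prove that first part as well. In the paper this is done by applying $U_N(\cdot)U_N^{-1}$ to the bound $N^{-1}H_N\geq\delta\,b_{\geq 0}^\dagger\cdot T\cdot b_{\geq 0}-\delta c$ from Lemma~\ref{Lemma: Balance of Energy}, using the transformation law $U_N\,b_{\geq 0}\,U_N^{-1}=u_0\sqrt{1-\mathbb{L}}+b_{\geq 1}$ and the Cauchy--Schwarz inequality to isolate $b_{\geq 1}^\dagger\cdot T\cdot b_{\geq 1}$; the upper bound is symmetric.

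For the second inequality, your argument is correct but considerably longer than necessary. You decompose $\sum_n P_n\big(U_N N^{-1}H_N U_N^{-1}\big)P_n$ by discarding the off-diagonal terms $\mathfrak{Re}[A_1\sqrt{1-\mathbb{L}}]$, $\mathfrak{Re}[B_1 f_1(\mathbb{L})]$, $\mathfrak{Re}[B_2 f_2(\mathbb{L})]$, $\mathfrak{Re}[B_5 f_5(\mathbb{L})]$ and then bound the surviving $\mathcal{N}$-conserving pieces one by one via Lemma~\ref{Lemma: Auxiliary Estimates}. This is in effect a re-derivation of the upper bound in~(\ref{Equation: Relative Bound}) restricted to the diagonal, and that redundancy is exactly what the paper's argument avoids. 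The paper instead notes that the pinching map $A\mapsto\sum_n P_n A P_n$ is operator monotone (it is positivity preserving and linear), applies it to the already-proved operator inequality $U_N N^{-1}H_N U_N^{-1}\leq c\,(b_{\geq 1}^\dagger\cdot T\cdot b_{\geq 1}+1)$, and uses that the right-hand side commutes with $\mathcal{N}$, so $\sum_n P_n(b_{\geq 1}^\dagger\cdot T\cdot b_{\geq 1}+1)P_n=b_{\geq 1}^\dagger\cdot T\cdot b_{\geq 1}+1$. From there the lower bound gives the claim in one line. Your term-by-term route is valid, and correctly identifies which of the $A_r,B_r$ change $\mathcal{N}$, but recognising the monotonicity of the pinching map sidesteps all of that bookkeeping and removes the delicacies around form cores you flag at the end.
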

\begin{proof}
Recall from Lemma \ref{Lemma: Balance of Energy} that $N^{-1}H_N\geq \frac{\delta}{N}\sum_{j=1}^N T_j-\delta c=\delta\ b_{\geq 0}^\dagger\cdot T\cdot b_{\geq 0}-\delta c$. Therefore we have the estimate
\begin{align*}
&U_NN^{-1}\!H_NU_N^{-1}\geq \delta \left( u_0\cdot \sqrt{1-\mathbb{L}}+b_{\geq 1}\right)^\dagger\cdot T\cdot \left( u_0\cdot \sqrt{1-\mathbb{L}}+b_{\geq 1}\right)-\delta c\\
&\geq \frac{\delta}{2}b_{\geq 1}^\dagger\cdot T\cdot b_{\geq 1}-\delta u_0^\dagger\cdot T\cdot u_0\ (1-\mathbb{L})-\delta c\geq \tilde{\delta}b_{\geq 1}^\dagger\cdot T\cdot b_{\geq 1}-\tilde{c},
\end{align*}
with $\tilde{\delta}:=\frac{\delta}{2}$ and $\tilde{c}:=\delta u_0^\dagger\cdot T\cdot u_0+\delta c$. The upper bound in Eq.~(\ref{Equation: Relative Bound}) follows analogously. In order to verify the second inequality note that the map $A\mapsto \sum_n P_n A P_n$ is monotone and $\sum_n P_n \left(b_{\geq 1}^\dagger\cdot  T\cdot b_{\geq 1}\right) P_n=b_{\geq 1}^\dagger\cdot  T\cdot b_{\geq 1}\sum_n P_n^2=b_{\geq 1}^\dagger\cdot  T\cdot b_{\geq 1}$. Hence,
\begin{align*}
&\sum_{M=0}^N P_n \left(U_N N^{-1}\!H_N U_N^{-1}\right) P_n\!\leq\! \sum_{M=0}^N P_n \left(c\ b_{\geq 1}^\dagger\cdot  T\cdot b_{\geq 1}\!+c\right) P_n\\
&\ \ =c\ b_{\geq 1}^\dagger\cdot  T\cdot b_{\geq 1}+c\leq \delta^{-1}c\ U_N N^{-1}\!H_N U_N^{-1}+(c+\delta^{-1}c^2).
\end{align*}
\end{proof}

In the subsequent Lemma we are going to verify that we can exchange the $N$-dependent functions $f_i$ in Eq.~(\ref{Equation: U representation of H}) with $N$-independent functions $\sqrt{1-x}^{\ \beta_i}$, for suitable $\beta_i$, without changing the operator substantially. This will be convenient in the lower bound of the energy asymptotics, since there we have to verify an operator Taylor approximation, which will be more convenient to do for the functions $\sqrt{1-x}^{\ \beta_i}$ than for the functions $f_i$.
\begin{lem}
\label{Lemma: Estimates Lower Bound}
Let $\beta_0:=4,\beta_1:=3$, $\beta_2:=\beta_3:=\beta_4:=2$, $\beta_5:=1$ and $\beta_6:=0$, and let us define the operators $\widetilde{A}_N$ and $\widetilde{B}_N$ acting on $\mathcal{F}_0$ as
\begin{align}
\label{Equation: tilde A}
\widetilde{A}_N:&=\sum_{r=0}^2\mathfrak{Re}\left[A_r \sqrt{1-\mathbb{L}}^{2-r}\right],\\
\label{Equation: tilde B}
\widetilde{B}_N:&=\sum_{r=0}^6\mathfrak{Re}\left[B_r \sqrt{1-\mathbb{L}}^{\beta_r}\right].
\end{align}
Then, given Assumption \ref{Assumption: Part I}, there exists a constant $K$ such that for all $M\leq N$
\begin{align}
\label{Equation: Corollary: Estimates Lower Bound - first line}
\pm \pi_M \left(U_N N^{-1}\!H_N U_N^{-1}\! -\!\widetilde{A}_N\!-\!\widetilde{B}_N\right)\pi_M\leq \frac{C}{N}\sqrt{\frac{M}{N}} \left(b_{\geq 1}^\dagger\cdot  T\cdot b_{\geq 1}+1\right).
\end{align}
\end{lem}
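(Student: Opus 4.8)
The plan is to read off the representation in Eq.~(\ref{Equation: U representation of H}): by the definitions (\ref{Equation: tilde A}) and (\ref{Equation: tilde B}) the operator $\widetilde{A}_N$ reproduces the kinetic part $\sum_{r=0}^2\mathfrak{Re}[A_r\sqrt{1-\mathbb{L}}^{\,2-r}]$ of $U_N N^{-1}H_N U_N^{-1}$ \emph{exactly}, so that the entire discrepancy is carried by the interaction terms,
\begin{align*}
U_N N^{-1}H_N U_N^{-1}-\widetilde{A}_N-\widetilde{B}_N=\sum_{r=0}^6\mathfrak{Re}\!\left[B_r\Big(f_r(\mathbb{L})-\sqrt{1-\mathbb{L}}^{\,\beta_r}\Big)\right],
\end{align*}
with the $B_r$ from Definition \ref{Definition: A and B}. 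It then suffices to bound each of the seven summands, compressed by $\pi_M$, by $\tfrac{c_r}{N}\sqrt{M/N}\,\big(b_{\geq 1}^\dagger\cdot T\cdot b_{\geq 1}+1\big)$ and to sum.

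The elementary input I would establish first is that, for the chosen exponents $\beta_r$, the scalar function $f_r-\sqrt{1-\cdot}^{\,\beta_r}$ is of order $N^{-1}$ on the relevant range of $\mathbb{L}$. For most $r$ this is a one-line identity: $f_0(x)-(1-x)^2=-\tfrac{x(1-x)}{N-1}$, $f_1(x)-(1-x)^{3/2}=-\tfrac{x\sqrt{1-x}}{N-1}$, $f_3(x)-(1-x)=f_4(x)-(1-x)=\tfrac{1-x}{N-1}$, $f_5(x)-\sqrt{1-x}=\tfrac{\sqrt{1-x}}{N-1}$ and $f_6-1=\tfrac{1}{N-1}$. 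For $r=2$ I would write $\sqrt{1-x-N^{-1}}=\sqrt{1-x}\,\sqrt{1-\tfrac{N^{-1}}{1-x}}$ and expand the last factor; the apparent blow-up of $\tfrac{N^{-1}}{1-x}$ near $x=1$ is tamed by the two $\sqrt{1-x}$ factors multiplying it in $f_2$, and at the level of operators the problematic region does not occur at all since $B_2$ carries a factor $b_0b_0$ forcing $\mathbb{L}\le 1-2/N$ in that term. In every case one obtains $\sup_{x\le M/N}\big|f_r(x)-\sqrt{1-x}^{\,\beta_r}\big|\le c\,N^{-1}$.

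With these bounds I would apply Lemma \ref{Lemma: Auxiliary Estimates} termwise with $f:=f_r-\sqrt{1-\cdot}^{\,\beta_r}$. The $r=1$ term is controlled by Eq.~(\ref{Equation: B1 estimate}) and gives $\pm\pi_M\mathfrak{Re}[B_1 f(\mathbb{L})]\pi_M\le\tfrac{c}{N}\sqrt{M/N}$; for $r\in\{2,3,4\}$ one gets $\tfrac{c}{N}\sqrt{M/N}\,\big(t+t^{-1}b_{\geq 1}^\dagger\cdot(T+1)\cdot b_{\geq 1}\big)$ and for $r=5$ one gets $\tfrac{c}{N}\tfrac{M}{N}\,\big(t+t^{-1}b_{\geq 1}^\dagger\cdot(T+1)\cdot b_{\geq 1}\big)$, and in both I would just take $t=1$. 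The $r=6$ contribution equals $\tfrac{1}{N-1}\mathfrak{Re}[B_6]$, which by the explicit two-sided inequality for $B_6$ in Lemma \ref{Lemma: Auxiliary Estimates} satisfies $\pm\pi_M(\cdot)\pi_M\le\tfrac{\Lambda M}{2N(N-1)}\,b_{\geq 1}^\dagger\cdot(T+1)\cdot b_{\geq 1}$. Finally $r=0$ is trivial: $B_0$ is a finite real scalar with $|B_0|\le\tfrac{\Lambda}{2}(\braket{T}_{u_0}+1)$ by Assumption \ref{Assumption: Part I}, while $\|\pi_M(f_0(\mathbb{L})-(1-\mathbb{L})^2)\pi_M\|\le\tfrac{M}{N(N-1)}$.

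To conclude I would collect the seven bounds using $M\le N$ (hence $\tfrac{M}{N}\le\sqrt{M/N}$ and $\tfrac{M}{N^2}\le\tfrac{1}{N}\sqrt{M/N}$), $\tfrac{1}{N-1}\le\tfrac{2}{N}$, and the identity $b_{\geq 1}^\dagger\cdot(T+1)\cdot b_{\geq 1}=b_{\geq 1}^\dagger\cdot T\cdot b_{\geq 1}+\mathbb{L}\le b_{\geq 1}^\dagger\cdot T\cdot b_{\geq 1}+1$ on $\mathcal{F}_{\le M}$; each contribution then becomes $\le\tfrac{c_r}{N}\sqrt{M/N}\,\big(b_{\geq 1}^\dagger\cdot T\cdot b_{\geq 1}+1\big)$, and the sum yields Eq.~(\ref{Equation: Corollary: Estimates Lower Bound - first line}) with $C=\sum_r c_r$. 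I expect the only genuinely delicate step to be making the expansion of $\sqrt{1-x-N^{-1}}$ for $r=2$ (and the structurally similar near-$x=1$ behaviour of $f_1$) uniform in $N$; everything else is bookkeeping of powers of $M/N$ together with Lemma \ref{Lemma: Auxiliary Estimates}.
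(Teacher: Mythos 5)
Your proposal is correct and follows the paper's own proof almost verbatim: rewrite the difference as $\sum_{r}\mathfrak{Re}[B_r(f_r(\mathbb{L})-\sqrt{1-\mathbb{L}}^{\beta_r})]$ via Eq.~(\ref{Equation: U representation of H}), verify $\sup_{x\le M/N}|f_r(x)-\sqrt{1-x}^{\beta_r}|=O(1/N)$, treat $r=0$ by the scalar bound, and handle $r\in\{1,\dots,6\}$ by Lemma~\ref{Lemma: Auxiliary Estimates} with $t=1$. You supply a bit more detail than the paper (explicit formulas for $f_r-\sqrt{1-\cdot}^{\beta_r}$, the domain remark for $r=2$, and the separate two-sided handling of $B_6$), but the underlying argument and estimates are the same.
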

\begin{proof}
According to Eq.~(\ref{Equation: U representation of H}), we have 
\begin{align}
\label{Equation: Tranformed Operator}
U_N N^{-1}\! H_N U_N^{-1}-\widetilde{A}_N-\widetilde{B}_N=\sum_{r=0}^6 \mathfrak{Re}\left[B_r \left(f_r\left(\mathbb{L}\right)-\sqrt{1-\mathbb{L}}^{\beta_r}\right)\right],
\end{align}
with the functions $f_0,\dots,f_6$ from Eq.~(\ref{Equation: Parameterized Hamiltonian}). Note that for all $N\geq 2$
\begin{align*}
\pm& \pi_M B_0 \left(f_0(\mathbb{L})\!-\!(1-\mathbb{L})^2\right)\pi_M=\pm \frac{1}{2}\hat{v}_{00,00}\ \pi_M \left(f_0(\mathbb{L})\!-\!(1-\mathbb{L})^2\right) \pi_M\\
&\leq \frac{1}{2}|\hat{v}_{00,00}|\ \sup_{x\leq \frac{M}{N}}|f_0(x)-(1-x)^2|\leq \frac{1}{2}|\hat{v}_{00,00}|\frac{M}{(N-1)N}.
\end{align*}
Furthermore, $f_r\left(x\right)=\sqrt{1-x}^{\beta_r}+O\left(\frac{1}{N}\right)$ and therefore we obtain with Lemma \ref{Lemma: Auxiliary Estimates} and the choice $t=1$
\begin{align*}
\pm \pi_M B_r \left(f_r(\mathbb{L})-\sqrt{1-x}^{\ \beta_r}\right)\pi_M\leq \frac{C}{N}\sqrt{\frac{M}{N}} \left(b_{\geq 1}^\dagger\cdot  T\cdot b_{\geq 1}+1\right),
\end{align*}
for a constant $C$ and $r\in \{1,\dots,6\}$. 
\end{proof}

\section{Asymptotics of the Ground State Energy}
\label{Section: Energy Asymptotic}
We start by making the formal definition of the Bogoliubov Hamiltonian $\mathbb{H}$ in Eq.~(\ref{Equation: Bogoliubov}) rigorous in Subsection \ref{Subsection: Construction of the Bogoliubov Operator}. In the following Subsection \ref{Subsection: Upper Bound}, we will verify the upper bound in the energy asymptotics in Eq.~(\ref{Equation: Energy asymptotics}). We will then discuss the proof of the lower bound in Subsection \ref{Subsection: Lower Bound}, while the verification of the main technical Theorem \ref{Theorem: Decomposition} for the lower bound will be postponed to Section \ref{Section: Results in the transformed picture}.\\

\subsection{Construction of the Bogoliubov Operator $\mathbb{H}$}
\label{Subsection: Construction of the Bogoliubov Operator}
In the following Lemma \ref{Lemma: Hessian} we will identify the Hessian $\mathrm{Hess}|_{u_0}\mathcal{E}_\mathrm{H}$, and give a precise definition of the Bogoliubov operator in the subsequent Definition \ref{Definition: Bogoliubov Operator}. Furthermore, we shall see that the operator $\mathbb{H}$ is indeed semi-bounded. In the following let us denote with $\mathrm{dom}\left[A\right]:=\mathrm{dom}\big(\sqrt{A}\, \big)$ the form domain of an operator $A\geq 0$.

\begin{lem}
\label{Lemma: Hessian}
Given Assumption \ref{Assumption: Part I}, the Hessian of the Hartree energy $\mathcal{E}_\mathrm{H}$ at the Hartree minimizer $u_0$ is given by
\begin{align}
\label{Equation: Formula for the Hessian}
\frac{1}{2}\mathrm{Hess}|_{u_0}\mathcal{E}_\mathrm{H}[z]=z^\dagger\cdot Q_\mathrm{H}\cdot z+G_\mathrm{H}^\dagger\cdot z\otimes z+\left(z\otimes z\right)^\dagger\cdot G_\mathrm{H},
\end{align}
where $G_\mathrm{H}:=\frac{1}{2}\hat{v}\cdot u_0\otimes u_0\in \overline{\mathcal{H}_0\otimes_\mathrm{s} \mathcal{H}_0}^{\|.\|_*}$ is in the closure of $\mathcal{H}_0\otimes_\mathrm{s} \mathcal{H}_0$ with respect to the norm $\|G\|_*:=\|1_\mathcal{H}\otimes (T+1)^{-\frac{1}{2}}\cdot G\|$, and the operator $Q_\mathrm{H}$ is defined by the equation
\begin{align*}
z^\dagger\cdot Q_\mathrm{H}\cdot z:&=z^\dagger\cdot T\cdot z+\left(z\otimes u_0\right)^\dagger\cdot \hat{v}\cdot z\otimes u_0-\mu_\mathrm{H}\, z^\dagger\cdot z+\left(u_0\otimes z\right)^\dagger\cdot \hat{v}\cdot z\otimes u_0
\end{align*}
for all $z\in \mathcal{H}_0\cap \mathrm{dom}\left[T\right]$, with $\mu_\mathrm{H}:=u_0^\dagger\cdot T\cdot u_0+(u_0\otimes u_0)^\dagger\cdot \hat{v}\cdot u_0\otimes u_0$. Furthermore, $Q_\mathrm{H}$ is non-negative and satisfies $\nu^{-1}(T|_{\mathcal{H}_0}+1)\leq Q_\mathrm{H}+1\leq \nu (T|_{\mathcal{H}_0}+1)$ for some constant $\nu>0$.\\
\end{lem}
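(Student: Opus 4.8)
The plan is to read off $\mathrm{Hess}|_{u_0}\mathcal{E}_\mathrm{H}$ from the second-order Taylor expansion of $t\mapsto\mathcal{E}_\mathrm{H}[\iota(tz)]$ at $t=0$, using the chart $\iota$ from Eq.~(\ref{Equation: Embedding}), and then to identify $Q_\mathrm{H}$ and $G_\mathrm{H}$ term by term. Writing $\iota(z)=\sqrt{1-\|z\|^2}\,u_0+z$ with $\sqrt{1-\|z\|^2}=1-\tfrac12\|z\|^2+O(\|z\|^4)$, I would substitute into $\mathcal{E}_\mathrm{H}[u]=u^\dagger\cdot T\cdot u+\tfrac12(u\otimes u)^\dagger\cdot\hat v\cdot u\otimes u$ and expand for $z\in\mathcal{H}_0\cap\mathrm{dom}[T]$. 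The linear term equals $2\,\mathfrak{Re}\big[z^\dagger\cdot(T+W_{u_0})\cdot u_0\big]$, where $W_{u_0}$ denotes the mean-field potential $x\mapsto\int|u_0(y)|^2v(x-y)\,\mathrm{d}y$; it vanishes for all $z\in\mathcal{H}_0$ because $u_0$ is a critical point of $\mathcal{E}_\mathrm{H}$ on the unit sphere, i.e.\ it solves the Euler--Lagrange equation $(T+W_{u_0})u_0=\mu_\mathrm{H}u_0$ with $\mu_\mathrm{H}=u_0^\dagger\cdot(T+W_{u_0})\cdot u_0$, which is exactly the $\mu_\mathrm{H}$ in the statement. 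The quadratic term is then $\big(z^\dagger\cdot T\cdot z-\mu_\mathrm{H}\,z^\dagger\cdot z\big)+\big((u_0\otimes z)^\dagger\cdot\hat v\cdot u_0\otimes z+(u_0\otimes z)^\dagger\cdot\hat v\cdot z\otimes u_0\big)+\mathfrak{Re}\big[(z\otimes z)^\dagger\cdot\hat v\cdot u_0\otimes u_0\big]$, where I have used that $\hat v$ commutes with the exchange of the two tensor factors (since $v$ is even) to collapse the four "one-$u_0$" contractions into two; recognising $(u_0\otimes z)^\dagger\cdot\hat v\cdot u_0\otimes z=(z\otimes u_0)^\dagger\cdot\hat v\cdot z\otimes u_0$ and $\tfrac12\hat v\cdot u_0\otimes u_0=G_\mathrm{H}$ yields the asserted identity for $\tfrac12\mathrm{Hess}|_{u_0}\mathcal{E}_\mathrm{H}$. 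As a by-product one gets $u_0\in\mathrm{dom}(T)$: translation-invariance of $T$ promotes $|v|\le\Lambda(T+1)$ to $|v(\cdot-a)|\le\Lambda(T+1)$ for every $a\in\mathbb{R}^d$, so $W_{u_0}\in L^\infty$ with $\|W_{u_0}\|_\infty\le\Lambda(\langle T\rangle_{u_0}+1)$, and the Euler--Lagrange equation then gives $Tu_0=\mu_\mathrm{H}u_0-W_{u_0}u_0\in L^2$.

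For $G_\mathrm{H}=\tfrac12\hat v\cdot u_0\otimes u_0$, finiteness of $\|G_\mathrm{H}\|_*$ is immediate from $|\hat v|\le\Lambda\,1_{\mathcal{H}}\otimes(T+1)$ and $u_0\in\mathrm{dom}[T]$: a Cauchy--Schwarz estimate for the form $|\hat v|$ gives $\|1_{\mathcal{H}}\otimes(T+1)^{-1/2}\hat v\,u_0\otimes u_0\|\le\Lambda\,\|u_0\|\,\langle u_0,(T+1)u_0\rangle^{1/2}$. Membership of $G_\mathrm{H}$ in $\overline{\mathcal{H}_0\otimes_\mathrm{s}\mathcal{H}_0}^{\|\cdot\|_*}$ is then a routine density statement --- only the $\mathcal{H}_0\otimes_\mathrm{s}\mathcal{H}_0$-part of $G_\mathrm{H}$ contributes to the Hessian, since $z\otimes z\in\mathcal{H}_0\otimes_\mathrm{s}\mathcal{H}_0$ --- and expanding it in the orthonormal basis $\{u_i\otimes u_j\}_{i,j\ge1}$ the finite truncations converge in $\|\cdot\|_*$ because $1_{\mathcal{H}}\otimes(T+1)^{-1/2}G_\mathrm{H}$ already lies in $\mathcal{H}\otimes\mathcal{H}$.

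It remains to prove the operator inequalities for $Q_\mathrm{H}$. Non-negativity follows by a phase-averaging trick: since $u_0$ minimizes $\mathcal{E}_\mathrm{H}$, we have $\mathcal{E}_\mathrm{H}[\iota(tz)]\ge e_\mathrm{H}$ for small $t$ and all $z\in\mathcal{H}_0$, hence $\mathrm{Hess}|_{u_0}\mathcal{E}_\mathrm{H}[z]\ge0$; applying this both to $z$ and to $iz$ and adding, the $G_\mathrm{H}$-contributions cancel and one is left with $2\,z^\dagger\cdot Q_\mathrm{H}\cdot z\ge0$. For the two-sided bound $\nu^{-1}(T|_{\mathcal{H}_0}+1)\le Q_\mathrm{H}+1\le\nu(T|_{\mathcal{H}_0}+1)$, the upper estimate comes from bounding each term of $z^\dagger\cdot Q_\mathrm{H}\cdot z$ by means of $|\hat v|\le\Lambda\,1_{\mathcal{H}}\otimes(T+1)$ (and its swapped version) together with $W_{u_0}\in L^\infty$, giving $Q_\mathrm{H}\le C(T|_{\mathcal{H}_0}+1)$. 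For the lower estimate I would write the direct-plus-exchange contribution as $\tfrac12\,B^\dagger\cdot\hat v\cdot B$ with $B:=u_0\otimes z+z\otimes u_0$ (again using $v$ even), invoke the lower relative bound $\hat v\ge-\lambda\,1_{\mathcal{H}}\otimes T-\Lambda$ from Eq.~(\ref{Equation: Relative Bounds}), and use the elementary identity $B^\dagger\cdot(1_{\mathcal{H}}\otimes T)\cdot B=z^\dagger\cdot T\cdot z+\langle T\rangle_{u_0}\|z\|^2$ (the mixed terms vanish because $z\perp u_0$) to obtain $z^\dagger\cdot Q_\mathrm{H}\cdot z\ge\big(1-\tfrac\lambda2\big)\,z^\dagger\cdot T\cdot z-C\|z\|^2$. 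Because $\lambda<2$ the coefficient $1-\lambda/2$ is strictly positive, and interpolating this inequality with $Q_\mathrm{H}\ge0$ one can arrange the additive constant to be absorbed into the "$+1$", yielding $Q_\mathrm{H}+1\ge\nu^{-1}(T|_{\mathcal{H}_0}+1)$ for a suitable $\nu$. The only place where a hypothesis is used in an essential rather than purely bookkeeping way is this last step --- $\lambda<2$ is precisely what makes $Q_\mathrm{H}$ comparable to $T|_{\mathcal{H}_0}$ from below --- and I expect that, together with keeping the factor $\tfrac12$ in $\tfrac12\mathrm{Hess}$ consistent throughout the Taylor computation, to be the main point to get right.
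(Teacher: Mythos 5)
Your derivation of Eq.~(\ref{Equation: Formula for the Hessian}) via the chart $\iota$ and the Euler--Lagrange equation, the phase-averaging argument for $Q_\mathrm{H}\geq 0$, and the upper bound $Q_\mathrm{H}+1\leq\nu(T|_{\mathcal{H}_0}+1)$ all match the paper's proof. The one place you genuinely diverge is the lower bound, and there your concluding remark is mistaken. You prove $z^\dagger\cdot Q_\mathrm{H}\cdot z\geq(1-\tfrac{\lambda}{2})\,z^\dagger\cdot T\cdot z-C\|z\|^2$ by inserting $\hat v\geq-\lambda\,1_\mathcal{H}\otimes T-\Lambda$ into $\tfrac12 B^\dagger\cdot\hat v\cdot B$ with $B=u_0\otimes z+z\otimes u_0$; this is correct and self-contained, but it does use the lower relative bound $-\lambda T-\Lambda\leq v$ and its hypothesis $\lambda<2$. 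The paper avoids this entirely: it bounds \emph{both} $\hat v$-terms in $z^\dagger\cdot Q_\mathrm{H}\cdot z$ in absolute value by $c\,\|z\|^2$ with $c:=\Lambda\,u_0^\dagger\cdot(T+1)\cdot u_0$, using only $|v|\leq\Lambda(T+1)$ together with translation invariance (which lets one place $T+1$ on either tensor factor) and the operator Cauchy--Schwarz inequality for the exchange term. This yields $Q_\mathrm{H}\geq T|_{\mathcal{H}_0}-2c-|\mu_\mathrm{H}|$ with coefficient exactly $1$ on $T$, and then interpolating with $Q_\mathrm{H}\geq 0$ gives the claimed $(Q_\mathrm{H}+1)\geq\nu^{-1}(T|_{\mathcal{H}_0}+1)$.

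So the hypothesis $\lambda<2$ is \emph{not} what makes $Q_\mathrm{H}$ comparable to $T|_{\mathcal{H}_0}$ from below in this lemma; the comparison already follows from $|v|\leq\Lambda(T+1)$ and $Q_\mathrm{H}\geq 0$. Your route trades the upper relative bound for the lower one, obtains a strictly smaller coefficient $1-\lambda/2$ in front of $T$, and therefore \emph{appears} to need $\lambda<2$; but that dependence is an artefact of your decomposition, not a structural feature of the statement. (The condition $\lambda<2$ is genuinely needed elsewhere in the paper, e.g.\ to control the many-body interaction by the kinetic energy as in Lemma~\ref{Lemma: Balance of Energy}, but not here.) Everything else in your argument --- the computation of $B^\dagger\cdot(T\otimes 1)\cdot B=z^\dagger\cdot T\cdot z+\braket{T}_{u_0}\|z\|^2$ for $z\perp u_0$, the $\|\cdot\|_*$-finiteness of $G_\mathrm{H}$, the remark that only the $\mathcal{H}_0\otimes_\mathrm{s}\mathcal{H}_0$ projection of $G_\mathrm{H}$ enters the Hessian, and the observation that $u_0\in\mathrm{dom}(T)$ via the Euler--Lagrange equation --- is sound and in fact supplies detail that the paper leaves to the reader.
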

\begin{note}
By Assumption \ref{Assumption: Part I}, we know that $\hat{v}\cdot u_0\otimes u_0\in \overline{\mathcal{H}_0\otimes_\mathrm{s} \mathcal{H}_0}^{\|.\|_*}$, which follows from the fact that $1_\mathcal{H}\otimes (T+1)^{-\frac{1}{2}}\cdot \hat{v}\cdot 1_\mathcal{H}\otimes (T+1)^{-\frac{1}{2}}$ is a bounded operator and that $u_0\in \mathrm{dom}\left[T\right]$. For such elements $G\in \overline{\mathcal{H}_0\otimes_\mathrm{s} \mathcal{H}_0}^{\|.\|_*}$, we have that $G_\mathrm{reg}:=1_\mathcal{H}\otimes (T+1)^{-\frac{1}{2}}\cdot G$ is an element of $\mathcal{H}_0\otimes_\mathrm{s} \mathcal{H}_0$ and therefore we can define for all $z\in  \mathrm{dom}\left[T\right]$
\begin{align*}
G^\dagger\cdot z\otimes z:=G_\mathrm{reg}^\dagger\cdot z\otimes \left((T+1)^{\frac{1}{2}}\cdot z\right).
\end{align*}
In a similar fashion, we define the operator $G^\dagger\cdot b_{\geq 1}\underline{\otimes}b_{\geq 1}:=G_\mathrm{reg}^\dagger\cdot b_{\geq 1}\underline{\otimes}\left((T+1)^{\frac{1}{2}}\cdot b_{\geq 1}\right)$.
\end{note}

\begin{proof}[Proof of Lemma \ref{Lemma: Hessian}]
With the help of the embedding $\iota$ defined in Eq.~(\ref{Equation: Embedding}), we can express the Hessian as $\mathrm{Hess}|_{u_0}\mathcal{E}_\mathrm{H}[z]=D^2|_0 \left(\mathcal{E}_\mathrm{H}\circ \iota\right)(z)$, where $D^2|_{z_0}f(z)$ denotes the second derivative of a function $f$ in the direction $z$ evaluated at $z_0$. An explicit computation yields Eq.~(\ref{Equation: Formula for the Hessian}). Regarding the second part of the Lemma, observe that $Q_\mathrm{H}\geq0$ follows from the fact that we can always find a phase $\theta_z$ such that
\begin{align*}
z^\dagger\cdot Q_\mathrm{H}\cdot z=\frac{1}{2}\mathrm{Hess}|_{u_0}\mathcal{E}_\mathrm{H}[e^{i\theta_z}z]\geq 0.
\end{align*}
Furthermore, note that $|v|\leq \Lambda (T+1)$ implies $\pm \left(1_{\mathcal{H}_0}\otimes u_0\right)^\dagger\cdot \hat{v}\cdot 1_{\mathcal{H}_0}\otimes u_0\leq  c\ 1_{\mathcal{H}_0}$ with $c:=u_0^\dagger\cdot \Lambda(T+1)\cdot u_0$ and
\begin{align*}
\pm\! \left(u_0\!\otimes\! 1_{\mathcal{H}_0}\right)^\dagger\!\cdot\! \hat{v}\!\cdot \!1_{\mathcal{H}_0}\!\otimes\! u_0&\leq\! \frac{1}{2}\left(u_0\!\otimes \!1_{\mathcal{H}_0}\right)^\dagger\!\cdot\! |\hat{v}|\!\cdot\! u_0\!\otimes \!1_{\mathcal{H}_0}\!+\!\frac{1}{2}\left( 1_{\mathcal{H}_0}\!\otimes\! u_0\right)^\dagger\!\cdot\! |\hat{v}|\!\cdot\! 1_{\mathcal{H}_0}\!\otimes\! u_0 \leq\! c\ 1_{\mathcal{H}_0}.
\end{align*}
Hence $Q_\mathrm{H}\geq 0$ implies $Q_\mathrm{H}+1\geq T|_{\mathcal{H}_0}+1-(2c+|\mu|+1)\geq T|_{\mathcal{H}_0}+1-(1+2c+\mu)(Q_\mathrm{H}+1)$, and therefore $(2+2c+\mu)(Q_\mathrm{H}+1)\geq T|_{\mathcal{H}_0}+1$. Furthermore $T\geq 0$ implies
\begin{align*}
Q_\mathrm{H}+1\leq T+2c+|\mu|\leq (1+2c+|\mu|)(T|_{\mathcal{H}_0}+1).
\end{align*}
\end{proof}

\begin{defi}
\label{Definition: Bogoliubov Operator}
Let the selfadjoint operator $Q_\mathrm{H}$ and $G_\mathrm{H}\in \overline{\mathcal{H}_0\otimes_\mathrm{s} \mathcal{H}_0}^{\|.\|_*}$ be as in Lemma \ref{Lemma: Hessian}. Then we define the Bogoliubov operator $\mathbb{H}$ as
\begin{align}
\label{Equation: Definition Bogoliubov}
\mathbb{H}:= a_{\geq 1}^\dagger\cdot  Q_\mathrm{H}\cdot  a_{\geq 1}+G_\mathrm{H}^\dagger\cdot  a_{\geq 1}\ \underline{\otimes}\  a_{\geq 1}+\left( a_{\geq 1}\ \underline{\otimes}\  a_{\geq 1}\right)^\dagger\cdot  G_\mathrm{H}.
\end{align}
\end{defi}

\begin{theorem}
\label{Theorem: Bogoliubov Ground State Energy}
The quadratic form on the right side of Eq.~(\ref{Equation: Definition Bogoliubov}) is semi-bounded from below and closeable, and consequently defines by Friedrichs extension a selfadjoint operator $\mathbb{H}$ with $\inf \sigma\left(\mathbb{H}\right)>-\infty$. Furthermore there exists a sequence of states $\Psi_M\in \mathrm{dom}\left[a_{\geq 1}^\dagger\cdot (T+1)\cdot a_{\geq 1}\right]\cap \mathcal{F}_{\leq M}$, $\|\Psi_M\|=1$, such that 
\begin{align*}
\braket{\mathbb{H}}_{\Psi_M}\underset{M\rightarrow \infty}{\longrightarrow}\inf \sigma\left(\mathbb{H}\right).
\end{align*}
 Additionally there exists a constant $r_*>0$ such that for all $r<r_*$ the operator $\mathbb{H}-r\mathbb{A}$ satisfies $\inf \sigma\left(\mathbb{H}-r\mathbb{A}\right)>-\infty$ as well, where
\begin{align}
\label{Equation: NT}
\mathbb{A}:=-\frac{1}{4}\sum_{j=1}^d \left(a_j-a_j^\dagger\right)^2+ a_{>d}^\dagger\cdot (T+1)\cdot  a_{>d}.
\end{align}
\end{theorem}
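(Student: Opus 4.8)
The plan is to recognise the form on the right of Eq.~(\ref{Equation: Definition Bogoliubov}) as a bosonic quadratic form on $\mathcal{F}_0$ of a very standard type and to apply the general theory of such Hamiltonians. First I would record the three structural inputs. (i) By Lemma~\ref{Lemma: Hessian}, $Q_\mathrm{H}\geq0$ with $\nu^{-1}(T|_{\mathcal{H}_0}+1)\leq Q_\mathrm{H}+1\leq\nu(T|_{\mathcal{H}_0}+1)$; since $T+1\to\infty$, $Q_\mathrm{H}$ then has compact resolvent, so $\ker Q_\mathrm{H}$ is finite-dimensional and $0$ is isolated in $\sigma(Q_\mathrm{H})$. (ii) Because $G_\mathrm{reg}:=1_\mathcal{H}\otimes(T+1)^{-\frac12}\cdot G_\mathrm{H}\in\mathcal{H}_0\otimes_\mathrm{s}\mathcal{H}_0$ and $(Q_\mathrm{H}+1)^{-\frac12}(T+1)^{\frac12}$ is bounded, the off-diagonal kernel $(Q_\mathrm{H}+1)^{-\frac12}\otimes(Q_\mathrm{H}+1)^{-\frac12}\cdot G_\mathrm{H}$ is Hilbert--Schmidt, so the Bogoliubov (symplectic) transformation diagonalising $\mathbb{H}$ is unitarily implementable on $\mathcal{F}_0$. (iii) Since $u_0$ minimises $\mathcal{E}_\mathrm{H}$ on the unit sphere, $\mathrm{Hess}|_{u_0}\mathcal{E}_\mathrm{H}\geq0$ on $\mathcal{H}_0$, and replacing $z$ by $e^{i\theta}z$ in Eq.~(\ref{Equation: Formula for the Hessian}) and optimising over $\theta$ gives $|G_\mathrm{H}^\dagger\cdot z\otimes z|\leq\tfrac12\,z^\dagger\cdot Q_\mathrm{H}\cdot z$ for all $z\in\mathcal{H}_0\cap\mathrm{dom}\left[T\right]$; thus $\tfrac12\mathrm{Hess}|_{u_0}\mathcal{E}_\mathrm{H}$, which is the classical symbol of $\mathbb{H}$, is non-negative. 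With (i)--(iii) in hand, the standard theory of quadratic bosonic Hamiltonians (along the lines of \cite{LNSS}) yields that the form is semi-bounded and closeable, that its Friedrichs extension $\mathbb{H}$ is self-adjoint with $\inf\sigma(\mathbb{H})>-\infty$ and form domain $\mathrm{dom}\left[a_{\geq 1}^\dagger\cdot(T+1)\cdot a_{\geq 1}\right]$, and that $\mathbb{H}$ is unitarily equivalent to $a_{\geq 1}^\dagger\cdot D\cdot a_{\geq 1}+\inf\sigma(\mathbb{H})$ for some $0\leq D$ comparable to $T|_{\mathcal{H}_0}+1$ (the possible degeneracy of $Q_\mathrm{H}$ on a finite-dimensional subspace causes no difficulty, the corresponding finite-dimensional quadratic form being bounded below by its non-negative symbol).

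Next I would extract the minimising sequence. The finite-particle vectors in $\mathcal{F}_0$ with components in the form domain of $T$ constitute a form core for $\mathbb{H}$, hence $\bigcup_M\bigl(\mathcal{F}_{\leq M}\cap\mathrm{dom}\left[a_{\geq 1}^\dagger\cdot(T+1)\cdot a_{\geq 1}\right]\bigr)$ is dense in the form domain of $\mathbb{H}$ in the form norm. Consequently the infimum of $\braket{\mathbb{H}}_\Psi$ over normalised $\Psi\in\mathcal{F}_{\leq M}\cap\mathrm{dom}\left[a_{\geq 1}^\dagger\cdot(T+1)\cdot a_{\geq 1}\right]$ decreases to $\inf\sigma(\mathbb{H})$ as $M\to\infty$, and picking for each $M$ a state $\Psi_M$ in that space within $1/M$ of the infimum gives the required sequence.

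For the last assertion I would first note that $\mathbb{A}\geq0$: indeed $-\tfrac14(a_j-a_j^\dagger)^2=\tfrac14\bigl(i(a_j-a_j^\dagger)\bigr)^2$ with $i(a_j-a_j^\dagger)$ self-adjoint, and $a_{>d}^\dagger\cdot(T+1)\cdot a_{>d}\geq0$. Moreover $\mathbb{A}$ is again a quadratic bosonic operator on $\mathcal{F}_0$ whose off-diagonal part $-\tfrac14\sum_{j=1}^d\bigl((a_j^\dagger)^2+a_j^2\bigr)$ has finite rank and whose diagonal part $\tfrac12\sum_{j=1}^d a_j^\dagger a_j+a_{>d}^\dagger\cdot(T+1)\cdot a_{>d}$ is dominated by that of $\mathbb{H}$; a direct computation identifies its classical symbol as $\mathcal{A}_\mathrm{cl}[z]=\sum_{j=1}^d(\mathrm{Im}\,z_j)^2+z_{>d}^\dagger\cdot(T+1)\cdot z_{>d}$ (up to a constant), which vanishes exactly on the real span of $u_1,\dots,u_d$ — that is, on the null space of $\mathrm{Hess}|_{u_0}\mathcal{E}_\mathrm{H}$. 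Because $\tfrac12\mathrm{Hess}|_{u_0}\mathcal{E}_\mathrm{H}$ and $\mathcal{A}_\mathrm{cl}$, together with their bilinear forms, annihilate that real span, both factor through the orthogonal projection onto the subspace $V$ from Assumption~\ref{Assumption: Part II}, so it is enough to prove $\tfrac12\mathrm{Hess}|_{u_0}\mathcal{E}_\mathrm{H}[z]\geq r\,\mathcal{A}_\mathrm{cl}[z]$ on $V$ for $r$ small. I would do this by combining two bounds: writing the pairing term in $\tfrac12\mathrm{Hess}|_{u_0}\mathcal{E}_\mathrm{H}$ as $2\,\mathfrak{Re}\,G_\mathrm{reg}^\dagger\cdot z\otimes\bigl((T+1)^{\frac12}z\bigr)$ and estimating it by $\epsilon\,z^\dagger\cdot(T+1)\cdot z+C_\epsilon\|z\|^2$ (using $\|G_\mathrm{reg}\|<\infty$), together with $Q_\mathrm{H}+1\geq\nu^{-1}(T|_{\mathcal{H}_0}+1)$, gives $\tfrac12\mathrm{Hess}|_{u_0}\mathcal{E}_\mathrm{H}[z]\geq c_1\,z^\dagger\cdot(T+1)\cdot z-c_2\|z\|^2$ on all of $\mathcal{H}_0\cap\mathrm{dom}\left[T\right]$; on $V$, Assumption~\ref{Assumption: Part II} contributes $\tfrac12\mathrm{Hess}|_{u_0}\mathcal{E}_\mathrm{H}[z]\geq\tfrac{\eta}{2}\|z\|^2$. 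A convex combination of the two then removes the negative term, yielding $\tfrac12\mathrm{Hess}|_{u_0}\mathcal{E}_\mathrm{H}[z]\gtrsim z^\dagger\cdot(T+1)\cdot z\gtrsim\mathcal{A}_\mathrm{cl}[z]$ on $V$ (the last comparison absorbing the cross term between the $\mathrm{span}\{u_1,\dots,u_d\}$- and $\{u_1,\dots,u_d\}^\perp$-components of $z$). Thus for $r<r_*$ the classical symbol of $\mathbb{H}-r\mathbb{A}$ is non-negative, its diagonal part is still comparable to $a_{\geq 1}^\dagger\cdot(T+1)\cdot a_{\geq 1}$, and its off-diagonal part differs from that of $\mathbb{H}$ by a finite-rank operator, so the quadratic-Hamiltonian theory invoked above applies once more and gives $\inf\sigma(\mathbb{H}-r\mathbb{A})>-\infty$.

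I expect the main obstacle to be this final estimate on $V$: one has to show that the pairing term $2\,\mathfrak{Re}\,G_\mathrm{H}^\dagger\cdot z\otimes z$ is genuinely subordinate to the kinetic form $z^\dagger\cdot(T+1)\cdot z$ — which is exactly where the Hilbert--Schmidt character of $G_\mathrm{reg}$ enters, beyond the mere non-negativity of the Hessian — and to check that the degeneracy subspace of $\mathrm{Hess}|_{u_0}\mathcal{E}_\mathrm{H}$ coincides precisely with the zero set of $\mathcal{A}_\mathrm{cl}$, so that subtracting $r\mathbb{A}$ cannot spoil positivity in the non-degenerate directions.
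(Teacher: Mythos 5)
The central step of your argument --- invoking ``the standard theory of quadratic bosonic Hamiltonians (along the lines of [LNSS])'' to conclude that the form is semi-bounded, closeable, and that $\mathbb{H}$ is unitarily equivalent to $a_{\geq 1}^\dagger\cdot D\cdot a_{\geq 1}+\inf\sigma(\mathbb{H})$ --- is precisely what the paper warns cannot be done here. The remark directly after the statement of the theorem emphasises that $\mathbb{H}$ is degenerate (the full quadratic form $z^\dagger\cdot Q_\mathrm{H}\cdot z+G_\mathrm{H}^\dagger\cdot z\otimes z+(z\otimes z)^\dagger\cdot G_\mathrm{H}$ vanishes on the real span of $u_1,\dots,u_d$) and that the results in \cite{NNS} therefore do not apply directly. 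Your parenthetical dismissal --- ``the possible degeneracy of $Q_\mathrm{H}$ on a finite-dimensional subspace causes no difficulty, the corresponding finite-dimensional quadratic form being bounded below by its non-negative symbol'' --- does not repair this, for two reasons. First, the degenerate modes are coupled to the modes $>d$ through the $G_\mathrm{H}$ pairing, so one cannot split $\mathbb{H}$ into a finite-dimensional piece and an infinite-dimensional piece and bound them separately; the cross-term is not addressed. Second, and more fundamentally, a Bogoliubov quadratic form with a genuine zero of its symbol cannot be diagonalised into $a^\dagger\cdot D\cdot a+\mathrm{const}$: in the degenerate directions one necessarily obtains a free-particle piece of $P^2$ type, whose spectrum is absolutely continuous and hence unitarily inequivalent to any number operator. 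The diagonalisation you assert is therefore false, and everything you derive from it (form domain, form-core density of $\bigcup_M\mathcal{F}_{\leq M}$, lower bound) is left without a proof.

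What the paper actually supplies is a decoupling step. Writing $P_j:=\tfrac{1}{2i}(a_j-a_j^\dagger)$, one first groups $\mathbb{H}=c_0+\nu(P_1,\dots,P_d)+u(P_1,\dots,P_d)^\dagger\cdot a_{>d}+a_{>d}^\dagger\cdot u(P_1,\dots,P_d)+\mathbb{H}_\perp$, where $\mathbb{H}_\perp$ is a non-degenerate Bogoliubov operator on $\mathcal{H}_\perp=\{u_0,\dots,u_d\}^\perp$ to which \cite{NNS} does apply (since $Q_\perp\geq r>0$ and $G_\mathrm{op}\,Q_\perp^{-1/2}$ is Hilbert--Schmidt). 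One then constructs a $P$-dependent Weyl shift $\mathcal{R}=\exp\bigl[w(P_1,\dots,P_d)^\dagger\cdot a_{>d}-a_{>d}^\dagger\cdot w(P_1,\dots,P_d)\bigr]$ that completes the square and absorbs the coupling, so that $\mathcal{R}\,\mathbb{H}\,\mathcal{R}^{-1}=c_0+\eta(P_1,\dots,P_d)+\mathbb{H}_\perp$ with $\eta\geq0$. That unitary decoupling, not a blanket appeal to quadratic-Hamiltonian theory, is the actual source of semi-boundedness, of the bound $\mathbb{H}\lesssim\mathbb{A}+1$ needed for closeability and the form core, and of the minimising sequence. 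Your proposal contains no analogue of it.

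Your discussion of $\mathbb{H}-r\mathbb{A}$ is, at the level of classical symbols, close to the paper's: you correctly compute the symbol of $\mathbb{A}$, observe that it vanishes exactly on the real span of $u_1,\dots,u_d$, and obtain an improved coercivity of the form $\tfrac12\mathrm{Hess}|_{u_0}\mathcal{E}_\mathrm{H}[z]\gtrsim\sum_{j}s_j^2+z_{>d}^\dagger\cdot(T+1)\cdot z_{>d}$ by interpolating a kinetic lower bound with Assumption~\ref{Assumption: Part II}; this is essentially Eq.~(\ref{Equation: Improved Coercivity Hessian}). But you then again fall back on the same ``quadratic-Hamiltonian theory,'' which is unavailable without the decoupling; the paper instead simply reruns the $\mathcal{R}$-conjugation argument with the improved coercivity replacing Assumption~\ref{Assumption: Part II}.
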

The proof of Theorem \ref{Theorem: Bogoliubov Ground State Energy} is being carried out in Appendix \ref{Appendix A}. We emphasize that $\mathbb{H}$ is degenerate, in the sense that $z^\dagger\cdot  Q_\mathrm{H}\cdot  z+G_\mathrm{H}^\dagger \cdot  z \otimes  z+\left(z \otimes  z\right)^\dagger\cdot  G_\mathrm{H}=0$ for any $z$ in the vector space spanned by $\{u_1,\dots,u_d\}$, and therefore we cannot directly apply the results in \cite{NNS}. We also  note that the semi-boundedness of Bogoliubov operators with degeneracies has been verified in \cite{KM} under the additional assumption that $Q_\mathrm{H}$ is bounded.

\subsection{Upper Bound}
\label{Subsection: Upper Bound}
With the essential definitions at hand, we will derive the upper bound in Theorem \ref{Theorem: Upper Bound} using the representation of $U_N H_N U_N^{-1}$ derived in the previous section. We follow the strategy presented in \cite{LNSS}, by sorting the operator $U_N H_N U_N^{-1}$ in terms of different powers in $b_{\geq 1}$ and identifying the zero component as the Hartree energy $N e_\mathrm{H}$ defined in Eq.~(\ref{Equation: Hartree Energy}) and the second order component as the Bogoliubov operator $\mathbb{H}$ defined in Eq.~(\ref{Equation: Definition Bogoliubov}).\\

\begin{lem}
\label{Lemma: Upper Bound Bogoliubov}
Let Assumption \ref{Assumption: Part I} hold. Then there exists a constant $C$ such that
\begin{align*}
\pm \pi_{M} \left(U_N N^{-1}\!H_N U_N^{-1}-e_\mathrm{H}-N^{-1}\mathbb{H}\right) \pi_{M}\leq C\ \left(\frac{M}{N}\right)^{\frac{3}{2}}\ \left(1+ a_{\geq 1}^\dagger\cdot  (T+1)\cdot  a_{\geq 1}\right)
\end{align*}
for all $M\leq N$.
\end{lem}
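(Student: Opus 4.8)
The plan is to expand the transformed Hamiltonian from Eq.~(\ref{Equation: U representation of H}),
\[ U_N N^{-1}H_N U_N^{-1}=\sum_{r=0}^2\mathfrak{Re}\left[A_r\sqrt{1-\mathbb{L}}^{2-r}\right]+\sum_{r=0}^6\mathfrak{Re}\left[B_r f_r(\mathbb{L})\right], \]
sorting its terms by their order in $b_{\geq1}$, and to match the low-order content against $e_\mathrm{H}+N^{-1}\mathbb{H}$. I would first record the scalar identities $e_\mathrm{H}=\mathcal{E}_\mathrm{H}[u_0]=A_0+B_0$ and $\mu_\mathrm{H}=A_0+2B_0$, and, using $a_{\geq1}=\sqrt N\,b_{\geq1}$, the formula for $Q_\mathrm{H}$ from Lemma~\ref{Lemma: Hessian}, and $G_\mathrm{H}=\tfrac12\hat v\cdot u_0\otimes u_0$, the identity $N^{-1}\mathbb{H}=A_2+B_3+B_4-\mu_\mathrm{H}\mathbb{L}+\mathfrak{Re}\left[B_2\right]$, where the last term equals $G_\mathrm{H}^\dagger\cdot b_{\geq1}\,\underline{\otimes}\,b_{\geq1}+(b_{\geq1}\,\underline{\otimes}\,b_{\geq1})^\dagger\cdot G_\mathrm{H}$.

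Taylor-expanding each function of $\mathbb{L}$ at $0$ produces the matching. The order-zero part $A_0(1-\mathbb{L})+B_0 f_0(\mathbb{L})$ contributes the constant $A_0+B_0=e_\mathrm{H}$, the term $-\mu_\mathrm{H}\mathbb{L}$ (the $-\mu_\mathrm{H}$ piece of $N^{-1}\mathbb{H}$), and a scalar remainder $B_0(f_0(\mathbb{L})-1+2\mathbb{L})$ of operator norm $\le C(M/N)^2$ on $\mathcal{F}_{\leq M}$. The order-one part $\mathfrak{Re}[A_1\sqrt{1-\mathbb{L}}+B_1 f_1(\mathbb{L})]$ has leading term $\mathfrak{Re}[A_1+B_1]=2\,\mathfrak{Re}[b_{\geq1}^\dagger\cdot(Tu_0+(|u_0|^2\ast v)u_0)]$, which vanishes identically because $u_0$ solves the Euler--Lagrange equation $Tu_0+(|u_0|^2\ast v)u_0=\mu_\mathrm{H}u_0$ while $b_{\geq1}^\dagger\cdot u_0=0$ (the modes $u_j$, $j\ge1$, being orthogonal to $u_0$). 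The order-two part $A_2+\mathfrak{Re}[B_2 f_2(\mathbb{L})]+\mathfrak{Re}[B_3 f_3(\mathbb{L})]+\mathfrak{Re}[B_4 f_4(\mathbb{L})]$, together with the leftover $-\mu_\mathrm{H}\mathbb{L}$, reproduces $N^{-1}\mathbb{H}$ once the constants $f_2(0)=\sqrt{N/(N-1)}$, $f_3(0)=f_4(0)=N/(N-1)$ are replaced by $1$. Altogether
\[ U_N N^{-1}H_N U_N^{-1}-e_\mathrm{H}-N^{-1}\mathbb{H}=B_0\,g_0(\mathbb{L})+\mathfrak{Re}\left[A_1\left(\sqrt{1-\mathbb{L}}-1\right)\right]+\sum_{i=1}^6\mathfrak{Re}\left[B_i\,g_i(\mathbb{L})\right], \]
with $g_0=f_0-1+2x$, $g_i=f_i-1$ for $1\le i\le4$, $g_i=f_i$ for $i\in\{5,6\}$, so that $\sup_{x\le M/N}|g_i(x)|\lesssim M/N$ for $1\le i\le4$, $\lesssim 1$ for $i\in\{5,6\}$, and $\lesssim(M/N)^2$ for $i=0$.

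It remains to bound each summand on $\mathcal{F}_{\leq M}$ by $C(M/N)^{3/2}(1+a_{\geq1}^\dagger\cdot(T+1)\cdot a_{\geq1})$. The scalar term is $O((M/N)^2)\le O((M/N)^{3/2})$. For the $\mathfrak{Re}[B_i g_i(\mathbb{L})]$ this is precisely Lemma~\ref{Lemma: Auxiliary Estimates}: apply Eq.~(\ref{Equation: B1 estimate}) for $i=1$, the $B_2,B_3,B_4$ bounds with $t=1$, the $B_5$ bound with the balanced choice $t=\sqrt{M/N}$, and the $B_6$ bound directly; in each case the factor $\sqrt{M/N}\cdot\sup|g_i|$ (resp.\ the balanced $t$) yields the exponent $3/2$, and all $1/N$-corrections, including the prefactors $N/(N-1)$, are absorbed using $1/N\le M/N$ and $(M/N)^2\le(M/N)^{3/2}$ for $1\le M\le N$ (the case $M=0$ being trivial, all three operators reducing to $e_\mathrm{H}$ on the vacuum). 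The only summand not covered by Lemma~\ref{Lemma: Auxiliary Estimates} is the one with $A_1$, for which I would estimate directly $|\langle\mathfrak{Re}[A_1(\sqrt{1-\mathbb{L}}-1)]\rangle_\Psi|\le 2\|A_1^\dagger\Psi\|\,\|(\sqrt{1-\mathbb{L}}-1)\Psi\|$, using $\|A_1^\dagger\Psi\|\le\tfrac2{\sqrt N}\|Tu_0\|\sqrt M\,\|\Psi\|$ on $\mathcal{F}_{\leq M}$ and $|\sqrt{1-x}-1|\le x\le M/N$; here $Tu_0\in L^2$, hence $\|Tu_0\|<\infty$, follows from $Tu_0=\mu_\mathrm{H}u_0-(|u_0|^2\ast v)u_0$ and the boundedness of $|u_0|^2\ast v$ granted by Assumption~\ref{Assumption: Part I}, and the resulting bound $4\|Tu_0\|(M/N)^{3/2}\|\Psi\|^2$ needs no $(T+1)$-factor at all. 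Summing the finitely many bounds closes the argument.

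\textbf{Main obstacle.} The delicate point is not any single estimate but the algebra of the matching step: verifying the exact cancellation of the linear-in-$b_{\geq1}$ term via the Euler--Lagrange equation and the orthogonality $u_j\perp u_0$, and checking that the $\mathbb{L}$-linear scalar contribution fuses with $A_2,B_3,B_4,\mathfrak{Re}[B_2]$ into exactly $N^{-1}\mathbb{H}$ as prescribed by the formula for $Q_\mathrm{H}$ in Lemma~\ref{Lemma: Hessian}; once the leading terms are identified, the rest is inserting Lemma~\ref{Lemma: Auxiliary Estimates} together with the elementary one-term Taylor estimates $|f_r(x)-1|\lesssim x+1/N$ on $[0,M/N]$.
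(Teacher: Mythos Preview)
Your proof is correct and follows essentially the same route as the paper: identify $e_\mathrm{H}=A_0+B_0$, kill the linear term via the Euler--Lagrange equation, match the quadratic part with $N^{-1}\mathbb{H}$, and bound each remainder $\mathfrak{Re}[B_i g_i(\mathbb{L})]$ with Lemma~\ref{Lemma: Auxiliary Estimates}. The only notable deviation is your separate treatment of the $A_1$ piece via $\|Tu_0\|<\infty$, whereas the paper uses $A_1=-B_1$ to fold it into $\mathfrak{Re}[B_1(f_1(\mathbb{L})-\sqrt{1-\mathbb{L}})]$ and then applies Eq.~(\ref{Equation: B1 estimate}) directly; both work, and your choices of $t$ (e.g.\ $t=1$ for $B_2,B_3,B_4$ rather than the paper's $t=N^{-1/2}$) are perfectly adequate for the stated $(M/N)^{3/2}$ bound.
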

While Lemma \ref{Lemma: Upper Bound Bogoliubov} will be useful for proving the upper bound in Theorem \ref{Theorem: Upper Bound}, it is insufficient for proving the corresponding lower bound. This is due to the fact that Bose--Einstein condensation only provides the rough a priori information $M=o(N)$, see also the proof of Theorem \ref{Theorem: Lower Bound}.

\begin{proof}
Observe that $u_0$ minimizes the Hartree energy, and therefore
\begin{align*}
e_\mathrm{H}=\inf_{\|u\|=1} \mathcal{E}_\mathrm{H}[u]=u_0^\dagger\cdot T\cdot u_0+\frac{1}{2} \left(u_0\otimes u_0\right)^\dagger\cdot \hat{v}\cdot u_0\otimes u_0=A_0+B_0,
\end{align*}
where $A_i$ and $B_i$ are defined in Definition \ref{Definition: A and B}. Since $\mathcal{E}_\mathrm{H}[u_0]\leq \mathcal{E}_\mathrm{H}[u]$ for $\|u\|=1$, we obtain by differentiation in any direction $z\perp u_0$
\begin{align*}
0=D|_{u_0} \mathcal{E}_\mathrm{H}(z)=u_0^\dagger\cdot T\cdot z+z^\dagger\cdot T\cdot u_0+(z\otimes u_0)^\dagger\cdot \hat{v}\cdot u_0\otimes u_0+(u_0\otimes u_0)^\dagger\cdot \hat{v}\cdot z\otimes u_0,
\end{align*}
and consequently $u_j^\dagger \cdot T\cdot u_0+\left(u_j\otimes u_0\right)^\dagger\cdot \hat{v}\cdot u_0\otimes u_0=0$ for all $j\geq 1$. Hence,
\begin{align*}
A_1+B_1=2\left(b_{\geq 1}^\dagger\cdot  T\cdot  u_0+\left(b_{\geq 1}\ \underline{\otimes}\  u_0\right)^\dagger\cdot \hat{v}\cdot  u_0\ \underline{\otimes}\  u_0\right)=0.
\end{align*}
By Definition \ref{Definition: Bogoliubov Operator} and Lemma \ref{Lemma: Hessian}, we have
\begin{align*}
N^{-1}\mathbb{H}=\mathfrak{Re}\left[A_2+B_2+B_3+B_4-\mu_\mathrm{H}\ b_{\geq 1}^\dagger\cdot b_{\geq 1}\right],
\end{align*}
and consequently we can write for any $M\leq N$, using Eq.~(\ref{Equation: U representation of H}),
\begin{align*}
\pi_{M} \left(U_N N^{-1}\!H_N U_N^{-1}-e_\mathrm{H}-N^{-1}\mathbb{H}\right)\pi_{M}=\pi_{M} \mathfrak{Re}\left[X\right] \pi_{M}
\end{align*}
 with
\begin{align*}
X:=B_0 \left(f_0\left(\mathbb{L}\right)\!-\!1\!+\!2\mathbb{L}\right)\!+\!B_1\left(f_1\left(\mathbb{L}\right)\!-\!\sqrt{1\!-\!\mathbb{L}}\right)\!+\!\sum_{r=2}^4B_r \left(f_r\left(\mathbb{L}\right)\!-\!1\right)\!+\!\sum_{r=5}^6 B_r f_r\left(\mathbb{L}\right),
\end{align*}
where we used $A_1 \sqrt{1-\mathbb{L}}=-B_1 \sqrt{1-\mathbb{L}}$. In order to estimate the first contribution, note that $|f_0(x)-1+2x|\leq 2\frac{M^2}{(N-1)N}$ for all $0\leq x\leq \frac{M}{N}$ and therefore
\begin{align*}
\pm \pi_M B_0 \left(f_0\left(\mathbb{L}\right)-1+2\mathbb{L}\right) \pi_M=\pm \frac{1}{2}\hat{v}_{00,00} \pi_M\left(f_0\left(\mathbb{L}\right)-1+2\mathbb{L}\right)\pi_M\leq |\hat{v}_{00,00}|\frac{M^2}{(N-1)N}.
\end{align*}
Recalling that $b_{\geq 1}=\frac{1}{\sqrt{N}}a_{\geq 1}$ and Lemma \ref{Lemma: Auxiliary Estimates} yields 
\begin{align*}
\pm \pi_M B_1\left(f_1\left(\mathbb{L}\right)-\sqrt{1-\mathbb{L}}\right)\pi_M\leq c\frac{M}{N-1}\sqrt{\frac{M}{N}}.
\end{align*}
Furthermore we obtain for $r\in \{2,3,4\}$ by Lemma \ref{Lemma: Auxiliary Estimates} with the choice $t=\frac{1}{\sqrt{N}}$, together with the bound $\sup_{x\leq \frac{M}{N}}|f_r(x)-1|\leq C\frac{M}{N}$ for a constant $C>0$,
\begin{align*}
\pm \pi_M B_r \left(f_r\left(\mathbb{L}\right)-1\right)\pi_M\leq c\, C \frac{M^{\frac{3}{2}}}{N^2}\left(1+ a_{\geq 1}^\dagger\cdot  (T+1)\cdot  a_{\geq 1}\right).
\end{align*}
The estimates for $B_5 f_5\left(\mathbb{L}\right)$ and $B_6 f_6\left(\mathbb{L}\right)$ can be obtained analogously.
\end{proof}
\begin{theorem}[Upper Bound]
\label{Theorem: Upper Bound}
Let $E_N$ be the ground state energy of $H_N$, $e_\mathrm{H}$ the Hartree energy defined in Eq.~(\ref{Equation: Hartree Energy}) and let $\mathbb{H}$ be the Bogoliubov operator defined in Eq.~(\ref{Equation: Definition Bogoliubov}). Given Assumption \ref{Assumption: Part I}, we have the upper bound
\begin{align*}
E_N\leq N\ e_\mathrm{H}+\inf \sigma\left(\mathbb{H}\right)+o_{N\rightarrow \infty}\left(1\right).
\end{align*}
\end{theorem}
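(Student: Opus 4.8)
The plan is to build an explicit trial state for $H_N$ by lifting, through the excitation map $U_N$, a near-minimizer of the Bogoliubov operator $\mathbb{H}$ that is supported in a truncated Fock space, and then to invoke Lemma \ref{Lemma: Upper Bound Bogoliubov}. Concretely, Theorem \ref{Theorem: Bogoliubov Ground State Energy} provides a sequence of normalized states $\Psi_M\in \mathrm{dom}\left[a_{\geq 1}^\dagger\cdot (T+1)\cdot a_{\geq 1}\right]\cap \mathcal{F}_{\leq M}$ with $\braket{\mathbb{H}}_{\Psi_M}\to \inf\sigma(\mathbb{H})$ as $M\to\infty$. Put $C_M:=\braket{a_{\geq 1}^\dagger\cdot (T+1)\cdot a_{\geq 1}}_{\Psi_M}<\infty$ and $g(M):=\max_{m\leq M} m^{3/2}\left(1+C_m\right)$. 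Since $g(M)<\infty$ for every fixed $M$, the cutoff $M_N:=\max\{M\in\mathbb{N}:g(M)\leq N^{1/4}\}$ is well defined for $N$ large, satisfies $M_N\to\infty$, $M_N\leq N^{1/6}\leq N$, and $M_N^{3/2}\left(1+C_{M_N}\right)\leq N^{1/4}$.

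Now set $\Phi_N:=U_N^{-1}\Psi_{M_N}\in \bigotimes_\mathrm{s}^N\mathcal{H}$. Since $U_N$ is unitary from $\bigotimes_\mathrm{s}^N\mathcal{H}$ onto $\mathcal{F}_{\leq N}\supseteq \mathcal{F}_{\leq M_N}$, the state $\Phi_N$ is normalized, and by Lemma \ref{Lemma: Decomposition} its energy $\braket{H_N}_{\Phi_N}=N\braket{U_NN^{-1}H_NU_N^{-1}}_{\Psi_{M_N}}$ is finite. Because $\pi_{M_N}\Psi_{M_N}=\Psi_{M_N}$, taking the expectation in $\Psi_{M_N}$ of the operator inequality in Lemma \ref{Lemma: Upper Bound Bogoliubov} and multiplying by $N$ yields
\begin{align*}
E_N\leq \braket{H_N}_{\Phi_N}\leq N\, e_\mathrm{H}+\braket{\mathbb{H}}_{\Psi_{M_N}}+C\, N\left(\frac{M_N}{N}\right)^{3/2}\left(1+C_{M_N}\right).
\end{align*}
The last term equals $C\, N^{-1/2}\, M_N^{3/2}\left(1+C_{M_N}\right)\leq C\, N^{-1/2}\, N^{1/4}=C\, N^{-1/4}\to 0$, while $\braket{\mathbb{H}}_{\Psi_{M_N}}\to \inf\sigma(\mathbb{H})$ since $M_N\to\infty$. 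Hence $\limsup_{N\to\infty}\bigl(E_N-N\, e_\mathrm{H}-\inf\sigma(\mathbb{H})\bigr)\leq 0$, which is exactly the asserted upper bound $E_N\leq N\, e_\mathrm{H}+\inf\sigma(\mathbb{H})+o_{N\to\infty}(1)$.

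Most of the genuine difficulty has already been absorbed into the inputs: constructing a near-minimizer of $\mathbb{H}$ with finite $(T+1)$-weighted number expectation inside a fixed truncated Fock space (Theorem \ref{Theorem: Bogoliubov Ground State Energy}), and controlling $U_NN^{-1}H_NU_N^{-1}-e_\mathrm{H}-N^{-1}\mathbb{H}$ on such states (Lemma \ref{Lemma: Upper Bound Bogoliubov}). The only point that requires a little care at this stage is the choice of $M_N$: it must diverge, so that $\braket{\mathbb{H}}_{\Psi_{M_N}}$ reaches $\inf\sigma(\mathbb{H})$, yet grow slowly enough that $M_N^{3/2}\left(1+C_{M_N}\right)=o(\sqrt N)$ — which is possible precisely because $C_M$ is finite for each individual $M$, so that a diagonal argument of the above type works.
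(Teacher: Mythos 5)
Your proof is correct and takes essentially the same route as the paper: both combine the existence of near-minimizers of $\mathbb{H}$ in truncated Fock spaces (Theorem \ref{Theorem: Bogoliubov Ground State Energy}) with the operator estimate in Lemma \ref{Lemma: Upper Bound Bogoliubov} applied to the trial state $U_N^{-1}\Psi_M$. The only difference is cosmetic: the paper fixes $\epsilon>0$ and a single truncated near-minimizer $\Psi$ with parameter $M$, sends $N\to\infty$, and then lets $\epsilon\to 0$, whereas you package the same limit as a diagonal argument by letting $M_N\to\infty$ slowly enough (tied to the finite but uncontrolled constants $C_M$) that the error $C\,M_N\sqrt{M_N/N}\,(1+C_{M_N})$ vanishes; both formulations are equivalent.
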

\begin{proof}
Let $\nu$ be the constant from Lemma \ref{Lemma: Hessian}, such that the inequality $Q_\mathrm{H}+1\leq \nu\left(T|_{\mathcal{H}_0}+1\right)$ holds. For all $\epsilon>0$, we know by Theorem \ref{Theorem: Bogoliubov Ground State Energy} that there exists a state $\Psi\in \mathcal{F}_M$ with $M<\infty$ such that $\kappa:=\braket{ a_{\geq 1}^\dagger\cdot  (T+1)\cdot  a_{\geq 1}}_\Psi<\infty$ and $\braket{\mathbb{H}}_\Psi\leq \inf \sigma\left(\mathbb{H}\right)+\epsilon$. Applying Lemma \ref{Lemma: Upper Bound Bogoliubov} yields the estimate
\begin{align*}
\braket{H_N}_{U_N^{-1}\Psi}&\leq N\ e_\mathrm{H}+\braket{\mathbb{H}}_\Psi+C\ M\sqrt{\frac{M}{N}}\ \left(1+\braket{ a_{\geq 1}^\dagger\cdot  (T+1)\cdot  a_{\geq 1}}_\Psi\right)\\
&\leq N\ e_\mathrm{H}+\inf \sigma\left(\mathbb{H}\right)+\epsilon+C\ M\sqrt{\frac{M}{N}}\ \left(1+\kappa\right).
\end{align*}
\end{proof}

\subsection{Lower Bound}
\label{Subsection: Lower Bound}
In the following, we will give the proof of the lower bound in the energy asymptotics in Eq.~(\ref{Equation: Energy asymptotics}). First of all let us define the operators $q,p:\mathrm{dom}\left[\mathcal{N}\right]\longrightarrow \mathcal{F}_0\otimes  \mathcal{H}_0$ as
\begin{align}
\label{Equation: Definition q}q:=\sum_{j=1}^d q_j\otimes u_j:=\frac{1}{2}\sum_{j=1}^d   \left(b_j+b_j^\dagger\right)\otimes u_j,\\
\label{Equation: Definition p}p:=\!\sum_{j=1}^d  p_j\otimes u_j:=\frac{1}{2i}\sum_{j=1}^d  \left(b_j-b_j^\dagger\right)\otimes u_j,
\end{align}
which satisfy the commutation relations $[p_k,q_\ell]=\frac{1}{2iN}\delta_{k,\ell}$. Recall that due to the translation-invariance of $\mathcal{E}_\mathrm{H}$, the Hessian $\mathrm{Hess}|_{u_0}\mathcal{E}_\mathrm{H}$ is degenerate on the real subspace $\{\sum_{j=1}^d t_ju_j: t_j\in \mathbb{\mathbb{R}}\}$. Therefore the Bogoliubov operator $\mathbb{H}$, which we have defined in Eq.~(\ref{Equation: Definition Bogoliubov}) as the second quantization of the Hessian $\mathrm{Hess}|_{u_0}\mathcal{E}_\mathrm{H}$, is degenerate with respect to the operator $q$, i.e. it can be expressed only in terms of $p$, $b_{>d}$ and $b_{>d}^\dagger$. Due to this degeneracy, we cannot directly apply the strategy pursued in \cite{LNSS} where the residuum of the Bogoliubov approximation is being estimated by the Bogoliubov operator itself. The problem is that the residuum $U_N H_N U_N^{-1}-N e_\mathrm{H}-\mathbb{H}$ includes contributions depending significantly on the modes $q_j$, like $q_j^3$, which we cannot compare with the Bogoliubov operator $\mathbb{H}$ due to its degeneracy. Furthermore, it is insufficient to compare the residuum with the (rescaled) particle number operator $\frac{1}{N}\mathcal{N}$, which  indeed dominates terms like $q_j^3$, since we only have the a priori information $\braket{\mathcal{N}}_{U_N \Psi_N}=o(N)$ provided by Bose--Einstein condensation. The novel idea of this Subsection and the subsequent Section \ref{Section: Results in the transformed picture} is to apply a further unitary transformation $\mathcal{W}_N$ such that the residuum $\mathcal{W}_N U_N H_N U_N^{-1} \mathcal{W}_N^{-1}-Ne_\mathrm{H}-\mathbb{H}$ no longer includes this kind of contributions and consequently we can compare the residuum with the Bogoliubov operator $\mathbb{H}$. This leads to the important inequality in Eq.~(\ref{Inequality: Non-Commutative}). As a consequence we observe that, in contrast to the particle number operator $\mathcal{N}$, the Bogoliubov operator satisfies $\braket{\mathbb{H}}_{U_N \Psi_N}=O(1)$, which, a posteriori, justifies estimating the residuum by the Bogoliubov operator.\\

Before we are going to construct a unitary map $\mathcal{W}_N$  satisfying Eq.~(\ref{Inequality: Non-Commutative}), we are solving the corresponding problem on a classical level, i.e. we are going to construct a map $F$ which satisfies Eq.~(\ref{Inequality: Commutative}). We will then define $\mathcal{W}_N$ as the quantum counterpart to $F$. 

\begin{defi}
\label{Definition: F}
For any $y\in \mathbb{R}^d$, let us recall the functions $u_{0,y}(x):=u_{0}(x-y)$ defined in Assumption \ref{Assumption: Part I} and let us define the map $\lambda:\mathbb{R}^d\longrightarrow \mathbb{R}^d$
\begin{align*}
\lambda(y):=\left(u_j^\dagger\cdot u_{0,y}\right)_{j=1}^d\in \mathbb{R}^d.
\end{align*}
Note that $u_j$ and $u_{0,y}$ are real-valued functions, and therefore $\lambda$ is indeed $\mathbb{R}^d$-valued. Since $y\mapsto u_{0,y}$ is a $C^2\left(\mathbb{R}^d,\mathcal{H}\right)$ function by Assumption \ref{Assumption: Part II}, $D_y \lambda(0)$ has full rank and $\lambda(0)=0$, there exists a local inverse $\lambda^{-1}:B_{2\delta}(0)\longrightarrow \mathbb{R}^d$ for $\delta>0$ small enough, where $B_r(0)\subset \mathbb{R}^d$ denotes the ball of radius $r$ centered around the origin. Let $0\leq \sigma\leq 1$ be a smooth function with $\sigma|_{B_{\delta}(0)}=1$ and $\mathrm{supp}(\sigma)\subset B_{2\delta}(0)$. Then we define the function $f:\mathbb{R}^d\longrightarrow \mathcal{H}$
\begin{align}
\label{Equation: Definition f}
f(t):&=\sigma(t)\left[u_{0,\lambda^{-1}(t)}-\left(u_0^\dagger\cdot u_{0,\lambda^{-1}(t)}\right) u_0-\sum_{j=1}^d t_j\ u_j\right]=\sum_{j=d+1}^\infty f_j(t)u_j,
\end{align}
with $f_j(t):=\sigma(t)u_j^\dagger\cdot u_{0,\lambda^{-1}(t)}$. Note that $t\mapsto f(t)$ is a $C^2\left(\mathbb{R}^d,\mathcal{H}_0\right)$ function, due to the regularity of $y\mapsto u_{0,y}$. Furthermore, $f(0)=0$. We can now define the map $F:\mathcal{H}_0\longrightarrow \mathcal{H}_0$ for all $z=\sum_{j=1}^d \left(t_j+is_j\right)u_j+z_{>d}\in \mathcal{H}_0$ with $t,s\in \mathbb{R}^d$ and $z_{>d}\in \{u_1,\dots,u_d\}^\perp$ as
\begin{align}
\label{Equation: Definition F}
F\left(z\right):=\sum_{j=1}^d\left(t_j+i s'_j\right) u_j+z_{>d}+f(t),
\end{align}
where $s'_j:=s_j-\mathfrak{Im}\left[\partial_jf(t)^\dagger\cdot z_{>d}\right]$. \\
\end{defi}

The essential property of $F$ is that $\iota\circ F$, where $\iota$ is the embedding defined in Eq.~(\ref{Equation: Embedding}), maps the set $\{\sum_{j=1}^d t_j u_j:|t|<\delta\}$ into the set of Hartree minimizers
\begin{align*}
\iota\circ F\left(\sum_{j=1}^d t_j u_j\right)=\iota\left(\sum_{j=1}^d t_j u_j+f(t)\right)=u_{0,\lambda^{-1}(t)},
\end{align*}
for all $|t|<\delta$. This also implies the central inequality Eq.~(\ref{Inequality: Commutative}), as will be demonstrated in the introduction of Section \ref{Section: Results in the transformed picture}.

 The arguments so far are based only on the fact that $F$ shifts the component $z_{>d}$ by an amount $f(t)$. The identity $(\iota\circ F)\left(\sum_{j=1}^d t_j u_j\right)=u_{0,\lambda^{-1}(t)}$ would still hold if we used $s_j$ instead of $s_j'$ in Eq.~(\ref{Equation: Definition F}). Nevertheless, it is natural that $F$ shifts the $s$ component as well, since this shift makes sure that $\mathrm{d} F$ preserves the symplectic form $\omega(z_1,z_2):=\mathfrak{Re}\left[z_1\right]^\dagger\cdot \mathfrak{Im}\left[z_2\right]-\mathfrak{Im}\left[z_1\right]^\dagger\cdot \mathfrak{Re}\left[z_2\right]$. Therefore it makes sense to look for a quantum counterpart $\mathcal{W}_N$, which we are going to define in the subsequent Definition \ref{Definition: Unitary Transformation}. In analogy to $F$ preserving the symplectic form $\omega$, the unitary map $\mathcal{W}_N$ is preserving the commutator bracket.

\begin{defi}[Unitary Transformation $\mathcal{W}_N:\mathcal{F}_0 \longrightarrow \mathcal{F}_0$] 
\label{Definition: Unitary Transformation}
Based on the fact that the operators $q_1,\dots,q_d$ defined in Eq.~(\ref{Equation: Definition q}) commute, we can assign to a function $h:\mathbb{R}^d\longrightarrow \mathcal{H}_0$ with components $h_j(t):=u_j^\dagger\cdot h(t)$ an operator $h(q):\mathcal{F}_0\longrightarrow \mathcal{F}_0\otimes \mathcal{H}_0$
\begin{align*}
h(q):=\sum_{j=1}^\infty h_j(q_1,\dots,q_d)\otimes u_j,
\end{align*}
where the operators $h_j(q_1,\dots,q_d)$ are well defined via functional calculus. Let $f$ be the function defined in Eq.~(\ref{Equation: Definition f}), then we can define the unitary map $\mathcal{W}_N:\mathcal{F}_0 \longrightarrow \mathcal{F}_0$ as
\begin{align}
\label{Equation: Definition W}
\mathcal{W}_N:=\mathrm{exp}\left[N  f(q)^\dagger\cdot b_{>d}-N b_{>d}^\dagger\cdot  f(q)\right]=\mathrm{exp}\left[N\sum_{j=d+1}^\infty f_j(q_1,\dots,q_d)\left(b_j-b_j^\dagger\right)\right],
\end{align}
where we have used that $u_j^\dagger\cdot f(t)=0$ for $j\in \{1,\dots,d\}$. Note that $q_1,\dots,q_d$ and $b_{>d}$ have an $N$ dependence, which we suppress in our notation. Furthermore, we define the transformed operators
\begin{align*}
p'_j:&=\mathcal{W}_N\, p_j\, \mathcal{W}_N^{-1},\\
 p':&=\mathcal{W}_N\,  p\, \mathcal{W}_N^{-1}=\sum_{j=1}^d p'_j\otimes u_j,\\
\mathbb{L}':&=\mathcal{W}_N\, \mathbb{L}\, \mathcal{W}_N^{-1},
\end{align*}
where $p$ is defined in Eq.~(\ref{Equation: Definition p}) and $\mathbb{L}$ is defined in Definition \ref{Definition: Basic Fock}. Note that the domain of $\mathbb{L}'$ is $\mathcal{W}_N \mathcal{F}_{\leq N}$, since $\mathbb{L}$ is only defined on $\mathcal{F}_{\leq N}$.
\end{defi}

That the unitary map $\mathcal{W}_N$ is indeed a quantum counterpart to the classical map $F$ defined in Eq.~(\ref{Equation: Definition F}) can be seen from the transformation laws described in the following Lemma \ref{Lemma: Transformation Laws}.

\begin{lem}[Transformation Laws]
\label{Lemma: Transformation Laws}
We have the following transformation laws
\begin{align*}
&\mathcal{W}_N\, b_j\, \mathcal{W}_N^{-1}=b_j+f_j(q) \mbox{ for } j>d,\\
&\mathcal{W}_N\, q_j\, \mathcal{W}_N^{-1}=q_j \mbox{ for } j\in \{1,\dots,d\},\\
&p'_j=p_j-\mathfrak{Im}\left[\partial_{u_j}  f(q)^\dagger\cdot b_{>d}\right] \mbox{ for } j\in \{1,\dots,d\},
\end{align*}
and therefore $\mathcal{W}_N\, b_{\geq 1}\, \mathcal{W}_N^{-1}= q+i p'+b_{>d}+ f(q)$.
\end{lem}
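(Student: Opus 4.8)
The plan is to use that $\mathcal{W}_N=e^X$ with $X:=Nf(q)^\dagger\cdot b_{>d}-Nb_{>d}^\dagger\cdot f(q)=N\sum_{j>d}f_j(q_1,\dots,q_d)(b_j-b_j^\dagger)$, and to compute the conjugations by $\mathcal{W}_N$ through the Baker--Campbell--Hausdorff series $e^XYe^{-X}=\sum_{n\ge0}\tfrac1{n!}\operatorname{ad}_X^n(Y)$, exploiting that in each of the three cases the series terminates after the linear term. The crucial enabling observation is that $f$ is $C^2$ and compactly supported in $B_{2\delta}(0)$ by Definition~\ref{Definition: F}, so every $f_j$ and every first partial derivative $\partial_{u_j}f_k$ is bounded and continuous; hence $f_j(q_1,\dots,q_d)$ and $\partial_{u_j}f_k(q_1,\dots,q_d)$ are bounded operators defined by functional calculus in the commuting self-adjoint family $q_1,\dots,q_d$. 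Since moreover $f_j\equiv0$ for $j\le d$, the generator $X$ couples only $q_1,\dots,q_d$ to the modes $b_k,b_k^\dagger$ with $k>d$; in particular $X$ is skew-adjoint on the finite-particle vectors, so $\mathcal{W}_N$ is unitary.

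Before computing I would record the structural decoupling. Writing $\mathcal{H}_0=\operatorname{span}\{u_1,\dots,u_d\}\oplus\{u_0,\dots,u_d\}^\perp$ gives $\mathcal{F}_0\cong\mathcal{F}(\operatorname{span}\{u_1,\dots,u_d\})\otimes\mathcal{F}(\{u_0,\dots,u_d\}^\perp)$, with $q_1,\dots,q_d,p_1,\dots,p_d$ acting on the first factor, $b_k,b_k^\dagger$ ($k>d$) on the second, and the two families commuting. Diagonalising the commuting operators $q_1,\dots,q_d$ identifies the first factor with $L^2(\mathbb{R}^d)$, $q_j$ being multiplication by $t_j$ and $p_j$ a real multiple of $-i\partial_{t_j}$, consistently with $[p_k,q_\ell]=\tfrac1{2iN}\delta_{k,\ell}$. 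Under this identification $X=\int^{\oplus}_{\mathbb{R}^d}X(t)\,\mathrm dt$ with $X(t)=N\sum_{k>d}f_k(t)(b_k-b_k^\dagger)$ a standard skew-adjoint displacement generator, so $\mathcal{W}_N=\int^{\oplus}e^{X(t)}\,\mathrm dt$; this makes the unboundedness harmless for the fibre-diagonal computations.

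The three laws then follow from elementary commutator identities on the finite-particle core, using that the first commutator with $X$ always lands on an operator which again commutes with $X$, so that $e^XYe^{-X}=Y+[X,Y]$ (the truncation justified by differentiating $\tau\mapsto e^{\tau X}Ye^{-\tau X}$ on the core and solving the resulting linear ODE). First, for $j>d$, $b_j$ is fibre-diagonal and $[X(t),b_j]=N\sum_{k>d}f_k(t)[b_k-b_k^\dagger,b_j]=f_j(t)$ is a scalar on each fibre, using $[b_k,b_j^\dagger]=\tfrac1N\delta_{k,j}$; since scalars commute with $X(t)$ this gives $e^{X(t)}b_je^{-X(t)}=b_j+f_j(t)$, i.e.\ $\mathcal{W}_Nb_j\mathcal{W}_N^{-1}=b_j+f_j(q)$. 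Second, for $j\le d$ the operator $q_j$ is multiplication by $t_j$ and commutes with every $X(t)$, so $\mathcal{W}_Nq_j\mathcal{W}_N^{-1}=q_j$. Third, for $j\le d$ the only non-vanishing contribution to $[X,b_j]$ comes from $[f_k(q),b_j]$, because $[b_k-b_k^\dagger,b_j]=0$ between the two sectors; from $[q_\ell,b_j]=-\tfrac1{2N}\delta_{\ell,j}$ and the chain rule one gets $[f_k(q),b_j]=-\tfrac1{2N}\partial_{u_j}f_k(q)$, hence $[X,b_j]=-\tfrac12\bigl(\partial_{u_j}f(q)^\dagger\cdot b_{>d}-b_{>d}^\dagger\cdot\partial_{u_j}f(q)\bigr)$; this operator again involves only the $q$'s and the $k>d$ modes with $[b_k-b_k^\dagger,b_\ell-b_\ell^\dagger]=0$, so it commutes with $X$, and conjugating $b_j$ and $b_j^\dagger$ and forming $p_j=\tfrac1{2i}(b_j-b_j^\dagger)$ yields $p_j'=p_j-\mathfrak{Im}[\partial_{u_j}f(q)^\dagger\cdot b_{>d}]$. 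Finally, since $b_j=q_j+ip_j$ for $j\le d$, one has $\mathcal{W}_Nb_{\geq1}\mathcal{W}_N^{-1}=\sum_{j\le d}\mathcal{W}_N(q_j+ip_j)\mathcal{W}_N^{-1}\otimes u_j+\sum_{j>d}\mathcal{W}_Nb_j\mathcal{W}_N^{-1}\otimes u_j$, which with the three laws and $u_j^\dagger\cdot f(t)=0$ for $j\le d$ equals $q+ip'+b_{>d}+f(q)$.

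The only genuine work is the bookkeeping of operator domains: one must check that the finite-particle vectors form a common invariant core for $X$ and for the $b_j$, that all functions of $q$ appearing are bounded (which is exactly what the cutoff $\sigma$ and the $C^2$ regularity of $y\mapsto u_{0,y}$ from Assumption~\ref{Assumption: Part II} buy), and that on this core the ad-series terminates after the linear term. Beyond that there is no analytic obstacle; the statement is the exact operator version of the classical transformation laws satisfied by the map $F$ in Definition~\ref{Definition: F}, as anticipated in the discussion following that definition.
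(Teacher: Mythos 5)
Your proof is correct, and since the paper explicitly declares this lemma ``elementary and left to the reader,'' there is no paper-internal argument to compare against; your BCH/adjoint-series computation, truncated after the linear term because each first commutator lands on an operator commuting with the generator, is precisely the intended elementary route. The extra bookkeeping you add (the direct-integral fibre decomposition over the joint spectrum of $q_1,\dots,q_d$, the remark that the cutoff $\sigma$ and the $C^2$-regularity from Assumption~\ref{Assumption: Part II} guarantee boundedness of $f_j(q)$ and $\partial_{u_j}f_k(q)$, and the core/ODE justification of the terminating series) goes beyond what the authors likely had in mind but is all accurate and does no harm; the commutator identities $[X,b_j]=f_j(q)$ for $j>d$, $[X,q_j]=0$, $[X,p_j]=-\mathfrak{Im}[\partial_{u_j}f(q)^\dagger\cdot b_{>d}]$ and the check that each commutes with $X$ all verify as you state them.
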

The proof of Lemma \ref{Lemma: Transformation Laws} is elementary and is left to the reader. Before we state the main Theorems of this subsection, let us define what it means for a sequence of operators $X_N$ to be asymptotically small compared to another sequence $Y_N$, in a suitable sense that is specific to our problem.\\

\begin{defi}
\label{Definition: Smallness}
We say that sequences of operators $X_N,Y_N$ with $Y_N\geq 0$ satisfy
\begin{align*}
X_N=o_*(Y_N),
\end{align*}
in case for all $\epsilon>0$ there exists a $\delta>0$, such that $\big|\braket{X_N}_\Psi\big|\leq \epsilon\ \braket{Y_N}_\Psi$ for all $M,N$ with $\frac{M}{N}\leq \delta$ and all elements $\Psi\in \mathcal{W}_N \mathcal{F}_{\leq M}$. Furthermore, we say that sequences of operators $X_N,Y_N$ with $Y_N\geq 0$ satisfy
\begin{align*}
X_N=O_*(Y_N),
\end{align*}
in case there exists a constant $C$ and $\delta_0>0$, such that $\big|\braket{X_N}_\Psi\big|\leq C\ \braket{Y_N}_\Psi$ for all $M,N$ with $\frac{M}{N}\leq \delta_0$ and all $\Psi\in \mathcal{W}_N \mathcal{F}_{\leq M}$.\
\end{defi}

\begin{note}
\label{Remark: Projection Formalism}
Let us denote with $\pi_{M,N}:=\mathcal{W}_N\, \pi_M\, \mathcal{W}_N^{-1}$ the orthogonal projection onto the subspace $\mathcal{W}_N\mathcal{F}_{\leq M}\subset \mathcal{F}_0$. Then the statement $X_N=O_*\left(Y_N\right)$ holds true if and only if there exists a constant $C$ and $\delta_0>0$, such that
\begin{align}
\label{Inequality: pi-2}
\pi_{M,N}\, \mathfrak{Re}\left[\lambda X_N\right] \pi_{M,N}\leq C\  \pi_{M,N} Y_N \pi_{M,N}
\end{align}
for all $\lambda\in \mathbb{C}$ with $|\lambda|=1$ and $\frac{M}{N}\leq \delta_0$. Similarly, $X_N=o_*\left(Y_N\right)$ is equivalent to the existence of a function $\epsilon:\mathbb{R}^+\longrightarrow \mathbb{R}^+$ with $\underset{\delta\rightarrow 0}\lim\ \epsilon(\delta)=0$, such that
\begin{align}
\label{Inequality: pi-1}
\pi_{M,N}\, \mathfrak{Re}\left[\lambda X_N\right] \pi_{M,N}\leq \epsilon\left(\frac{M}{N}\right) \pi_{M,N} Y_N \pi_{M,N}
\end{align}
for all $\lambda\in \mathbb{C}$ with $|\lambda|=1$ and $M\leq N$.\\
\end{note}

\begin{theorem}
\label{Theorem: Decomposition}
Recall the $o_*(\cdot)$ notation from Definition \ref{Definition: Smallness}, the Hartree energy $e_\mathrm{H}$ defined in Eq.~(\ref{Equation: Hartree Energy}) and the Bogoliubov operator $\mathbb{H}$ defined in Eq.~(\ref{Equation: Definition Bogoliubov}), and let us define
\begin{align}
\label{Equation: Definition T}
\mathbb{T}_N:=p^\dagger\cdot  p+b_{>d}^\dagger\cdot  (T+1)\cdot b_{>d}+\frac{1}{N}.
\end{align}
Then, given Assumptions \ref{Assumption: Part I} and \ref{Assumption: Part II}, we have
\begin{align*}
\left(\mathcal{W}_N U_N\right) N^{-1}\!H_N \left(\mathcal{W}_N U_N\right)^{-1}=e_\mathrm{H}+N^{-1}\mathbb{H}+o_*\left(\mathbb{T}_N\right).
\end{align*}
\end{theorem}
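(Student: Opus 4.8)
The plan is to reduce the statement to the classical "flattening" identity for $\mathcal{E}_\mathrm{H}\circ\iota\circ F$ and to quantize it. First I would use the representation~(\ref{Equation: U representation of H}) together with Lemma~\ref{Lemma: Estimates Lower Bound} to replace $U_N N^{-1}H_N U_N^{-1}$ by $\widetilde A_N+\widetilde B_N$; the resulting error is $\pm\pi_M(\,\cdot\,)\pi_M\le\frac{C}{N}\sqrt{M/N}\,(b_{\geq1}^\dagger\cdot T\cdot b_{\geq1}+1)$, and since it carries the explicit prefactor $\frac1N\sqrt{M/N}$ while $\mathbb{T}_N\ge\frac1N$, one checks with Lemma~\ref{Lemma: Decomposition} and the transformation laws of Lemma~\ref{Lemma: Transformation Laws} (using that the finitely many $q$-modes contribute only $O(M/N)$) that after conjugation with $\mathcal{W}_N$ this error is $o_*(\mathbb{T}_N)$. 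Next, by Lemma~\ref{Lemma: Transformation Laws} one has $\mathcal{W}_N b_{\geq1}\mathcal{W}_N^{-1}=q+ip'+b_{>d}+f(q)=:\mathfrak b$ and $\mathcal{W}_N\mathbb{L}\mathcal{W}_N^{-1}=\mathbb{L}'=\mathfrak b^\dagger\cdot\mathfrak b$, so $\mathcal{W}_N(\widetilde A_N+\widetilde B_N)\mathcal{W}_N^{-1}$ is obtained from $\widetilde A_N+\widetilde B_N$ by substituting $b_{\geq1}\mapsto\mathfrak b$ everywhere. Since $\mathfrak b$ is precisely "$F$ applied to the coordinate operators" $(q,p,b_{>d})$ of $b_{\geq1}$, this operator is the non-commutative counterpart of the scalar symbol $g(z):=\mathcal{E}_\mathrm{H}[\iota(F(z))]$, up to the $\frac{N}{N-1}$ prefactors and up to normal-ordering corrections, each of which picks up a commutator $[q_j,p_j]=\frac{i}{2N}$ or $[b_m,b_m^\dagger]=\frac1N$ and hence an explicit factor $\frac1N$.

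I would then record the Taylor structure of $g$ near $z=0$, writing $z=\sum_{j\le d}(t_j+is_j)u_j+w$ with $w\perp u_1,\dots,u_d$. By the property of $F$ stated after Definition~\ref{Definition: F}, $\iota\circ F(\sum t_ju_j)=u_{0,\lambda^{-1}(t)}$, so $g\equiv e_\mathrm{H}$ on the $t$-plane; moreover each $u_{0,\lambda^{-1}(t)}$ is a Hartree minimizer, hence a critical point of $\mathcal{E}_\mathrm{H}$, and since $D|_0F=\mathrm{id}$ the full differential of $g$ in the transverse variables $(s,w)$ vanishes along $s=w=0$. Consequently $g(z)=e_\mathrm{H}+\tfrac12\mathrm{Hess}|_{u_0}\mathcal{E}_\mathrm{H}[z]+R(z)$, where $\tfrac12\mathrm{Hess}|_{u_0}\mathcal{E}_\mathrm{H}$ is given by Lemma~\ref{Lemma: Hessian} (degenerate in the $t$-directions, $\ge\eta$ transversally by Assumption~\ref{Assumption: Part II}), and $R$ is of order $\ge3$ in $z$ with the crucial feature that \emph{every monomial of $R$ contains at least two factors among the transverse coordinates $s_1,\dots,s_d,w$}. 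Regularity of $F$, and hence of $f$ — this is where $u_0\in H^2$ and $\partial_{x_k}u_0\in\mathrm{dom}[T]$ are used — is needed to control the $(T+1)$-weighted size of the Taylor coefficients of $f$, so that the $w$-monomials of $R$ can be paired with $(T+1)$-bounded objects.

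On the operator level this gives $\mathcal{W}_N(\widetilde A_N+\widetilde B_N)\mathcal{W}_N^{-1}=e_\mathrm{H}+N^{-1}\mathbb{H}+\mathcal{R}_N+(\text{ordering and }\tfrac{N}{N-1}\text{ errors})$, where the degree-two-in-$(q,p,b_{>d})$ part is the normal-ordered quantization of $\tfrac12\mathrm{Hess}|_{u_0}\mathcal{E}_\mathrm{H}[z]$ at $z\leftrightarrow b_{\geq1}$, which by Definition~\ref{Definition: Bogoliubov Operator} equals exactly $N^{-1}\mathbb{H}$ (the quadratic part of $H_N$ in second quantization being already normal-ordered, there is no correction here), and $\mathcal{R}_N$ is the quantization of $R$. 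One then estimates $\mathcal{R}_N$: each of its normal-ordered monomials is a product of operators of norm $\le C\sqrt{M/N}$ on $\mathcal{F}_{\le M}$ — the factors $q_j$, $b_m\ (m>d)$, $\mathbb{L}'$, and those inside $f(q)$ — at least two of which can be taken from $\{p_j,\,b_m,\,(T+1)^{1/2}b_m\}$; writing such a term as $X^\dagger AY$ with $X,Y$ from this list and $\|A\|\le C(M/N)^{(\deg-2)/2}$, the Cauchy--Schwarz inequality together with $\langle p^\dagger\cdot p\rangle_\Psi+\langle b_{>d}^\dagger\cdot(T+1)\cdot b_{>d}\rangle_\Psi\le\langle\mathbb{T}_N\rangle_\Psi$ gives $|\langle\mathcal{R}_N\rangle_\Psi|\le C\sqrt{M/N}\,\langle\mathbb{T}_N\rangle_\Psi$ for $\Psi\in\mathcal{W}_N\mathcal{F}_{\le M}$, so $\mathcal{R}_N=o_*(\mathbb{T}_N)$; the $\frac{N}{N-1}$ and ordering errors are handled similarly, carrying smallness factors $\frac1N$ times objects that are $O_*(N\mathbb{T}_N)$ or that come weighted by $f(q)=O(q^2)$, with the constant $\frac cN$-corrections cancelling by virtue of the Euler--Lagrange relations for $u_0$ (already used to obtain $A_1+B_1=0$). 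The main obstacle is precisely this bookkeeping: $\mathbb{T}_N$ controls the transverse modes $p_j$ and the $(T+1)$-weighted $b_{>d}$ but \emph{not} the commuting modes $q_j$, so a monomial with only one transverse factor, or a pure-$q$ contribution such as the $q^2$-term hidden in $B_0\mathbb{L}'$ or the cubic $q$-term from pairing $f(q)$ with $q$, would fail to be $o_*(\mathbb{T}_N)$; these dangerous contributions are not to be bounded individually but must be shown to cancel, which they do precisely because they are absent from the symbol $g$ (flat on the $t$-plane, critical transversally). Transporting this cancellation past the non-commutativity and the truncation to $\mathcal{F}_{\le N}$ is the content of Section~\ref{Section: Results in the transformed picture}.
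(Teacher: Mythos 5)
Your sketch traces the paper's strategy correctly: reduce to $\widetilde A_N+\widetilde B_N$ via Lemma~\ref{Lemma: Estimates Lower Bound}, conjugate by $\mathcal{W}_N$ using Lemma~\ref{Lemma: Transformation Laws}, Taylor-expand the classical symbol $g=\mathcal{E}_\mathrm{H}\circ\iota\circ F$ around the flat $t$-plane, then quantize and bound the residuum by Cauchy--Schwarz against $\mathbb{T}_N$ using that each error monomial carries at least two transverse factors among $p,b_{>d}$. This is the same route the paper takes in Section~\ref{Section: Results in the transformed picture} via Corollaries~\ref{Corollary: Main A} and~\ref{Corollary: Main B}.

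There are, however, two genuine gaps. (i) The factor $\sqrt{1-\mathbb{L}'}$ hidden inside the embedding $\iota$ cannot be quantized by monomial-by-monomial Taylor expansion: $x\mapsto\sqrt{1-x}$ is not analytic near $1$, its operator argument $\mathbb{L}'$ does not commute with the transverse operators $p',b_{>d}$, and the naive expansion error is not obviously $o_*(\mathbb{T}_N)$. The paper needs an integral/Fourier-transform representation together with a smooth cutoff of the symbol, worked out in Appendix~\ref{Appendix: C} (Lemma~\ref{Lemma: Taylor of the square root} and Corollary~\ref{Corollary: Taylor of the square root}); nothing in your ``normal-ordering corrections pick up $1/N$'' step reaches this. (ii) The flatness identities $g(\vec{t}\,)\equiv e_\mathrm{H}$ and $D_\mathcal{V}\big|_{\vec{t}}\,g=0$, on which the cancellation of the dangerous $q$-dominated terms rests, hold only for $|t|<\delta$, where the local inverse $\lambda^{-1}$ exists; outside, $f$ is an arbitrary extension via the cutoff $\sigma$ of Definition~\ref{Definition: F}. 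Since $q_j$ has spectrum all of $\mathbb{R}$, one must show the non-local part of $g(q)$ contributes $O_*(e^{-\delta N})$; this is Lemma~\ref{Lemma: q function}, an independent Mehler-kernel estimate and not a consequence of the classical flatness you invoke. Finally, a minor misattribution: the cancellation of the $c/N$ ordering constants comes from $\sum_j\partial_{t_j}^2\big|_{t=0}\big(\mathcal{E}_A+\mathcal{E}_B\big)(\vec{t}\,)=0$, i.e.\ second-order flatness of the Hartree energy along the $t$-plane (Corollary~\ref{Corollary: Main B}), not the Euler--Lagrange relation at $u_0$, which enters separately to give $A_1+B_1=0$.
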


The proof of Theorem \ref{Theorem: Decomposition}, which in particular gives rise to a rigorous version of the key inequality Eq.~(\ref{Inequality: Non-Commutative}), will be the content of Section \ref{Section: Results in the transformed picture}. With Theorem \ref{Theorem: Decomposition} at hand we can verify the lower bound in the main Theorem \ref{Theorem: Main Theorem}.

\begin{theorem}[Lower Bound]
\label{Theorem: Lower Bound}
Let $E_N$ be the ground state energy of $H_N$, $e_\mathrm{H}$ the Hartree energy defined in Eq.~(\ref{Equation: Hartree Energy}) and let $\mathbb{H}$ be the Bogoliubov operator defined in Eq.~(\ref{Equation: Definition Bogoliubov}). Given Assumptions \ref{Assumption: Part I} and \ref{Assumption: Part II}, we have the lower bound
\begin{align*}
E_N\geq N\, e_\mathrm{H}+\inf \sigma\left(\mathbb{H}\right)+o_{N\rightarrow \infty}\left(1\right).
\end{align*}
\end{theorem}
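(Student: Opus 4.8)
\emph{Strategy.} I would derive the lower bound by combining the transformed decomposition of Theorem~\ref{Theorem: Decomposition} with the near-optimal condensate produced by Theorem~\ref{Theorem: Bose--Einstein condensation of Ground States}, interposing a particle-number localization so as to reach the regime in which Theorem~\ref{Theorem: Decomposition} has content. Write $h_N:=U_NN^{-1}H_NU_N^{-1}$. Two elementary remarks are used throughout: the operator $\mathbb{T}_N$ of Eq.~(\ref{Equation: Definition T}) satisfies $N\,\mathbb{T}_N=\mathbb{A}+1$ with $\mathbb{A}$ as in Eq.~(\ref{Equation: NT}) (since $p_j=\tfrac{1}{2i\sqrt N}(a_j-a_j^\dagger)$ and $b_{>d}=\tfrac1{\sqrt N}a_{>d}$), and $\mathbb{A}\geq0$, because $-\tfrac14(a_j-a_j^\dagger)^2=\big(\tfrac{1}{2i}(a_j-a_j^\dagger)\big)^2\geq0$ and $a_{>d}^\dagger\!\cdot(T+1)\cdot a_{>d}\geq0$. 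The first step is a \emph{localized} lower bound: there are $\delta_0>0$ and a function $\delta$ with $\delta(s)\to0$ as $s\to0$ such that for all $M\leq\delta_0N$ and all $\Psi\in\mathcal{F}_{\leq M}$,
\begin{align}
\label{Equation: local LB}
\langle H_N\rangle_{U_N^{-1}\Psi}\ \geq\ N\,e_\mathrm{H}+\inf\sigma(\mathbb{H})-\delta(M/N).
\end{align}
To see this, take the expectation of the operator identity of Theorem~\ref{Theorem: Decomposition} in the state $\mathcal{W}_N\Psi\in\mathcal{W}_N\mathcal{F}_{\leq M}$; by the $o_*$-characterization of Remark~\ref{Remark: Projection Formalism} and $N\mathbb{T}_N=\mathbb{A}+1$,
\begin{align*}
\langle H_N\rangle_{U_N^{-1}\Psi}\ \geq\ Ne_\mathrm{H}+\langle\mathbb{H}\rangle_{\mathcal{W}_N\Psi}-\epsilon(M/N)\,\langle\mathbb{A}+1\rangle_{\mathcal{W}_N\Psi}
\end{align*}
for some function $\epsilon$ with $\epsilon(s)\to0$ as $s\to0$. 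Fixing $r_0<r_*$ from Theorem~\ref{Theorem: Bogoliubov Ground State Energy} and restricting to $M/N$ so small that $\epsilon(M/N)\leq r_0$, one writes $\langle\mathbb{H}\rangle-\epsilon(M/N)\langle\mathbb{A}\rangle=\big(1-\tfrac{\epsilon(M/N)}{r_0}\big)\langle\mathbb{H}\rangle+\tfrac{\epsilon(M/N)}{r_0}\langle\mathbb{H}-r_0\mathbb{A}\rangle$ and bounds the two summands below by $\big(1-\tfrac{\epsilon(M/N)}{r_0}\big)\inf\sigma(\mathbb{H})$ (using $\inf\sigma(\mathbb{H})\leq0$) and $\tfrac{\epsilon(M/N)}{r_0}\inf\sigma(\mathbb{H}-r_0\mathbb{A})>-\infty$; together with the term $-\epsilon(M/N)$ this is \eqref{Equation: local LB}.

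\emph{Particle-number localization.} Let $\Psi_N$ be as in Theorem~\ref{Theorem: Bose--Einstein condensation of Ground States}, so $\langle H_N\rangle_{\Psi_N}=E_N+o(1)$ and, with $\Phi_N:=U_N\Psi_N$, complete condensation gives $\langle\mathcal{N}\rangle_{\Phi_N}=N\big(1-\langle\gamma_N^{(1)}\rangle_{u_0}\big)=o(N)$. Choose $M_N$ with $\max\!\big(\langle\mathcal{N}\rangle_{\Phi_N},\,N^{1/3}\big)\ll M_N\ll N$ — possible precisely since $\langle\mathcal{N}\rangle_{\Phi_N}=o(N)$ — and smooth $\theta_\pm$ with $\theta_-^2+\theta_+^2\equiv1$, $\theta_-\equiv1$ on $[0,1]$, $\theta_-\equiv0$ on $[2,\infty)$; put $\theta_\pm:=\theta_\pm(\mathcal{N}/M_N)$. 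The IMS-type identity $h_N=\theta_-h_N\theta_-+\theta_+h_N\theta_++\tfrac12\big([\theta_-,[\theta_-,h_N]]+[\theta_+,[\theta_+,h_N]]\big)$ splits the energy into: an inside term, with $\hat\Xi_N:=\theta_-\Phi_N/\|\theta_-\Phi_N\|\in\mathcal{F}_{\leq2M_N}$, to which \eqref{Equation: local LB} applies, giving $N\langle h_N\rangle_{\hat\Xi_N}\geq Ne_\mathrm{H}+\inf\sigma(\mathbb{H})-\delta(2M_N/N)$; an outside term of weight $p_N:=\|\theta_+\Phi_N\|^2\leq\langle\mathcal{N}\rangle_{\Phi_N}/M_N\to0$ (Markov), whose energy is $\geq E_N$ because $U_N^{-1}$ of the normalized outside state is a genuine $N$-particle state; and a double-commutator error. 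In the latter only the particle-number-changing terms $\mathfrak{Re}[A_1\sqrt{1-\mathbb{L}}]$, $\mathfrak{Re}[B_1f_1(\mathbb{L})]$, $\mathfrak{Re}[B_2f_2(\mathbb{L})]$, $\mathfrak{Re}[B_5f_5(\mathbb{L})]$ survive (everything else commutes with $\theta_\pm$), each reduced by a factor $O(M_N^{-2})$ from the double commutator while being of size $O((M_N/N)^{1/2})$, in the relative-form sense of Lemma~\ref{Lemma: Auxiliary Estimates}, on the sectors where $\mathcal{N}$ is of order $M_N$ (using $u_0\in H^2(\mathbb{R}^d)$ and the a priori bound $\langle b_{\geq1}^\dagger\!\cdot(T+1)\cdot b_{\geq1}\rangle_{\Phi_N}=O(1)$ from Lemma~\ref{Lemma: Decomposition}); hence this error is $O(M_N^{-3/2}N^{-1/2})$, i.e. $o(1/N)$ since $M_N\gg N^{1/3}$.

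\emph{Conclusion.} Putting the three pieces together,
\begin{align*}
\langle H_N\rangle_{\Psi_N}=N\langle h_N\rangle_{\Phi_N}\ \geq\ (1-p_N)\big(Ne_\mathrm{H}+\inf\sigma(\mathbb{H})-\delta(2M_N/N)\big)+p_NE_N-o(1);
\end{align*}
since $\langle H_N\rangle_{\Psi_N}=E_N+o(1)$, subtracting $p_NE_N$, dividing by $1-p_N\to1$, and using $\delta(2M_N/N)\to0$ gives $E_N\geq Ne_\mathrm{H}+\inf\sigma(\mathbb{H})+o(1)$, as asserted.

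\emph{Main obstacle.} The delicate point is the truncation: the sole a priori information is $\langle\mathcal{N}\rangle_{\Phi_N}=o(N)$, with no control of higher moments of $\mathcal{N}$, so a sharp spectral projection of $\Phi_N$ onto a low-excitation sector would lose too much energy. The soft localization works only because every particle-number-changing term of $h_N$ carries an explicit factor $N^{-1/2}$, so that the doubly-commuted localization error is $O(M_N^{-3/2}N^{-1/2})$; the window $N^{1/3}\ll M_N\ll N$ is then non-empty exactly by virtue of complete Bose--Einstein condensation. The remaining ingredients — the semi-boundedness of $\mathbb{H}-r_0\mathbb{A}$ for the degenerate $\mathbb{H}$ (Theorem~\ref{Theorem: Bogoliubov Ground State Energy}) and the relative bounds of Lemmata~\ref{Lemma: Balance of Energy}, \ref{Lemma: Decomposition}, \ref{Lemma: Auxiliary Estimates} — are already in hand.
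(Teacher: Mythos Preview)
Your proof is correct and follows essentially the same route as the paper: take approximate ground states with complete BEC (Theorem~\ref{Theorem: Bose--Einstein condensation of Ground States}), perform a smooth particle-number localization to enter the regime of Theorem~\ref{Theorem: Decomposition}, and then absorb the $o_*(\mathbb{T}_N)$-error via the convex splitting $\mathbb{H}=(1-\epsilon/r)\mathbb{H}+(\epsilon/r)(\mathbb{H}-r\mathbb{A})$ using Theorem~\ref{Theorem: Bogoliubov Ground State Energy}. The only difference is a technical one in the localization step: you estimate the double-commutator error term-by-term via Lemma~\ref{Lemma: Auxiliary Estimates} (yielding the threshold $M_N\gg N^{1/3}$), whereas the paper invokes the abstract IMS formula of \cite[Prop.~6.1]{LNSS} and controls $\sum_nP_n(\widetilde{H}_N-E_N)P_n$ through Lemma~\ref{Lemma: Decomposition} (yielding $M_N\gg N^{1/2}$).
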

\begin{proof}
According to Theorem \ref{Theorem: Bose--Einstein condensation of Ground States}, there exists a sequence of states $\Psi_N\in \bigotimes_\mathrm{s}^N \mathcal{H}$, $\|\Psi_N\|=1$, such that $\braket{H_N}_{\Psi_N}\leq E_N+\alpha_N$ with $\alpha_N\underset{N\rightarrow \infty}{\longrightarrow }0$ and
\begin{align*}
\epsilon_N:=\braket{b_{\geq 1}^\dagger\cdot b_{\geq 1}}_{U_N \Psi_N}=\braket{b_{\geq 1}^\dagger\cdot b_{\geq 1}}_{\Psi_N}\underset{N\rightarrow \infty}{\longrightarrow }0.
\end{align*}
Let us abbreviate $\widetilde{H}_N:=U_N H_N U_N^{-1}$ and let $\pi_M$ be the orthogonal projection onto the space $\mathcal{F}_{\leq M}$ as before. Furthermore, let $0\leq f,g\leq 1$ be smooth functions with $f^2+g^2=1$, $f(x)=1$ for $x\leq \frac{1}{2}$ and $f(x)=0$ for $x\geq 1$, and let us define $f_M(x):=f\left(\frac{x}{M}\right)$ and $g_M(x):=g\left(\frac{x}{M}\right)$. Then the generalized IMS localization formula in \cite[Theorem A.1]{LS}, in the form stated in \cite[Proposition 6.1]{LNSS}, tells us that 
\begin{align*}
\widetilde{H}_N= f_M\left(\mathcal{N}\right)\, \widetilde{H}_N\, f_M\left(\mathcal{N}\right)+g_M\left(\mathcal{N}\right) \,\widetilde{H}_N\, g_M\left(\mathcal{N}\right)-R_{M,N},
\end{align*}
with $R_{M,N}\leq \frac{R}{M^2}\sum_{n=0}^\infty P_n \left(\widetilde{H}_N-E_N\right) P_n$, where $P_n$ is the orthogonal projection onto $\mathcal{F}_{\leq n}\cap \mathcal{F}_{\leq n-1}^\perp$, $\mathcal{N}=\sum_{j=1}^\infty a_j^\dagger a_j$ and $R:=16\left(\|f'\|^2_\infty+\left\|g'\right\|^2_\infty\right)$. Let us define $M_N$ as the smallest integer larger than $\sqrt{\epsilon_N} N$ and $N^{\frac{2}{3}}$. The exponent $\frac{2}{3}$ is somewhat arbitrary and we could use any sequence $\ell_N$ with $N^\frac{1}{2}\ll \ell_N\ll N$ instead. Using the estimate $1-f_{M}(x)^2\leq \frac{2}{M}x$ yields
\begin{align*}
\rho_N:=\braket{1\!-\!f_{M_N}(\mathcal{N})^2}_{U_N \Psi_N}\leq \frac{2}{M_N}\braket{\mathcal{N}}_{U_N\Psi_N}=\frac{2N}{M_N}\braket{b_{\geq 1}^\dagger\cdot b_{\geq 1}}_{U_N \Psi_N}\leq \frac{2}{\sqrt{\epsilon_N}}\epsilon_N\! \underset{N\rightarrow \infty}{\longrightarrow }\!0.
\end{align*}
Let us define $\Phi_{N}:=(1-\rho_N)^{-\frac{1}{2}}\ f_{M_N}(\mathcal{N}) U_N \Psi_N$. Using Lemma \ref{Lemma: Decomposition} and the inequality $\widetilde{H}_N\geq E_N$ yields
\begin{align}
\label{Inequality: Main Proof 1}
E_N\!+\!\alpha_N\!\geq\! \braket{\widetilde{H}_N}_{U_N \Psi_N}\!\geq\! (1\!-\!\rho_N)\braket{\widetilde{H}_N}_{\Phi_N}\!+\!\rho_N E_N\!-\!\frac{R}{M_N^2}\braket{k\widetilde{H}_N\!+\!k^2 N\!-\!E_N}_{U_N\Psi_N}.
\end{align}
Since $\lim_N N^{-1}E_N=e_\mathrm{H}$, we obtain that $\beta_N:=\frac{R}{M_N^2} \braket{k\widetilde{H}_N+k^2N-E_N}_{U_N \Psi_N}$ satisfies
\begin{align*}
\beta_N\leq \frac{R}{N^{\frac{4}{3}}}\left((k-1)E_N+k\alpha_N+k^2N\right)\underset{N\rightarrow \infty}{\longrightarrow }0.
\end{align*}
We can now rewrite Inequality (\ref{Inequality: Main Proof 1}) as
\begin{align*}
E_N\geq \braket{\widetilde{H}_N}_{\Phi_N}-\frac{\alpha_N+\beta_N}{1-\rho_N}.
\end{align*}

Let $r>0$ be as in the assumption of Theorem \ref{Theorem: Bogoliubov Ground State Energy} and recall the definition of $\mathbb{A}$ in Eq.~(\ref{Equation: NT}). Note that $N\mathbb{T}_N=\mathbb{A}+1$. By Theorem \ref{Theorem: Decomposition} and Remark \ref{Remark: Projection Formalism}, there exists a function $\epsilon$ with $\lim_{\delta\rightarrow 0}\epsilon(\delta)$, such that
\begin{align*}
&\braket{\widetilde{H}_N}_{\Phi_N}\geq N\ e_\mathrm{H}+\braket{\mathbb{H}}_{\mathcal{W}_N\Phi_N}-\epsilon\left(\frac{M_N}{N}\right)\ \braket{\mathbb{A}+1}_{\mathcal{W}_N \Phi_N}\\
&\ \ =N\ e_\mathrm{H}+\left(1-\frac{1}{r}\epsilon\left(\frac{M_N}{N}\right)\right)\braket{\mathbb{H}}_{\mathcal{W}_N\Phi_N}+\frac{1}{r}\epsilon\left(\frac{M_N}{N}\right)\braket{\mathbb{H}-r\mathbb{A}}_{\mathcal{W}_N\Phi_N}-\epsilon\left(\frac{M_N}{N}\right)\\
&\ \ \geq N\ e_\mathrm{H}+\inf \sigma\left(\mathbb{H}\right)+\frac{1}{r}\epsilon\left(\frac{M_N}{N}\right)\big(\inf \sigma\left(\mathbb{H}-r\mathbb{A}\right)-\inf \sigma\left(\mathbb{H}\right)\big)-\epsilon\left(\frac{M_N}{N}\right)
\end{align*}
for all $N$ large enough such that $1-\frac{1}{r}\epsilon\left(\frac{M_N}{N}\right)\geq 0$. This concludes the proof, since $\inf \sigma\left(\mathbb{H}-r\mathbb{A}\right)>-\infty$ by Theorem \ref{Theorem: Bogoliubov Ground State Energy}.
\end{proof}

\section{Taylor Expansion of $\left(\mathcal{W}_N U_N\right) H_N \left(\mathcal{W}_N U_N\right)^{-1}$}
\label{Section: Results in the transformed picture}

This section is devoted to the verification of the main technical Theorem \ref{Theorem: Decomposition}, which is the rigorous version of inequality Eq.~(\ref{Inequality: Non-Commutative}). Before we explain the proof, recall the definition of $\iota$ in Eq.~(\ref{Equation: Embedding}) and $F$ in Eq.~(\ref{Equation: Definition F}), and let us verify the classical counterpart Eq.~(\ref{Inequality: Commutative}). For this purpose we define the functional
\begin{align}
\label{Equation: Parameterized Hartree energy}
\mathcal{E}'(z):=\mathcal{E}_\mathrm{H}\left[\iota\left(F(z)\right)\right],
\end{align}
which satisfies according to the definition of $F$ that $\mathcal{E}'\left(\vec{t}\right)=e_\mathrm{H}$ for all $t\in \mathbb{R}^d$ with $\vec{t}:=\sum_{j=1}^d t_j u_j$, i.e. $F$ flattens the manifold of minimizers of $\mathcal{E}_\mathrm{H}\circ \iota$. We will verify Eq.~(\ref{Inequality: Commutative}) by sorting the functional $\mathcal{E}'$ in terms of powers in the variables $s$ and $z_{>d}$ for any $z=\sum_{j=1}^d \left(t_j+is_j\right)u_j+z_{>d}\in \mathcal{H}_0$ with $z_{>d} \in \{u_1,\dots,u_d\}^\perp$. In the following, let $\pi(z):=\sum_{j=1}^d is_ju_j+z_{>d}$ be the projection onto $\mathcal{V}:=\pi\left(\mathcal{H}_0\right)$. We can now sort $\mathcal{E}'(z)$ in terms of powers in $s$ and $z_{>d}$, i.e. in terms of powers in $\pi(z)$, using a Taylor approximation with expansion point $\vec{t}$
\begin{align}
\nonumber \mathcal{E}'(z)&=\mathcal{E}'\left(\vec{t}+\pi(z)\right)=\mathcal{E}'\left(\vec{t}\ \right)\!+D|_{\vec{t}}\ \mathcal{E}'\big(\pi(z)\big)+\frac{1}{2}D^2|_{\vec{t}}\ \mathcal{E}'\big(\pi(z)\big)+\{\ \mathrm{Higher Orders}\ \}\\
\label{Equation: Second version of Taylor}&=\mathcal{E}'\left(\vec{t}\ \right)\!+D_{\mathcal{V}}|_{\vec{t}} \ \mathcal{E}'(z)+\frac{1}{2}D_{\mathcal{V}}^2|_{\vec{t}}\ \mathcal{E}'\big(z\big)+\{\ \mathrm{Higher Orders}\ \},
\end{align}
where $D|_{z_0}\mathcal{E}'(v)$ is the first derivative of $\mathcal{E}'$ in the direction $v$ at $z_0$, $D^2|_{z_0} \mathcal{E}'(v)$ is the second derivative in the direction $v$, and $D_{\mathcal{V}}|_{z_0} \mathcal{E}'(v):=D|_{z_0} \mathcal{E}'\left(\pi(v)\right)$ and $D_{\mathcal{V}}^2|_{z_0} \mathcal{E}'\big(v\big):=D^2|_{z_0} \mathcal{E}'\big(\pi(v)\big)$ are the derivatives only with respect to directions in $\mathcal{V}$. Using $\mathcal{E}'\left(\vec{t}\ \right)=e_\mathrm{H}$, $D_{\mathcal{V}}|_{\vec{t}}\ \mathcal{E}'=0$ and the fact that $D_{\mathcal{V}}^2|_{\vec{t}}\, \mathcal{E}'(v)\geq \left(1-\frac{\epsilon}{2}\right) D_{\mathcal{V}}^2|_{0}\ \mathcal{E}'(v)$ for $t$ small enough by continuity, we formally arrive at Eq.~(\ref{Inequality: Commutative}), which is claimed to hold only for small $\|z\|^2=|t|^2+\|\pi(z)\|^2$ anyway.\\

By sorting the expression $\left(\mathcal{W}_N U_N\right) H_N \left(\mathcal{W}_N U_N\right)^{-1}$ in terms of powers in the operators $p$ and $b_{>d}$, we will verify that we end up with the same Taylor approximation we obtained by  sorting $\mathcal{E}'(z)$ in terms of powers in the variables $s$ and $z_{>d}$. More precisely, our goal is to  verify that
\begin{align}
\label{Equation: Formal main formula - first line}  \left(\mathcal{W}_N U_N\right)\! N^{-1}\!H_N \left(\mathcal{W}_N U_N\right)^{-1}\!&=\!\mathcal{E}\!\left(q\right)\!+\!D_{\mathcal{V}}\big|_{q} \mathcal{E}\Big(b_{\geq 1}\Big)\!+\!\frac{1}{2}D_{\mathcal{V}}^2\big|_{0} \mathcal{E}\Big(b_{\geq 1}\Big)\!+\!o_*\left(\mathbb{T}_N\right)\\
\nonumber &=e_\mathrm{H}+N^{-1}\mathbb{H}+o_*\left(\mathbb{T}_N\right),
\end{align}
where $\mathcal{E}:\mathrm{dom}\left[T\right]\rightarrow \mathbb{R}$ is a differentiable extension to all of $\mathrm{dom}\left[T\right]$ of the functional $\mathcal{E}'\big|_{B_r}$, restricted to the ball $B_r:=\{z\in \mathcal{H}_0\cap \mathrm{dom}\left[T\right]:\|z\|<r\}$ for a sufficiently small $r>0$. Note that the spectrum of the operators $q_1,\dots,q_d$ is the whole real axis $\mathbb{R}$. In order to even define $\mathcal{E}\left(q\right)$ and $D_{\mathcal{V}}\big|_{q} \mathcal{E}\Big(b_{\geq 1}\Big)$ with the help of functional calculus, it is therefore necessary that $\mathcal{E}$, in contrast to $\mathcal{E}'$, is an everywhere defined and differentiable functional. For such a function $\mathcal{E}$ we can define $\mathcal{E}\left(q\right)$ via functional calculus starting from the function $ t\mapsto \mathcal{E}\left(\sum_{j=1}^d t_j u_j\right)$ for $t\in \mathbb{R}^d$. 
%
%
% where the function $\mathcal{E}\left(\vec{\cdot}\, \right)$ is given by
%\begin{align*}
%\mathcal{E}\left(\vec{\cdot}\, \right):
%\begin{cases}
%\mathbb{R}^d\longrightarrow \mathbb{R},\\
%\ \ t\mapsto \mathcal{E}\left(\sum_{j=1}^d t_j u_j\right).
%\end{cases}
%\end{align*}
The so far formal objects $D_{\mathcal{V}}\big|_{q} \mathcal{E}\Big(b_{\geq 1}\Big)$ and $\frac{1}{2}D_{\mathcal{V}}^2\big|_{0} \mathcal{E}\Big(b_{\geq 1}\Big)$ are later defined in Definition \ref{Definition: Quantization}. Note that it is a necessity to restrict $\mathcal{E}'$ to a sufficiently small ball $B_r$ first, to be precise we require that $\|F(z)\|\leq 1-\delta$ for all $z\in B_r$ where $0<\delta<1$, since $\mathcal{E}'$ itself does not have a differentiable extension due to the square root appearing in the definition of $\iota$, see Eq.~(\ref{Equation: Embedding}). 

In order to reduce the technical efforts of proving Eq.~(\ref{Equation: Formal main formula - first line}), we will make use of the fact that
\begin{align*}
\left(\mathcal{W}_N U_N\right) N^{-1}H_N \left(\mathcal{W}_N U_N\right)^{-1}=\mathcal{W}_N \widetilde{A}_N\mathcal{W}_N^{-1}+\mathcal{W}_N \widetilde{B}_N\mathcal{W}_N^{-1} +o_*\left(\mathbb{T}_N\right),
\end{align*}
which, as we will see in the proof of Theorem \ref{Theorem: Decomposition}, is a consequence of Eq.~(\ref{Equation: Corollary: Estimates Lower Bound - first line}). We can then prove Eq.~(\ref{Equation: Formal main formula - first line}) separately for the operators $\mathcal{W}_N\, \widetilde{A}_N\, \mathcal{W}_N^{-1}$ and $\mathcal{W}_N\, \widetilde{B}_N\, \mathcal{W}_N^{-1}$. In fact,  we are going to verify that
\begin{align}
\label{Equation: Quantum A} \mathcal{W}_N\, \widetilde{A}_N\, \mathcal{W}_N^{-1}&=\mathcal{E}_A\left(q\right)\!+D_{\mathcal{V}}\big|_{q} \mathcal{E}_A\Big(b_{\geq 1}\Big)+\!\frac{1}{2}D_{\mathcal{V}}^2\big|_{0} \mathcal{E}_A\Big(b_{\geq 1}\Big)+\frac{c}{N}+o_*\left(\mathbb{T}_N\right),\\
\label{Equation: Quantum B} \mathcal{W}_N\, \widetilde{B}_N  \, \mathcal{W}_N^{-1}&=\mathcal{E}_B\left(q\right)\!+D_{\mathcal{V}}\big|_{q} \mathcal{E}_B\Big(b_{\geq 1}\Big)+\!\frac{1}{2}D_{\mathcal{V}}^2\big|_{0} \mathcal{E}_B\Big(b_{\geq 1}\Big)-\frac{c}{N}+o_*\left(\mathbb{T}_N\right),
\end{align}
where the constant $c$ arises due to the non-commutative nature of the operators $q$ and $p$, and $\mathcal{E}_A$ and $\mathcal{E}_B$ are differentiable extensions of $\mathcal{E}'_A,\mathcal{E}'_B:B_r\longrightarrow \mathbb{C}$ 
\begin{align}
\label{Equation: Classic} \mathcal{E}'_A(z):&=u_z^\dagger\cdot T\cdot u_z, &\mathcal{E}'_B(z):=\frac{1}{2}\left(u_z\otimes u_z\right)^\dagger \cdot \hat{v}\cdot u_z\otimes u_z,
\end{align}
where $u_z:=\iota\left(F(z)\right)$. The proofs of Eqs.~(\ref{Equation: Quantum A}) and (\ref{Equation: Quantum B}) will be carried out in Subsections \ref{Subsection: Results in the transformed picture-A} and \ref{Subsection: Results in the transformed picture-B}, respectively. We have to perform a variety of operator estimates, and since $\mathcal{W}_N \widetilde{A}_N \mathcal{W}_N^{-1}$ and $\mathcal{W}_N\widetilde{B}_N \mathcal{W}_N^{-1}$ involve factors of the form $\sqrt{1-\mathbb{L}'}$ with $\mathbb{L}'$ defined in Definition \ref{Definition: Unitary Transformation}, we need in particular to estimate the Taylor residuum corresponding to approximations of such terms. The operator estimates can be found in Appendix \ref{Appendix: B}, respectively Appendix \ref{Appendix: C} for the operator square root specifically.

\subsection{Taylor Expansion of $\mathcal{W}_N \widetilde{A}_N \mathcal{W}_N^{-1}$}
\label{Subsection: Results in the transformed picture-A}
In order to structure the analysis, we split the operator $\mathcal{W}_N \widetilde{A}_N \mathcal{W}_N^{-1}$ into simpler operators $H_J$, introduced in Definition \ref{Definition: Hamilton Components A}, and we split the classical counterpart $\mathcal{E}_A$ defined in Eq.~(\ref{Equation: Classic}) into atoms $\mathcal{E}_J$, defined in Definition \ref{Definition: Functional Components A}. In Lemma \ref{Lemma: A Decomposition}, we then explain how $\mathcal{W}_N \widetilde{A}_N \mathcal{W}_N^{-1}$ and $\mathcal{E}_A$ can be written in terms of $H_J$ and $\mathcal{E}_J$, respectively.
\begin{defi}
\label{Definition: Hamilton Components A}
Recall the function $t\mapsto f(t)$ from Definition \ref{Definition: F}. For $i\in \{0,\dots,4\}$, we define operators $ h_i:\mathrm{dom}[\mathcal{N}]\longrightarrow \mathcal{F}_0\otimes \mathcal{H}$ by $ h_0:= 1_{\mathcal{F}_0}\otimes u_0$ and
\begin{align*}
 h_1:&= q=\sum_{j=1}^d q_j\otimes u_j,\   \ \ \ \  h_3:=i p'=i\sum_{j=1}^d \left(p_j-\mathfrak{Im}\left[\partial_j f(q)^\dagger\cdot b_{>d}\right]\right)\otimes u_j,\\
 h_2:&= f(q),\ \ \ \ \ \ \ \ \ \ \ \ \ \ \ \ \ \ \  h_4:=b_{>d},
\end{align*}
where $f(q)$ and $\partial_j f(q)$ are defined according to Definition \ref{Definition: Unitary Transformation}. Furthermore, for a multi-index $J=(i,j)$ with $i,j\in \{0,\dots,4\}$ we define an operator $H_J$ on $\mathcal{W}_N\mathcal{F}_{\leq N}$ as
\begin{align*}
H_J:&= h_i^\dagger \cdot  T\cdot  h_j\, \Big(1-\mathbb{L}'\Big)^{\frac{m_J}{2}},
\end{align*}
where $m_J$ counts how many of the indices in $J=(i,j)$ are zero.
\end{defi}

\begin{defi}
\label{Definition: Functional Components A}
Let us decompose an arbitrary  $z\in \mathcal{H}_0$ as $z=\sum_{j=1}^d(t_j+is_j)\ u_j+z_{>d}$, with $t,s\in \mathbb{R}^d$ and $z_{>d}\in \{u_1,\dots,u_d\}^\perp$. For $i\in \{0,\dots,4\}$, we define in analogy to Definition \ref{Definition: Hamilton Components A} the functions $e_i:\mathcal{H}_0 \longrightarrow \mathcal{H}$ by $e_0(z):=u_0$ and
\begin{align*}
e_1(z):&=\sum_{j=1}^d t_j\ u_j, \ \ \ \ \ \ \ \ e_3(z):=i\sum_{j=1}^d\left(s_j-\mathfrak{Im}\left[\partial_jf(t_1,\dots,t_d)^\dagger\cdot z\right]\right)u_j,\\
e_2(z):&=f\left(t\right),\ \ \  \ \ \ \ \ \ \ \ \ \ e_4(z):=z_{>d}.
\end{align*}
With this at hand, we can write the transformation $F:\mathcal{H}_0\longrightarrow \mathcal{H}_0$ from Eq.~(\ref{Equation: Definition F}) as
\begin{align*}
F(z)=e_1(z)+e_2(z)+e_3(z)+e_4(z).
\end{align*}
Furthermore, consider for $m\in \{0,\dots,4\}$ the functions
\begin{align}
\label{Equation: eta_m}
\eta_m\left(z\right):=
\begin{cases}
\Big(1-\|F(z)\|^2\Big)^{\frac{m}{2}}\hbox{ for even $m$},\\
\chi\big(\|F(z)\|^2\big)\Big(1-\|F(z)\|^2\Big)^{\frac{m}{2}}\hbox{ for odd $m$},
\end{cases}
\end{align}
where $\chi$ is a smooth function with $0\leq \chi(x)\leq 1$, $\mathrm{supp}\left(\chi\right)\subset [0,1)$ and $\chi(x)=1$ for $|x|<\frac{1}{2}$. Then we can define for a multi-index $J=(i,j)$ with $i,j\in \{0,\dots,4\}$ the function $\mathcal{E}_J:\mathcal{H}_0\cap \mathrm{dom}[T] \longrightarrow \mathbb{C}$ as
\begin{align*}
\mathcal{E}_J(z):=e_{i}(z)^\dagger\cdot T\cdot e_{j}(z)\ \eta_{m_J}\left(z\right),
\end{align*}
where $m_J$ counts how many of the two indices $i,j$ are zero.\\
\end{defi}

\begin{lem}
\label{Lemma: A Decomposition}
Let us define for all $i,j\in \{1,\dots,4\}$ the coefficients $\lambda_{(0,0)}:=1$, $\lambda_{(i,0)}:=2$, $\lambda_{(i,j)}:=1$ and $\lambda_{(0,j)}:=0$. Then
\begin{align}
\label{Equation: A Decomposition quantum} \mathcal{W}_N\, \widetilde{A}_N\, \mathcal{W}_N^{-1}&=\sum_{J\in \{0,\dots,4\}^2}\lambda_J\ \mathfrak{Re}\left[H_J\right],
\end{align}
where $\widetilde{A}_N$ is defined in Eq.~(\ref{Equation: tilde A}). Furthermore, the functional $\mathcal{E}_A$ defined as
\begin{align}
\label{Equation: A Decomposition classic}\mathcal{E}_A(z):&=\sum_{J\in \{0,\dots,4\}^2}\lambda_J\ \mathfrak{Re}\left[\mathcal{E}_J(z)\right],
\end{align}
is an extension of $\mathcal{E}'_A\big|_{B_r}$ defined in Eq.~(\ref{Equation: Classic}), where $B_r:=\{z\in \mathcal{H}_0\cap \mathrm{dom}\left[T\right]:\|z\|<r\}$ and $r>0$ is a constant such that $\|F(z)\|<\frac{1}{2}$ for all $z\in \mathcal{H}_0$ with $\|z\|<r$.
\end{lem}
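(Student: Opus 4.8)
The plan is to establish both claims by direct substitution: the operator identity reduces to the transformation laws of Lemma~\ref{Lemma: Transformation Laws}, and the classical identity to the explicit form of $\iota\circ F$.

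\textbf{The operator identity.} I would start from $\widetilde A_N=\sum_{r=0}^{2}\mathfrak{Re}\big[A_r\,\sqrt{1-\mathbb{L}}^{\,2-r}\big]$ (Eq.~(\ref{Equation: tilde A})), with $A_0=u_0^\dagger\cdot T\cdot u_0$, $A_1=2\,b_{\geq1}^\dagger\cdot T\cdot u_0$, $A_2=b_{\geq1}^\dagger\cdot T\cdot b_{\geq1}$. Since $\mathcal{W}_N$ is unitary, conjugation by it commutes with $\mathfrak{Re}[\cdot]$ and with Borel functional calculus, so $\mathcal{W}_N\sqrt{1-\mathbb{L}}\,\mathcal{W}_N^{-1}=\sqrt{1-\mathbb{L}'}$; it fixes $1_{\mathcal{F}_0}\otimes u_0=h_0$ and, by Lemma~\ref{Lemma: Transformation Laws}, sends $b_{\geq1}$ to $h_1+h_2+h_3+h_4$. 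Substituting and using bilinearity of $(X,Y)\mapsto X^\dagger\cdot T\cdot Y$ gives $\mathcal{W}_NA_0\mathcal{W}_N^{-1}=h_0^\dagger\cdot T\cdot h_0$, $\mathcal{W}_NA_1\mathcal{W}_N^{-1}=2\sum_{i=1}^{4}h_i^\dagger\cdot T\cdot h_0$, $\mathcal{W}_NA_2\mathcal{W}_N^{-1}=\sum_{i,j=1}^{4}h_i^\dagger\cdot T\cdot h_j$. The factor $\sqrt{1-\mathbb{L}'}^{\,2-r}$ attached to the $A_r$-term equals $(1-\mathbb{L}')^{m_J/2}$ exactly because $m_{(0,0)}=2$, $m_{(i,0)}=1$ for $i\geq1$, and $m_{(i,j)}=0$ for $i,j\geq1$. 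Recalling $H_J=h_i^\dagger\cdot T\cdot h_j\,(1-\mathbb{L}')^{m_J/2}$, the claim Eq.~(\ref{Equation: A Decomposition quantum}) reduces to the bookkeeping identity $\mathfrak{Re}[H_{(0,0)}]+2\sum_{i\geq1}\mathfrak{Re}[H_{(i,0)}]+\sum_{i,j\geq1}\mathfrak{Re}[H_{(i,j)}]=\sum_J\lambda_J\mathfrak{Re}[H_J]$, which holds by the very definition of the $\lambda_J$; the convention $\lambda_{(0,j)}=0$ simply avoids double-counting the adjoint terms already generated by $\mathfrak{Re}[\cdot]$.

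\textbf{The classical identity.} On $B_r$ one has $\|F(z)\|<\tfrac12$, hence $\|F(z)\|^2<\tfrac14$ and $\chi(\|F(z)\|^2)=1$, so, recalling Eq.~(\ref{Equation: eta_m}), $\eta_m(z)=(1-\|F(z)\|^2)^{m/2}$ on $B_r$ for $m=0,1,2$. I would then set $\tilde e_0(z):=\sqrt{1-\|F(z)\|^2}\,u_0$ and $\tilde e_i(z):=e_i(z)$ for $i\geq1$, so that $F=e_1+e_2+e_3+e_4$ (Definition~\ref{Definition: Functional Components A}) together with Eq.~(\ref{Equation: Embedding}) gives $u_z=\iota(F(z))=\sqrt{1-\|F(z)\|^2}\,u_0+F(z)=\sum_{i=0}^{4}\tilde e_i(z)$, and on $B_r$ one checks directly that $\mathcal{E}_{(i,j)}(z)=\tilde e_i(z)^\dagger\cdot T\cdot\tilde e_j(z)$, since $\eta_{m_{(i,j)}}$ absorbs exactly the $\sqrt{1-\|F(z)\|^2}$ factors carried by $\tilde e_0$. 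Expanding $\mathcal{E}'_A(z)=u_z^\dagger\cdot T\cdot u_z=\sum_{i,j=0}^{4}\mathcal{E}_{(i,j)}(z)$ and using self-adjointness of $T$ (so $\overline{\mathcal{E}_{(i,j)}(z)}=\mathcal{E}_{(j,i)}(z)$, the $\eta_m$ being real), the same bookkeeping identity yields $\sum_{i,j=0}^{4}\mathcal{E}_{(i,j)}=\sum_J\lambda_J\mathfrak{Re}[\mathcal{E}_J]=\mathcal{E}_A$ on $B_r$; since $\mathcal{E}_A$ is defined on all of $\mathcal{H}_0\cap\mathrm{dom}[T]$, it is indeed an extension of $\mathcal{E}'_A\big|_{B_r}$.

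\textbf{Expected difficulty.} I do not expect a genuine obstacle here — this is essentially an organizational lemma — but two points need care. First, on the operator side the $H_J$ are unbounded and are to be read as quadratic forms on $\mathcal{W}_N\mathcal{F}_{\leq N}$, so I must justify that $\mathcal{W}_N$-conjugation really distributes over the products $b_{\geq1}^\dagger\cdot T\cdot b_{\geq1}$ on the relevant form domain; this uses $u_0,u_1,\dots,u_d\in\mathrm{dom}[T]$ (Assumptions~\ref{Assumption: Part I} and~\ref{Assumption: Part II}), whence also $f(t)\in\mathrm{dom}[T]$, as $f(t)$ is assembled from translates of $u_0$ and from $u_1,\dots,u_d$. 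Second, the collapse of $\eta_m$ to a pure power of $\sqrt{1-\|F(z)\|^2}$, and hence the exact matching with the operator side, hinges on $\chi\equiv1$ on $B_r$, i.e. on choosing $r$ precisely as in the hypothesis so that $\|F(z)\|<\tfrac12$ there.
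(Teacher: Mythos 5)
Your proof is correct and follows the same route the paper takes in its (very terse) proof: the paper likewise reduces the operator identity to the transformation law $\mathcal{W}_N\, b_{\geq 1}\, \mathcal{W}_N^{-1}=h_1+h_2+h_3+h_4$ together with $\mathbb{L}'=\mathcal{W}_N\,\mathbb{L}\,\mathcal{W}_N^{-1}$, and the classical identity to $\iota(F(z))=\sqrt{1-\|F(z)\|^2}\,u_0+e_1(z)+\dots+e_4(z)$. You have merely spelled out the bilinear expansion, the $m_J$ bookkeeping, and the reality/conjugacy considerations that the paper leaves implicit.
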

Note that the operator $\mathcal{W}_N\, \widetilde{A}_N\, \mathcal{W}_N^{-1}$ involves terms with $\sqrt{1-\mathbb{L}'}$ on the right side as well as on the left side. In order to reduce the technical effort later, it will be convenient to have all of them on one side, say the right side. This can be achieved by using the real part, e.g. we can write for $j\in \{1,\dots,4\}$
\begin{align*}
h_j^\dagger\cdot T\cdot u_0 \sqrt{1-\mathbb{L}'}+\sqrt{1-\mathbb{L}'}u_0^\dagger\cdot T\cdot h_j=\mathfrak{Re}\left[h_j^\dagger\cdot T\cdot u_0 \sqrt{1-\mathbb{L}'}\right]=2\mathfrak{Re}\left[H_{(j,0)}\right].
\end{align*}
Therefore we set all the coefficients $\lambda_{(0,j)}$ in Lemma \ref{Lemma: A Decomposition} to zero, since the $(0,j)$-contribution is already included in the real part of the $(j,0)$-contribution.
\begin{proof}
Eq.~(\ref{Equation: A Decomposition quantum}) follows from the transformation law $\mathcal{W}_N\, b_{\geq 1}\, \mathcal{W}_N^{-1}=h_1+h_2+h_3+h_4$, where $h_i$ is defined in Definition \ref{Definition: Hamilton Components A}, and the definition $\mathbb{L}'=\mathcal{W}_N\, \mathbb{L}\, \mathcal{W}_N^{-1}$. Similarly we obtain $\mathcal{E}_A(z)=\mathcal{E}_A'(z)$ for all $z$ with $\|z\|<r$ for $r$ as above and the fact that
\begin{align*}
\iota\big(F(z)\big)=\iota\left(e_1(z)\!+\!e_2(z)\!+\!e_3(z)\!+\!e_4(z)\right)=\sqrt{1\!-\!\|F(z)\|^2}e_0\!+\!e_1(z)\!+\!e_2(z)\!+\!e_3(z)\!+\!e_4(z).
\end{align*}
\end{proof}

In order to prove the Taylor approximation in Eq.~(\ref{Equation: Quantum A}), we will verify that each of the atoms $H_J$ can be approximated using the quantized Taylor coefficients of $\mathcal{E}_J$. The quantized Taylor coefficients $D_{\mathcal{V}}\big|_{q} \mathcal{E}_J\Big(b_{\geq 1}\Big)$ and $D_{\mathcal{V}}^2\big|_{0}\ \mathcal{E}_J\Big(b_{\geq 1}\Big)$ are rigorously defined by the following Definition.
\begin{defi}
\label{Definition: Quantization}
Let $L_t:\mathcal{H}_0\longrightarrow \mathbb{C}$ be a bounded $\mathbb{R}$-linear map for all $t\in \mathbb{R}^d$, and let $w(t),\widetilde{w}(t)$ be the unique elements in $\mathcal{H}_0$ such that $L_t(z)=w(t)^\dagger\cdot z+z^\dagger\cdot \widetilde{w}(t)$. Then we define
\begin{align}
\label{Equation: Linear Quantization}
L_q\left(b_{\geq 1}\right):=w(q)^\dagger\cdot b_{\geq 1}+b_{\geq 1}^\dagger\cdot \widetilde{w}(q).
\end{align}
Let furthermore $\Lambda$ be an $\mathbb{R}$-quadratic form on $\mathcal{H}_0$ with a unique decomposition $\Lambda(z)=z^\dagger\cdot Q\cdot z+G^\dagger\cdot z\otimes z+\left(z\otimes z\right)^\dagger\cdot \widetilde{G}$ where $Q$ is an operator on $\mathcal{H}_0$ and $G,\widetilde{G}\in \mathcal{H}\otimes_\mathrm{s} \mathcal{H}_0$ (or, more generally, in $ \overline{\mathcal{H}_0\otimes_\mathrm{s} \mathcal{H}_0}^{\|.\|_*}$ as introduced in Lemma~\ref{Lemma: Hessian}). Then we define $\Lambda\left(b_{\geq 1}\right)$ as
\begin{align*}
\Lambda\left(b_{\geq 1}\right):=b_{\geq 1}^\dagger\cdot Q\cdot b_{\geq 1}+G^\dagger\cdot b_{\geq 1}\ \underline{\otimes}\ b_{\geq 1}+\left(b_{\geq 1}\ \underline{\otimes}\ b_{\geq 1}\right)^\dagger\cdot \widetilde{G}.
\end{align*} 
\end{defi}

In the following we want to verify that the residuum $R_J$ defined as 
\begin{align}
\label{Equation: Taylor residuum}
R_J:&=H_J-\mathcal{E}_J(q)-D_\mathcal{V}\big|_{q} \mathcal{E}_J\left(b_{\geq 1}\right)-\frac{1}{2}D^2_\mathcal{V}\big|_{0} \mathcal{E}_J\left(b_{\geq 1}\right)-\frac{c_J}{N}
\end{align}
is small, where the constant $c_J$ are given by
\begin{align}
\nonumber
c_{(0,0)}:&=\frac{d}{4}\ u_0^\dagger\cdot T\cdot u_0=-\frac{1}{8}\ \sum_{j=1}^d \partial_{t_j}^2\big|_{t=0}\mathcal{E}_{(0,0)}\left(\vec{t}\ \right),\\
\label{Equation: c_J for A} c_{(3,3)}:&=\frac{1}{4}\ \sum_{j=1}^d u_j^\dagger\cdot T\cdot u_j=\frac{1}{8}\ \sum_{j=1}^d \partial_{t_j}^2\big|_{t=0}\mathcal{E}_{(1,1)}\left(\vec{t}\ \right),
\end{align}
$c_{(1,3)}:=c_{(3,1)}:=-c_{(3,3)}$ and $c_J:=0$ for all other $J\in \{0,\dots,4\}^2$, where $\vec{t\,}:=\sum_{j=1}^d t_j u_j$. The proof will be spit into two parts. In Lemma \ref{Lemma: Taylor A} we derive an explicit representation of the residuum $R_J$ by sorting the operator $H_J$ in terms of powers in $p$ and $b_{>d}$, and in Theorem \ref{Theorem: Main A} we will make sure that this residuum is indeed small compared to the operator $\mathbb{T}_N$ defined in Eq. (\ref{Equation: Definition T}), which is quadratic in the operators $p$ and $b_{>d}$.

In order to illustrate the emergence of the additional constants $c_J$ in the residuum $R_J$ in Eq.~(\ref{Equation: Taylor residuum}), let us first investigate the following toy problem.\\

\textbf{Example}. Consider the toy Hamiltonian $H_{\mathrm{toy}}:=b_1^\dagger b_1$ and the corresponding Hartree functional $\mathcal{E}_{\mathrm{toy}}:\mathbb{C}\longrightarrow \mathbb{C}$ given by $\mathcal{E}_{\mathrm{toy}}[z]:=|z|^2$. Using $b_1=q_1+ip_1$ and the commutation relation $[ip_1,q_1]=\frac{1}{2N}$, we obtain
\begin{align}
\label{Equation: Toy}
H_{\mathrm{toy}}=q_1^2\!+\!p_1^2\!-\!\frac{1}{2N}=q_1^2\!-\!\frac{1}{4}\left(b_1\!-\!b_1^\dagger\right)^2\!-\!\frac{1}{2N}=q_1^2\!+\!\frac{1}{2}b_1^\dagger b_1\!-\!\frac{1}{4}b_1^2\!-\!\frac{1}{4}\left(b_1^\dagger\right)^2\!-\!\frac{1}{4N}.
\end{align}
Let $D_\mathcal{V}$ be the derivative with respect to the imaginary part and $z=t+is\in \mathbb{C}$, then
\begin{align*}
\frac{1}{2}D^2_\mathcal{V}|_0\, \mathcal{E}_{\mathrm{toy}}(z)=\frac{1}{2}D^2|_0\, \mathcal{E}_{\mathrm{toy}}(is)=s^2=\frac{1}{2}|z|^2-\frac{1}{4}z^2-\frac{1}{4}\bar{z}^2.
\end{align*}
With the definition $c_{\mathrm{toy}}:=-\frac{1}{8}\partial_t^2\big|_{t=0} \mathcal{E}_{\mathrm{toy}}[t]=-\frac{1}{4}$ we can therefore rewrite Eq.~(\ref{Equation: Toy}) as
\begin{align*}
H_{\mathrm{toy}}=\mathcal{E}_{\mathrm{toy}}[q_1]+\frac{1}{2}D^2_\mathcal{V}|_0 \mathcal{E}_{\mathrm{toy}}(b_1)+\frac{c_{\mathrm{toy}}}{N}.
\end{align*}

\begin{defi}[Taylor approximation of the square root]
\label{Definition: Taylor of the square root}
Let $\eta_m$ be the function defined in Eq.~(\ref{Equation: eta_m}) and let us define the constant $c_m:=\frac{m}{8}d$. We then define the residuum corresponding to the operator Taylor approximation of $\Big(1-\mathbb{L}'\Big)^{\frac{m}{2}}$, for different degrees of accuracy, as
\begin{align*}
E_m^0:&=\Big(1-\mathbb{L}'\Big)^{\frac{m}{2}}-\eta_m\left(q\right),\\
E_m^1:&=\Big(1-\mathbb{L}'\Big)^{\frac{m}{2}}-\eta_m\left(q\right)- D_{\mathcal{V}}\big|_q \eta_m\big(b_{\geq 1}\big),\\
E_m^2:&=\Big(1-\mathbb{L}'\Big)^{\frac{m}{2}}-\eta_m\left(q\right)- D_{\mathcal{V}}\big|_q \eta_m\big(b_{\geq 1}\big)-\frac{1}{2}D^2_{\mathcal{V}}\big|_0 \eta_m\big(b_{\geq 1}\big)-\frac{c_m}{N}.
\end{align*}
\end{defi}

\begin{lem}
\label{Lemma: Taylor A}
Let $J=(i,j)\in \{0,\dots,4\}^2$ be such that $\lambda_J\neq 0$, where $\lambda_J$ is defined in Lemma \ref{Lemma: A Decomposition}, and let $R_J$ be the residuum defined in Eq.~(\ref{Equation: Taylor residuum}). By distinguishing different cases with the help of the index $e_J:=\left|\{\ell\in J:\ell\in \{3,4\}\}\right|$ and the index $m_J:=\left|\{\ell\in J:\ell=0\}\right|$, we can explicitly express $R_J$ as
\begin{itemize}
\item In the case $m_J=2$, i.e. $J=(0,0)$: $R_{(0,0)}=\left( u_0^\dagger\cdot  T\cdot  u_0\right) E_{2}^2$.
\item In the case $e_J=0$ and $m_J<2$: $R_J=\left( h_i^\dagger\cdot T\cdot  h_j\right) E_{m_J}^1$.
\item In the case $e_J=1$, there exists a constant $C$ and functions $F_J:\mathbb{R}^d\longrightarrow \mathbb{R}$ with $|F_J(t)|\leq C|t|$, such that 
\begin{align}
\label{Equation: Proof with F_J}R_J=\left( h_i^\dagger\cdot  T\cdot  h_j\right) E_{m_J}^0+\frac{F_J(q)}{N}.
\end{align}
\item For $e_J=2$ we distinguish further between the individual cases and obtain
\begin{align*}
\ \ \ \ R_{(3,3)}&=\left(i p'-i p\right)^\dagger\cdot  T\cdot  ip'+ (ip)^\dagger \cdot  T\cdot \left(i p'-i p\right),\\
R_{(3,4)}&=\left(i p'-i p\right)^\dagger\cdot  T\cdot b_{>d}=R_{(4,3)}^\dagger,\\
R_{(4,4)}&=0.
\end{align*}
\end{itemize}
\end{lem}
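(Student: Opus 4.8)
The plan is to verify the identity index by index. The first step is to introduce the Taylor‑residuum notation $E_m^0,E_m^1,E_m^2$ of Definition \ref{Definition: Taylor of the square root} for the scalar factor $(1-\mathbb{L}')^{m/2}$: every $H_J$ is a product of the building blocks $h_i,h_j$ sandwiching $T$, times such a square‑root factor, and isolating the expansion of the square root from the polynomial part keeps the bookkeeping tractable. I would also record at the outset the transformation law $\mathcal{W}_N b_{\geq 1}\mathcal{W}_N^{-1}=h_1+h_2+h_3+h_4$ (Lemma \ref{Lemma: Transformation Laws}), which gives $H_J=h_i^\dagger\cdot T\cdot h_j(1-\mathbb{L}')^{m_J/2}$, together with the observations that $h_0,h_1,h_2$ are precisely $e_0,e_1,e_2$ under the substitution $\vec{t}\to q$, that $h_3=ip'$ and $h_4=b_{>d}$ are obtained from $e_3,e_4$ by promoting imaginary parts to $p$ and the $\{u_1,\dots,u_d\}^\perp$‑component to $b_{>d}$ in the sense of Definition \ref{Definition: Quantization}, that $\|F(\vec{t})\|^2=|t|^2+\|f(t)\|^2$, and that $\partial_{t_m}f_l(0)=0$ for all $l>d$ (because $\partial_{t_m}u_{0,\lambda^{-1}(t)}|_{t=0}$ lies in $\mathrm{span}(u_1,\dots,u_d)$).

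The case $J=(0,0)$ is immediate: $h_0^\dagger\cdot T\cdot h_0=u_0^\dagger\cdot T\cdot u_0$ is a scalar and $c_{(0,0)}=(u_0^\dagger\cdot T\cdot u_0)c_2$, so the claim reduces to the definition of $E_2^2$. For $e_J=0$ and $m_J<2$ both slots lie in $\{h_0,h_1,h_2\}$, hence $h_i^\dagger\cdot T\cdot h_j=G(q)$ with $G(t)=e_i(\vec{t})^\dagger\cdot T\cdot e_j(\vec{t})$, and $G(0)=0$ since $e_1(\vec{0})=0$, $e_2(\vec{0})=f(0)=0$ and not both slots equal $e_0$. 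As $G$ is independent of the $\mathcal{V}$‑directions one gets $\mathcal{E}_J(q)=G(q)\eta_{m_J}(q)$, $D_\mathcal{V}|_q\mathcal{E}_J(b_{\geq 1})=G(q)D_\mathcal{V}|_q\eta_{m_J}(b_{\geq 1})$ and $\tfrac{1}{2}D^2_\mathcal{V}|_0\mathcal{E}_J(b_{\geq 1})=G(0)\cdot\tfrac{1}{2}D^2_\mathcal{V}|_0\eta_{m_J}(b_{\geq 1})=0$, so with $c_J=0$ this is $R_J=G(q)E_{m_J}^1$. The case $e_J=1$ is the first non‑commutative one: the zeroth‑order term vanishes because $e_3(\vec{t})=e_4(\vec{t})=0$, and the second‑order term vanishes because $\mathcal{E}_J$ is a product of an $\mathbb{R}$‑linear (in the $\mathcal{V}$‑directions) factor that vanishes at the origin with $\eta_{m_J}$, while $D_\mathcal{V}\eta_m|_0=0$ as $F(0)=0$. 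Hence $R_J=H_J-D_\mathcal{V}|_q\mathcal{E}_J(b_{\geq 1})-c_J/N$; rewriting $D_\mathcal{V}|_q\mathcal{E}_J(b_{\geq 1})$ by pushing all functions of $q$ to the left of the $p$'s, the leading term equals $(h_i^\dagger\cdot T\cdot h_j)\eta_{m_J}(q)$, and combined with $(1-\mathbb{L}')^{m_J/2}-\eta_{m_J}(q)=E_{m_J}^0$ this produces $(h_i^\dagger\cdot T\cdot h_j)E_{m_J}^0+F_J(q)/N$. Here $F_J$ collects the commutators $[p_m,g(q)]=\tfrac{1}{2iN}\partial_{q_m}g(q)$ generated in the reordering: for $m_J=0$ these are constants, cancelled precisely by $c_J$ (which is why $c_{(1,3)}=c_{(3,1)}=-c_{(3,3)}$), while for $m_J=1$ they involve $\partial_{q_m}\eta_1(q)$ and $\partial_{q_m}f_l(q)$, both vanishing at $t=0$ by the facts above, giving $|F_J(t)|\le C|t|$.

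The $e_J=2$ cases use the same reordering, with $\partial_{t_m}f_l(0)=0$ decisive: it forces $D_\mathcal{V}e_3|_0$ to equal $i\sum_m s_m u_m$, so that $\tfrac{1}{2}D^2_\mathcal{V}|_0\mathcal{E}_{(3,3)}(b_{\geq 1})$ quantizes to $p^\dagger\cdot T\cdot p-c_{(3,3)}/N$ (after using $[b_m,b_{m'}^\dagger]=\tfrac{1}{N}\delta_{mm'}$), $\tfrac{1}{2}D^2_\mathcal{V}|_0\mathcal{E}_{(3,4)}(b_{\geq 1})$ to $(ip)^\dagger\cdot T\cdot b_{>d}$, and $\tfrac{1}{2}D^2_\mathcal{V}|_0\mathcal{E}_{(4,4)}(b_{\geq 1})$ back to $b_{>d}^\dagger\cdot T\cdot b_{>d}=H_{(4,4)}$. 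Subtracting these from $H_{(3,3)}=(ip')^\dagger\cdot T\cdot ip'$, $H_{(3,4)}=(ip')^\dagger\cdot T\cdot b_{>d}$ and $H_{(4,4)}$, and using that $p'$ is componentwise self‑adjoint, gives exactly the stated expressions, with $R_{(4,4)}=0$ and $R_{(4,3)}=R_{(3,4)}^\dagger$ by adjoints. The main obstacle throughout is purely bookkeeping: for each of the finitely many relevant multi‑indices $J$ one must compare the ordering imposed by Definition \ref{Definition: Quantization} with the ordering in which $H_J$ naturally appears and check that the resulting $\tfrac{1}{N}$‑commutator terms are of exactly the announced shape — constant pieces matching $c_J$, $q$‑dependent pieces vanishing linearly at the origin. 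With the smallness of the $R_J$ delegated to Theorem \ref{Theorem: Main A}, the present Lemma is then a finite, if lengthy, list of algebraic identities.
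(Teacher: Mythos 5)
Your proposal is correct and follows essentially the same route as the paper: reduce each $H_J$ to a quantized Taylor coefficient of $\mathcal{E}_J$ by pushing functions of $q$ across $p$'s, read off the $1/N$ commutator terms, and verify that the leftover pieces are exactly the stated $E_{m_J}^k$ residues, with $c_J$ absorbing the constant part and $\partial_{t_m}f_l(0)=0$ ensuring the remaining $F_J$ vanishes linearly at the origin. One small imprecision: you state that for $m_J=0$ the commutator corrections are constants, but for $J=(3,2)$ (which has $m_J=0$) the correction $y(t)=-\tfrac14\sum_\ell\partial_\ell\bigl(u_\ell^\dagger\cdot T\cdot f(t)\bigr)$ is $t$-dependent; it is only $y(0)=0$ (again via $\partial_\ell f_l(0)=0$) and $c_{(3,2)}=0$ that make this case work, not constancy.
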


\begin{proof}
The Lemma can be verified by straightforward computations for the different individual cases. For the purpose of illustration, we will explicitly carry out the computations for the case $J=(3,j)$ with $j\in \{0,1,2\}$, i.e. we are going to verify Eq.~(\ref{Equation: Proof with F_J}) for this special case. Using the definition of $E^0_m$ in Definition \ref{Definition: Taylor of the square root}, the observation $h_j=e_j(q)$ and the fact that $\left(ip'_\ell\right)^\dagger= b_{\geq 1}^\dagger \cdot  \big(u_\ell-\partial_{u_\ell} f(q)\big)-\big(u_\ell-\partial_{u_\ell} f(q)\big)^\dagger\cdot b_{\geq 1}$, we obtain
\begin{align}
&\nonumber H_J=\left(ip'\right)^\dagger\cdot T\cdot e_j(q)\left(1-\mathbb{L}'\right)^{\frac{m}{2}}=\left(ip'\right)^\dagger\cdot T\cdot e_j(q)\eta_m(q)+\left(ip'\right)^\dagger\cdot T\cdot e_j(q)E^0_{m}\\
&\nonumber \ \ =\frac{1}{2}\!\sum_{\ell=1}^d b_{\geq 1}^\dagger\!\cdot \! \big(u_\ell\!-\!\partial_{u_\ell} f(q)\!\big) u_\ell^\dagger \!\cdot \!T\!\cdot\! e_j(q)\, \eta_m(q)-\frac{1}{2}\!\sum_{\ell=1}^d\big(u_\ell\!-\!\partial_{u_\ell} f(q)\!\big)^\dagger\!\cdot\! b_{\geq 1} u_\ell^\dagger \!\cdot \!T\!\cdot\! e_j(q)\, \eta_m(q)\label{Equation: Expression for R_J}\\
& \ \ \ \ \ \ \ \ \ +\left(ip'\right)^\dagger\cdot T\cdot e_j(q)E^0_{m},
\end{align}
where $m:=m_J$. Our goal is to commute $b_{\geq 1}$ in $\big(u_\ell-\partial_{u_\ell} f(q)\big)^\dagger\cdot b_{\geq 1} u_\ell^\dagger \cdot T\cdot e_j(q)\, \eta_m(q)$ to the right side, in order to obtain an expression which is of the same form as~(\ref{Equation: Linear Quantization}). We define the corresponding functions $w$ and $\widetilde{w}$ as
\begin{align*}
w:=-\frac{1}{2}\sum_{\ell=1}^d \left(u_\ell^\dagger \cdot T \cdot e_j\left(\vec{t}\ \right)\! \eta_{m}\left(\vec{t}\ \right)\right) \big(u_\ell-\partial_{t_\ell} f(t)\big)
\end{align*} 
and $\widetilde{w}(t):=-w(t)$. The commutation law $\left[g(q),\big(u_\ell\!-\!\partial_{u_\ell} f(q)\!\big)^\dagger\!\cdot\! b_{\geq 1}\right]=\left[g(q),ip_\ell\right]=-\frac{1}{2N}\partial_{\ell}g(q)$, for $C^1$ functions $g:\mathbb{R}^d\longrightarrow \mathbb{R}$ then yields
\begin{align*}
-\frac{1}{2}\sum_{\ell=1}^d\big(u_\ell-\partial_{u_\ell} f(q)\big)^\dagger\! \cdot\! b_{\geq 1}\, u_\ell^\dagger \cdot T\cdot e_j(q)\, \eta_m(q)=w(q)^\dagger\cdot b_{\geq 1}+\frac{1}{N}y(q),
\end{align*}
where $y:\mathbb{R}^d\longrightarrow \mathbb{R}$ is defined as $y(t):=-\frac{1}{4}\sum_{\ell }\partial_{\ell} \left(u_\ell\cdot T\cdot e_j(\vec{t}\ )\eta_{m}(\vec{t}\ )\right)$. Furthermore
\begin{align*}
D_\mathcal{V}\big|_{\vec{t}}\ \mathcal{E}_J(z)\! & =\!e_3(z)^\dagger\!\cdot\! T\!\cdot\! e_j\left(\vec{t}\ \right)\!\eta_{m}\!\left(\vec{t}\ \right)\!=\!\sum_{\ell=1}^d\!\left(i\mathfrak{Im}\left[\!\big(u_\ell\!-\!\partial_{t_\ell} f(t)\big)^\dagger\!\cdot\! z\!\right]u_\ell\right)^\dagger\!\cdot\! T\!\cdot\! e_j\left(\vec{t}\ \right)\!\eta_{m}\!\left(\vec{t}\ \right)\\
&=w(t)^\dagger\cdot z+z^\dagger\cdot \widetilde{w}(t).
\end{align*}
Consequently we can rewrite Eq.~(\ref{Equation: Expression for R_J}) as
\begin{align*}
\left(ip'\right)^\dagger\cdot T\cdot e_j(q)\left(1-\mathbb{L}'\right)^{\frac{m}{2}}&=D_\mathcal{V}\big|_{q}\ \mathcal{E}_J\left(b_{\geq 1}\right)+\frac{1}{N}y(q)+\left(ip'\right)^\dagger\cdot T\cdot e_j(q)E^0_{m}.
\end{align*}
Note that $\mathcal{E}_J\left(\vec{t}\, \right)=0$ and $D^2_\mathcal{V}\big|_{0} \mathcal{E}_J=0$. Therefore Eq.~(\ref{Equation: Proof with F_J}) follows from the fact that $F(t):=y(t)-c_J$ is Lipschitz and $F(0)=0$, which implies that there exists a constant $C$ such that $|F(t)|\leq C|t|$.
\end{proof}

For the proof of the following Theorem, we will use various operator estimates derived in Appendices \ref{Appendix: B} and \ref{Appendix: C}.
\begin{theorem}
\label{Theorem: Main A}
Let $J\in \{0,\dots,4\}^2$ be such that $\lambda_J\neq 0$ and let $R_J$ be the residuum defined in Eq.~(\ref{Equation: Taylor residuum}). Then,
\begin{align*}
R_J=o_*\left(\mathbb{T}_N\right),
\end{align*}
with $\mathbb{T}_N$ defined in Eq.~(\ref{Equation: Definition T}) and the $o_*(\cdot)$ notation from Definition \ref{Definition: Smallness}.\end{theorem}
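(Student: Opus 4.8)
The plan is to treat each of the cases for $J$ distinguished in Lemma~\ref{Lemma: Taylor A} separately, since the structure of the residuum $R_J$ is genuinely different in each case, and to reduce every case to the two basic tools available: the estimates for the square-root residua $E_m^0, E_m^1, E_m^2$ proved in Appendix~\ref{Appendix: C}, and the operator bounds of Appendix~\ref{Appendix: B} controlling products of the building blocks $h_i$ against $\mathbb{T}_N$. Throughout we work on $\mathcal{W}_N\mathcal{F}_{\leq M}$ with $M/N$ small, and we must show that for every $\epsilon>0$ one can choose $\delta>0$ with $\pm\mathfrak{Re}[\lambda R_J]\le \epsilon\,\mathbb{T}_N$ on that subspace whenever $M/N\le\delta$; equivalently, using Remark~\ref{Remark: Projection Formalism}, we conjugate everything by $\mathcal{W}_N$ and bound $\pi_{M,N}\,\mathfrak{Re}[\lambda R_J]\,\pi_{M,N}$.

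First I would dispose of the easy endpoints. For $e_J=2$, Lemma~\ref{Lemma: Taylor A} gives $R_{(4,4)}=0$, and $R_{(3,3)},R_{(3,4)},R_{(4,3)}$ are explicit expressions built from the difference $ip'-ip = -i\,\mathfrak{Im}[\partial_\cdot f(q)^\dagger\cdot b_{>d}]$, which by Lemma~\ref{Lemma: Transformation Laws} is linear in $b_{>d}$ with coefficient functions $f(t)$ vanishing to second order at $t=0$ (since $D|_0 F=\mathrm{id}$, so $f$ has vanishing linear part). Hence $ip'-ip=O_*(|q|\,b_{>d})$ in the sense of Appendix~\ref{Appendix: B}, and pairing this with another factor of $ip$, $ip'$ or $b_{>d}$ against the kinetic energy $T$ produces, after a Cauchy--Schwarz with a small parameter, a bound of the form $\epsilon\bigl(p^\dagger\cdot p + b_{>d}^\dagger\cdot(T+1)\cdot b_{>d}\bigr)$ times a power of $|q|$; on $\mathcal{W}_N\mathcal{F}_{\leq M}$ one has $\|q\|^2\lesssim M/N$ (this is the quantum analogue of ``$z$ small'', and follows since $q$ involves $b_j=\frac{1}{\sqrt N}a_j$), so these terms are $o_*(\mathbb{T}_N)$. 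For $m_J=2$, i.e. $J=(0,0)$, we have $R_{(0,0)}=(u_0^\dagger\cdot T\cdot u_0)\,E_2^2$, so the claim reduces directly to the bound $\pm\pi_{M,N} E_2^2\,\pi_{M,N}=o_*(\mathbb{T}_N)$ from Appendix~\ref{Appendix: C}, using that $\mathbb{T}_N\ge \frac1N$ absorbs the $c_m/N$ correction and that $\mathbb{T}_N \ge p^\dagger\cdot p$ absorbs the quadratic remainder $\frac12 D^2_{\mathcal V}|_0\eta_2(b_{\geq1})$.

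The substantive cases are $e_J=0$ with $m_J<2$, where $R_J=(h_i^\dagger\cdot T\cdot h_j)\,E^1_{m_J}$ with $h_i,h_j\in\{1_{\mathcal F_0}\otimes u_0,\ q,\ f(q)\}$ (no $p'$ or $b_{>d}$), and $e_J=1$, where $R_J=(h_i^\dagger\cdot T\cdot h_j)\,E^0_{m_J}+F_J(q)/N$ with exactly one of $h_i,h_j$ equal to $ip'$ or $b_{>d}$. In the $e_J=0$ case the prefactor $h_i^\dagger\cdot T\cdot h_j$ is a bounded function of $q$ alone (using $u_0,f(t)\in\mathrm{dom}[T]$ from Assumption~\ref{Assumption: Part II} and smoothness of $f$), so after conjugation the whole term is a function of $q$ times $E^1_{m_J}$; the Appendix~\ref{Appendix: C} bound $\pm E^1_m=o_*(\mathbb{T}_N)$ — here one keeps one order of Taylor expansion so the leading surviving term is $\frac12 D^2_{\mathcal V}|_0\eta_m(b_{\geq1})$, controlled by $p^\dagger\cdot p+b_{>d}^\dagger\cdot(T+1)\cdot b_{>d}$ — combined with boundedness of the $q$-prefactor on $\mathcal{W}_N\mathcal{F}_{\leq M}$ gives the conclusion. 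In the $e_J=1$ case the linear factor $ip'$ or $b_{>d}$ is handled by Cauchy--Schwarz: write $\mathfrak{Re}[\lambda(h_i^\dagger\cdot T\cdot h_j)E^0_m]\le t\,(h_i^\dagger\cdot T\cdot h_i) + t^{-1}(\text{prefactor})^\dagger E^0_m\,T\,E^0_m(\text{prefactor})$ type bound, so one factor yields $h_i^\dagger\cdot T\cdot h_i\lesssim p^\dagger\cdot p$ or $b_{>d}^\dagger\cdot T\cdot b_{>d}$ and the other carries $E^0_m=o_*(1)$ (again from Appendix~\ref{Appendix: C}, now only needing $\pm E^0_m\to 0$ in the $o_*$ sense); choosing $t$ small makes the first piece $\epsilon\,\mathbb{T}_N$ and the smallness of $E^0_m$ makes the second $\epsilon\,\mathbb{T}_N$. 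Finally $F_J(q)/N$: since $|F_J(t)|\le C|t|$ we get $|F_J(q)|\le C|q|\le C'(|q|^2+1)$, and on $\mathcal{W}_N\mathcal{F}_{\leq M}$ we have $\|q\|^2\lesssim M/N$, so $F_J(q)/N = o_*(1/N)=o_*(\mathbb{T}_N)$.

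\textbf{Main obstacle.} The delicate point is the square-root Taylor bounds of Appendix~\ref{Appendix: C}: one must show that $E^0_m, E^1_m, E^2_m$ — defined via $\mathcal{W}_N$-conjugation of $(1-\mathbb{L}')^{m/2}$, so that $\mathbb{L}'$ and $q$ do not commute — are small in the $o_*$ sense relative to $\mathbb{T}_N$, uniformly in $M/N\to 0$. This requires quantitative control of the operator Taylor remainder of the (non-smooth at $1$, but here cut off by $\chi$) function $x\mapsto (1-x)^{m/2}$ applied to $\mathbb{L}'$, together with bookkeeping of the commutators $[\mathbb{L}',q_j]$ and $[\mathbb{L}', b_{>d}]$ that appear when one expands around $q$ rather than around a c-number; controlling those commutators and showing the resulting error is genuinely of the next order in $\mathbb{L}'\sim\mathcal N/N\lesssim M/N$ is the heart of the matter. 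All the rest is organized Cauchy--Schwarz and the a priori bound $\|q\|^2\lesssim M/N$ on $\mathcal{W}_N\mathcal{F}_{\leq M}$.
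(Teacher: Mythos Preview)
Your overall strategy --- case-split via $e_J$ and $m_J$ as in Lemma~\ref{Lemma: Taylor A} and reduce each case to the tools in Appendices~\ref{Appendix: B} and~\ref{Appendix: C} --- matches the paper's proof, and your treatment of $e_J=2$, $e_J=1$ and the $F_J(q)/N$ term is correct in spirit (the observation that $\partial_j f(0)=0$ is exactly what feeds into $(p'-p)^\dagger\cdot(p'-p)=o_*(\mathbb{T}_N)$ via Lemma~\ref{Lemma: b}). However, there is a real gap in the $e_J=0$ cases, coming from a misconception about the $E_m^k$ hierarchy.

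You assert that $E_m^1=o_*(\mathbb{T}_N)$ and that $E_2^2=o_*(\mathbb{T}_N)$ follows from Appendix~\ref{Appendix: C} by ``absorbing'' the second-order Taylor term into $\mathbb{T}_N$. This fails: that term is of the \emph{same} order as $\mathbb{T}_N$, not smaller. For instance one computes (cf.\ the proof of Theorem~\ref{Theorem: Main B})
\[
E_2^1 \;=\; -\Bigl((p')^\dagger\!\cdot p' + b_{>d}^\dagger\!\cdot b_{>d} - \tfrac{d}{2N}\Bigr),
\]
and $(p')^\dagger\cdot p'$ is comparable to $p^\dagger\cdot p$, a summand of $\mathbb{T}_N$ itself. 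Appendix~\ref{Appendix: C} only yields $(E_m^0)^2=o_*(\mathbb{Q}_N)$ and, crucially, $V(q)\,E_1^1=o_*(\mathbb{Q}_N)$ for $V$ \emph{vanishing at the origin} (Lemma~\ref{Lemma: Taylor of the square root}); it never gives $E_m^1$ or $E_m^2$ alone.

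What rescues the $e_J=0$, $m_J=1$ case is not boundedness of the prefactor but the fact that it \emph{vanishes at $t=0$}: since one index lies in $\{1,2\}$ and $e_1(0)=e_2(0)=0$, the function $V(t):=e_i(\vec t\,)^\dagger\cdot T\cdot e_j(\vec t\,)$ satisfies $|V(t)|\le C|t|$, which is precisely the hypothesis of Lemma~\ref{Lemma: Taylor of the square root}. For $J=(0,0)$ there is no vanishing prefactor, so one must compute $E_2^2$ explicitly: expanding $1-\mathbb{L}'$ via Lemma~\ref{Lemma: Transformation Laws} one finds that all genuinely quadratic pieces cancel against $\eta_2(q)+D_\mathcal{V}|_q\eta_2(b_{\ge1})+\tfrac12 D^2_\mathcal{V}|_0\eta_2(b_{\ge1})+\tfrac{c_2}{N}$, leaving only the cross terms $-p^\dagger\cdot(p'-p)-(p'-p)^\dagger\cdot p'$. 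These are $o_*(\mathbb{T}_N)$ by combining $(p'-p)^\dagger\cdot(p'-p)=o_*(\mathbb{T}_N)$ with $p^\dagger\cdot p,\ (p')^\dagger\cdot p'=O_*(\mathbb{T}_N)$ and the operator Cauchy--Schwarz of Lemma~\ref{Lemma: O_* results}. In short: the order-$\mathbb{T}_N$ piece of the higher Taylor residua does not disappear by general principles; either a small prefactor kills it, or an explicit computation shows it is a cross term with the genuinely small quantity $p'-p$.
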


\begin{proof}
Recall the definitions in Lemma \ref{Lemma: Taylor A} of $e_J:=\left|\{l\in J:l\in \{3,4\}\}\right|$, which counts how many of the indices in $J=(i,j)$ are equal to $3$ or $4$, $m_J:=\left|\{l\in J:l=0\}\right|$, which counts how many of the indices are zero, and the residuum $R_J$ defined in Eq.~(\ref{Equation: Taylor residuum}). In order to prove the statement of the Theorem, we are going to verify $R_J=o_*\left(\mathbb{T}_N\right)$ for all $J$ with $\lambda_J\neq 0$.

\textit{\textcolor{blue}{The case $J=(0,0)$}}: In this case we have the identity $R_{(0,0)}=\left(u_0^\dagger\cdot T\cdot u_0\right) E_2^2$, hence we have to verify $E_2^2=o_*\left(\mathbb{T}_N\right)$. In order to do this, recall the function $\eta_2(x)=1-\|F(x)\|^2$ from Eq.~(\ref{Equation: eta_m}) and let us compute using Lemma \ref{Lemma: Transformation Laws}
\begin{align*}
\nonumber 1-\mathbb{L}'&=1-\left(q+f(q)+ip'+b_{>d}\right)^\dagger\cdot \left(q+f(q)+ip'+b_{>d}\right)\\
&=1- q^\dagger\cdot  q- f(q)^\dagger\cdot  f(q)- f(q)^\dagger\cdot b_{>d}-b_{>d}^\dagger\cdot  f(q)\\
&\ \ \ \ -b_{>d}^\dagger\cdot b_{>d}-\left( p^\dagger\cdot  p\right)+\frac{d}{2N}- p^\dagger\cdot \left( p'- p\right)-\left( p'- p\right)^\dagger\cdot  p'\\
&=\eta_2(q)+D_\mathcal{V}\big|_q \eta_2\big(b_{\geq 1}\big)+D^2_\mathcal{V}\big|_0 \eta_2\big(b_{\geq 1}\big)+\frac{d}{4N}- p^\dagger\cdot \left( p'- p\right)-\left( p'- p\right)^\dagger\cdot  p',
\end{align*}
where we used $\eta_2(q)=1-q^\dagger\cdot q-f(q)^\dagger\cdot f(q)$ and $ p^\dagger\cdot  p=\frac{1}{4}\sum_{j=1}^d\left( 2b_j^\dagger b_j-b_j^2-\left(b_j^\dagger\right)^2\right)+\frac{d}{4N}$. Note that $\frac{c_2}{N}=\frac{d}{4N}$, where $c_2$ is the constant from Definition \ref{Definition: Taylor of the square root}. Since $ p^2\leq \mathbb{T}_N$, it is clear that $ p^2=O_*\left(\mathbb{T}_N\right)$. In Lemmata \ref{Lemma: p'} and \ref{Lemma: b}, we will verify that $\left( p'\right)^2=O_*\left(\mathbb{T}_N\right)$ and $\left( p'- p\right)^2=o_*\left(\mathbb{T}_N\right)$. Therefore we obtain by the operator Cauchy--Schwarz inequality in the auxiliary Lemma \ref{Lemma: O_* results} that $ p^\dagger\cdot \left( p'- p\right)$ as well as $\left( p'- p\right)^\dagger\cdot  p'$ are of order $o_*\left(\mathbb{T}_N\right)$. We conclude $E_2^2=o_*\left(\mathbb{T}_N\right)$.\\

\textit{\textcolor{blue}{The case $e_J=0$, with $J\neq (0,0)$}}: In this case $m_J\in \{0,1\}$ and the error is given by
\begin{align*}
R_J= h_i^\dagger\cdot  T\cdot  h_j\, E^1_{m_J}=e_i\left(q\right)^\dagger\cdot T\cdot e_j\left(q \right)\, E^1_{m_J}.
\end{align*}
We clearly have $E_0^1=0$. For $m_J=1$, let us define the function $V(t):=e_i\left(\vec{t}\ \right)^\dagger\cdot T\cdot e_j\left(\vec{t}\ \right)$, which satisfies $V(t)\leq C |t|$ for a constant $C$. In Lemma \ref{Lemma: Taylor of the square root} we will then verify that $V(q) E^1_{1}=o_*\left(\mathbb{T}_N\right)$.

\textit{\textcolor{blue}{The case $e_J=1$}}: In this case the error reads $R_J=\left( h^\dagger_i\cdot  T\cdot  h_j\right)\, E^0_{m_J}+\frac{F_J(q)}{N}$, where $F_J(t)\leq C|t|$ for some constant $C$. Using Lemma \ref{Lemma: Taylor of the square root} and Lemma \ref{Lemma: Function of q} from the Appendix, we obtain that $\left(E^0_1\right)^\dagger E^0_1=o_*\left(T_N\right)$ and $\frac{F_J(q)}{N}=o_*\left(\frac{1}{N}\right)$. Regarding the first term, note that $E^0_0=0$. Hence, we assume w.l.o.g. $m_J=1$. We are done once we can verify 
\begin{align}
\label{Equation: Part of main proof A}
\left( h^\dagger_i\cdot  T\cdot  h_j\right)\cdot \left( h^\dagger_i\cdot  T\cdot  h_j\right)^\dagger=O_*\left(\mathbb{T}_N\right)
\end{align}
in case one of the indices $i,j$ is in $\{3,4\}$ and the other is zero. Let us first assume $i\in \{3,4\}$. Then $ h^\dagger_i\cdot  T\cdot  h_j= h_i^\dagger\cdot w$, with $w:=T\cdot u_0\in \mathcal{H}$, and therefore Eq.~(\ref{Equation: Part of main proof A}) follows from Lemma \ref{Lemma: p'} in the case $i=3$ and from Lemma \ref{Lemma: b} in the case $i=4$. The proof of the case $j\in \{3,4\}$ follows analogously.

\textit{\textcolor{blue}{The case $e_J=2$}}: In this case, the error is a linear combination of $\left(i p'-i p\right)^\dagger\cdot  T\cdot  h_j$ and $ h_i^\dagger\cdot  T\cdot \left(i p'-i p\right)$ with $ h_i, h_j\in \{ p',b_{>d}\}$. Note that $A:=\sqrt{T} \left(1_{\mathcal{H}}-\pi_{>d}\right)$ is bounded, and therefore
\begin{align*}
&\left(i p'-i p\right)^\dagger\cdot  T\cdot \left(i p'-i p\right)=\left(i p'-i p\right)^\dagger\cdot  A^\dagger  A\cdot \left(i p'-i p\right)\\
&\ \ \ \ \leq \|A\|^2\ \left(i p'-i p\right)^\dagger\cdot \left(i p'-i p\right)=o_*\left(\mathbb{T}_N\right)
\end{align*}
by Lemma \ref{Lemma: b}. Similarly, we have $\left( p'\right)^\dagger\cdot  T\cdot  p'\leq \|A\|^2 (p')^\dagger\cdot p'=O_*\left(\mathbb{T}_N\right)$ by Lemma \ref{Lemma: p'}. Hence Lemma \ref{Lemma: O_* results} tells us that $\left(i p'- i p\right)^\dagger\cdot  T\cdot  h_j$ and $ h_i^\dagger\cdot  T\cdot \left(i p'- i p\right)$ are of order $o_*\left(\mathbb{T}_N\right)$.
\end{proof}

\begin{cor}
\label{Corollary: Main A}
Recall the functional $\mathcal{E}_A$ defined in Eq.~(\ref{Equation: A Decomposition classic}) and let us define the constant $c:=\sum_{J\in \{0,\dots,4\}^2}\lambda_J c_J$. Then 
\begin{align*}
\mathcal{W}_N \widetilde{A}_N \mathcal{W}_N^{-1}&=\mathcal{E}_A(q)+D_\mathcal{V}\big|_{q} \mathcal{E}_A\left(b_{\geq 1}\right)+\frac{1}{2}D^2_\mathcal{V}\big|_{0} \mathcal{E}_A\left(b_{\geq 1}\right)+\frac{c}{N}+o_*\left(\mathbb{T}_N\right).
\end{align*}
\end{cor}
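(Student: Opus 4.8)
\textbf{Proof plan for Corollary \ref{Corollary: Main A}.} The statement is a bookkeeping consequence of Lemma \ref{Lemma: A Decomposition}, Lemma \ref{Lemma: Taylor A} and Theorem \ref{Theorem: Main A}; the only work is to assemble the decomposition term by term. First I would recall from Lemma \ref{Lemma: A Decomposition} the two parallel expansions
\begin{align*}
\mathcal{W}_N \widetilde{A}_N \mathcal{W}_N^{-1}=\sum_{J\in\{0,\dots,4\}^2}\lambda_J\,\mathfrak{Re}\left[H_J\right],\qquad \mathcal{E}_A(z)=\sum_{J\in\{0,\dots,4\}^2}\lambda_J\,\mathfrak{Re}\left[\mathcal{E}_J(z)\right],
\end{align*}
and then, for every index $J$ with $\lambda_J\neq0$, substitute the defining identity of the residuum from Eq.~(\ref{Equation: Taylor residuum}),
\begin{align*}
H_J=\mathcal{E}_J(q)+D_\mathcal{V}\big|_q\mathcal{E}_J\left(b_{\geq 1}\right)+\tfrac12 D^2_\mathcal{V}\big|_0\mathcal{E}_J\left(b_{\geq 1}\right)+\tfrac{c_J}{N}+R_J .
\end{align*}
Here the functional‑calculus objects $\mathcal{E}_J(q)$, $D_\mathcal{V}|_q\mathcal{E}_J(b_{\geq1})$, $D^2_\mathcal{V}|_0\mathcal{E}_J(b_{\geq1})$ are the ones built in Definition \ref{Definition: Quantization}, which are legitimate because $u_1,\dots,u_d$ lie in the form domain of $T$ by Assumption \ref{Assumption: Part II}.

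Next I would note that the three quantization operations $\mathcal{E}\mapsto\mathcal{E}(q)$, $L\mapsto L_q(b_{\geq1})$, $\Lambda\mapsto\Lambda(b_{\geq1})$ are $\mathbb{R}$-linear in their arguments, which is immediate from the uniqueness of the decompositions used in Definition \ref{Definition: Quantization}, and that each of them commutes with $\mathfrak{Re}[\cdot]$: for the linear and the quadratic quantization one checks directly, writing out the $w,\widetilde w$, respectively $Q,G,\widetilde G$, data, that $\mathfrak{Re}\left[L_q(b_{\geq1})\right]=(\mathfrak{Re}\,L)_q(b_{\geq1})$ and $\mathfrak{Re}\left[\Lambda(b_{\geq1})\right]=(\mathfrak{Re}\,\Lambda)(b_{\geq1})$, and for $\mathcal{E}(q)$ it is trivial. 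Since each $c_J$ is real (so $\mathfrak{Re}[c_J/N]=c_J/N$) and the index set is finite, multiplying the substituted identity by $\lambda_J$, taking real parts and summing gives
\begin{align*}
\mathcal{W}_N \widetilde{A}_N \mathcal{W}_N^{-1}=\mathcal{E}_A(q)+D_\mathcal{V}\big|_q\mathcal{E}_A\left(b_{\geq1}\right)+\tfrac12 D^2_\mathcal{V}\big|_0\mathcal{E}_A\left(b_{\geq1}\right)+\frac{c}{N}+\sum_{J}\lambda_J\,\mathfrak{Re}\left[R_J\right],
\end{align*}
with $c=\sum_J\lambda_J c_J$ as in the statement, where I have used $\mathcal{E}_A=\sum_J\lambda_J\mathfrak{Re}[\mathcal{E}_J]$ from Lemma \ref{Lemma: A Decomposition} together with the linearity and $\mathfrak{Re}$-compatibility of the quantizations to identify the first three terms.

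Finally I would dispose of the remainder $\sum_J\lambda_J\mathfrak{Re}[R_J]$. By Theorem \ref{Theorem: Main A} one has $R_J=o_*(\mathbb{T}_N)$ for every $J$ with $\lambda_J\neq0$, and from the projection‑formalism characterisation in Remark \ref{Remark: Projection Formalism} it follows that the class $o_*(\mathbb{T}_N)$ is stable under multiplication by real scalars, under $\mathfrak{Re}[\cdot]$, and under finite sums — the last by the triangle inequality, replacing the finitely many error functions $\epsilon(M/N)$ by their sum. Hence $\sum_J\lambda_J\mathfrak{Re}[R_J]=o_*(\mathbb{T}_N)$, which finishes the proof. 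The only point that needs any care is this last stability argument together with the verification that quantization commutes with $\mathfrak{Re}[\cdot]$ and with real‑linear combinations; there is no genuine analytic obstacle here, since every nontrivial estimate has already been absorbed into Theorem \ref{Theorem: Main A}.
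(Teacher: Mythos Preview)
Your proposal is correct and follows exactly the same approach as the paper, which simply states that the corollary follows from combining Lemma \ref{Lemma: A Decomposition} and Theorem \ref{Theorem: Main A}. You have merely spelled out the bookkeeping details (linearity and $\mathfrak{Re}$-compatibility of the quantizations, stability of $o_*(\mathbb{T}_N)$ under finite real combinations) that the paper leaves implicit.
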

\begin{proof}
The statement follows from combining Lemma \ref{Lemma: A Decomposition} and Theorem \ref{Theorem: Main A}.
\end{proof}

\subsection{Taylor Expansion of $\mathcal{W}_N \widetilde{B}_N \mathcal{W}_N^{-1}$}
\label{Subsection: Results in the transformed picture-B}
Similar to the previous subsection, we introduce atoms $H_J$ in Definition \ref{Definition: Hamilton Components B} as well as their classical counterparts $\mathcal{E}_J$ in Definition \ref{Definition: Functional Components B}. In Lemma \ref{Lemma: B Decomposition} we explain how $\mathcal{W}_N \widetilde{B}_N \mathcal{W}_N^{-1}$ and $\mathcal{E}_B$ can be written in terms of $H_J$ and $\mathcal{E}_J$, respectively.
\begin{defi}
\label{Definition: Hamilton Components B}
Recall the definition of $ h_i:\mathrm{dom}[\mathcal{N}]\longrightarrow \mathcal{F}_0\otimes \mathcal{H}$ from Definition \ref{Definition: Hamilton Components A}. For a multi-index $J=(i,j,k,\ell)$ with $i,j,k,\ell\in \{0,\dots,4\}$, we define an operator $H_J$ on $\mathcal{W}_N\mathcal{F}_{\leq N}$ as
\begin{align*}
H_J:&=\left( h_i\ \underline{\otimes}\  h_j\right)^\dagger \cdot  \hat{v}\cdot h_k\ \underline{\otimes}\  h_\ell  \left(1-\mathbb{L}'\right)^{\frac{m_J}{2}},
\end{align*}
where $m_J$ counts how many of the indices $i,j,k,\ell$ are zero.\\
\end{defi}

\begin{defi}
\label{Definition: Functional Components B}
Recall the definition of $e_i:\mathcal{H}_0 \longrightarrow \mathcal{H}$ and $\eta_m$ from Definition \ref{Definition: Functional Components A}. For a multi-index $J=(i,j,k,\ell)$ with $i,j,k,\ell\in \{0,\dots,4\}$, we define $\mathcal{E}_J:\mathcal{H}_0\cap \mathrm{dom}[T] \longrightarrow \mathbb{C}$
\begin{align*}
\mathcal{E}_J(z):&=\Big[e_i(z)\otimes e_j(z)\Big]^\dagger\cdot \hat{v}\cdot e_k(z)\otimes e_\ell(z)\ \eta_{m_J}(z),
\end{align*}
where $m_J$ counts how many of the indices $i,j,k,\ell$ are zero and $\eta_m$ are the functions defined in Eq.~(\ref{Equation: eta_m}).\\
\end{defi}

\begin{lem}
\label{Lemma: B Decomposition}
Let us define for all $i,j,k,\ell\in \{1,\dots,4\}$ the coefficients $\lambda_{(0,0,0,0)}:=\frac{1}{2}$, $\lambda_{(i,0,0,0)}:=2$, $\lambda_{(i,j,0,0)}:=\lambda_{(i,0,k,0)}:=\lambda_{(0,j,k,0)}:=1$, $\lambda_{(i,j,k,0)}:=2$, $\lambda_{(i,j,k,\ell)}:=\frac{1}{2}$ and all other coefficients are defined as $\lambda_J:=0$. Then
\begin{align*}
\mathcal{W}_N\, \widetilde{B}_N\, \mathcal{W}_N^{-1}&=\sum_{J\in \{0,\dots,4\}^4}\lambda_J\ \mathfrak{Re}\left[H_J\right].
\end{align*}
 Furthermore, the functional $\mathcal{E}_B$ defined as
\begin{align}
\label{Equation: B Decomposition classic}\mathcal{E}_B(z)&:=\sum_{J\in \{0,\dots,4\}^4}\lambda_J\ \mathfrak{Re}\left[\mathcal{E}_J(z)\right],
\end{align}
is an extension of $\mathcal{E}'_B\big|_{B_r}$ defined in Eq.~(\ref{Equation: Classic}), where $B_r:=\{z\in \mathcal{H}_0\cap \mathrm{dom}\left[T\right]:\|z\|<r\}$ and $r>0$ is a constant such that $\|F(z)\|<\frac{1}{2}$ for all $z\in \mathcal{H}_0$ with $\|z\|<r$.
\end{lem}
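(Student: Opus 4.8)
The argument is the exact four-letter analogue of the proof of Lemma~\ref{Lemma: A Decomposition}: both asserted identities come out by expanding a degree-four expression and collecting equal terms, with no analytic input beyond the transformation laws of Lemma~\ref{Lemma: Transformation Laws}.

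For the operator identity I would conjugate the representation $\widetilde{B}_N=\sum_{r=0}^{6}\mathfrak{Re}\big[B_r\sqrt{1-\mathbb{L}}^{\,\beta_r}\big]$ from Eq.~(\ref{Equation: tilde B}) by $\mathcal{W}_N$. Since $\mathcal{W}_N$ is unitary it commutes with $\mathfrak{Re}[\,\cdot\,]$, and $\mathcal{W}_N\mathbb{L}\mathcal{W}_N^{-1}=\mathbb{L}'$ by Definition~\ref{Definition: Unitary Transformation}, so everything reduces to computing $\mathcal{W}_N B_r\mathcal{W}_N^{-1}$. By Definition~\ref{Definition: A and B}, each $B_r$ is a quartic expression in the two letters $b_{\geq 1}$ and $u_0=h_0$ in which exactly $\beta_r$ of the four tensor slots carry $u_0$; using $\mathcal{W}_N b_{\geq 1}\mathcal{W}_N^{-1}=h_1+h_2+h_3+h_4$ (Lemma~\ref{Lemma: Transformation Laws}) and $\mathcal{W}_N h_0\mathcal{W}_N^{-1}=h_0$, conjugation replaces each $b_{\geq 1}$-slot by a sum over $\{1,\dots,4\}$. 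Because $\beta_r$ equals the number $m_J$ of zero entries of the resulting multi-index $J=(i,j,k,\ell)$, the accompanying factor $\sqrt{1-\mathbb{L}'}^{\,\beta_r}$ becomes $(1-\mathbb{L}')^{m_J/2}$, so that $\mathcal{W}_N B_r\mathcal{W}_N^{-1}\sqrt{1-\mathbb{L}'}^{\,\beta_r}$ is a sum of the atoms $H_J$ of Definition~\ref{Definition: Hamilton Components B}. Running through $r=0,\dots,6$ and summing, the multiplicities $\tfrac12,2,1,1,1,2,\tfrac12$ of $B_0,\dots,B_6$ reappear as the claimed coefficients $\lambda_J$. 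The quartic term coming from $B_6$, as well as the products with $\mathbb{L}^2$, should be read as quadratic forms on $\bigcup_M\mathcal{W}_N\mathcal{F}_{\leq M}$, exactly as in the Remark after Definition~\ref{Definition: Basic Fock}, so that all of these manipulations are legitimate.

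For the functional identity I would use that, for $z\in B_r$, the embedding $\iota$ of Eq.~(\ref{Equation: Embedding}) and the map $F$ of Eq.~(\ref{Equation: Definition F}) give $\iota(F(z))=\sqrt{1-\|F(z)\|^2}\,e_0(z)+\sum_{i=1}^{4}e_i(z)$ with $e_i$ as in Definition~\ref{Definition: Functional Components A}, and that $\|F(z)\|<\tfrac12$ there makes every cut-off $\chi$ occurring in the functions $\eta_m$ of Eq.~(\ref{Equation: eta_m}) equal to $1$, hence $\big(\sqrt{1-\|F(z)\|^2}\big)^{m}=\eta_m(z)$ on $B_r$. Substituting this decomposition of $\iota(F(z))$ into $\mathcal{E}'_B(z)=\tfrac12\big(\iota(F(z))\otimes\iota(F(z))\big)^{\dagger}\cdot\hat{v}\cdot\iota(F(z))\otimes\iota(F(z))$ from Eq.~(\ref{Equation: Classic}) and expanding the four-fold product yields $\mathcal{E}'_B(z)=\tfrac12\sum_{J\in\{0,\dots,4\}^4}\mathcal{E}_J(z)$ with $\mathcal{E}_J$ from Definition~\ref{Definition: Functional Components B}, since each monomial acquires the scalar $\big(\sqrt{1-\|F(z)\|^2}\big)^{m_J}=\eta_{m_J}(z)$. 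It then remains to check the combinatorial identity $\tfrac12\sum_{J}\mathcal{E}_J(z)=\sum_{J}\lambda_J\,\mathfrak{Re}[\mathcal{E}_J(z)]$. This follows from the two involutions $(i,j,k,\ell)\mapsto(k,\ell,i,j)$ and $(i,j,k,\ell)\mapsto(j,i,\ell,k)$, which act on $\mathcal{E}_J$ by complex conjugation (self-adjointness of $\hat{v}$) and trivially (evenness of $v$, i.e.\ $(f\otimes g)^{\dagger}\cdot\hat{v}\cdot h\otimes k=(g\otimes f)^{\dagger}\cdot\hat{v}\cdot k\otimes h$), respectively: on each orbit $O$ of the group they generate, all the $\mathcal{E}_J$, $J\in O$, equal $\mathcal{E}_{J_0}$ or $\overline{\mathcal{E}_{J_0}}$ for a fixed representative $J_0$, so $\tfrac12\sum_{J\in O}\mathcal{E}_J=\tfrac{|O|}{2}\mathfrak{Re}[\mathcal{E}_{J_0}]$, while $\lambda_J$ is supported on the representatives with $\sum_{J\in O}\lambda_J=\tfrac{|O|}{2}$, which one verifies directly on the finitely many orbit types. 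This is the same accounting that reconstitutes the full quartic interaction from the seven collected pieces $B_0,\dots,B_6$.

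There is no substantive analytic content; the one place needing care — and hence the only real obstacle — is the coefficient bookkeeping: making sure that the passage between the symmetrized, reduced-index form $\sum_J\lambda_J\mathfrak{Re}[\,\cdot\,]$ and the fully expanded four-fold sum is carried out consistently both on the operator side (via the transformation laws and the identity $\beta_r=m_J$) and on the functional side (via the two swap symmetries of $\hat{v}$), together with the routine point of interpreting the unbounded quartic and $\mathbb{L}$-dependent expressions as quadratic forms.
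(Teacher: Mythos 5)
Your proposal is correct and follows the same route the paper indicates (the paper only says the proof "works analogously to the proof of Lemma~\ref{Lemma: A Decomposition}"): conjugate the seven-term representation of $\widetilde{B}_N$ from Eq.~(\ref{Equation: tilde B}), replace $b_{\geq 1}\mapsto h_1+h_2+h_3+h_4$ and $\mathbb{L}\mapsto\mathbb{L}'$, observe $\beta_r=m_J$, and expand $\iota(F(z))=\sqrt{1-\|F(z)\|^2}\,e_0+\sum_{i\geq 1}e_i(z)$ on the classical side where $\chi\equiv 1$. The orbit/involution accounting you add — $(i,j,k,\ell)\mapsto(k,\ell,i,j)$ conjugating and $(i,j,k,\ell)\mapsto(j,i,\ell,k)$ fixing $\mathcal{E}_J$ — is a clean and correct way to make the implicit bookkeeping $\sum_{J\in O}\lambda_J=\frac{|O|}{2}$ explicit, and matches the coefficients $\tfrac12,2,1,1,1,2,\tfrac12$ carried by $B_0,\dots,B_6$.
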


The proof of Lemma \ref{Lemma: B Decomposition} works analogously to the proof of Lemma \ref{Lemma: A Decomposition}. Following the strategy from Subsection \ref{Subsection: Results in the transformed picture-A} we are going to verify that the residuum $R_J$ 
\begin{align}
\label{Equation: Taylor residuum B}
R_J:&=H_J-\mathcal{E}_J(q)-D_\mathcal{V}\big|_{q} \mathcal{E}_J\left(b_{\geq 1}\right)-\frac{1}{2}D^2_\mathcal{V}\big|_{0} \mathcal{E}_J\left(b_{\geq 1}\right)-\frac{c_J}{N}
\end{align}
is small, where the constant $c_J$ are given by $c_{(0,0,0,0)}:=-\frac{1}{8}\sum_{j=1}^d \partial^2_{t_j}\big|_{t=0}\mathcal{E}_{(0,0,0,0)}\left(\vec{t}\ \right)$ and
\begin{align}
\nonumber c_{(3,3,0,0)}:&=-\frac{1}{8}\sum_{j=1}^d \partial_{t_j}^2\big|_{t=0}\mathcal{E}_{(1,1,0,0)}\left(\vec{t}\ \right),\ \ c_{(3,1,0,0)}:=-c_{(3,3,0,0)},\ \ \ c_{(1,3,0,0)}:=c_{(3,3,0,0)},\\
\nonumber c_{(3,0,3,0)}:&=\frac{1}{8}\sum_{j=1}^d \partial_{t_j}^2\big|_{t=0}\mathcal{E}_{(1,0,1,0)}\left(\vec{t}\ \right),\ \ \ \ c_{(1,0,3,0)}:=-c_{(3,0,3,0)},\ \ \ c_{(3,0,1,0)}:=-c_{(3,0,3,0)},\\
\label{Equation: c_J for B}c_{(0,3,3,0)}:&=\frac{1}{8}\sum_{j=1}^d \partial_{t_j}^2\big|_{t=0}\mathcal{E}_{(0,1,1,0)}\left(\vec{t}\ \right),\ \ \ \ c_{(0,1,3,0)}:=-c_{(0,3,3,0)},\ \ \ c_{(0,3,1,0)}:=-c_{(0,3,3,0)},
\end{align}
and all other constants are defined as $c_J:=0$. The proof will be split into two parts. In Lemma \ref{Lemma: Taylor B} we derive an explicit representation of the residuum $R_J$ by sorting the operator $H_J$ in terms of powers in $p$ and $b_{>d}$, and in Theorem \ref{Theorem: Main B} we will make sure that this residuum is indeed small compared to the operator $\mathbb{T}_N$.\\

\begin{lem}
\label{Lemma: Taylor B}
Let $J=(i,j,k,\ell)\in \{0,\dots,4\}^4$ be such that $\lambda_J\neq 0$, where $\lambda_J$ is defined in Lemma \ref{Lemma: B Decomposition}, and let $R_J$ be the residuum defined in Eq.~(\ref{Equation: Taylor residuum B}). By distinguishing different cases with the help of the indices $e_J:=\left|\{\ell\in J:\ell\in \{3,4\}\}\right|$ and $m_J:=\left|\{\ell\in J:\ell=0\}\right|$, we can explicitly express $R_J$ as:
\begin{itemize}
\item In the case $m_J=4$, i.e. $J=(0,0,0,0)$: $R_J=\left( h_i\ \underline{\otimes}\  h_j\right)^\dagger\cdot  \hat{v}\cdot  h_k\ \underline{\otimes}\  h_\ell\, E_{4}^2$.
\item In the case $e_J=0$ and $m_J<4$: $R_J=\left( h_i\ \underline{\otimes}\  h_j\right)^\dagger\cdot  \hat{v}\cdot  h_k\ \underline{\otimes}\  h_\ell\, E_{m_J}^1$.
\item In the case $e_J=1$, there exists a constant $C$ and functions $F_J:\mathbb{R}^d\longrightarrow \mathbb{R}$ with $|F_J(t)|\leq C|t|$, such that $R_J=\left( h_i\ \underline{\otimes}\  h_j\right)^\dagger\cdot  \hat{v}\cdot  h_k\ \underline{\otimes}\  h_\ell\, E_{m_J}^0+\frac{F_J(q)}{N}$.
\item In the case $e_J=2$ and $m_J=2$ when two of the indices are $4$:
\begin{align*}
R_J=-\left( h_i\ \underline{\otimes}\  h_j\right)^\dagger\cdot  \hat{v}\cdot  h_k\ \underline{\otimes}\  h_\ell\, \mathbb{L}'. \ \ \ \ \ \ \ \ \ \ \ \ \ \ \ \ \ \ \ \ \ \ \ \ \ \ \ 
\end{align*}
\item In the case $e_J=2$ and $m_J=2$ when one of the indices is $3$ and another one is $4$, let us define $\widetilde{ h}_3:= p'- p$ and $\widetilde{ h}_r:= h_r$ for $r\in \{0,1,2,4\}$. Then,
\begin{align*}
 \ \ \ \ \ \ R_J=-\left( h_i\ \underline{\otimes}\  h_j\right)^\dagger\cdot   \hat{v}\cdot  h_k\ \underline{\otimes}\  h_\ell \, \mathbb{L}'+\left(\widetilde{ h}_i\ \underline{\otimes}\ \widetilde{ h}_j\right)^\dagger\cdot   \hat{v}\cdot \widetilde{ h}_k\ \underline{\otimes}\ \widetilde{ h}_\ell.
\end{align*}
\item In the case $e_J=2$ and $m_J=2$ when two of the indices are $3$, let us define the coefficients $\Lambda_{(3,3,0,0)}^{r,r'}:=-\left(u_r\otimes u_{r'}\right)^\dagger \cdot \hat{v}\cdot u_0\otimes u_0$, $\Lambda_{(3,0,3,0)}^{r,r'}:=\left(u_r\otimes u_0\right)^\dagger \cdot \hat{v}\cdot u_{r'}\otimes u_0$ and $\Lambda_{(0,3,3,0)}^{r,r'}:=\left(u_0\otimes u_{r}\right)^\dagger \cdot \hat{v}\cdot u_{r'}\otimes u_0$. Then,
\begin{align}
\label{Equation: Proof of 03}
R_J\!=&\!-\!\left( h_i\, \underline{\otimes}\,  h_j\right)^\dagger\cdot   \hat{v}\cdot  h_k\, \underline{\otimes}\,  h_\ell \, \mathbb{L}'\!+\!\sum_{r,r'=1}^d\! \Lambda_J^{r,r'}\left[(p'_r\!-\!p_r)\cdot p'_{r'}\!\!+p_r\cdot (p'_{r'}\!-\!p_{r'})\right].
\end{align}
\item In the cases $e_J=2$ and $m_J<2$, respectively $e_J>2$: $R_J=H_J$.
\end{itemize}
\end{lem}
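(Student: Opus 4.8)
The plan is to verify the identity case by case, exactly along the lines of the proof of Lemma~\ref{Lemma: Taylor A}; the only structural novelty is that $\hat v$ is a two-body object, so each $h$- and $e$-factor now appears twice. In every case the starting point is the transformation law $\mathcal{W}_N\, b_{\geq 1}\, \mathcal{W}_N^{-1}=h_1+h_2+h_3+h_4$ of Lemma~\ref{Lemma: Transformation Laws}, together with two facts already exploited for the $A$-atoms: for $j\in\{0,1,2\}$ one has $h_j=e_j(q)$, so these factors are functions of the commuting operators $q_1,\dots,q_d$ and do not contribute to the $b_{\geq 1}$-expansion, while the fluctuation factors $h_3=ip'$ and $h_4=b_{>d}$ are precisely the ones carrying $b_{\geq 1}$, with $e_3(\vec t\,)=e_4(\vec t\,)=0$ and $e_1(0)=e_2(0)=0$. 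The scheme is then: first replace $(1-\mathbb{L}')^{m_J/2}$ by the Taylor approximants $\eta_{m_J}(q)$, $D_{\mathcal V}|_q\eta_{m_J}(b_{\geq 1})$, $\tfrac12 D^2_{\mathcal V}|_0\eta_{m_J}(b_{\geq 1})$ of Definition~\ref{Definition: Taylor of the square root}, producing the errors $E^k_{m_J}$; second, commute the annihilation parts of the fluctuation factors to the right so that what remains matches the linear and quadratic quantizations of Definition~\ref{Definition: Quantization}. The commutator remainders are $O(1/N)$; their $t$-independent part is collected into $c_J/N$ and the rest, which vanishes at $t=0$ and is therefore Lipschitz, into $F_J(q)/N$. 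Throughout one uses $\mathfrak{Re}[\cdot]$ to push every surviving $\sqrt{1-\mathbb{L}'}$ to the right, as in Lemma~\ref{Lemma: B Decomposition}.

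For $e_J\le 1$ this is a verbatim two-body analogue of Lemma~\ref{Lemma: Taylor A}. When $e_J=0$ all four $h$-factors are functions of $q$, so $\big(h_i\,\underline\otimes\,h_j\big)^\dagger\hat v\, h_k\,\underline\otimes\,h_\ell$ equals a real-valued function $g(q)$ of $q$ with $\mathcal{E}_J(\vec t\,)=g(\vec t\,)\eta_{m_J}(\vec t\,)$, and since $g(0)=0$ whenever $m_J<4$, subtracting $\mathcal{E}_J(q)$ and the surviving quantizations of $\eta_{m_J}$ leaves exactly $g(q)E^1_{m_J}$ (resp.\ $g(q)E^2_4$ when $m_J=4$, with $c_{(0,0,0,0)}$ matching $g(0)c_4$); here $c_J=0$ for $m_J<4$. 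When $e_J=1$ the single fluctuation factor must be commuted once past the $q$-functions and the $\eta$-factor, which produces the linear quantization $D_{\mathcal V}|_q\mathcal{E}_J(b_{\geq 1})$ plus a $1/N$ term whose constant and Lipschitz pieces are identified with $c_J/N$ (again $0$ in this case) and $F_J(q)/N$, leaving $\big(h_i\,\underline\otimes\,h_j\big)^\dagger\hat v\, h_k\,\underline\otimes\,h_\ell\, E^0_{m_J}+F_J(q)/N$.

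The real content is $e_J=2$. If $m_J<2$ there is an index in $\{1,2\}$ whose $e$-factor vanishes at $0$ with vanishing $\mathcal V$-derivative, and if $e_J>2$ there are at least three fluctuation factors; in both situations $\mathcal{E}_J(q)$, $D_{\mathcal V}|_q\mathcal{E}_J(b_{\geq 1})$, $\tfrac12 D^2_{\mathcal V}|_0\mathcal{E}_J(b_{\geq 1})$ and $c_J$ all vanish, so $R_J=H_J$ trivially. For $e_J=2$, $m_J=2$ the prefactor $\big(h_i\,\underline\otimes\,h_j\big)^\dagger\hat v\, h_k\,\underline\otimes\,h_\ell$ is already quadratic in the fluctuations, so the Taylor approximation of $(1-\mathbb{L}')$ retains only its value $1$ at zero fluctuations, the difference being $-\big(h_i\,\underline\otimes\,h_j\big)^\dagger\hat v\, h_k\,\underline\otimes\,h_\ell\,\mathbb{L}'$; if both fluctuation indices are $4$ this is all, since $\tfrac12 D^2_{\mathcal V}|_0\mathcal{E}_J(b_{\geq 1})$ reproduces the $b_{>d}$–$b_{>d}$ term exactly. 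If one is $3$ and one is $4$, writing $ip'=ip+i(p'-p)$ and noting that $\tfrac12 D^2_{\mathcal V}|_0\mathcal{E}_J(b_{\geq 1})$ reproduces the $ip$–$b_{>d}$ cross term, the remainder is the single extra piece $\big(\widetilde h_i\,\underline\otimes\,\widetilde h_j\big)^\dagger\hat v\,\widetilde h_k\,\underline\otimes\,\widetilde h_\ell$ with $\widetilde h_3=p'-p$.

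The delicate subcase is $J\in\{(3,3,0,0),(3,0,3,0),(0,3,3,0)\}$: one expands the two copies of $ip'=ip+i(p'-p)$, uses $[ip_r,q_{r'}]=\tfrac1{2N}\delta_{r,r'}$ to move the $p$'s past the $q$-functions sitting in $\hat v\cdot u_0\otimes u_0$, and must check that the $1/N$ constants so generated coincide with the prescribed $c_J$, which are defined through the second derivatives of the lower-order atoms $\mathcal{E}_{(1,1,0,0)}$, $\mathcal{E}_{(1,0,1,0)}$, $\mathcal{E}_{(0,1,1,0)}$ with the indicated sign pattern ($c_{(3,1,0,0)}=-c_{(3,3,0,0)}$, $c_{(1,3,0,0)}=c_{(3,3,0,0)}$, etc.); the point is that the commutators that turn a $q^2$-term into a constant are matched, term by term, against the corresponding second derivative of the $\{1\}$-index atom. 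What is left after this cancellation is the claimed $-\big(h_i\,\underline\otimes\,h_j\big)^\dagger\hat v\, h_k\,\underline\otimes\,h_\ell\,\mathbb{L}'$ together with the symmetrised piece $\sum_{r,r'=1}^d\Lambda_J^{r,r'}\big[(p'_r-p_r)p'_{r'}+p_r(p'_{r'}-p_{r'})\big]$. I expect this matching of the $1/N$ constants against the $c_J$, and keeping all operator orderings consistent with the $\mathfrak{Re}$-symmetrisation implicit in $\tfrac12 D^2_{\mathcal V}|_0$, to be the only nontrivial bookkeeping; everything else is a routine application of Lemma~\ref{Lemma: Transformation Laws} and the definitions of the $E^k_m$.
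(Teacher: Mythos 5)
Your overall scheme --- splitting $(1-\mathbb{L}')^{m_J/2}$ via the Taylor remainders $E^k_{m_J}$ of Definition~\ref{Definition: Taylor of the square root}, treating $h_0,h_1,h_2$ as commuting $q$-functions, commuting the annihilation parts of the fluctuation factors $h_3,h_4$ to the right, and matching what survives against the quantizations of $\mathcal{E}_J$ --- is the paper's approach (the paper only carries out $J=(3,0,3,0)$ explicitly and declares the other cases routine). Two specifics in your write-up are, however, incorrect.

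First, the parenthetical claim that $c_J=0$ whenever $e_J=1$ is false. From Eq.~(\ref{Equation: c_J for B}), the multi-indices $(3,1,0,0),(1,3,0,0),(1,0,3,0),(3,0,1,0),(0,1,3,0),(0,3,1,0)$ all have $e_J=1$ and nonzero $c_J$; for example $c_{(3,1,0,0)}=-c_{(3,3,0,0)}=\tfrac14\sum_{j=1}^d\hat{v}_{jj,00}$. This is not cosmetic: after commuting the single fluctuation factor past the $q$-functions the $1/N$-remainder $y(q)/N$ has $y(0)\neq 0$ precisely for those $J$, and the identity $y(0)=c_J$ is what lets one set $F_J(t):=y(t)-c_J$ and obtain $|F_J(t)|\le C|t|$ as the lemma requires. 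If $c_J$ were taken to be $0$, the residual constant could not be absorbed into $F_J(q)/N$ with $F_J(0)=0$, and the stated form of $R_J$ would fail. (The same caveat already applies in Lemma~\ref{Lemma: Taylor A}, where $c_{(1,3)}=c_{(3,1)}=-c_{(3,3)}\neq 0$ while $e_{(1,3)}=e_{(3,1)}=1$.)

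Second, in the two-$3$s subcase $J\in\{(3,3,0,0),(3,0,3,0),(0,3,3,0)\}$ you attribute the $1/N$ constants to commuting $ip_r$ past ``the $q$-functions sitting in $\hat v\cdot u_0\otimes u_0$.'' There are none: that factor is the fixed scalar matrix $\Lambda^{r,r'}_J$. The constant arises instead from normal-ordering the pure $p$-product, $p_r p_{r'}=\tfrac14\big(b_r^\dagger-b_r\big)\big(b_{r'}-b_{r'}^\dagger\big)$, via $[b_r,b_{r'}^\dagger]=\tfrac1N\delta_{r,r'}$. This is exactly what the paper's computation for $J=(3,0,3,0)$ shows when it expands $\big(ip\,\underline\otimes\,u_0\big)^\dagger\!\cdot\hat v\cdot ip\,\underline\otimes\,u_0$ into the normal-ordered $Q,G$-terms plus $c_{(3,0,3,0)}/N$.
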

\begin{proof}
Similar to the proof of Lemma \ref{Lemma: Taylor A}, the proof of Lemma \ref{Lemma: Taylor B} follows from a straightforward computation for the individual cases. For the purpose of illustration, we will explicitly carry out the computations for the case $J=(3,0,3,0)$, i.e. we are going to verify Eq.~(\ref{Equation: Proof of 03}). Since $\mathcal{E}_{(3,0,3,0)}(z)$ is quadratic in $\pi(z)$, we immediately obtain $\mathcal{E}_{(3,0,3,0)}\left(\vec{t}\ \right)=0$ and $D_\mathcal{V}\big|_{\vec{t}}\ \mathcal{E}_{(3,0,3,0)}=0$. Let us define the coefficients $\lambda_{\alpha, \gamma}:=\left(u_\alpha\otimes u_{0}\right)^\dagger\cdot \hat{v}\cdot u_{\gamma}\otimes u_0$, the operator $Q=\frac{1}{2}\sum_{\alpha,\gamma=1}^d \hat{v}_{\alpha 0,\gamma 0}\ u_{\alpha}\cdot u_{\gamma}^\dagger$ and $G\in \mathcal{H}_0\otimes \mathcal{H}_0$ by $G=-\frac{1}{4}\sum_{\alpha,\gamma=1}^d \hat{v}_{\alpha 0,\gamma 0}\ u_\alpha\otimes u_\gamma$. Then 
\begin{align*}
D^2_\mathcal{V}\big|_{0}\ \mathcal{E}_{(3,0,3,0)}(z)=z^\dagger\cdot Q\cdot z+G^\dagger\cdot z\otimes z+\left(z\otimes z\right)\cdot G
\end{align*}
and therefore $D^2_\mathcal{V}\big|_{0}\ \mathcal{E}_{(3,0,3,0)}\left(b_{\geq 1}\right)=b_{\geq 1}^\dagger\cdot Q\cdot b_{\geq 1}+G^\dagger\cdot b_{\geq 1}\otimes b_{\geq 1}+\left(b_{\geq 1}\otimes b_{\geq 1}\right)\cdot G$. This concludes the proof of Eq.~(\ref{Equation: Proof of 03}), since
\begin{align*}
&H_J-\left(i p'\ \underline{\otimes}\  u_0\right)^\dagger\cdot   \hat{v}\cdot i p'\ \underline{\otimes}\  u_0 \left(-\mathbb{L}'\right)-\sum_{r,r'=1}^d\Lambda_J^{r,r'}\left[(p'_r-p_r)\cdot p'_{r'}+p_r\cdot (p'_{r'}-p_{r'})\right]\\
&\ =\left(i p\ \underline{\otimes}\  u_0\right)^\dagger\cdot   \hat{v}\cdot i p\ \underline{\otimes}\  u_0\!=\!\sum_{\alpha,\gamma=1}^d\! \left(u_\alpha\otimes u_0\right)^\dagger\cdot \hat{v}\cdot u_\gamma\otimes u_0\ \frac{1}{2}\left(b_\alpha-b_\alpha^\dagger\right)^\dagger\cdot \frac{1}{2}\left(b_\gamma-b_{\gamma}^\dagger\right)\\
&\ =b_{\geq 1}^\dagger\cdot   Q\cdot b_{\geq 1}+\left(b_{\geq 1}\ \underline{\otimes}\ b_{\geq 1}\right)^\dagger\cdot   G+  G^\dagger\cdot b_{\geq 1}\ \underline{\otimes}\ b_{\geq 1}+\frac{c_{(3,0,3,0)}}{N}.
\end{align*}

\end{proof}

In the remainder of this subsection, we are going to verify that the residuum $R_J$ is small compared to the quadratic operator $\mathbb{T}_N$. Note that the error term in the last case of Lemma \ref{Lemma: Taylor B} is quite different from the other cases, since it simply corresponds to the whole operator $H_J$. This is not surprising, however, since the second order Taylor approximation of an object that is already of an higher order than two is zero, i.e. the residuum coincides with the object itself. With the help of the following three results in Lemma \ref{Lemma: Smallness of quadratic terms}, Lemma \ref{Lemma: Control of G-terms} and Theorem \ref{Theorem: High Order Estimates}, we will systematically verify that $H_J$ is small compared to the quadratic operator $\mathbb{T}_N$ in the cases $e_J=2$ and $m_J<2$, respectively $e_J>2$. Regarding all other cases, we will verify the smallness of the residuum in Theorem \ref{Theorem: Main B}. In order to do this, we will repeatedly use results derived in Appendices \ref{Appendix: B} and \ref{Appendix: C}.\\

\begin{lem}
\label{Lemma: Smallness of quadratic terms}
For indices $i,j\in \{0,\dots,4\}$, we have the following estimates:
\begin{itemize}
\item In case one of the indices is contained in $\{3,4\}$, we have
\begin{align*}
\left( h_i\ \underline{\otimes}\  h_j\right)^\dagger\cdot   |\hat{v}|\cdot  h_i\ \underline{\otimes}\  h_j = O_*\left(\mathbb{T}_N\right).
\end{align*}
\item In case one of the indices is contained in $\{3,4\}$ and the other one is contained in $\{1,\dots,4\}$, we have
\begin{align*}
\left( h_i\ \underline{\otimes}\  h_j\right)^\dagger\cdot   |\hat{v}|\cdot  h_i\ \underline{\otimes}\  h_j=o_*\left(\mathbb{T}_N\right).
\end{align*}
\end{itemize}
\end{lem}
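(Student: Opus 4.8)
The plan is to reduce both statements to concrete quadratic bounds on the building blocks $h_0 = 1_{\mathcal F_0}\otimes u_0$, $h_1 = q$, $h_2 = f(q)$, $h_3 = ip'$ and $h_4 = b_{>d}$, exploiting the relative bound $|\hat v|\le \Lambda\, 1_{\mathcal H}\otimes(T+1)$ from Assumption \ref{Assumption: Part I} together with the auxiliary estimates already available in the paper. First I would use the operator Cauchy--Schwarz inequality (Lemma \ref{Lemma: O_* results}) to factor the quadratic form: writing $Q := 1_{\mathcal F_0}\otimes|\hat v|$ and splitting the factor $h_i\,\underline{\otimes}\,h_j$, one gets
\begin{align*}
\bigl(h_i\,\underline{\otimes}\,h_j\bigr)^\dagger\cdot |\hat v|\cdot h_i\,\underline{\otimes}\,h_j \le \|A_{ij}\|^2\ h_i^\dagger\cdot(T+1)\cdot h_i,
\end{align*}
where $A_{ij}$ is the bounded operator obtained by contracting $|\hat v|\le \Lambda\,1_{\mathcal H}\otimes(T+1)$ against the $j$-th slot evaluated on the relevant one-particle vector (using that $u_0\in\mathrm{dom}[T]$ and, for $j\in\{1,\dots,d\}$, that $u_j$ lies in the form domain of $T$ by Assumption \ref{Assumption: Part II}). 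So it suffices to control $h_i^\dagger\cdot(T+1)\cdot h_i$ for each index $i$.

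The key inputs are then: $h_0^\dagger\cdot(T+1)\cdot h_0 = u_0^\dagger\cdot(T+1)\cdot u_0$ is a finite constant; $h_1^\dagger\cdot(T+1)\cdot h_1 = q^\dagger\cdot(T+1)\cdot q = O_*(1) = O_*(\mathbb{T}_N)$ since the $u_j$, $j\le d$, span a fixed finite-dimensional space and $q$ is bounded on $\mathcal{W}_N\mathcal{F}_{\le M}$ (here I would invoke Lemma \ref{Lemma: Function of q} / the fact that $q$ acts on the first $d$ modes with $\mathbb{L}'\le M/N$ bounding $q^\dagger q$); $h_2^\dagger\cdot(T+1)\cdot h_2 = f(q)^\dagger\cdot(T+1)\cdot f(q)$, which, because $f$ is $C^2$ with $f(0)=0$ and $f$ takes values in $\{u_1,\dots,u_d\}^\perp\cap\mathrm{dom}[T]$, obeys $\|(T+1)^{1/2}f(t)\|^2\le C|t|^2$ and hence is $O_*(q^\dagger q) = O_*(\mathbb{T}_N)$, in fact $o_*(\mathbb{T}_N)$ on the range of $\pi_{M,N}$ because of the prefactor $|t|^2$; $h_3^\dagger\cdot(T+1)\cdot h_3 = (p')^\dagger\cdot(T+1)\cdot p'$, bounded by $\|\sqrt{T}(1-\pi_{>d})\|^2(p')^\dagger\cdot p' = O_*(\mathbb{T}_N)$ via Lemma \ref{Lemma: p'}; and $h_4^\dagger\cdot(T+1)\cdot h_4 = b_{>d}^\dagger\cdot(T+1)\cdot b_{>d}\le\mathbb{T}_N$ directly by the definition of $\mathbb{T}_N$ in Eq.~(\ref{Equation: Definition T}). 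Combining: whenever one index is in $\{3,4\}$, the corresponding $h_i^\dagger\cdot(T+1)\cdot h_i$ is $O_*(\mathbb{T}_N)$, giving the first bullet. For the second bullet, when additionally the other index is in $\{1,\dots,4\}$, the factor $A_{ij}$ contracted against $u_j$ with $j\ge 1$ either produces $h_2$-type or $h_3$-type or $h_4$-type smallness — more precisely, the contraction no longer hits $u_0$ but hits $q$, $f(q)$, $p'$ or $b_{>d}$, each of which contributes an extra factor that is $o_*(1)$ relative to $\mathbb{T}_N$ on $\mathcal{W}_N\mathcal{F}_{\le M}$ (for $q$ and $f(q)$ because $q^\dagger q\le M/N\cdot\mathrm{const}\to 0$, for $p'$ and $b_{>d}$ because $(p')^\dagger p'$ and $b_{>d}^\dagger b_{>d}$ are themselves $o_*(\mathbb{T}_N)$ after one more Cauchy--Schwarz and Lemmata \ref{Lemma: p'}, \ref{Lemma: b}). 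The extra $o_*$ factor multiplied by an $O_*(\mathbb{T}_N)$ factor yields $o_*(\mathbb{T}_N)$.

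The main obstacle I anticipate is bookkeeping the many index combinations uniformly: one has to check that in the ``$O_*$'' case the single index in $\{3,4\}$ always sits in a slot that can be estimated by $\mathbb{T}_N$ rather than merely by $\mathbb{L}'$ or $\mathcal{N}/N$, and that in the ``$o_*$'' case the second nonzero index genuinely supplies a vanishing prefactor $\epsilon(M/N)$ — this requires being careful about which slot the Cauchy--Schwarz splitting puts each $h$ into and symmetrizing over the two slots when both indices are ``large''. The cleanest route is to prove a single master inequality $(h_i\,\underline{\otimes}\,h_j)^\dagger\cdot|\hat v|\cdot h_i\,\underline{\otimes}\,h_j\le C\,\bigl(h_i^\dagger\cdot(T+1)\cdot h_i\bigr)^{1/2}\bigl(h_j^\dagger\cdot(T+1)\cdot h_j\bigr)^{1/2}\cdot(\dots)$ is not quite available because $|\hat v|$ does not factor, so instead one keeps $|\hat v|\le\Lambda\,1\otimes(T+1)$ and contracts only the second slot, then reads off the first slot from the list above; the verification that this contraction is bounded is exactly Lemma \ref{Lemma: O_* results} applied with $Q=1\otimes|\hat v|$, and the remaining content is the elementary per-index estimates already cited. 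I would organize the proof as a short case table keyed on which of $\{i,j\}$ lies in $\{3,4\}$ and whether the other lies in $\{1,\dots,4\}$ or equals $0$.
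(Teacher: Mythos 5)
There is a genuine gap, and it sits exactly where you anticipate trouble, namely in the case where both indices lie in $\{3,4\}$. Your key claim that ``$(p')^\dagger\cdot p'$ and $b_{>d}^\dagger\cdot b_{>d}$ are themselves $o_*(\mathbb{T}_N)$'' is false: by the very definition of $\mathbb{T}_N$ in Eq.~(\ref{Equation: Definition T}), $b_{>d}^\dagger\cdot(T+1)\cdot b_{>d}\le\mathbb{T}_N$, and Lemma~\ref{Lemma: p'} gives $(p')^\dagger\cdot p'=O_*(\mathbb{Q}_N)$, not $o_*$. Both are therefore only $O_*(\mathbb{T}_N)$. Consequently your argument produces $O_*\cdot O_*=O_*(\mathbb{T}_N)$ for, say, $(b_{>d}\,\underline{\otimes}\,b_{>d})^\dagger\cdot|\hat v|\cdot b_{>d}\,\underline{\otimes}\,b_{>d}$, whereas the second bullet point demands $o_*(\mathbb{T}_N)$ there. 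The paper's mechanism for the extra vanishing factor is genuinely different: it splits $b_{>d}=(b_{>d}+f(q))-f(q)$, exploits the exact identity $(b_{>d}+f(q))^\dagger\cdot(b_{>d}+f(q))=\tfrac1N\mathcal{W}_N\mathcal{N}_+\mathcal{W}_N^{-1}$, and conjugates by the projection $\hat\pi_{M,N}$ onto $\mathcal{W}_N\mathcal{F}_{\le M}^+$ to obtain the crucial bound $\hat\pi_{M,N}\,(b_{>d}+f(q))^\dagger\cdot(b_{>d}+f(q))\,\hat\pi_{M,N}\le\tfrac{M}{N}$; the factor $M/N\to0$ is precisely what turns an $O_*$ into an $o_*$. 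Your proposal has no analogue of this step.

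A secondary but real problem is your master inequality $\bigl(h_i\,\underline{\otimes}\,h_j\bigr)^\dagger\cdot|\hat v|\cdot h_i\,\underline{\otimes}\,h_j\le\|A_{ij}\|^2\,h_i^\dagger\cdot(T+1)\cdot h_i$, with $A_{ij}$ a \emph{bounded operator} obtained by ``contracting the $j$-th slot evaluated on the relevant one-particle vector.'' This only makes sense when $h_j$ is a fixed one-particle vector, i.e.\ when $j=0$. When $j\in\{1,\dots,4\}$, $h_j$ is an operator on $\mathcal{F}_0$ taking values in $\mathcal{F}_0\otimes\mathcal{H}$, and the ``contraction'' is not a bounded scalar: for $j\in\{1,2\}$ it is a nonconstant function $\phi(q)$, and for $j\in\{3,4\}$ it is an operator that cannot be peeled off by a norm. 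Worse, with $i=0$ and $j\in\{3,4\}$, your formula collapses to a constant $u_0^\dagger\cdot(T+1)\cdot u_0$, which is \emph{not} $O_*(\mathbb{T}_N)$ (take $\frac{M}{N}\to0$, where $\braket{\mathbb{T}_N}$ can be as small as $\frac1N$); the correct bound in that case keeps the operator $h_j$, i.e.\ $\lambda\,h_j^\dagger\cdot h_j$ with $\lambda=u_0^\dagger\cdot S\cdot u_0$, which is what the paper does by contracting the $u_0$-slot instead. So the direction of the contraction matters and cannot be fixed globally; the paper accordingly uses \emph{both} $|\hat v|\le S\otimes1_{\mathcal H}$ and $|\hat v|\le1_{\mathcal H}\otimes S$ depending on the case. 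Your informal remark about ``symmetrizing over the two slots'' gestures at this, but as written the proposal neither resolves the $i=0$ case correctly nor supplies the $M/N$-smallness needed when both indices are in $\{3,4\}$.
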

\begin{proof}
We will repeatedly use the inequality $|v|\leq \Lambda\left(T+1\right)=:S$ from Assumption \ref{Assumption: Part I}, which implies together with the translation-invariance of $T$ the inequalities $|\hat{v}|\leq S\otimes 1_{\mathcal{H}}$ and $|\hat{v}|\leq\ 1_{\mathcal{H}}\otimes S$.\\

\textit{\textcolor{blue}{The case $i\in \{1,2\}$ and $j\in \{3,4\}$}}: Recall that $h_k=e_k(q)$ for $k\in \{0,1,2\}$ and let us define the function $\phi(t):=e_i\left(\vec{t}\ \right)^\dagger\cdot S\cdot e_i\left(\vec{t}\ \right)$. Using the inequality $|\hat{v}|\leq S\otimes 1_{\mathcal{H}}$ we obtain 
\begin{align*}
\left( e_i(q)\ \underline{\otimes}\  h_j\right)^\dagger \cdot   |\hat{v}|\cdot  e_i(q)\ \underline{\otimes}\  h_j\leq  h_j^\dagger\cdot \phi(q)\cdot  h_j.
\end{align*}
Since $|\phi(t)|\leq C \left(|t|+|t|^2\right)$ for a constant $C$, we obtain $h_3^\dagger\cdot \phi(q)\cdot h_3=o_*\left(\mathbb{T}_N\right)$ and $h_4^\dagger\cdot \phi(q)\cdot h_4=o_*\left(\mathbb{T}_N\right)$ by Lemmata \ref{Lemma: Function of q} and \ref{Lemma: p'}.\\

\textit{\textcolor{blue}{The case $i\in \{3,4\}$ and $j\in \{1,2\}$}}: Making use of the commutation laws $[b_\alpha,q_\beta]=0$ and $[ip_\alpha,q_\beta]=\frac{1}{2N}\delta_{\alpha,\beta}$, this case follows from the previous one.\\

\textit{\textcolor{blue}{The case $i=3,j=3$}}: Let $\pi_{\leq d}:=\sum_{r=1}^d u_r\cdot u_r^\dagger$. Since $\left(( p')^\dagger\cdot  p'\right)^2=o_*\left(\mathbb{T}_N\right)$, we obtain
\begin{align*}
\left( p'\ \underline{\otimes}\  p'\right)^\dagger\cdot   |\hat{v}|\cdot  p'\ \underline{\otimes}\  p'\leq ( p')^\dagger\cdot \left(( p')^\dagger\cdot  p'\right)\otimes S\cdot  p'\leq \big\|\pi_{\leq d}\, S\, \pi_{\leq d}\big\| \left(( p')^\dagger\cdot  p'\right)^2=o_*\left(\mathbb{T}_N\right).
\end{align*}

\textit{\textcolor{blue}{The case $i=4$ and $j\in \{3,4\}$}}: Note that
\begin{align*}
\left(b_{>d}\ \underline{\otimes}\  h_j\right)^\dagger\cdot   |\hat{v}| &\cdot b_{>d}\ \underline{\otimes}\  h_j \leq 2\left( f(q)\ \underline{\otimes}\  h_j\right)^\dagger\cdot   |\hat{v}|\cdot  f(q)\ \underline{\otimes}\  h_j\\
& +2\left((b_{>d}+ f(q))\ \underline{\otimes}\  h_j\right)^\dagger\cdot   |\hat{v}|\cdot (b_{>d}+ f(q))\ \underline{\otimes}\  h_j
\end{align*}
By the previous case $i\in \{1,2\}$ and $j\in \{3,4\}$, we know that $\left( f(q)\ \underline{\otimes}\  h_j\right)^\dagger\cdot   |\hat{v}|\cdot  f(q)\ \underline{\otimes}\  h_j=o_*\left(\mathbb{T}_N\right)$. For the second contribution, recall the definition of $\pi_{M,N}$ from Remark \ref{Remark: Projection Formalism} and let $ \hat{\pi}_{M,N}$ be the orthogonal projection onto the subspace $\mathcal{W}_N\mathcal{F}^+_{\leq M}\subset \mathcal{F}_{0}$, where
\begin{align}
\label{Equation: Definition of hat Pi}
\mathcal{F}^+_{\leq M}:=\mathds{1}_{[0,M]}\left(\sum_{j>d}^\infty a_j^\dagger\cdot a_j\right).
\end{align}
Since we have $b_{k}\, \hat{\pi}_{M,N}=\hat{\pi}_{M,N}\, b_{k}\, \hat{\pi}_{M,N}$ for $k>d$ and $[p'_j,\hat{\pi}_{M,N}]=0$ by Lemma \ref{Lemma: Auxiliary}, we obtain using $\pi_{M,N}=\hat{\pi}_{M,N}\, \pi_{M,N} $ (which follows from $\mathcal{W}_N\mathcal{F}_{\leq M}\subset \mathcal{W}_N\mathcal{F}^+_{\leq M}$)
\begin{align*}
&\pi_{M,N} \left((b_{>d}+ f(q))\ \underline{\otimes}\  h_j\right)^\dagger\cdot   |\hat{v}|\cdot (b_{>d}+ f(q))\ \underline{\otimes}\  h_j\, \pi_{M,N}\\
&\ \ \ \leq \pi_{M,N}\, h_j^\dagger\cdot \left(\hat{\pi}_{M,N}\, (b_{>d}+ f(q))^\dagger\cdot (b_{>d}+ f(q))\, \hat{\pi}_{M,N}\right)\otimes S\cdot  h_j\, \pi_{M,N}\\ 
&\ \ \ \leq \frac{M}{N}\ \pi_{M,N}\,  h_j^\dagger\cdot   S \cdot h_j\, \pi_{M,N}\leq C\ \frac{M}{N}\ \pi_{M,N}\, \mathbb{T}_N\, \pi_{M,N}
\end{align*}
for a constant $0<C<\infty$, where we used $ h_j^\dagger\cdot   S \cdot h_j=O_*\left(\mathbb{T}_N\right)$ and and the characterization of the $O_*(\cdot)$ notation in Remark \ref{Remark: Projection Formalism} for the last inequality. Using this characterization for the inequality above yields $ \left((b_{>d}+ f(q))\ \underline{\otimes}\  h_j\right)^\dagger\cdot   |\hat{v}|\cdot (b_{>d}+ f(q))\ \underline{\otimes}\  h_j=o_*\left(\mathbb{T}_N\right)$.\\

\textit{\textcolor{blue}{The case $i=3$ and $j=4$}}: Making use of the commutation laws $[ip'_\alpha,b]=-\frac{1}{2N}\partial_\alpha f(q)$, this case follows from the previous one.\\

\textit{\textcolor{blue}{The case $i=0$ and $j\in \{3,4\}$, respectively $i\in \{3,4\}$ and $j=0$}}: Since $ h_0= 1_{\mathcal{F}_0}\otimes u_0$ commutes with $h_3=i p'$ and $h_4=b_{>d}$, we assume w.l.o.g. $i=0$ and $j\in \{3,4\}$. With $\lambda:=u_0^\dagger\cdot S\cdot u_0$, we obtain
\begin{align*}
\left( u_0\ \underline{\otimes}\  h_j\right)^\dagger \cdot   |\hat{v}|\cdot  u_0\ \underline{\otimes}\  h_j\leq \lambda\  h_j^\dagger\cdot  h_j=O_*\left(\mathbb{T}_N\right).
\end{align*}
\end{proof}

\begin{lem}
\label{Lemma: Control of G-terms}
In the following, let $G:\mathbb{R}^d\longrightarrow \mathcal{H}\otimes \mathcal{H}$ be a differentiable function and let us define the operators $X,Y:\mathrm{dom}[\mathcal{N}]\longrightarrow {\mathcal{F}_0}\otimes \mathcal{H}$ as 
\begin{align*}
X&:=\left(i p'\right)^\dagger\otimes 1_{\mathcal{H}}\cdot G(q)=-\sum_{k=1}^\infty \left(\sum_{j=1}^d iG_{j,k}p'_j \right)\otimes u_k,\\
Y&:= b_{>d}^\dagger\otimes 1_{\mathcal{H}}\cdot G(q)=\sum_{k=1}^\infty \left(\sum_{j>d} iG_{j,k}b^\dagger_j \right)\otimes u_k.
\end{align*}
Then we have the estimates
\begin{align*}
X^\dagger\cdot X &\leq 2\left[\left( p'\right)^\dagger\cdot \|G\|^2(q)\cdot p'+\frac{d}{N^2}\sum_{\alpha=1}^d\|\partial_\alpha G\|^2(q)\right],\\
Y^\dagger\cdot Y &\leq  b_{>d}^\dagger\cdot \|G\|^2(q)\cdot b_{>d}+\frac{1}{N}\|G\|^2(q).
\end{align*}
\end{lem}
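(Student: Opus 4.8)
The plan is to unpack the two operator inequalities by writing out $X^\dagger\cdot X$ and $Y^\dagger\cdot Y$ in coordinates and then using the pointwise (in $q$) operator inequalities together with the commutation relations. First I would treat $Y^\dagger\cdot Y$, which is the cleaner of the two because $b_j$ for $j>d$ commutes with every $q_\alpha$ (by Lemma \ref{Lemma: Transformation Laws} and the definition of $q$), so no commutator corrections appear. Writing $Y=\sum_k\big(\sum_{j>d} iG_{j,k}(q)b_j^\dagger\big)\otimes u_k$, we get $Y^\dagger\cdot Y=\sum_{j,j'>d}\overline{G_{j,\cdot}(q)}\cdot G_{j',\cdot}(q)\, b_j b_{j'}^\dagger$, where $\overline{G_{j,\cdot}(q)}\cdot G_{j',\cdot}(q):=\sum_k \overline{G_{j,k}(q)}G_{j',k}(q)$. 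Using $b_j b_{j'}^\dagger=b_{j'}^\dagger b_j+\frac1N\delta_{j,j'}$ splits this into a normal-ordered piece, which is exactly $b_{>d}^\dagger\cdot\|G\|^2(q)\cdot b_{>d}$ up to commuting the scalar functions $G_{j,k}(q)$ past $b_{j'}^\dagger$ (they commute, since both $b_{j'}^\dagger$ for $j'>d$ and $q_\alpha$ commute), plus the diagonal term $\frac1N\sum_{j>d}\sum_k|G_{j,k}(q)|^2=\frac1N\|G\|^2(q)$ (here $\|G\|^2(t):=\sum_{j,k}|G_{j,k}(t)|^2$; the restriction $j>d$ only decreases it). This gives the second inequality, in fact with equality in the first term and $\leq$ from dropping $j\leq d$ modes.

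For $X^\dagger\cdot X$ the same expansion applies but now $p'_j$ does not commute with $q_\alpha$: by Lemma \ref{Lemma: Transformation Laws} we have $[p'_j,q_\alpha]=[p_j,q_\alpha]=\frac{1}{2iN}\delta_{j,\alpha}$ (the correction term $\mathfrak{Im}[\partial_{u_j}f(q)^\dagger\cdot b_{>d}]$ commutes with $q_\alpha$ since $f(q)$ and $b_{>d}$ both commute with $q$). So when I commute the scalar $G_{j,k}(q)$ past $p'_j$ to put $X^\dagger\cdot X$ into the target form I pick up derivative terms $\frac{1}{2N}\partial_j G_{j,k}(q)$. Concretely, $X^\dagger\cdot X=\sum_{j,j'=1}^d \big(\sum_k \overline{G_{j,k}(q)}\,G_{j',k}(q)\big)\,p'_j p'_{j'}$ up to commutators; symmetrizing and bounding $p'_j p'_{j'}+p'_{j'}p'_j\leq p'_j{}^2+p'_{j'}{}^2$ (or more directly using $A^\dagger B+B^\dagger A\le A^\dagger A+B^\dagger B$ on $A=\sum_j iG_{j,k}(q)p'_j$) and estimating the commutator remainder by $\frac{d}{N^2}\sum_\alpha\|\partial_\alpha G\|^2(q)$ via Cauchy--Schwarz in the index $\alpha$, together with the factor $2$ absorbing the split between the "leading" quadratic-in-$p'$ term and the commutator correction, yields the first inequality. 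The operator inequality $\big(\sum_j iG_{j,k}(q)p'_j\big)^\dagger\big(\sum_{j'} iG_{j',k}(q)p'_{j'}\big)$ summed over $k$ is bounded by $(p')^\dagger\cdot\|G\|^2(q)\cdot p'$ plus commutators: this is the cleanest route, treating $\sum_j iG_{j,k}(q)p'_j$ as the $k$-th component of a vector-valued operator and using that $\sum_k |G_{\cdot,k}(t)|^2$ as a $d\times d$ matrix is dominated by $\|G\|^2(t)$ times the identity.

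The main obstacle — really the only nonroutine point — is bookkeeping the commutator corrections in the $X$ estimate so that the constants come out as stated: one must be careful that $G_{j,k}$ depends on $q=(q_1,\dots,q_d)$ only, that $[p'_j,\cdot]$ acting on a function of $q$ produces $-\frac{1}{2N}\partial_j$ of that function (matching the sign convention in the commutator computations used in the proof of Lemma \ref{Lemma: Taylor A}), and that after symmetrization the cross terms between the "$p'$-linear" part and the "$\frac1N\partial G$" part can themselves be absorbed into the two displayed terms by one more Cauchy--Schwarz with a free parameter chosen to give the factor $2$ and the $\frac{d}{N^2}$ prefactor. Everything else is the same normal-ordering manipulation as in the $Y$ case, and positivity of $\|G\|^2(q)$ as a multiplication operator (it is a non-negative function of the commuting operators $q_1,\dots,q_d$, hence a non-negative self-adjoint operator via functional calculus) is what makes the sandwiched expressions well-defined and the inequalities meaningful.
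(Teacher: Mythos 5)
Your plan is correct and follows exactly the route the paper suggests — the paper does not print a proof, stating only that it ``is based on the commutation relations $[b_\alpha,b_\beta^\dagger]=\frac{1}{N}\delta_{\alpha,\beta}$ and $[p_\alpha,q_\beta]=\frac{1}{2iN}\delta_{\alpha,\beta}$, and is left to the reader.'' Expanding in coordinates, normal ordering for $Y$, commuting $G_{j,k}(q)$ through $p'_j$ for $X$ (using that the correction piece of $p'_j$ commutes with $q$, so $[p'_j,\phi(q)]=[p_j,\phi(q)]=\frac{1}{2iN}\partial_j\phi(q)$), applying a Cauchy--Schwarz split with parameter $2$ for the commutator remainder, and then estimating $\bigl|\sum_j\partial_jG_{j,k}\bigr|^2\le d\sum_j|\partial_jG_{j,k}|^2$ is precisely the intended computation, and the constants work out with slack.

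One point needs to be stated more carefully, though. In your $Y$ argument you write that the normal-ordered piece is ``exactly $b_{>d}^\dagger\cdot\|G\|^2(q)\cdot b_{>d}$'' and that the inequality only comes ``from dropping $j\le d$ modes.'' That is not right: after normal ordering you obtain
\begin{align*}
\sum_{j,j'>d} M_{j,j'}(q)\, b_{j'}^\dagger b_j, \qquad M_{j,j'}(t):=\sum_k\overline{G_{j,k}(t)}\,G_{j',k}(t),
\end{align*}
which is a genuine quadratic form in the (infinite) matrix $M(t)\geq 0$, not a diagonal expression. The bound $\leq b_{>d}^\dagger\cdot\|G\|^2(q)\cdot b_{>d}$ requires the matrix inequality $M(t)\le\mathrm{Tr}[M(t)]\,I\le\|G\|^2(t)\,I$, implemented at the operator level by factoring $\|G\|^2(t)I-M(t)=R(t)^\dagger R(t)$ and observing that $\sum_{j,j'}b_{j'}^\dagger\bigl(\|G\|^2\delta_{j,j'}-M_{j,j'}\bigr)(q)\,b_j=\sum_\ell\bigl(\sum_jR_{\ell,j}(q)b_j\bigr)^\dagger\bigl(\sum_{j'}R_{\ell,j'}(q)b_{j'}\bigr)\geq 0$, where the commutativity of $q_\alpha$ with $b_j^\dagger$ for $j>d$ lets you move the $R$'s past the $b^\dagger$'s. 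You do invoke exactly this matrix inequality in your discussion of $X$, so the idea is clearly in hand; it just needs to be applied to $Y$ as well rather than asserting equality there. Two sources contribute to the $\le$: the off-diagonal terms in $M$, and the trivial restriction from all $j$ to $j>d$ inside $\mathrm{Tr}[M]$.
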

The proof of Lemma \ref{Lemma: Control of G-terms} is based on the commutation relations $[b_\alpha,b_\beta^\dagger]=\frac{1}{N}\delta_{\alpha,\beta}$ and $[p_\alpha,q_\beta]=\frac{1}{2iN}\delta_{\alpha,\beta}$, and is left to the reader.

\begin{theorem}
\label{Theorem: High Order Estimates}
Let $\mathbb{L}'$ be the operator from Definition \ref{Definition: Unitary Transformation}. Then we have the following estimates:
\begin{itemize}
\item In case at least two of the indices $i,j,k,\ell\in \{1,\dots,4\}$ are contained in $\{3,4\}$, we have
\begin{align*}
\left( h_i\ \underline{\otimes}\  h_j\right)^\dagger\cdot   \hat{v}\cdot  h_k\ \underline{\otimes}\  h_\ell=o_*\left(\mathbb{T}_N\right),
\end{align*}
\item In case at least two of the indices $i,j,k\in \{0,\dots,4\}$ are contained in $\{3,4\}$, we have
\begin{align*}
\left( h_i\ \underline{\otimes}\  h_j\right)^\dagger\cdot   \hat{v}\cdot  h_k\ \underline{\otimes}\  u_0\, \mathbb{L}'=o_*\left(\mathbb{T}_N\right),
\end{align*}
\item In case at least two of the indices $i,j,k\in \{1,\dots,4\}$ are contained in $\{3,4\}$, we have
\begin{align*}
\left( h_i\ \underline{\otimes}\  h_j\right)^\dagger\cdot   \hat{v}\cdot  h_k\ \underline{\otimes}\  u_0\, \sqrt{1-\mathbb{L}'}=o_*\left(\mathbb{T}_N\right).
\end{align*}
\end{itemize}
\end{theorem}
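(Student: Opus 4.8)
The plan is to reduce all three items to the quadratic-type estimates already established in Lemma~\ref{Lemma: Smallness of quadratic terms} by a repeated application of the operator Cauchy--Schwarz inequality (Lemma~\ref{Lemma: O_* results}), together with the observation that $\mathbb{L}'$ and $\sqrt{1-\mathbb{L}'}$ are bounded by $1$ on the relevant subspaces. For the first item, write $\hat v = \mathfrak{Re}[|\hat v|^{1/2}\,\mathrm{sgn}(\hat v)\,|\hat v|^{1/2}]$ in the form used in Lemma~\ref{Lemma: O_* results}, with $A:=h_k\ \underline{\otimes}\ h_\ell$ and $B:=h_i\ \underline{\otimes}\ h_j$, to obtain for any $s>0$
\begin{align*}
\pm\,\mathfrak{Re}\!\left[(h_i\ \underline{\otimes}\ h_j)^\dagger\cdot\hat v\cdot h_k\ \underline{\otimes}\ h_\ell\right]
\leq s\,(h_i\ \underline{\otimes}\ h_j)^\dagger\cdot|\hat v|\cdot h_i\ \underline{\otimes}\ h_j
+ s^{-1}(h_k\ \underline{\otimes}\ h_\ell)^\dagger\cdot|\hat v|\cdot h_k\ \underline{\otimes}\ h_\ell .
\end{align*}
If at least two of the four indices lie in $\{3,4\}$, then among the two pairs $(i,j)$ and $(k,\ell)$, either both pairs contain an index in $\{3,4\}$, or one pair contains two such indices. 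In the first case, each of the two terms on the right is $o_*(\mathbb{T}_N)$ by the first bullet of Lemma~\ref{Lemma: Smallness of quadratic terms} (one index in $\{3,4\}$, the other in $\{1,\dots,4\}$ since $J$ has $\lambda_J\neq 0$ forces all indices $\geq 1$ in the relevant subcases), hence the product is $o_*(\mathbb{T}_N)$ after optimizing $s$; in the second case one term is already $o_*(\mathbb{T}_N)$ by the second bullet of that Lemma and the other is $O_*(\mathbb{T}_N)$, so choosing $s=s(M/N)\to 0$ appropriately gives $o_*(\mathbb{T}_N)$.

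For the second and third items I would absorb the factor $\mathbb{L}'$, respectively $\sqrt{1-\mathbb{L}'}$, by noting that on $\mathcal{W}_N\mathcal{F}_{\leq M}$ one has $0\le\mathbb{L}'\le M/N$ and $0\le 1-\mathbb{L}'\le 1$, and that $\mathbb{L}'$ commutes with $\pi_{M,N}$; thus $\pi_{M,N}\mathbb{L}'^2\pi_{M,N}\le \tfrac{M}{N}\pi_{M,N}\mathbb{L}'\pi_{M,N}$ and $\pi_{M,N}(1-\mathbb{L}')\pi_{M,N}\le\pi_{M,N}$. Splitting again by Cauchy--Schwarz with $A:=h_k\ \underline{\otimes}\ u_0\,\mathbb{L}'$ (resp.\ with the square-root factor) and $B:=h_i\ \underline{\otimes}\ h_j$, the $B$-term is $o_*(\mathbb{T}_N)$ exactly as above since two of $i,j$ are forced to lie in $\{3,4\}$ once two of $i,j,k$ do and $k$ may be the spare one --- here one has to be slightly careful and check that if $k\in\{3,4\}$ then $i,j$ still together contain at least one index in $\{3,4\}$, which holds because ``at least two of $i,j,k$'' minus ``$k$'' leaves at least one in $\{i,j\}$. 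The $A$-term contributes $(h_k\ \underline{\otimes}\ u_0)^\dagger\cdot|\hat v|\cdot h_k\ \underline{\otimes}\ u_0$, which is $O_*(\mathbb{T}_N)$ by the $i=0$, $j\in\{3,4\}$ (or $j=0$) case of Lemma~\ref{Lemma: Smallness of quadratic terms}, times an extra factor $\mathbb{L}'^2\le\tfrac{M}{N}\mathbb{L}'$ (resp.\ a bounded factor), so after optimizing $s$ the whole expression is $o_*(\mathbb{T}_N)$.

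The main obstacle I anticipate is purely bookkeeping: one must verify in each of the three bullets that the index hypothesis is preserved under the two ways of grouping the indices into the pair entering the ``$A$-slot'' and the pair in the ``$B$-slot'', so that at least one of the two quadratic forms produced by Cauchy--Schwarz genuinely falls under the $o_*$ case of Lemma~\ref{Lemma: Smallness of quadratic terms} rather than merely the $O_*$ case; once that combinatorial check is done, the analytic content is entirely contained in Lemmata~\ref{Lemma: O_* results} and~\ref{Lemma: Smallness of quadratic terms} and in the elementary bounds on $\mathbb{L}'$ and $\sqrt{1-\mathbb{L}'}$ on $\mathcal{W}_N\mathcal{F}_{\leq M}$. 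A secondary technical point is to phrase everything via the projected inequalities of Remark~\ref{Remark: Projection Formalism}, so that the $M$-dependence in the splitting parameter $s=s(M/N)$ is handled uniformly; this is routine given the framework already set up.
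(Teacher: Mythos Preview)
Your Cauchy--Schwarz reduction to Lemma~\ref{Lemma: Smallness of quadratic terms} is correct, and coincides with the paper's argument, in the case where \emph{each} of the two pairs $(i,j)$ and $(k,\ell)$ contains at least one index in $\{3,4\}$. The gap is in the complementary case, when both ``high'' indices sit in the same pair, say $i,j\in\{3,4\}$ while $k,\ell\in\{1,2\}$ (first bullet), or $k\in\{0,1,2\}$ (second and third bullets). There your claim that ``the other is $O_*(\mathbb{T}_N)$'' is false and rests on a misapplication of Lemma~\ref{Lemma: Smallness of quadratic terms}: that lemma requires at least one of the two indices to lie in $\{3,4\}$, whereas for $k,\ell\in\{0,1,2\}$ the term $(h_k\ \underline{\otimes}\ h_\ell)^\dagger\cdot|\hat v|\cdot h_k\ \underline{\otimes}\ h_\ell$ is purely a function of $q$. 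Such a function is $O_*(1)$ by Lemma~\ref{Lemma: Function of q}, but \emph{not} $O_*(\mathbb{T}_N)$, since $\mathbb{T}_N\geq \tfrac1N$ and a nonzero function of $q$ cannot be bounded by a multiple of $\tfrac1N$ uniformly on $\mathcal{W}_N\mathcal{F}_{\leq M}$. Optimising the Cauchy--Schwarz parameter $s$ then yields a bound of order $\sqrt{\epsilon(M/N)\,\langle\mathbb{T}_N\rangle_\Psi}$ rather than $\epsilon'(M/N)\,\langle\mathbb{T}_N\rangle_\Psi$, which does not close. The extra factor $\mathbb{L}'$ in the second bullet gains at most $(M/N)^2$, but for fixed small $M/N$ one still has $(M/N)^2\cdot N\to\infty$, so the same obstruction persists.

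The paper resolves this case by an \emph{asymmetric} splitting that does not group the four factors into the pairs $(i,j)$ and $(k,\ell)$. Instead one writes the expression as $X^\dagger\cdot Y$ with $X:=(T+1)^{1/2}\cdot h_i$ carrying a single high factor, and $Y:=h_j^\dagger\otimes 1_{\mathcal H}\cdot G(q)\,Z$, where
\[
G(t):=1_{\mathcal H}\otimes(T+1)^{-1/2}\cdot\hat v\cdot e_k(\vec t\,)\otimes e_\ell(\vec t\,)\in\mathcal H\otimes\mathcal H
\]
absorbs the two low factors into a $q$-dependent vector (finite by Assumption~\ref{Assumption: Part I}), and $Z\in\{1,\mathbb{L}',\sqrt{1-\mathbb{L}'}\}$. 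Then $X^\dagger X=O_*(\mathbb{T}_N)$ is immediate, while $Y^\dagger Y=o_*(\mathbb{T}_N)$ follows from the commutator estimates of Lemma~\ref{Lemma: Control of G-terms} combined with Lemmata~\ref{Lemma: Function of q}, \ref{Lemma: p'} and Corollary~\ref{Corollary: Lifting}: the point is that $Y$ contains only \emph{one} high operator $h_j$, multiplied by the small scalar $\|G(q)\|$, so the $o_*$ mechanism kicks in. This additional ingredient, Lemma~\ref{Lemma: Control of G-terms}, is precisely what your proposal is missing.
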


\begin{proof}
Let us denote with $e_{(a,b)}$ the number of indices in $(a,b)$ that are elements of $\{3,4\}$. In the following, we will verify the theorem separately for the case $e_{(i,j)}\geq 1$ and $e_{(k,\ell)}\geq 1$, and the case $e_{(k,\ell)}=0$. Note that the case $e_{(i,j)}=0$ is only possible for the first bullet point, and the proof of the statement follows from the case $e_{(k,\ell)}=0$, since
\begin{align*}
\left[\left( h_i\ \underline{\otimes}\  h_j\right)^\dagger\cdot   \hat{v}\cdot  h_k\ \underline{\otimes}\  h_\ell\right]^\dagger=\left( h_k\ \underline{\otimes}\  h_\ell\right)^\dagger\cdot   \hat{v}\cdot  h_i\ \underline{\otimes}\  h_j.
\end{align*}

\textit{\textcolor{blue}{The case $e_{(i,j)}\geq 1$ and $e_{(k,\ell)}\geq 1$}}:
Let us define the operators $A:= h_i\ \underline{\otimes}\  h_j$ and $Q:=  \hat{v}$, and depending on the concrete bullet point let us define $B$ as $ h_k\ \underline{\otimes}\  h_\ell$, $ h_k\ \underline{\otimes}\  u_0\, \mathbb{L}'$ or $ h_k\ \underline{\otimes}\  u_0\, \sqrt{1-\mathbb{L}'}$. In any case we have to verify
\begin{align*}
A^\dagger\cdot Q\cdot B=o_*\left(\mathbb{T}_N\right).
\end{align*}
By Lemma \ref{Lemma: O_* results}, it is enough to verify that one of the operators $A^\dagger\cdot |Q|\cdot A$ and $B^\dagger\cdot |Q|\cdot B$ is of order $o_*\left(\mathbb{T}_N\right)$, and the other one is of order $O_*\left(\mathbb{T}_N\right)$, which follows from Lemma \ref{Lemma: Smallness of quadratic terms} and the auxiliary Corollary \ref{Corollary: Lifting}.\\

\textit{\textcolor{blue}{The case $e_{(k,\ell)}=0$}}: In this case we have $i,j\in \{3,4\}$ for any of the bullet points. Let us define the function $G:\mathbb{R}^d\longrightarrow \mathcal{H}\otimes \mathcal{H}$ by $G(t):=1_{\mathcal{H}}\otimes \left(T+1\right)^{-\frac{1}{2}}\cdot \hat{v}\cdot e_k\left(\vec{t}\ \right)\otimes e_{\ell}\left(\vec{t}\ \right)$. Note that $G(t)\in \mathcal{H}\otimes \mathcal{H}$ follows from Assumption \ref{Assumption: Part I}. We define the operator $X:=   \left(T+1\right)^{\frac{1}{2}}\cdot  h_i$ and depending on the concrete bullet point let us define $Y:= h^\dagger_j\otimes 1_{\mathcal{H}}\cdot G(q)\, Z$ with $Z:=1_{\mathcal{F}_0}$, $Z:=\mathbb{L}'$ or $Z:=\sqrt{1-\mathbb{L}'}$. In the following, we have to verify $X^\dagger\cdot Y=o_*\left(\mathbb{T}_N\right)$. Since $i\in \{3,4\}$, we know that $X^\dagger\cdot X=O_*\left(\mathbb{T}_N\right)$. By the Cauchy--Schwarz like result in Lemma \ref{Lemma: O_* results}, it is therefore enough to verify $Y^\dagger\cdot Y=o_*\left(\mathbb{T}_N\right)$. Applying Lemma \ref{Lemma: Control of G-terms} yields in any case
\begin{align*}
Y^\dagger\cdot Y&=Z^\dagger \left( h^\dagger_j\otimes 1_{\mathcal{H}}\cdot G(q)\right)^\dagger\cdot \left( h^\dagger_j\otimes 1_{\mathcal{H}}\cdot G(q)\right) Z\\
&\leq 2\ Z^\dagger \left[ h_j^\dagger \cdot \|G(q)\|^2\cdot  h_j+\frac{1}{N}\|G(q)\|^2+\frac{d}{N^2}\sum_{r=1}^d \|\partial_r G(q)\|^2\right] Z,
\end{align*}
and Corollary \ref{Corollary: Lifting} then yields that $Z^\dagger\, \|G(q)\|^2\, Z$ and $Z^\dagger\, \frac{1}{N}\left(\sum_{r=1}^d \|\partial_r G(q)\|^2\right) Z$ are of order $o_*(1)$. Therefore, $Z^\dagger\,\frac{1}{N}\|G(q)\|^2 Z$ and $Z^\dagger\,\frac{d}{N^2}\left(\sum_{r=1}^d \|\partial_r G(q)\|^2\right) Z$ are both of order $o_*\left(\mathbb{T}_N\right)$. Finally, $Z^\dagger\,  h_j^\dagger \cdot \|G(q)\|^2\cdot  h_j\, Z=o_*\left(\mathbb{T}_N\right)$ follows from the auxiliary Lemmata \ref{Lemma: Function of q} and \ref{Lemma: p'}, and the auxiliary Corollary \ref{Corollary: Lifting}.
\end{proof}

\begin{theorem}
\label{Theorem: Main B}
Let $J\in \{0,\dots,4\}^4$ be such that $\lambda_J\neq 0$ and let $R_J$ be the residuum defined in Eq.~(\ref{Equation: Taylor residuum B}). Then,
\begin{align*}
R_J&=o_*\left(\mathbb{T}_N\right).
\end{align*}
\end{theorem}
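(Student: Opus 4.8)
The plan is to follow the case analysis of Theorem~\ref{Theorem: Main A} almost verbatim, with the quartic atoms $H_{(i,j,k,\ell)}$ and their classical counterparts $\mathcal{E}_{(i,j,k,\ell)}$ in place of the quadratic ones, Lemma~\ref{Lemma: Taylor B} in place of Lemma~\ref{Lemma: Taylor A}, and the two-body operator inequalities $|\hat{v}|\le\Lambda(T+1)\otimes 1_\mathcal{H}$ and $|\hat{v}|\le 1_\mathcal{H}\otimes\Lambda(T+1)$ from Assumption~\ref{Assumption: Part I} taking over the role of the kinetic bound. Concretely, I would go through the seven cases by which Lemma~\ref{Lemma: Taylor B} presents $R_J$, indexed by $e_J=|\{\ell\in J:\ell\in\{3,4\}\}|$ and $m_J=|\{\ell\in J:\ell=0\}|$, and in each case reduce $R_J=o_*(\mathbb{T}_N)$ to an already-available building block: the operator Taylor estimates for $(1-\mathbb{L}')^{m/2}$ from Appendix~\ref{Appendix: C} (the residua $E_m^0,E_m^1,E_m^2$ of Definition~\ref{Definition: Taylor of the square root}, cf.\ Lemma~\ref{Lemma: Taylor of the square root}), the $q$-function estimates of Lemma~\ref{Lemma: Function of q} and Corollary~\ref{Corollary: Lifting}, the smallness $(p'-p)^\dagger\cdot(p'-p)=o_*(\mathbb{T}_N)$ and boundedness $(p')^\dagger\cdot p'=O_*(\mathbb{T}_N)$, $b_{>d}^\dagger\cdot b_{>d}=O_*(\mathbb{T}_N)$ of Lemmas~\ref{Lemma: b} and~\ref{Lemma: p'}, the $O_*/o_*$ classification of Lemma~\ref{Lemma: Smallness of quadratic terms} and Theorem~\ref{Theorem: High Order Estimates}, and the operator Cauchy--Schwarz inequality of Lemma~\ref{Lemma: O_* results}.

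The low-order cases $e_J\le 1$ go exactly as their kinetic counterparts. For $J=(0,0,0,0)$ one has $R_J=\hat{v}_{00,00}\,E_4^2$, so it suffices to note $E_4^2=o_*(\mathbb{T}_N)$, the $m=4$ instance of the square-root Taylor estimate (provable, if one wishes, from the explicit form of $1-\mathbb{L}'$ in Lemma~\ref{Lemma: Transformation Laws} as in the $E_2^2$ computation of Theorem~\ref{Theorem: Main A}). For $e_J=0$ and $m_J<4$, $R_J$ equals $E_{m_J}^1$ times the prefactor $[e_i(q)\otimes e_j(q)]^\dagger\hat{v}[e_k(q)\otimes e_\ell(q)]$; since $m_J<4$ some index lies in $\{1,2\}$, whose factor $e_1$ or $e_2$ vanishes at the origin, so the prefactor is a polynomially bounded function $V(q)$ of $q$ with $V(0)=0$, and $V(q)E_{m_J}^1=o_*(\mathbb{T}_N)$ by Lemmas~\ref{Lemma: Taylor of the square root} and~\ref{Lemma: Function of q} ($m_J=0$ is trivial, $E_0^1=0$). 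For $e_J=1$, $R_J=(\text{prefactor})E_{m_J}^0+F_J(q)/N$ with $|F_J(t)|\le C|t|$, so $F_J(q)/N=o_*(1/N)=o_*(\mathbb{T}_N)$ by Lemma~\ref{Lemma: Function of q}; since $E_0^0=0$ only $m_J\in\{1,2\}$ remain, and then exactly one of the two pairs $(h_i,h_j)$, $(h_k,h_\ell)$ carries the single $\{3,4\}$-index, so splitting the prefactor across $\hat{v}$ and applying Lemma~\ref{Lemma: O_* results} — with that pair $O_*(\mathbb{T}_N)$ by Lemma~\ref{Lemma: Smallness of quadratic terms}, $(E_{m_J}^0)^\dagger E_{m_J}^0=o_*(\mathbb{T}_N)$ from Appendix~\ref{Appendix: C}, and any $q$-growth of the other pair absorbed by $E_{m_J}^0$ via Corollary~\ref{Corollary: Lifting} — gives $R_J=o_*(\mathbb{T}_N)$, just as in the $e_J=1$ step of Theorem~\ref{Theorem: Main A}.

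The only essentially new input is the high-order regime $e_J\ge 2$. If $e_J=2$ with $m_J<2$, or $e_J>2$, then $R_J=H_J$; since by the coefficient convention of Lemma~\ref{Lemma: B Decomposition} every $u_0$ sits in a trailing slot, $H_J$ has one of the shapes of Theorem~\ref{Theorem: High Order Estimates} (its first bullet if $m_J=0$, its third bullet — forced $u_0$ in the last slot — if $m_J=1$), which yields $o_*(\mathbb{T}_N)$ at once. If $e_J=2$ and $m_J=2$, Lemma~\ref{Lemma: Taylor B} gives $R_J=-H_J\,\mathbb{L}'+(\text{extra})$: here $H_J\,\mathbb{L}'=(h_i\,\underline{\otimes}\,h_j)^\dagger\cdot\hat{v}\cdot h_k\,\underline{\otimes}\,u_0\,\mathbb{L}'$ with two of $i,j,k$ in $\{3,4\}$, hence $o_*(\mathbb{T}_N)$ by the second bullet of Theorem~\ref{Theorem: High Order Estimates}; and in each of the three sub-cases the extra term is a finite sum of products one of whose factors is the small operator $p'-p$ and whose remaining factors lie in $\{p',p,b_{>d}\}$ (with bounded square by Lemma~\ref{Lemma: p'}), so Lemma~\ref{Lemma: O_* results} (with $Q=1$ or $Q=\hat{v}$) makes each product $o_*(\mathbb{T}_N)$, exactly as $R_{(3,3)}$, $R_{(3,4)}$, $R_{(4,4)}$ were treated in Theorem~\ref{Theorem: Main A}. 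Collecting the cases gives $R_J=o_*(\mathbb{T}_N)$.

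The hard part is purely organizational: making sure that in the Cauchy--Schwarz steps of the $e_J\in\{1,2\}$ cases the factor carrying a $\{3,4\}$-index — which is only $O_*(\mathbb{T}_N)$, never $o_*$ — is always matched against a genuinely decaying factor, and that the polynomial $q$-growth of prefactors built from $e_1(z)=\sum_{j=1}^d t_j u_j$ is swallowed by the factors $\mathbb{L}'$ and $\sqrt{1-\mathbb{L}'}$ inside the $E_m^k$, which is precisely what Corollary~\ref{Corollary: Lifting} and the square-root estimates of Appendix~\ref{Appendix: C} are designed to do. Apart from this bookkeeping the argument is a routine transcription of the proof of Theorem~\ref{Theorem: Main A}, with Theorem~\ref{Theorem: High Order Estimates} supplying the genuinely quartic contributions.
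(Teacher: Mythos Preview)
Your plan is the paper's plan: run through the seven cases of Lemma~\ref{Lemma: Taylor B}, feed the high-order ones ($e_J\ge 2$) to Theorem~\ref{Theorem: High Order Estimates} and the $(p'-p)$ smallness, and reduce the low-order ones to the square-root estimates of Appendix~\ref{Appendix: C} plus Lemmas~\ref{Lemma: Function of q}--\ref{Lemma: p'}. The $e_J\ge 2$ cases you describe exactly as in the paper. Two of your low-order steps, however, diverge from the paper's argument in ways that are not just cosmetic.

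\emph{The $e_J=0$, $m_J\in\{2,3\}$ cases.} Lemma~\ref{Lemma: Taylor of the square root} only supplies $V(q)E_1^1=o_*(\mathbb{Q}_N)$; there is no ready-made statement for $V(q)E_m^1$ with $m\ge 2$. The paper does not invoke a lemma here but expands $E_2^1$, $E_3^1$ (and likewise $E_4^2$) explicitly from the formula for $1-\mathbb{L}'$ and estimates each resulting piece separately via Lemmas~\ref{Lemma: Function of q}, \ref{Lemma: b}, \ref{Lemma: p'} and Corollary~\ref{Corollary: Lifting}. This is a genuine (if mechanical) computation that your sentence ``by Lemmas~\ref{Lemma: Taylor of the square root} and~\ref{Lemma: Function of q}'' skips.

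\emph{The $e_J=1$ case.} First, $m_J=3$ is possible (e.g.\ $J=(3,0,0,0)$), not only $m_J\in\{1,2\}$. More importantly, your ``split across $\hat{v}$'' via Lemma~\ref{Lemma: Smallness of quadratic terms} leaves you with the task of showing $E_{m_J}^0\,\Phi(q)\,E_{m_J}^0=o_*(\mathbb{T}_N)$ for a polynomially growing $\Phi$, and Corollary~\ref{Corollary: Lifting} does not address that object (it handles $\sqrt{1-\mathbb{L}'}\,X\,\sqrt{1-\mathbb{L}'}$ and $\mathbb{L}'X\mathbb{L}'$, not $E_m^0$-sandwiches). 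The paper instead contracts $\hat{v}$ with the three $q$-dependent factors $e_{\cdot}(q)$ into a single $\mathcal{H}$-valued function $w(t)$, so that the full prefactor becomes $X_J=(ip')^\dagger\cdot w(q)$ when $3\in J$, respectively $w(q)^\dagger\cdot b_{>d}$ or $b_{>d}^\dagger\cdot w(q)$ when $4\in J$; then $X_JX_J^\dagger=O_*(\mathbb{T}_N)$ follows directly from Lemmas~\ref{Lemma: p'} and~\ref{Lemma: b}, and Cauchy--Schwarz against $(E_{m_J}^0)^2=o_*(\mathbb{T}_N)$ (Corollary~\ref{Corollary: Taylor of the square root}) finishes.
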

\begin{proof}
Let $J=(i,j,k,\ell)$ be a multi index with $\lambda_J\neq 0$, and recall the index $e_J:=\left|\{l\in J:l\in \{3,4\}\}\right|$ and the index $m_J:=\left|\{l\in J:l=0\}\right|$ from Lemma \ref{Lemma: Taylor B} as well as the residuum defined in Eq.~(\ref{Equation: Taylor residuum B}). In order to prove the statement of the Theorem, we have to verify $R_J=o_*\left(\mathbb{T}_N\right)$ for all $J\in \{0,\dots,4\}^4$.

\textit{\textcolor{blue}{The case $e_J=0$ and $m_J=0$}}: In this case we have a trivial residuum $R_J=0$.\\

\textit{\textcolor{blue}{The case $e_J=0$ and $m_J=1$}}: In this case, $R_J=V(q) E_1^1$, with $V(t):=\big(e_i\left(\vec{t}\ \right)\otimes e_j\left(\vec{t}\ \right)\big)^\dagger\cdot \hat{v}\cdot e_k\left(\vec{t}\ \right)\otimes e_{\ell}\left(\vec{t}\ \right)$. Since the $C^1$ function $V$ satisfies $F(0)=0$, we obtain $V(q) E_1^1=o_*\left(\mathbb{T}_N\right)$ by Lemma \ref{Lemma: Taylor of the square root}.\\

\textit{\textcolor{blue}{The case $e_J=0$ and $m_J=2$}}:
 In this case $R_J=V(q) E_2^1$, with $V(t):=\big(e_i\left(\vec{t}\ \right)\otimes  e_j\left(\vec{t}\ \right)\big)^\dagger\cdot \hat{v}\cdot e_k\left(\vec{t}\ \right)\otimes e_{\ell}\left(\vec{t}\ \right)$. We compute
\begin{align*}
E_2^1&=\left(1-\mathbb{L}'\right)-\eta_2(q)- D_{\mathcal{V}}\big|_q \eta_2\big(b_{\geq 1}\big)=-\left(b_{>d}^\dagger\cdot b_{>d}+( p')^\dagger\cdot  p'-\frac{d}{2N}\right).
\end{align*}
By Lemmata \ref{Lemma: Function of q} and \ref{Lemma: p'} we know that $V(q)\, b_{>d}^\dagger\cdot b_{>d}=V(q)\, b_{>d}^\dagger\cdot b_{>d}$ and $V(q) \left( p'\right)^\dagger\cdot  p'$ are of order $o_*\left(\mathbb{T}_N\right)$, and consequently $V(q) E_2^1=o_*\left(\mathbb{T}_N\right)$.\\

\textit{\textcolor{blue}{The case $e_J=0$ and $m_J=3$}}: In this case $R_J=V(q) E_3^1$, with $V(t):=\big(e_i\left(\vec{t}\ \right)\otimes  e_j\left(\vec{t}\ \right)\big)^\dagger\cdot \hat{v}\cdot e_k\left(\vec{t}\ \right)\otimes e_{\ell}\left(\vec{t}\ \right)$. We compute
\begin{align*}
E_3^1:&=\left(1-\mathbb{L}'\right) \sqrt{1-\mathbb{L}'}-\eta_3(q)- D_\mathcal{V}\big|_q \eta_3\big(b_{\geq q}\big)\\
&=(1\!-\!\eta_2(q)) E_1^1\!-\!\left( f(q)^\dagger\!\cdot\! b_{>d}\!+\!b_{>d}\!\cdot\!  f(q)\right) E_1^0\!-\!\left(\!b_{>d}^\dagger\!\cdot\! b_{>d}\!+\!( p')^\dagger\!\cdot\!  p'\!-\!\frac{d}{2N}\!\right)\sqrt{1\!-\!\mathbb{L}'}.
\end{align*}
By Lemma \ref{Lemma: Taylor of the square root}, we know that $V(q) (1-\eta_2(q)) E_1^1=o_*\left(\mathbb{T}_N\right)$ and $\left(E_1^0\right)^2=o_*\left(\mathbb{T}_N\right)$. Note that we further have $\left[V(q) \left( f(q)^\dagger\cdot b_{>d}+b_{>d}\cdot  f(q)\right)\right]^2=o_*\left(\mathbb{T}_N\right)$, and therefore the product $V(q) \left( f(q)^\dagger\cdot b_{>d}+b_{>d}\cdot  f(q)\right) E_1^0$ is of order $o_*\left(\mathbb{T}_N\right)$ as well. By making use of Lemmata \ref{Lemma: Function of q} and \ref{Lemma: p'}, and Corollary \ref{Corollary: Lifting}, we obtain
\begin{align*}
V(q)\left(\!b_{>d}^\dagger\!\cdot\! b_{>d}\!+\!( p')^\dagger\!\cdot\!  p'\!-\!\frac{d}{2N}\!\right)\sqrt{1\!-\!\mathbb{L}'}=o_*\left(\mathbb{T}_N\right).
\end{align*}

\textit{\textcolor{blue}{The case $e_J=0$ and $m_J=4$}}: In this case $R_J=\left(u_0\otimes u_0\right)\cdot \hat{v}\cdot u_0\otimes u_0\ E_4^2$. We compute
\begin{align*}
E_4^2:&=\left(1-\mathbb{L}'\right)^2-\eta_4(q)- D_\mathcal{V}\big|_q \eta_4\big(b_{\geq 1}\big)-D^2_\mathcal{V}\big|_0 \eta_4\big(b_{\geq 1}\big)-\frac{c_4}{N}\\
&=\left( f(q)^\dagger\cdot b_{>d}+b_{>d}\cdot  f(q)\right)^2+\Big\{ f(q)^\dagger\cdot b_{>d}+b_{>d}\cdot  f(q),\ \left( p'\right)^\dagger\cdot  p'+b_{>d}^\dagger\cdot b_{>d}\Big\}\\
&\ \ +\left(\left( p'\right)^\dagger\cdot  p'+b_{>d}^\dagger\cdot b_{>d}\right)^2+\Big\{\eta_2(q),\ \left( p'\right)^\dagger\cdot  p'+b_{>d}^\dagger\cdot b_{>d}\Big\}\\
&\ \ +2 p^\dagger\cdot \left( p'- p\right)+2\left( p'- p\right)^\dagger\cdot  p'-\frac{d}{N}\mathbb{L}'+\frac{d^2}{4N^2},
\end{align*}
with the notation $\{A,B\}:=A B+B A$. Clearly $\frac{d}{N}\mathbb{L}'=o_*\left(\mathbb{T}_N\right)$. From Lemmata \ref{Lemma: Function of q}, \ref{Lemma: b} and \ref{Lemma: p'}, we know that all the operators $ p^\dagger\cdot \left( p'- p\right)$, $\left( p'- p\right)^\dagger\cdot  p'$, $\left(( p')^\dagger\cdot  p'\right)^2$, $\left(b_{>d}^\dagger\cdot b_{>d}\right)^2$, $\left(b_{>d}^\dagger\cdot  f(q)+ f(q)^\dagger\cdot b_{>d}\right)^2$, $\eta_2(q)\, b_{>d}^\dagger\cdot b_{>d}$ and $\eta_2(q)\, \left( p'\right)^\dagger\cdot  p'\, \eta_2(q)$ are of order $o_*\left(\mathbb{T}_N\right)$. Consequently, $\{\eta_2(q),\left( p'\right)^\dagger\cdot  p'+b_{>d}^\dagger\cdot b_{>d}\}$ and $\Big\{ f(q)^\dagger\cdot b_{>d}+b_{>d}\cdot  f(q),\ \left( p'\right)^\dagger\cdot  p'+b_{>d}^\dagger\cdot b_{>d}\Big\}$ are of order $o_*\left(\mathbb{T}_N\right)$ as well.\\

\textit{\textcolor{blue}{The case $e_J=1$}}: In this case, we have $R_J=\left( h_i\ \underline{\otimes}\  h_j\right)^\dagger\cdot   \hat{v}\cdot  h_k\ \underline{\otimes}\  h_\ell\, E_{m_J}^0+\frac{F_J(q)}{N}$. By Lemma Lemma \ref{Lemma: Function of q}, we know that $\frac{F_J(q)}{N}=o_*\left(\mathbb{T}_N\right)$. Since we know that $\left(E_{m_J}^0\right)^2=o_*\left(\mathbb{T}_N\right)$ by Corollary \ref{Corollary: Taylor of the square root}, we are done once we can verify that $X_J\cdot X_J^\dagger=O_*\left(\mathbb{T}_N\right)$, where $X_J:=\left( h_i\ \underline{\otimes}\  h_j\right)^\dagger\cdot   \hat{v}\cdot  h_k\ \underline{\otimes}\  h_\ell$.\\

\textit{With $3\in \{i,j,k,\ell\}$}: Let us first assume $j=3$, and define $w(t):=e_i\left(\vec{t}\ \right)^\dagger\otimes 1_{\mathcal{H}}\cdot \hat{v}\cdot e_k\left(\vec{t}\ \right)\otimes e_{\ell}\left(\vec{t}\ \right)$. Clearly, $X_J=\left(i p'\right)^\dagger\cdot  w(q)$ and therefore $X_J X_J^\dagger=O_*\left(\mathbb{T}_N\right)$ follows from \ref{Lemma: p'}. The other cases $i=3,k=3$ and $\ell=3$ follows from the commutation relation $[ip'_\alpha,q_\beta]=\frac{1}{2N}\delta_{\alpha,\beta}$.\\

\textit{With $4\in \{i,j,k,\ell\}$}: In any case, $X_J$ is either equal to $ w(q)^\dagger\cdot b_{>d}$ or $b^\dagger_{>d}\cdot  w(q)$, where $w:\mathbb{R}^d\longrightarrow \mathcal{H}$ with $\|w(t)\|\leq c|t|^j$ and $j\geq 0$. Note that we use the commutativity of $q_j$ and $b_{>d}$ here. Therefore, Lemma \ref{Lemma: b} implies $X_J\cdot X_J^\dagger=O_*\left(\mathbb{T}_N\right)$.\\

\textit{\textcolor{blue}{The case $e_J=2$ and $m_J=2$}}: In any case, we know by the second bullet point of Theorem \ref{Theorem: High Order Estimates}, that $\left( h_i\ \underline{\otimes}\  h_j\right)^\dagger\cdot   \hat{v}\cdot  h_k\ \underline{\otimes}\  h_\ell \left(-\mathbb{L}'\right)=o_*\left(\mathbb{T}_N\right)$. In case $\{i,j,k,\ell\}=\{0,4\}$, this is the whole residuum $R_J$. In case $\{i,j,k,\ell\}=\{0,3\}$, the residuum reads
\begin{align*}
R_J=\left( h_i\ \underline{\otimes}\  h_j\right)^\dagger\cdot   \hat{v}\cdot  h_k\ \underline{\otimes}\  h_\ell \left(-\mathbb{L}'\right)+\sum_{r,r'=1}^d\Lambda_J^{r,r'}\left[(p'_r-p_r)\cdot p'_{r'}+p_r\cdot (p'_{r'}-p_{r'})\right].
\end{align*}
Since any of the products $(p'_r-p_r)\cdot p'_{r'}$ and $p_r\cdot (p'_{r'}-p_{r'})$ are of order $o_*\left(\mathbb{T}_N\right)$, we conclude $R_J=o_*\left(\mathbb{T}_N\right)$. The case $\{i,j,k,\ell\}=\{0,3,4\}$ works similarly, and is left to the reader.\\

\textit{\textcolor{blue}{The cases $e_J=2$ and $m_J<2$, respectively $e_J>2$}}: We obtain for $m_J=0$ by the first bullet point of Theorem \ref{Theorem: High Order Estimates}, and for $m_J=1$ by the third bullet point, that
\begin{align*}
R_J=H_J=o_*\left(\mathbb{T}_N\right).
\end{align*}
\end{proof}

\begin{cor}
\label{Corollary: Main B}
Recall the functional $\mathcal{E}_B$ defined in Eq.~(\ref{Equation: B Decomposition classic}) and the constant $c$ from Corollary \ref{Corollary: Main A}. Then,
\begin{align*}
\mathcal{W}_N \widetilde{B}_N \mathcal{W}_N^{-1}&=\mathcal{E}_B(q)+D_\mathcal{V}\big|_{q} \mathcal{E}_B\left(b_{\geq 1}\right)+\frac{1}{2}D^2_\mathcal{V}\big|_{0}\ \mathcal{E}_B\left(b_{\geq 1}\right)-\frac{c}{N}+o_*\left(\mathbb{T}_N\right).
\end{align*}
\end{cor}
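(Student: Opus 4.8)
The plan is to deduce Corollary~\ref{Corollary: Main B} from Theorem~\ref{Theorem: Main B} in exactly the same way that Corollary~\ref{Corollary: Main A} was deduced from Theorem~\ref{Theorem: Main A}: by reassembling the atomic identities along the decomposition of Lemma~\ref{Lemma: B Decomposition}. Theorem~\ref{Theorem: Main B} supplies, for every multi-index $J\in\{0,\dots,4\}^4$ with $\lambda_J\neq 0$, that the residuum $R_J=H_J-\mathcal{E}_J(q)-D_\mathcal{V}\big|_q \mathcal{E}_J\left(b_{\geq 1}\right)-\frac{1}{2}D^2_\mathcal{V}\big|_0 \mathcal{E}_J\left(b_{\geq 1}\right)-\frac{c_J}{N}$ is of order $o_*\left(\mathbb{T}_N\right)$. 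Solving for $H_J$, multiplying by $\lambda_J$, taking real parts, and summing over the finitely many relevant $J$, I would obtain
\begin{align*}
\mathcal{W}_N\, \widetilde{B}_N\, \mathcal{W}_N^{-1}=\sum_J \lambda_J\,\mathfrak{Re}\left[H_J\right]&=\mathcal{E}_B(q)+D_\mathcal{V}\big|_q \mathcal{E}_B\left(b_{\geq 1}\right)\\
&\quad +\frac{1}{2}D^2_\mathcal{V}\big|_0 \mathcal{E}_B\left(b_{\geq 1}\right)+\frac{1}{N}\sum_J \lambda_J c_J+o_*\left(\mathbb{T}_N\right).
\end{align*}
Here I would use that $o_*\left(\mathbb{T}_N\right)$ is stable under finite sums and under $\mathfrak{Re}$, that functional calculus in the commuting variables $q_1,\dots,q_d$ respects the decomposition $\mathcal{E}_B(z)=\sum_J \lambda_J\,\mathfrak{Re}\left[\mathcal{E}_J(z)\right]$ of Lemma~\ref{Lemma: B Decomposition}, and that the quantization maps of Definition~\ref{Definition: Quantization} are linear and commute with taking real parts, so that $\sum_J \lambda_J\,\mathfrak{Re}\big[D_\mathcal{V}|_q \mathcal{E}_J(b_{\geq 1})\big]=D_\mathcal{V}|_q \mathcal{E}_B(b_{\geq 1})$ and likewise at second order.

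It then remains only to identify the constant, i.e.\ to verify $\sum_{J\in\{0,\dots,4\}^4}\lambda_J c_J=-c$ with $c=\sum_{J\in\{0,\dots,4\}^2}\lambda_J c_J$ the constant from Corollary~\ref{Corollary: Main A}. One route is a direct bookkeeping computation from the explicit values of the $c_J$ listed in Eq.~(\ref{Equation: c_J for A}) and Eq.~(\ref{Equation: c_J for B}). The conceptual reason, which I would also record, is the flatness of the Hartree energy along the manifold of minimizers: by construction of $F$ in Definition~\ref{Definition: F} one has $\iota\big(F(\vec{t})\big)=u_{0,\lambda^{-1}(t)}$ for $|t|$ small, so by translation-invariance of $T$ and $\hat{v}$ both $\mathcal{E}_A(\vec{t})$ and $\mathcal{E}_B(\vec{t})$ are constant in $t$ and $\mathcal{E}_A(\vec{t})+\mathcal{E}_B(\vec{t})=e_\mathrm{H}$. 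Since each $c_J$ is, by its definition, assembled from a trace $\sum_j \partial_{t_j}^2\big|_0\mathcal{E}_\bullet(\vec{t})$ of a low-order atom, collecting all atoms that feed into $\mathcal{E}_A(\vec{t})$ and $\mathcal{E}_B(\vec{t})$ and using the Euler--Lagrange equation $z^\dagger\cdot T\cdot u_0+(z\otimes u_0)^\dagger\cdot\hat{v}\cdot u_0\otimes u_0=0$ for real $z\perp u_0$ to cancel the off-diagonal contributions involving $\sum_j\partial_{t_j}^2\big|_0 f(t)$, the combination $c+\sum_J \lambda_J c_J$ collapses to $-\frac{1}{8}\sum_j \partial_{t_j}^2\big|_0\big[\mathcal{E}_A(\vec{t})+\mathcal{E}_B(\vec{t})\big]=-\frac{1}{8}\sum_j\partial_{t_j}^2\big|_0\,e_\mathrm{H}=0$.

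Since Theorem~\ref{Theorem: Main B} is already in hand, there is no genuinely hard analytic step left; the corollary is a wrap-up, just like Corollary~\ref{Corollary: Main A}. The one point that must not be skipped — for otherwise the constant in the statement would be wrong, and, because $\mathbb{T}_N\gtrsim\frac1N$, a stray $\frac{\mathrm{const}}{N}$ is \emph{not} absorbable into $o_*\left(\mathbb{T}_N\right)$ — is the identity $\sum_J \lambda_J c_J=-c$, so I expect this finite computation (together with the routine checks that the quantization maps commute with $\mathfrak{Re}$ and with finite summation) to be essentially all of the remaining work.
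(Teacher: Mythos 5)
Your plan is exactly the one the paper follows: combine Lemma~\ref{Lemma: B Decomposition} and Theorem~\ref{Theorem: Main B} to assemble the identity with the extra constant $\tilde c:=\sum_{J\in\{0,\dots,4\}^4}\lambda_J c_J$, and then verify $c+\tilde c=0$ by rewriting each nonzero $c_J$ as $\pm\frac18\sum_j\partial_{t_j}^2\big|_0 \mathcal{E}_K(\vec t\,)$ and invoking the flatness $\mathcal{E}_A(\vec t\,)+\mathcal{E}_B(\vec t\,)=\mathcal{E}'(\vec t\,)=e_\mathrm{H}$. One small remark in your favour: the paper asserts $\partial_{t_j}^2\big|_0\lambda_J\mathcal{E}_J(\vec t\,)=0$ for all off-list $J$, which is not literally true for $J=(2,0)$ and $J=(2,0,0,0)$ (since $D^2f(0)\neq 0$); those two contributions are nonzero individually but cancel against each other by the Euler--Lagrange identity $z^\dagger\cdot T\cdot u_0+(z\otimes u_0)^\dagger\cdot\hat v\cdot u_0\otimes u_0=0$ for $z\perp u_0$, which is exactly the step you flagged, so your version is a correct and slightly more careful rendering of the same argument.
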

\begin{proof}
Let us define $\tilde{c}:=\sum_{J\in \{0,\dots,4\}^4}\lambda_J c_J$. Combining Lemma \ref{Lemma: B Decomposition} and Theorem \ref{Theorem: Main B} immediately yields
\begin{align*}
\mathcal{W}_N \widetilde{B}_N \mathcal{W}_N^{-1}&=\mathcal{E}_B(q)+D_\mathcal{V}\big|_{q} \mathcal{E}_B\left(b_{\geq 1}\right)+\frac{1}{2}D^2_\mathcal{V}\big|_{0}\ \mathcal{E}_B\left(b_{\geq 1}\right)+\frac{\tilde{c}}{N}+o_*\left(\mathbb{T}_N\right).
\end{align*}
Recall the definition of $c_J$ in Eq.~(\ref{Equation: c_J for A}) for $J\in \{0,\dots,4\}^2$, respectively Eq.~(\ref{Equation: c_J for B}) for $J\in \{0,\dots,4\}^4$. Making use of the observation that most of the $c_J$ are zero, we obtain
\begin{align*}
c+\tilde{c}&=\sum_{J\in \{0,\dots,4\}^2}\lambda_J c_J+\sum_{J\in \{0,\dots,4\}^4}\lambda_J c_J=\lambda_{(0,0)}c_{(0,0)}+\lambda_{(1,1)}\Big[c_{(1,3)}+c_{(3,1)}+c_{(3,3)}\Big]\\
&\ +\lambda_{(0,0,0,0)}c_{(0,0,0,0)}+\lambda_{(1,1,0,0)}\Big[c_{(3,3,0,0)}+c_{(3,1,0,0)}+c_{(1,3,0,0)}\Big]\\
&\  +\lambda_{(1,0,1,0)}\Big[c_{(3,0,3,0)}+c_{(3,0,1,0)}+c_{(1,0,3,0)}\Big]+\lambda_{(0,1,1,0)}\Big[c_{(0,3,3,0)}+c_{(0,1,3,0)}+c_{(0,3,1,0)}\Big]\\
&=-\frac{1}{8}\partial^2_{t_j}\big|_{t=0}\sum_{j=1}^d\Big(\lambda_{(0,0)} \mathcal{E}_{(0,0)}\left(\vec{t}\ \right)+\lambda_{(1,1)} \mathcal{E}_{(1,1)}\left(\vec{t}\ \right)+\lambda_{(0,0,0,0)} \mathcal{E}_{(0,0,0,0)}\left(\vec{t}\ \right)\\
&\ +\lambda_{(1,1,0,0)} \mathcal{E}_{(1,1,0,0)}\left(\vec{t}\ \right)+\lambda_{(1,0,1,0)} \mathcal{E}_{(1,0,1,0)}\left(\vec{t}\ \right)+\lambda_{(0,1,1,0)} \mathcal{E}_{(0,1,1,0)}\left(\vec{t}\ \right)\Big)\\
&=-\frac{1}{8}\sum_{j=1}^d \partial^2_{t_j}\big|_{t=0}\Big(\mathcal{E}_A\left(\vec{t}\, \right)+\mathcal{E}_B\left(\vec{t}\, \right)\Big)=0,
\end{align*}
where we have used in the first equality of the last line that $\partial^2_{t_j}\big|_{t=0}\lambda_J\mathcal{E}_J\left(\vec{t}\, \right)=0$ for 
\begin{align*}
J\notin \{(0,0),(1,1),(1,1,0,0),(1,0,1,0),(0,1,1,0)\}
\end{align*}
and in the second equality of the last line that $\mathcal{E}_A\left(\vec{t}\, \right)+\mathcal{E}_B\left(\vec{t}\, \right)=\mathcal{E}'\left(\vec{t}\, \right)=e_\mathrm{H}$ for $t$ small enough, where $\mathcal{E}'$ is defined in Eq.~(\ref{Equation: Parameterized Hartree energy}).
\end{proof}

\begin{proof}[\textbf{Proof of Theorem \ref{Theorem: Decomposition}}.]
Making use of Eq.~(\ref{Equation: Corollary: Estimates Lower Bound - first line}), we obtain
\begin{align}
\label{Equation: Split of the operator}
\left(\mathcal{W}_N U_N\right) N^{-1}H_N \left(\mathcal{W}_N U_N\right)^{-1}=\mathcal{W}_N \widetilde{A}_N\mathcal{W}_N^{-1}+\mathcal{W}_N \widetilde{B}_N\mathcal{W}_N^{-1} +o_*\left(\mathbb{T}_N\right),
\end{align}
where we have used that $\mathcal{W}_N\, b_{\geq 1}^\dagger\cdot T\cdot b_{\geq 1}\, \mathcal{W}_N^{-1}\leq 2\left(X_1+X_2\right)$ with
\begin{align*}
X_1:&=\left(q+f(q)\right)^\dagger\cdot T\cdot \left(q+f(q)\right)=o_*(1),\\
X_2:&=\left(ip'+b_{>d}\right)^\dagger\cdot T\cdot \left(ip'+b_{>d}\right)=O_*\left(\mathbb{T}_N\right),
\end{align*}
see Lemmata \ref{Lemma: Function of q} and \ref{Lemma: p'}. Combining Corollaries \ref{Corollary: Main A} and \ref{Corollary: Main B} yields
\begin{align*}
\mathcal{W}_N \widetilde{A}_N\mathcal{W}_N^{-1}+\mathcal{W}_N \widetilde{B}_N\mathcal{W}_N^{-1}&=\mathcal{E}\left(q\right)\!+D_{\mathcal{V}}\big|_{q} \mathcal{E}\Big(b_{\geq 1}\Big)+\!\frac{1}{2}D_{\mathcal{V}}^2\big|_{0} \mathcal{E}\Big(b_{\geq 1}\Big)\!+o_*\left(\mathbb{T}_N\right),
\end{align*}
with $\mathcal{E}:=\mathcal{E}_A+\mathcal{E}_B$. Furthermore, note that $\mathcal{E}(z)=\mathcal{E}'(z)$ for $\|z\|<r$ where $\mathcal{E}'$ is defined in Eq.~(\ref{Equation: Parameterized Hartree energy}), see Lemmata \ref{Lemma: A Decomposition} and \ref{Lemma: B Decomposition}. Therefore, $\mathcal{E}\left(\vec{t}\, \right)=e_\mathrm{H}$ and $D_{\mathcal{V}}\big|_{t} \mathcal{E}=0$ for $t$ small enough. As we will show in Lemma \ref{Lemma: q function}, this implies $\mathcal{E}\left(q\right)=e_\mathrm{H}+o_*\left(\mathbb{T}_N\right)$ and $D_{\mathcal{V}}\big|_{q} \mathcal{E}\Big(b_{\geq 1}\Big)=o_*\left(\mathbb{T}_N\right)$. Furthermore, we have $\frac{1}{2}D_{\mathcal{V}}^2\big|_{0}\ \mathcal{E}=\mathrm{Hess}|_{u_0}\mathcal{E}_\mathrm{H}$ and therefore 
\begin{align*}
\frac{1}{2}D_{\mathcal{V}}^2\big|_{0}\ \mathcal{E}\Big(b_{\geq 1}\Big)=N^{-1}\mathbb{H}.
\end{align*}
In combination with Eq.~(\ref{Equation: Split of the operator}) this concludes the proof.
\end{proof}

\appendix

%\begin{appendices}

\section{Coercivity of the Hessian in Example (II)}
\label{Appendix: Non-degenerate Hessian}

In the following we are going to verify that the Hartree energy of a system of  pseudo-relativistic bosons in $\mathbb{R}^3$ interacting  via a Newtonian potential, given by 
\begin{align*}
\mathcal{E}_g\left[u\right]:=\big\langle\sqrt{m^2-\Delta}-m\big\rangle_u-(u\otimes u)^\dagger\cdot \frac{g}{2|x-y|}\cdot u\otimes u ,
\end{align*}
satisfies the coercivity assumption in Eq.~(\ref{Equation: Non-degenerate Hessian}) for $g$ small enough, see Example (II) in the introduction. Note that we are using the notation introduced in Section \ref{Secction: Fock Space Formalism}. Let us denote with $u_{g,\beta}$ the unique radial minimizer of the functional $\mathcal{E}_g$ subject to the rescaled condition $\|u\|=1+\beta$, i.e. $u_{g,\beta}$ is radial, and satisfies $\|u_{g,\beta}\|=1+\beta$ and $\mathcal{E}_g[u_{g,\beta}]=\underset{\|u\|=1+\beta}{\inf}\mathcal{E}_g[u]$. Let us further denote the normed minimizers by $u_g:=u_{g,0}$. By a scaling argument it is easy to see that $u_{g,\beta}=(1+\beta)u_{g(1+\beta)^2}$. For real-valued functions $f$ and $h$ in $\{u_{g,\beta}\}^\perp$ we can express the Hessian as $\frac{1}{2}\mathrm{Hess}|_{u_{g,\beta}}\mathcal{E}_g[f+ih]=\braket{L^+_{g,\beta}}_f+\braket{L^-_{g,\beta}}_h$, where $L^+_{g,\beta}$ and $L^-_{g,\beta}$ are selfadjoint operators given by
\begin{align*}
L^-_{g,\beta}&:=\sqrt{m^2-\Delta}-m-\mu_{g,\beta}- \left(1\otimes u_{g,\beta}\right)^\dagger\cdot \frac{g}{|x-y|}\cdot 1\otimes u_{g,\beta},\\
L^+_{g,\beta}&:=L^-_{g,\beta}-\left(1\otimes u_{g,\beta}\right)^\dagger \cdot \frac{2g}{|x-y|}\cdot u_{g,\beta}\otimes 1,
\end{align*}
with $\mu_{g,\beta}:=\big\langle\sqrt{m^2-\Delta}-m\big\rangle_{u_{g,\beta}}-(u_{g,\beta}\otimes u_{g,\beta})^\dagger\cdot \frac{g}{|x-y|}\cdot u_{g,\beta}\otimes u_{g,\beta}$. Furthermore we denote the operators associated to the normed minimizers $u_g$ by $L^\pm_{g}:=L^\pm_{g,0}$. Note that 
\begin{align*}
\braket{L^-_{g} - L^+_{g}}_f=\left(f \otimes u_{g}\right)^\dagger\cdot \frac{2g}{|x-y|}\cdot u_{g}\otimes f>0
\end{align*}
for all $f\neq 0$, and consequently it is enough to verify the following Theorem \ref{Theorem: Concavity for pseudo-relativistic systems} in order to prove Eq.~(\ref{Equation: Non-degenerate Hessian}).

\begin{theorem}
\label{Theorem: Concavity for pseudo-relativistic systems}
There exist constants $g_0$ and $\eta>0$ such that for all $0<g<g_0$ and $f\in L^2\left(\mathbb{R}^d\right)$ with $f\perp \{u_g,\partial_{x_1} u_g,\partial_{x_2} u_g,\partial_{x_3} u_g\}$
\begin{align*}
\braket{L^+_g}_f\geq \eta\|f\|^2.
\end{align*}
\end{theorem}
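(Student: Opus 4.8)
\textbf{Proof strategy of Theorem \ref{Theorem: Concavity for pseudo-relativistic systems}.} The plan is to deduce the coercivity of $L_g^+$ on $\{u_g,\partial_{x_1}u_g,\partial_{x_2}u_g,\partial_{x_3}u_g\}^\perp$ from the spectral structure of $L_g^+$ combined with a classical coercivity criterion of the type used in \cite{W,FLS}. First I would record the structural input. From the Euler--Lagrange equation for $u_g$ one has $L_g^-u_g=0$, and, since $u_g>0$ is the ground state of the Schr\"odinger-type operator $\sqrt{m^2-\Delta}-m-W_g$ with $W_g(x):=\int|u_g(y)|^2\,|x-y|^{-1}\,\mathrm{d}y$, one has $L_g^-\geq 0$ with $\ker L_g^-=\mathbb{R}\,u_g$ by the Perron--Frobenius property; differentiating the Euler--Lagrange equation in the translation variable gives $L_g^+\partial_{x_i}u_g=0$. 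The non-degeneracy $\ker L_g^+=\mathrm{span}\{\partial_{x_1}u_g,\partial_{x_2}u_g,\partial_{x_3}u_g\}$, valid for $g$ below a critical value, is exactly the content of \cite{Le,GZ}, which I would invoke. Finally, since $v$ decays at infinity and the non-local operator $\left(1\otimes u_g\right)^\dagger\cdot\frac{2g}{|x-y|}\cdot u_g\otimes 1$ appearing in $L_g^+$ has a Hilbert--Schmidt integral kernel (by the Hardy--Littlewood--Sobolev inequality), $L_g^+$ is a relatively compact perturbation of $\sqrt{m^2-\Delta}-m-\mu_g$, so $\sigma_{\mathrm{ess}}(L_g^+)=[-\mu_g,\infty)$; as $\mu_g<0$, the eigenvalue $0$ is isolated of finite multiplicity and there is a spectral gap above it.

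Second, I would show that $L_g^+$ has exactly one negative eigenvalue. On the one hand $\braket{L_g^+}_{u_g}=\braket{L_g^-}_{u_g}-2g\left(u_g\otimes u_g\right)^\dagger\cdot\frac{1}{|x-y|}\cdot u_g\otimes u_g<0$, so there is at least one. On the other hand, the second-order minimality of $u_g$ as a constrained minimizer of $\mathcal{E}_g$ on the unit sphere is precisely $\mathrm{Hess}|_{u_g}\mathcal{E}_g[h]=2\braket{L_g^+}_h\geq 0$ for every real $h\perp u_g$, so by the min--max principle (note that $0$ lies below $\sigma_{\mathrm{ess}}(L_g^+)$) the number of negative eigenvalues of $L_g^+$ is at most one. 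Hence $L_g^+$ has exactly one negative eigenvalue $-\lambda_g<0$, which is simple, with a real radial eigenfunction $\chi_g$.

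Third -- the key positivity input -- I would determine the sign of $\braket{(L_g^+)^{-1}u_g,u_g}$. Using the scaling relation $u_{g,\beta}=(1+\beta)\,u_{g(1+\beta)^2}$, differentiation of the Euler--Lagrange equation in $\beta$ at $\beta=0$ gives $L_g^+w_g=\big(\partial_\beta\mu_{g,\beta}\big|_{\beta=0}\big)\,u_g$ with $w_g:=\partial_\beta u_{g,\beta}\big|_{\beta=0}$; moreover $w_g$ is radial, hence orthogonal to $\ker L_g^+$, and $\braket{w_g,u_g}=\tfrac12\partial_\beta\|u_{g,\beta}\|^2\big|_{\beta=0}=1$, whence $\braket{(L_g^+)^{-1}u_g,u_g}=\big(\partial_\beta\mu_{g,\beta}\big|_{\beta=0}\big)^{-1}$. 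It thus remains to verify $\partial_\beta\mu_{g,\beta}\big|_{\beta=0}<0$, equivalently $\tfrac{\mathrm{d}}{\mathrm{d}g}\mu_g<0$. Setting $e(g):=\mathcal{E}_g[u_g]$, the Hellmann--Feynman identity yields $e'(g)=-\tfrac12\left(u_g\otimes u_g\right)^\dagger\cdot\frac{1}{|x-y|}\cdot u_g\otimes u_g<0$, while $e$ is concave in $g$ as an infimum of functions affine in $g$; since $\mu_g=e(g)+g\,e'(g)$ this gives $\tfrac{\mathrm{d}}{\mathrm{d}g}\mu_g=2e'(g)+g\,e''(g)\leq 2e'(g)<0$. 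This is precisely the point where the coordinate rescaling of \cite{FLS}, which in the non-relativistic case renders $e(g)$ explicitly proportional to $-g^2$, is unavailable because $\sqrt{m^2-\Delta}-m$ is not homogeneous under dilations, and the concavity of $e$ substitutes for it; the differentiability of $g\mapsto u_g$ used here comes from the implicit function theorem together with the non-degeneracy of the Hessian.

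Finally, I would conclude with the abstract coercivity lemma (in the spirit of \cite{W,FLS}): if $L$ is self-adjoint and bounded below with exactly one, simple, negative eigenvalue, finite-dimensional kernel $N$, and positive spectrum bounded away from $0$, and if $w\perp N$ satisfies $\braket{L^{-1}w,w}<0$, then there is $\eta>0$ with $\braket{Lf,f}\geq\eta\|f\|^2$ for all $f\perp w$ with $f\perp N$. Applying this with $L=L_g^+$, $N=\mathrm{span}\{\partial_{x_1}u_g,\partial_{x_2}u_g,\partial_{x_3}u_g\}$ (note $\braket{u_g,\partial_{x_i}u_g}=0$) and $w=u_g$ yields $\braket{L_g^+}_f\geq\eta\|f\|^2$ first for real $f$ orthogonal to $\{u_g,\partial_{x_1}u_g,\partial_{x_2}u_g,\partial_{x_3}u_g\}$, and then for arbitrary such $f\in L^2(\mathbb{R}^3)$ by decomposing into real and imaginary parts. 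I expect the main obstacle to be the content of the third step -- producing the sign $\tfrac{\mathrm{d}}{\mathrm{d}g}\mu_g<0$ without the dilation scaling available in the non-relativistic case -- while the remaining work is the careful assembly of the soft ingredients (the essential-spectrum computation for the non-local operator, the isolatedness and multiplicity of the eigenvalue $0$, the simplicity of $\chi_g$, and the $C^1$-dependence of $u_g$ on $g$) in the order above, each of which is available precisely on the range of $g$ on which \cite{Le,GZ} apply.
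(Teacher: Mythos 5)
Your proof is essentially correct and constitutes a genuinely different route from the one in the paper, though it hinges on the same scaling identity $u_{g,\beta}=(1+\beta)u_{g(1+\beta)^2}$. You take the classical Weinstein/\cite{FLS}--type route: establish that $L_g^+$ has exactly one negative eigenvalue, that $0$ is isolated with kernel the translations, compute the sign of $\braket{(L_g^+)^{-1}u_g,u_g}$ by differentiating the Euler--Lagrange equation along the rescaled family, and invoke the abstract coercivity lemma. The paper instead works directly with the \emph{difference quotients} $R_{g,\beta}=u_{g,\beta}-u_g$: it shows $L_g^+R_{g,\beta}=\delta_{g,\beta}u_g+\epsilon_{g,\beta}$ with quantitative control on the coefficients (Lemmata \ref{Lemma: R_beta} and \ref{Lemma: Non-degenerate growth of R_g,beta}), feeds this into the spectral gap from \cite{Le} to obtain the operator estimate $\|QL_g^+f\|\geq c\|f\|$, and then upgrades to quadratic-form coercivity using only the non-negativity $QL_g^+Q\geq 0$ coming from minimality. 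What the paper's route buys is that it never needs $\beta\mapsto u_{g,\beta}$ (equivalently $g\mapsto u_g$) to be differentiable, never counts negative eigenvalues, and never computes the sign of a resolvent matrix element --- only the uniform $H^1$ bound from \cite[Prop.\ 1]{Le} and the spectral gap. Your route is shorter to state but requires more structural input (simplicity and uniqueness of the negative eigenvalue, $C^1$ or even $C^2$ dependence of $u_g$ on $g$, and Hellmann--Feynman), all of which must also be justified on the admissible range of $g$.

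There is one genuine slip you should repair. You assert that $\partial_\beta\mu_{g,\beta}\big|_{\beta=0}<0$ is ``equivalently'' $\tfrac{\mathrm{d}}{\mathrm{d}g}\mu_g<0$, and then prove the latter. The two are not equivalent: from $\mu_{g,\beta}=(1+\beta)^2\mu_{g(1+\beta)^2}$ one gets $\partial_\beta\mu_{g,\beta}\big|_{\beta=0}=2\mu_g+2g\,\mu_g'$, so you also need $\mu_g<0$ in addition to $\mu_g'<0$. Fortunately $\mu_g<0$ does hold here (e.g.\ because $\mu_g=e(g)+g\,e'(g)$ with both $e(g)<0$ and $e'(g)<0$, or directly because $u_g$ is a localized ground state of a decaying potential well), so the conclusion survives; but the chain of implications as written is incomplete and the ``equivalently'' should be removed. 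A second, softer point: your use of $e''(g)$ presupposes $C^2$-dependence of $e$ (hence of $u_g$) on $g$, which, while obtainable from the implicit function theorem using the non-degeneracy of the Hessian, is an extra regularity input that the paper's difference-quotient argument is specifically designed to avoid.
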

In order to prove Theorem \ref{Theorem: Concavity for pseudo-relativistic systems}, we first need some auxiliary results regarding the minimizers $u_{g,\beta}$ subject to the rescaled condition $\|u_{g,\beta}\|=1+\beta$. 

\begin{lem}
\label{Lemma: R_beta}
Let us define $R_{g,\beta}:=u_{g,\beta}-u_g$ for $\beta\in [0, 1)$ (where $1$ can be replaced by any other positive number). Then there exist constants $g_0,C>0$ such that
\begin{align*}
L^+_g R_{g,\beta}=\delta_{g,\beta} u_{g}+\epsilon_{g,\beta},
\end{align*}
with $|\delta_{g,\beta}|\leq C\beta$ and $\|\epsilon_{g,\beta}\|\leq C\|R_{g,\beta}\|^2$ for $g\in (0,g_0)$ and $\beta\in [0,1)$.
\end{lem}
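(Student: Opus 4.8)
The plan is to expand the two Euler--Lagrange equations about the common base point $u_g$. The rescaled minimizer satisfies $L^-_{g,\beta}u_{g,\beta}=0$ and the normed minimizer satisfies $L^-_gu_g=0$. Writing $R:=R_{g,\beta}$ and $\delta_{g,\beta}:=\mu_{g,\beta}-\mu_g$, and using that everything is real so that $u_{g,\beta}^2-u_g^2=2u_gR+R^2$, one obtains the operator identity $L^-_{g,\beta}=L^-_g-\delta_{g,\beta}-\big(\tfrac{g}{|x|}*(2u_gR+R^2)\big)$. Inserting $u_{g,\beta}=u_g+R$ into $0=L^-_{g,\beta}u_{g,\beta}$ and using $L^-_gu_g=0$, the terms linear in $R$ reorganize: the direct contribution $-2\big(\tfrac{g}{|x|}*(u_gR)\big)u_g$ is precisely $(L^+_g-L^-_g)R$ by the definition of the exchange term entering $L^+_g$, so the linear part becomes $L^+_gR-\delta_{g,\beta}u_g$. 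Collecting the remaining (quadratic) pieces gives
\[
L^+_gR=\delta_{g,\beta}u_g+\epsilon_{g,\beta},\qquad
\epsilon_{g,\beta}=\delta_{g,\beta}R+2\big(\tfrac{g}{|x|}*(u_gR)\big)R+\big(\tfrac{g}{|x|}*R^2\big)u_{g,\beta},
\]
which is the claimed decomposition, so it remains to prove the two bounds.

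First I would control the multiplier. The amplitude scaling $\mathcal{E}_g[(1+\beta)u]=(1+\beta)^2\,\mathcal{E}_{g(1+\beta)^2}[u]$ shows that $u_{g,\beta}=(1+\beta)\,u_{g(1+\beta)^2}$ and, comparing Euler--Lagrange equations, that $\mu_{g,\beta}=\mu_{g(1+\beta)^2}$; hence $\delta_{g,\beta}=\mu_{g(1+\beta)^2}-\mu_g=O(\beta)$ provided $g\mapsto\mu_g$ is Lipschitz on a fixed interval $(0,g_0)$. The same identity gives $\|R\|\le(1+\beta)\|u_{g(1+\beta)^2}-u_g\|+\beta\|u_g\|\le C\beta$, and projecting onto $u_g$ yields $\langle u_g,R\rangle=\beta-\tfrac{1+\beta}{2}\|u_{g(1+\beta)^2}-u_g\|^2=\beta+O(\beta^2)$, so that $\beta/2\le\|R\|\le C\beta$ for small $\beta$, uniformly in $g$; in particular $|\delta_{g,\beta}|\le C\beta\le 2C\|R\|$. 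These steps require uniform Lipschitz (indeed $C^1$) dependence of $g\mapsto u_g$ in $L^2$ and in $H^{1/2}$, of $g\mapsto\mu_g$ on $(0,g_0)$, and uniform $L^\infty$, $L^2$ and $H^{1/2}$ bounds on $u_g$ and $u_{g,\beta}$; these I would obtain from standard implicit-function-theorem arguments for the radial Euler--Lagrange problem together with the usual bootstrapping of that equation, using that $u_g>0$ is the simple ground state of $\sqrt{m^2-\Delta}-m-\mu_g-\tfrac{g}{|x|}*u_g^2$.

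It then remains to estimate $\epsilon_{g,\beta}$ in $L^2$. The first term is immediate: $\|\delta_{g,\beta}R\|\le|\delta_{g,\beta}|\,\|R\|\le C\|R\|^2$. For the exchange-type term, splitting the Newtonian kernel into its near and far parts gives $\|\tfrac{1}{|x|}*(u_gR)\|_\infty\le C(\|u_g\|_\infty+\|u_g\|_{L^2})\|R\|_{L^2}$, whence $\|\big(\tfrac{g}{|x|}*(u_gR)\big)R\|_{L^2}\le C\|R\|_{L^2}^2$. For the last term I would combine the bound on $\|u_{g,\beta}\|_\infty$ with the Hardy--Littlewood--Sobolev estimate $\|\tfrac{1}{|x|}*h\|_{L^2}^2=c\iint\tfrac{h(x)h(y)}{|x-y|}\,\mathrm{d}x\,\mathrm{d}y\le C\|h\|_{L^{6/5}}^2$ for $h=R^2$, the interpolation $\|R\|_{L^{12/5}}\le\|R\|_{L^2}^{1/2}\|R\|_{L^3}^{1/2}$, and the Sobolev embedding $H^{1/2}(\mathbb{R}^3)\hookrightarrow L^3$, to get $\|\big(\tfrac{g}{|x|}*R^2\big)u_{g,\beta}\|_{L^2}\le C\|R\|_{L^2}\|R\|_{H^{1/2}}$; since $\|R\|_{H^{1/2}}\le C\beta\le C\|R\|_{L^2}$ by the previous paragraph, this is $\le C\|R\|_{L^2}^2$. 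Summing the three contributions gives $\|\epsilon_{g,\beta}\|\le C\|R_{g,\beta}\|^2$.

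The hard part will be the uniformity in $g$: all constants and the threshold $g_0$ must be taken independent of $g\in(0,g_0)$, which forces uniform Lipschitz control of $g\mapsto(u_g,\mu_g)$ in $L^2$ and $H^{1/2}$, uniform a priori $L^\infty$ and $H^{1/2}$ bounds on the minimizers, and, most delicately, the uniform lower bound $\|R_{g,\beta}\|\ge\beta/2$. The first two are standard regularity facts for pseudo-relativistic Choquard-type minimizers (in the spirit of \cite{Le,GZ}), and the third comes from the $u_g$-projection computation above; once these are in place, the remaining estimates are routine singular-integral bounds. (For $\beta$ bounded away from $0$ the asserted inequalities follow from compactness and adjustment of constants, $u_{g,\beta}$ still being well defined since $g(1+\beta)^2$ stays below the existence threshold when $g_0$ is chosen small.)
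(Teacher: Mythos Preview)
Your algebraic decomposition is correct and coincides with the paper's: both subtract the two Euler--Lagrange equations, identify $\delta_{g,\beta}=\mu_{g,\beta}-\mu_g$, and arrive at the same quadratic remainder $\epsilon_{g,\beta}$ (your three terms regroup exactly into the paper's three Coulomb terms upon writing $u_{g,\beta}=u_g+R$).

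Where you differ is in the estimates, and here you work considerably harder than necessary. Two simplifications are worth noting. First, the lower bound $\|R_{g,\beta}\|\geq\beta$ is immediate from the reverse triangle inequality, $\|R_{g,\beta}\|\geq\big|\|u_{g,\beta}\|-\|u_g\|\big|=\beta$; there is no need to project onto $u_g$ or to invoke Lipschitz dependence of $g\mapsto u_g$, and it holds for all $\beta\in[0,1)$ without a separate compactness argument. Second, the paper bounds each Coulomb contribution in $\epsilon_{g,\beta}$ directly by $Cg\|R\|_{L^2}^2$ using only the uniform $H^1$ bound $\|u_{g,\beta}\|_{H^1}\leq C$ (quoted from \cite[Proposition~1]{Le}) together with the boundedness of $(1-\Delta)^{-1/2}\tfrac{1}{|x|}$; e.g.\ for the term $u_g\,(\tfrac{g}{|\cdot|}*R^2)$ one can use $\|u_g\|_{L^6}\|\tfrac{1}{|\cdot|}*R^2\|_{L^3}\leq C\|u_g\|_{H^1}\|R\|_{L^2}^2$. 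This route avoids the $L^\infty$ bounds, the $H^{1/2}$-Lipschitz control of $g\mapsto u_g$, and the chain $\|R\|_{H^{1/2}}\leq C\beta\leq C\|R\|_{L^2}$ that you set up via the implicit function theorem. Your argument is valid, but the paper's choice of norms makes the uniformity in $g$ essentially automatic, whereas yours pushes that burden onto IFT regularity statements that, while true, are heavier than what the lemma actually requires.
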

\begin{proof}
Since the elements $u_{g,\beta}$ are minimizers of $\mathcal{E}_g$, they satisfy the corresponding Euler-Lagrange equations $L^-_{g,\beta} u_{g,\beta}=0$. A straightforward computation yields
\begin{align*}
0=&L^-_{g,\beta} u_{g,\beta}-L^-_{g} u_{g}=L^+_{g}R_{g,\beta}-\delta_{g,\beta}u_g-\epsilon_{g,\beta}
\end{align*}
with 
\begin{align}
\nonumber \delta_{g,\beta}&:=\mu_{g,\beta}-\mu_{g},\\
\nonumber \epsilon_{g,\beta}&:=\left(\mu_{g,\beta}-\mu_{g}\right)R_{g,\beta}+ \left(1\otimes u_{g,\beta}\right)^\dagger\cdot \frac{g}{|x-y|}\cdot R_{g,\beta}\otimes R_{g,\beta}\\
&\label{Equation: Definition of epsilon_g,beta} \ \ \ \ +\left(1\otimes R_{g,\beta}\right)^\dagger\cdot \frac{g}{|x-y|}\cdot R_{g,\beta}\otimes u_{g}+\left(1\otimes R_{g,\beta}\right)^\dagger\cdot \frac{g}{|x-y|}\cdot u_{g}\otimes R_{g,\beta}.
\end{align}
Let us first investigate the contributions involving $\frac{g}{|x-y|}$. From \cite[Proposition 1]{Le} it is clear that there exists a constant $C$ such that $\|u_{g,\beta}\|_{H^1\left(\mathbb{R}^3\right)}\leq C<\infty$ for all $g$ small enough and $\beta\in [0,1)$. With the notation $S:=\sqrt{1-\Delta}$ we obtain
\begin{align*}
&\left\|\left(1\otimes u_{g,\beta}\right)^\dagger\cdot \frac{g}{|x-y|}\cdot R_{g,\beta}\otimes R_{g,\beta}\right\|=\left\|\left(1\otimes S u_{g,\beta}\right)^\dagger\cdot 1\otimes S^{-1}\frac{g}{|x-y|}\cdot R_{g,\beta}\otimes R_{g,\beta}\right\|\\
&\ \ \leq g\|S u_{g,\beta}\|\, \left\|1\otimes S^{-1}\frac{1}{|x-y|}\right\|\, \|R_{g,\beta}\|^2\leq Cg \left\|S^{-1}\frac{1}{|x|}\right\|\, \|R_{g,\beta}\|^2,
\end{align*}
where $\left\|S^{-1}\frac{1}{|x|}\right\|$ is the operator norm of the bounded one-particle operator $S^{-1}\frac{1}{|x|}$. Similarly, the other contributions involving $\frac{g}{|x-y|}$ in Eq.~(\ref{Equation: Definition of epsilon_g,beta}) can be estimated by $Cg \left\|S^{-1}\frac{1}{|x|}\right\|\, \|R_{g,\beta}\|^2$ as well. The uniform control of the norm $\|u_{g,\beta}\|_{H^1\left(\mathbb{R}^3\right)}\leq C<\infty$ furthermore implies $|\delta_{g,\beta}|=|\mu_{g,\beta}-\mu_g|\leq \tilde{C}\beta$ for some constant $\tilde{C}$. Note that $\|R_{g,\beta}\|\geq \|u_{g,\beta}\|-\|u_g\|=\beta$, and consequently $\|\left(\mu_{g,\beta}-\mu_{g}\right)R_{g,\beta}\|\leq \tilde{C}\beta\|R_{g,\beta}\|\leq \tilde{C}\|R_{g,\beta}\|^2$. We conclude that
\begin{align*}
\|\epsilon_{g,\beta}\|\leq \left(\tilde{C}+3Cg \left\|S^{-1}\frac{1}{|x|}\right\|\right)\|R_{g,\beta}\|^2.
\end{align*}
\end{proof}

\begin{lem}
\label{Lemma: Non-degenerate growth of R_g,beta}
Let $R_{g,\beta}$ and $\epsilon_{g,\beta}$ be as in Lemma \ref{Lemma: R_beta}. Then there exists a constant $g_0>0$ such that $\underset{\beta\rightarrow 0}{\lim}\, \frac{\|\epsilon_{g,\beta}\|}{\braket{u_g,R_{g,\beta}}}=0$ and $\underset{\beta\rightarrow 0}{\limsup}\, \frac{|\delta_{g,\beta}|}{\braket{u_g,R_{g,\beta}}}\leq C$ for a suitable constant $C>0$ and $g\in (0,g_0)$.
\end{lem}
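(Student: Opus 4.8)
The plan is to reduce both statements to the single lower bound $\langle u_g, R_{g,\beta}\rangle \geq \tfrac12\,\beta$, valid for $\beta$ small (uniformly for $g$ near any fixed point of $(0,g_0)$), together with the decay $\|R_{g,\beta}\|^2 = o(\beta)$ as $\beta\to0$. Granting these, Lemma \ref{Lemma: R_beta} gives $|\delta_{g,\beta}|\leq C\beta$, hence $\frac{|\delta_{g,\beta}|}{\langle u_g,R_{g,\beta}\rangle}\leq 2C$, which is the second claim with the constant $C$ inherited from Lemma \ref{Lemma: R_beta}; and $\|\epsilon_{g,\beta}\|\leq C\|R_{g,\beta}\|^2$ together with $\langle u_g,R_{g,\beta}\rangle\geq\tfrac12\beta$ gives $\frac{\|\epsilon_{g,\beta}\|}{\langle u_g,R_{g,\beta}\rangle}\leq 2C\,\frac{\|R_{g,\beta}\|^2}{\beta}\to0$, which is the first claim.

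The two auxiliary estimates I would obtain from the scaling relation $u_{g,\beta}=(1+\beta)\,u_{g(1+\beta)^2}$ combined with $\|u_g\|=\|u_{g(1+\beta)^2}\|=1$ and $\|u_{g,\beta}\|=1+\beta$. Writing $\eta_\beta:=\|u_g-u_{g(1+\beta)^2}\|^2$ and using $\langle u_g,u_{g(1+\beta)^2}\rangle=1-\tfrac12\eta_\beta$, a purely algebraic computation (all functions real) yields
\begin{align*}
\langle u_g, R_{g,\beta}\rangle &= (1+\beta)\,\langle u_g, u_{g(1+\beta)^2}\rangle - 1 = \beta - \tfrac{1+\beta}{2}\,\eta_\beta,\\
\|R_{g,\beta}\|^2 &= (1+\beta)^2 - 2\langle u_g, u_{g,\beta}\rangle + 1 = \beta^2 + (1+\beta)\,\eta_\beta .
\end{align*}
Thus everything reduces to the single quantitative input $\eta_\beta = o(\beta)$, i.e.\ $\|u_g-u_{g'}\|^2 = o(|g-g'|)$ as $g'\to g$: indeed then $\langle u_g,R_{g,\beta}\rangle=\beta\,(1+o(1))\geq\tfrac12\beta$ and $\|R_{g,\beta}\|^2=\beta^2+o(\beta)=o(\beta)$ for $\beta$ small enough.

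The remaining point — and, I expect, the main obstacle — is the regularity of the minimizer branch $g\mapsto u_g\in L^2(\mathbb{R}^3)$. I would establish local Lipschitz continuity $\|u_g-u_{g'}\|\leq C|g-g'|$, which is even stronger than needed since it gives $\eta_\beta=O(\beta^2)$. This can be extracted by an implicit-function-theorem argument applied to the Euler--Lagrange map $(g,\mu,u)\mapsto(\sqrt{m^2-\Delta}-m-\mu)u-g\,(|\cdot|^{-1}*u^2)\,u$ restricted to the slice transversal to the translation orbit: by Kato's inequality $|x|^{-1}\leq\varepsilon\sqrt{m^2-\Delta}+C_\varepsilon$ the linearization is a relatively form-bounded perturbation of $\sqrt{m^2-\Delta}-m-\mu_g$, whose essential spectrum is $[-\mu_g,\infty)$ with $-\mu_g>0$, so it is Fredholm of index zero near $0$; triviality of its kernel modulo translations then makes the branch $C^1$. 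Since that kernel-triviality is essentially the coercivity being established in the rest of this appendix, in practice one threads it through using the a priori bound $\|u_{g,\beta}\|_{H^1}\leq C$ of \cite{Le} and the uniqueness of minimizers (rather than circularly invoking Theorem \ref{Theorem: Concavity for pseudo-relativistic systems}); alternatively the required $C^1$ dependence of $u_g$ and $\mu_g$ on $g$ is already contained in the analysis of \cite{Le}. Once this regularity is in hand the rest, as indicated above, is routine.
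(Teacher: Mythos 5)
Your proposal is correct, and it takes a genuinely different route from the paper. Your reduction is clean: using the scaling identity $u_{g,\beta}=(1+\beta)u_{g(1+\beta)^2}$ and the normalization $\|u_g\|=\|u_{g(1+\beta)^2}\|=1$, you derive the exact identities $\langle u_g, R_{g,\beta}\rangle = \beta - \tfrac{1+\beta}{2}\eta_\beta$ and $\|R_{g,\beta}\|^2 = \beta^2 + (1+\beta)\eta_\beta$ with $\eta_\beta:=\|u_g-u_{g(1+\beta)^2}\|^2$, reducing everything to a quantitative modulus $\eta_\beta=o(\beta)$ for the map $g\mapsto u_g$, which you propose to obtain by an implicit-function-theorem argument. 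The paper avoids studying the $g$-dependence of $u_g$ altogether. Instead it bootstraps directly from Lemma~\ref{Lemma: R_beta} and the spectral gap: since $u_{g,\beta}$ is radial, $R_{g,\beta}$ is orthogonal to the kernel $\{\partial_{x_j}u_g\}$ of $L^+_g$, and since $0$ is an isolated eigenvalue with gap $\delta$, one has $\delta\|R_{g,\beta}\|\leq\|L^+_g R_{g,\beta}\|\leq C(\beta+\|R_{g,\beta}\|^2)$; once $\|R_{g,\beta}\|\leq\delta/(2C)$ (a \emph{qualitative} smallness consequence of Proposition~1 of \cite{Le}), this yields $\|R_{g,\beta}\|\leq\tfrac{2C}{\delta}\beta$. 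Then $\langle u_g,R_{g,\beta}\rangle=\beta+\tfrac{1}{2}(\beta^2-\|R_{g,\beta}\|^2)\geq\beta-2(C/\delta)^2\beta^2$, and the two limits follow at once. So the paper's inputs are weaker (qualitative continuity plus the spectral gap), whereas your reduction demands the stronger $o(\beta)$ estimate for $\eta_\beta$, which you leave as the main sketch. Note, however, that the paper's spectral-gap bound $\|R_{g,\beta}\|\leq\tfrac{2C}{\delta}\beta$ plugged into your second identity gives $\eta_\beta=O(\beta^2)$ — precisely the Lipschitz dependence you were after — so you could close your argument by that same bootstrap, without an IFT and without any risk of circularity with Theorem~\ref{Theorem: Concavity for pseudo-relativistic systems}.
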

\begin{proof}
By the results in \cite{Le} we know that $0$ is an isolated eigenvalue of $L^+_{g}$, i.e. there exists a constant $\delta>0$ such that $\sigma\left(L^+_{g}\right)\cap (-\delta,\delta)=\{0\}$, with corresponding eigenvectors $\partial_{x_1} u_g,\partial_{x_2} u_g,\partial_{x_3} u_g$. Since $u_{g,\beta}$ is radial, we know that $R_{g,\beta}\perp \partial_{x_j}u_g$, and therefore we obtain by Lemma \ref{Lemma: R_beta}
\begin{align}
\label{Equation: Differentiability}
\delta \|R_{g,\beta}\|\leq \|L^+_g R_{g,\beta}\|\leq C\left(\beta+\|R_{g,\beta}\|^2\right).
\end{align}
Using \cite[Proposition 1]{Le} again, it is clear that $\underset{g\rightarrow 0, \beta\rightarrow 0}{\lim}\,\|R_{g,\beta}\|=0$ and therefore there exists a constant $g_0$ such that $\|R_{g,\beta}\|\leq \frac{\delta}{2C}$ for all $g\in (0,g_0)$ and $\beta$ small enough. Consequently Eq.~(\ref{Equation: Differentiability}) yields $\|R_{g,\beta}\|\leq \frac{2C}{\delta}\beta$. Using the fact that $\|u_{g,\beta}\|=1+\beta$, we further obtain
\begin{align*}
1+2\beta\leq (1+\beta)^2=1+2\braket{u_g,R_{g,\beta}}+\|R_{g,\beta}\|^2\leq 1+ 2\braket{u_g,R_{g,\beta}}+\left(\frac{2C}{\delta}\right)^2 \beta^2,
\end{align*}
and therefore
\begin{align*}
\frac{\|\epsilon_{g,\beta}\|}{\braket{u_g,R_{g,\beta}}}\leq \frac{C\left(\frac{2C}{\delta}\right)^2\beta^2}{\beta-2\left(\frac{C}{\delta}\right)^2 \beta^2}\underset{\beta\rightarrow 0}{\longrightarrow}0,\\
\frac{|\delta_{g,\beta}|}{\braket{u_g,R_{g,\beta}}}\leq \frac{C\beta}{\beta-2\left(\frac{C}{\delta}\right)^2 \beta^2}\underset{\beta\rightarrow 0}{\longrightarrow}C.
\end{align*}
\end{proof}

\begin{proof}[Proof of Theorem \ref{Theorem: Concavity for pseudo-relativistic systems}]
Let $Q$ denote the projection onto the space $\{u_g\}^\perp$. Clearly there exists a $w\in L^2(\mathbb{R}^3)$ such that
\begin{align*}
L^+_g f=QL^+_g f+\braket{w,f}u_g
\end{align*}
for all $f\in L^2(\mathbb{R}^3)$. With $R_{g,\beta},\delta_{g,\beta}$ and $\epsilon_{g,\beta}$ from Lemma \ref{Lemma: R_beta} at hand, we obtain
\begin{align*}
L^+_g\left(\delta_{g,\beta} f-\braket{w,f}R_{g,\beta}\right)=\delta_{g,\beta} QL^+_g f-\braket{w,f}\epsilon_{g,\beta},
\end{align*}
and therefore $\|L^+_g\left(\delta_{g,\beta} f-\braket{w,f}R_{g,\beta}\right)\|\leq |\delta_{g,\beta}|\, \|QL^+_g f\|+\|w\|\, \|\epsilon_{g,\beta}\|\, \|f\|$. Using again that there exists a constant $\delta>0$ such that $\sigma\left(L^+_{g}\right)\cap (-\delta,\delta)=\{0\}$ with corresponding eigenvectors $\partial_{x_1} u_g,\partial_{x_2} u_g,\partial_{x_3} u_g$, see \cite{Le}, and that $R_{g,\beta}$ as a radial function is orthogonal to them, we obtain for all $f\in \{u_g,\partial_{x_1} u_g,\partial_{x_2} u_g,\partial_{x_3} u_g\}^\perp$
\begin{align*}
\|L^+_g\left(\delta_{g,\beta} f-\braket{w,f}R_{g,\beta}\right)\|\geq \delta\|\delta_{g,\beta} f-\braket{w,f}R_{g,\beta}\|\geq \delta |\braket{u_g,R_{g,\beta}}|\, |\braket{w,f}|.
\end{align*}
Combining the estimates we have so far yields
\begin{align*}
|\braket{w,f}|\leq \frac{|\delta_{g,\beta}|}{\delta |\braket{u_g,R_{g,\beta}}|} \|QL^+_g f\|+\frac{\|w\|\, \|\epsilon_{g,\beta}\|}{\delta |\braket{u_g,R_{g,\beta}}|} \|f\|=:x_\beta \|QL^+_g f\|+y_\beta \|f\|.
\end{align*}
By Lemma \ref{Lemma: Non-degenerate growth of R_g,beta} we know that $\underset{\beta \rightarrow 0}{\limsup}|x_\beta|\leq C$ for some constant $C>0$ and $y_\beta\underset{\beta \rightarrow 0}{\longrightarrow}0$. Using again that $\sigma\left(L^+_{g}\right)\cap (-\delta,\delta)=\{0\}$, we obtain
\begin{align*}
\delta \|f\|\leq \|L^+_g f\|\leq \|QL^+_g f\|+|\braket{w,f}|\leq \left(1+x_\beta\right)\|QL^+_g f\|+y_\beta \|f\|
\end{align*}
and consequently $\|QL^+_g f\|\geq \frac{\delta-y_\beta}{1+x_\beta}\|f\|$. This holds for all (small) $\beta$, hence $\beta\to 0$ gives $ \|QL^+_g f\|\geq  \frac{\delta}{1+C}\|f\|$. Finally note that $QL^+_gQ\geq 0$ since $\braket{L^+_g}_f=\frac{1}{2}\mathrm{Hess}|_{u_{g}}\mathcal{E}_g[f]\geq 0$ for real-valued $f\perp u_g$, which concludes the proof.
\end{proof}

\section{The Bogoliubov Operator}
\label{Appendix A}
In the following we will prove Theorem \ref{Theorem: Bogoliubov Ground State Energy}, i.e. we are going to verify that the Bogoliubov operator $\mathbb{H}$ constructed in Definition \ref{Definition: Bogoliubov Operator} is bounded from below and that its ground state energy can be approximated by $\Psi\in \bigcup_{M\in \mathbb{N}}\mathrm{dom}[a_{\geq 1}^\dagger\cdot (T+1)\cdot a_{\geq 1}]\cap \mathcal{F}_{\leq M}$ with $\|\Psi\|=1$. Our strategy is to decouple the degenerate modes from the non-degenerate ones and to apply the general framework for non-degenerate Bogoliubov operators in \cite{NNS}. 

\begin{defi}
\label{Definition: Perpendicular Bogoliubov}
Let $Q_\mathrm{H}=\sum_{i,j\geq 1}Q_{i,j}\ u_i\cdot u_j^\dagger$ and $G_\mathrm{H}=\sum_{i,j\geq 1}G_{ij}\ u_i\otimes u_j$ be as in Lemma \ref{Lemma: Hessian}, and let us denote the operator $Q_\perp:=\sum_{i,j>d}Q_{i,j}\ u_i\cdot u_j^\dagger$ on $\mathcal{H}_\perp:=\braket{u_0,u_1,\dots,u_d}^\perp$ as well as $G_\perp:=\sum_{i,j>d}G_{ij} u_i\otimes u_j$. Then we define the operator $\mathbb{H}_\perp$ as
\begin{align*}
\mathbb{H}_\perp:= a_{>d}^\dagger\cdot   Q_\perp\cdot  a_{>d}+2\mathfrak{Re}\left[  G_\perp^\dagger\cdot  a_{>d}\ \underline{\otimes}\ a_{>d}\right].
\end{align*}
\end{defi}

\begin{lem}
The operator $\mathbb{H}_\perp$ is semi-bounded from below, i.e. $\inf \sigma\left(\mathbb{H}_\perp\right)>-\infty$. Furthermore, there exists a constant $R>0$ such that
\begin{align}
\label{Equality: Bound from above}
\mathbb{H}_\perp\leq R\left( a_{>d}^\dagger\cdot   Q_\perp\cdot  a_{>d}+1\right).
\end{align}
\end{lem}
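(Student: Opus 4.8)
\emph{Plan of proof.} The plan is to recognise $\mathbb{H}_\perp$ as a \emph{non-degenerate} quadratic (Bogoliubov) Hamiltonian on $\mathcal{F}(\mathcal{H}_\perp)$ and to invoke the diagonalisation theory of \cite{NNS}; the degeneracy that prevents a direct application of \cite{NNS} to $\mathbb{H}$ itself has been removed precisely by passing from $\mathcal{H}_0$ to $\mathcal{H}_\perp=\langle u_0,u_1,\dots,u_d\rangle^\perp$. First I would record the structural properties of the coefficients. Writing $P_\perp$ for the orthogonal projection onto $\mathcal{H}_\perp$, one has $Q_\perp=P_\perp Q_\mathrm{H}P_\perp$ and $G_\perp=(P_\perp\otimes P_\perp)G_\mathrm{H}$, so that $z^\dagger\cdot Q_\perp\cdot z=z^\dagger\cdot Q_\mathrm{H}\cdot z$ and $G_\perp^\dagger\cdot z\otimes z=G_\mathrm{H}^\dagger\cdot z\otimes z$ for every $z\in\mathcal{H}_\perp\cap\mathrm{dom}[T]$, and hence
\begin{align*}
z^\dagger\cdot Q_\perp\cdot z+G_\perp^\dagger\cdot z\otimes z+(z\otimes z)^\dagger\cdot G_\perp=\tfrac12\,\mathrm{Hess}|_{u_0}\mathcal{E}_\mathrm{H}[z]\qquad(z\in\mathcal{H}_\perp).
\end{align*}
Every such $z$, and also $iz$, is of the form admitted in Assumption \ref{Assumption: Part II} (vanishing $u_1,\dots,u_d$--components); applying the assumption to $z$ and to $iz$ and adding, respectively subtracting, gives $Q_\perp\geq\tfrac{\eta}{2}>0$ on $\mathcal{H}_\perp$ and $\pm\big(G_\perp^\dagger\cdot z\otimes z+(z\otimes z)^\dagger\cdot G_\perp\big)\leq z^\dagger\cdot Q_\perp\cdot z$.

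Combining $Q_\perp\geq\tfrac{\eta}{2}$ with the two-sided bound $\nu^{-1}(T|_{\mathcal{H}_\perp}+1)\leq Q_\perp+1\leq\nu(T|_{\mathcal{H}_\perp}+1)$ inherited from Lemma \ref{Lemma: Hessian} yields $c\,(T|_{\mathcal{H}_\perp}+1)\leq Q_\perp\leq\nu\,(T|_{\mathcal{H}_\perp}+1)$ for some $c>0$, so $Q_\perp$ is a positive, boundedly invertible one-body operator comparable to $T|_{\mathcal{H}_\perp}+1$. Next, since $G_\perp$ is a compression of $G_\mathrm{H}$ and $u_0,\partial_{x_1}u_0,\dots,\partial_{x_d}u_0\in\mathrm{dom}[T]$ by Assumption \ref{Assumption: Part II}, one checks that $\|1_\mathcal{H}\otimes(T+1)^{-\frac12}\cdot G_\perp\|<\infty$, so $G_\perp\in\overline{\mathcal{H}_\perp\otimes_\mathrm{s}\mathcal{H}_\perp}^{\|.\|_*}$; using $Q_\perp\sim T|_{\mathcal{H}_\perp}+1$ this also gives $\|Q_\perp^{-\frac12}\otimes Q_\perp^{-\frac12}\cdot G_\perp\|<\infty$, i.e. the pairing operator $Q_\perp^{-\frac12}B_\perp Q_\perp^{-\frac12}$ (with $B_\perp$ the symmetric operator associated with $G_\perp$) is Hilbert--Schmidt. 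Finally, the strict coercivity $\tfrac12\mathrm{Hess}|_{u_0}\mathcal{E}_\mathrm{H}[z]\geq\tfrac{\eta}{2}\|z\|^2$ on $\mathcal{H}_\perp$ implies $\|Q_\perp^{-\frac12}B_\perp Q_\perp^{-\frac12}\|<1$ strictly: this compact operator cannot have $\pm1$ in its spectrum, since that would produce a zero mode of the Hessian inside $\mathcal{H}_\perp$, contradicting Assumption \ref{Assumption: Part II}.

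With these facts the hypotheses of \cite{NNS} are met ($Q_\perp$ positive and boundedly invertible, pairing Hilbert--Schmidt relative to $Q_\perp$, strict stability), so $\mathbb{H}_\perp$ is unitarily equivalent, via a Bogoliubov transformation implementable on $\mathcal{F}(\mathcal{H}_\perp)$, to $a_{>d}^\dagger\cdot D_\perp\cdot a_{>d}+\inf\sigma(\mathbb{H}_\perp)$ for a non-negative one-body operator $D_\perp$ comparable to $Q_\perp$, with $\inf\sigma(\mathbb{H}_\perp)>-\infty$. This gives semi-boundedness. For the upper bound $\mathbb{H}_\perp\leq R\,(a_{>d}^\dagger\cdot Q_\perp\cdot a_{>d}+1)$ I would use the two-sided form comparison $\mathbb{H}_\perp-\inf\sigma(\mathbb{H}_\perp)\sim a_{>d}^\dagger\cdot Q_\perp\cdot a_{>d}+1$ coming from the same diagonalisation (absorbing the bounded constant $\inf\sigma(\mathbb{H}_\perp)$ into $R$); alternatively, directly, the strict classical bound $\pm\big(G_\perp^\dagger\cdot z\otimes z+\mathrm{h.c.}\big)\leq z^\dagger\cdot Q_\perp\cdot z$ upgrades, via its second quantisation, to $\pm\big(G_\perp^\dagger\cdot a_{>d}\,\underline{\otimes}\,a_{>d}+\mathrm{h.c.}\big)\leq\theta\,a_{>d}^\dagger\cdot Q_\perp\cdot a_{>d}+C_\theta$ with $\theta<1$, the error $C_\theta$ being finite because $Q_\perp^{-\frac12}B_\perp Q_\perp^{-\frac12}$ is Hilbert--Schmidt.

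The main obstacle is the bookkeeping needed to meet the hypotheses of \cite{NNS} for an \emph{unbounded} one-body operator $Q_\perp$: one must check that the relative Hilbert--Schmidt condition really is what feeds the quoted estimates (in particular the finiteness of the error constant $C_\theta$, equivalently of the renormalised ground-state energy), and that the strict positivity on $\mathcal{H}_\perp$ does upgrade the automatic contraction bound $\|Q_\perp^{-\frac12}B_\perp Q_\perp^{-\frac12}\|\leq1$ (which follows just from $\mathrm{Hess}|_{u_0}\mathcal{E}_\mathrm{H}\geq0$) to the strict inequality $<1$ that is needed for the two-sided comparison in Eq.~(\ref{Equality: Bound from above}). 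Both points are handled by the observations above, and this is where Assumptions \ref{Assumption: Part I} and \ref{Assumption: Part II} are actually used.
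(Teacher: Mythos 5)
Your proof follows essentially the same route as the paper's: you derive $Q_\perp\geq\mathrm{const}>0$ and the contraction bound on the pairing term from Assumption \ref{Assumption: Part II} (by testing the Hessian on $z$ and $iz$), you establish the relative Hilbert--Schmidt condition via the $\|\cdot\|_*$ norm and the comparison $Q_\perp\sim T|_{\mathcal{H}_\perp}+1$ from Lemma \ref{Lemma: Hessian}, and you then invoke the general theory of \cite{NNS} for semi-boundedness and the two-sided form comparison giving Eq.~(\ref{Equality: Bound from above}). The only genuine addition on your side is the remark that compactness of $Q_\perp^{-1/2}B_\perp Q_\perp^{-1/2}$ upgrades the non-strict contraction bound to a strict one, a detail the paper's proof leaves implicit in the reference to \cite{NNS}; this is a correct and welcome observation, though it does not change the structure of the argument.
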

\begin{proof}
Let us define the operator $G_{\mathrm{op}}$ on $\mathcal{H}_\perp$ by the condition $z^\dagger\cdot G_{\mathrm{op}}\cdot z=2 G_\perp^\dagger\cdot \overline{z}\otimes z$, with $\overline{z}$ being the usual complex conjugation in $L^2\left(\mathbb{R}^d\right)$. Then, $z^\dagger\cdot Q_\perp\cdot z+2\mathfrak{Re}\left[\overline{z}^\dagger\cdot G_{\mathrm{op}}\cdot z\right]=\mathrm{Hess}|_{u_0}\mathcal{E}_\mathrm{H}[z]\geq \eta \|z\|^2$ for all $z\in \mathcal{H}_\perp$ with $\eta>0$ by Assumption \ref{Assumption: Part II}. As pointed out in Section 2.1 in \cite{LNSS}, this implies $Q_\perp\geq r>0$ as well as
\begin{align*}
\left(\begin{array}{rr}
Q_\perp & G_{\mathrm{op}}^\dagger  \\ G_{\mathrm{op}}& Q_\perp
\end{array}\right)\geq 0,
\end{align*}
where we have used that $Q_\perp$ is a real operator. Since $Q_\perp>0$, this is further equivalent to $G_{\mathrm{op}}\, Q_\perp^{-1}\, G_{\mathrm{op}}^\dagger\leq Q_\perp$. Since $G_\mathrm{H}\in \overline{\mathcal{H}_0\otimes \mathcal{H}_0}^{\|.\|_*}$, where the $\|.\|_*$ norm is defined in Lemma \ref{Lemma: Hessian}, and since $z^\dagger\cdot  Q_\perp^{-\frac{1}{2}}\cdot z\leq c\, z^\dagger\cdot \left(T+1\right)^{-\frac{1}{2}}\cdot z$ for a suitable constant $c$ and $z\in \mathcal{H}_\perp$, which is an easy consequence of the operator inequality in Lemma \ref{Lemma: Hessian} and the fact that $Q_\perp\geq r>0$, we obtain
\begin{align}
\label{Equation: constant c}
\mathrm{Tr}\left[G_{\mathrm{op}}\, Q_\perp^{-1}\, G_{\mathrm{op}}^\dagger\right]=\|1_{\mathcal{H}_\perp}\otimes Q_\perp^{-\frac{1}{2}}\cdot G_\perp\|^2_{\mathcal{H}\otimes \mathcal{H}}\leq c^2\, \|G_\mathrm{H}\|^2_*<\infty,
\end{align}
i.e. $G_{\mathrm{op}}\, Q_\perp^{-\frac{1}{2}}$ is a Hilbert-Schmidt operator. By the general results in \cite{NNS}, this implies that $\mathbb{H}_\perp$ is semi-bounded as well as the existence of a constant $R>0$ such that Eq.~(\ref{Equality: Bound from above}) holds.
\end{proof}

\begin{lem}
\label{Lemma: constant c_0}
Let us define $P_j:=\frac{1}{2i}\left(a_j-a_j^\dagger\right)=\sqrt{N}p_j$ for $j\in \{1,\dots,d\}$, the constant $c_0:=\sum_{j=1}^d G_{j,j}$, the quadratic function $\nu(y):=-4\sum_{j,k=1}^d G_{j,k}y_j y_k$ for $y\in \mathbb{R}^d$ and the linear $\mathcal{H}_\perp$ valued function
\begin{align*}
u(y_1,\dots,y_d):=4i\sum_{j=1}^d y_j\sum_{k>d}G_{j,k}u_k\in \mathcal{H}_\perp.
\end{align*}
Then we can rewrite the Bogoliubov operator $\mathbb{H}$ from Definition (\ref{Definition: Bogoliubov Operator}) as
\begin{align}
\label{Equation: P-linear terms}
\mathbb{H}=c_0+\nu\left(P_1,\dots,P_d\right)+u\left(P_1,\dots,P_d\right)^\dagger\cdot  a_{>d}+ a_{>d}^\dagger\cdot u\left(P_1,\dots,P_d\right)+\mathbb{H}_\perp.
\end{align}
\end{lem}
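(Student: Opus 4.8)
The plan is to separate the $d$ degenerate modes $a_1,\dots,a_d$ from the non-degenerate ones $a_{>d}$, substitute $a_j=Q_j+iP_j$ for $j\le d$ (with $Q_j=\tfrac12(a_j+a_j^\dagger)$), and collect terms using the canonical commutation relations $[a_i,a_j^\dagger]=\delta_{ij}$. Recall from Definition~\ref{Definition: Bogoliubov Operator} and Lemma~\ref{Lemma: Hessian} that $\mathbb{H}$ is the second quantization of $\tfrac12\mathrm{Hess}|_{u_0}\mathcal{E}_\mathrm{H}$, with matrix entries $Q_{ij}=(Q_\mathrm{H})_{ij}$ and $G_{ij}=(G_\mathrm{H})_{ij}$; since $u_0$, the $u_j$, $T$ and $v$ are all real, these entries are real, $G$ is symmetric, and $Q$ is Hermitian and hence symmetric. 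Splitting the sum defining $\mathbb{H}$ according to how many indices lie in $\{1,\dots,d\}$, write $\mathbb{H}=\mathbb{H}_{\le d}+\mathbb{H}_{\mathrm{cr}}+\mathbb{H}_\perp$, where $\mathbb{H}_\perp$ is exactly the operator from Definition~\ref{Definition: Perpendicular Bogoliubov}; it remains to identify $\mathbb{H}_{\le d}$ and $\mathbb{H}_{\mathrm{cr}}$.

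The one non-routine ingredient is the identity $Q_{jk}=-2G_{jk}$ for all $j\in\{1,\dots,d\}$ and $k\ge 1$. It comes from two facts: first, $\mathrm{Hess}|_{u_0}\mathcal{E}_\mathrm{H}\ge 0$, because $\iota$ in Eq.~(\ref{Equation: Embedding}) maps a neighborhood of $0$ onto a neighborhood of $u_0$ in the unit sphere, so $z=0$ is an interior minimum of the $C^2$ function $\mathcal{E}_\mathrm{H}\circ\iota$ and $\mathrm{Hess}|_{u_0}\mathcal{E}_\mathrm{H}=D^2|_0(\mathcal{E}_\mathrm{H}\circ\iota)\ge 0$; second, $\mathrm{Hess}|_{u_0}\mathcal{E}_\mathrm{H}[u_j]=0$ for $j\le d$ by translation invariance (cf.\ Assumption~\ref{Assumption: Part II}). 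A non-negative real quadratic form that vanishes on a vector also vanishes when that vector is paired with an arbitrary vector, so the associated bilinear form satisfies $\mathrm{Bil}(u_j,u_k)=0$; a one-line polarization of $\tfrac12\mathrm{Hess}$ evaluated on the real basis vectors $u_j,u_k$ gives precisely $Q_{jk}+2G_{jk}=0$. (Equivalently: in the real coordinates $z_j=t_j+is_j$ for $j\le d$ the form $\tfrac12\mathrm{Hess}$ is independent of $t_1,\dots,t_d$, which is why only the $P_j$, never the $Q_j$, will appear below.)

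Given this, the rest is bookkeeping with the CCR. For the purely-degenerate block one has $\mathbb{H}_{\le d}=\sum_{j,k\le d}\big(Q_{jk}a_j^\dagger a_k+G_{jk}a_ja_k+G_{jk}a_j^\dagger a_k^\dagger\big)$, and using $Q_{jk}=-2G_{jk}$ together with $a_ja_k^\dagger=a_k^\dagger a_j+\delta_{jk}$ this equals $\sum_{j,k\le d}G_{jk}(a_j^\dagger-a_j)(a_k^\dagger-a_k)+\sum_{j\le d}G_{jj}$; since $a_j^\dagger-a_j=-2iP_j$, this is $-4\sum_{j,k\le d}G_{jk}P_jP_k+\sum_{j\le d}G_{jj}=\nu(P_1,\dots,P_d)+c_0$, the constant $c_0=\sum_j G_{jj}$ being the normal-ordering correction. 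For the cross block, collecting the terms with exactly one index in $\{1,\dots,d\}$ and again using $Q_{jk}=-2G_{jk}$ and the symmetry of $Q$ and $G$ gives $\mathbb{H}_{\mathrm{cr}}=2\sum_{j\le d<k}G_{jk}(a_j^\dagger-a_j)(a_k^\dagger-a_k)$; here the operators with index $\le d$ commute with those of index $>d$, so there is no reordering constant, and expressing $a_j^\dagger-a_j$ through $P_j$ and collecting the factors carrying an index $>d$ into $a_{>d}$ and $a_{>d}^\dagger$ puts $\mathbb{H}_{\mathrm{cr}}$ into the form $u(P_1,\dots,P_d)^\dagger\cdot a_{>d}+a_{>d}^\dagger\cdot u(P_1,\dots,P_d)$ with the $\mathcal{H}_\perp$-valued linear function $u$ of the statement. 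Adding $\mathbb{H}_\perp$ gives Eq.~(\ref{Equation: P-linear terms}). The main (and essentially only) obstacle is the structural identity $Q_{jk}=-2G_{jk}$; everything after it is a careful but mechanical use of the commutation relations, with the one subtlety being to keep track of the constant $c_0$.
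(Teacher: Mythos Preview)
Your proposal is correct and follows essentially the same route as the paper: split $\mathbb{H}$ into the blocks $\mathbb{H}_{\le d}$, $\mathbb{H}_{\mathrm{cr}}$, $\mathbb{H}_\perp$, use the key identity $Q_{jk}=-2G_{jk}$ whenever one index is $\le d$, and then reduce everything to $P_j$ via the CCR. The only notable difference is how you justify $Q_{jk}=-2G_{jk}$: you argue via non-negativity of the Hessian together with Cauchy--Schwarz for the associated bilinear form, which is perfectly valid; the paper instead (implicitly) differentiates the first-order condition $D(\mathcal{E}_\mathrm{H}\circ\iota)[z(t)]=0$ along the $d$-parameter family of minimizers $z(t)=\iota^{-1}(u_{0,t})$, obtaining $B(u_j,\cdot)=0$ directly without invoking positivity. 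Both arguments yield the same identity, and the remaining normal-ordering computation (producing $c_0$ and $\nu(P)$) and the cross-term identification are identical in substance to the paper's.
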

\begin{proof}
Since $\mathrm{Hess}|_{u_0}\mathcal{E}_\mathrm{H}[z]=z^\dagger\cdot Q_\mathrm{H}\cdot z+2\mathfrak{Re}\left[G_\mathrm{H}^\dagger\cdot z\otimes z\right]$ is degenerate in the directions $u_1,\dots,u_d$, we obtain $Q_{j,k}=-2G_{j,k}$ in case $j$ or $k$ is in $\{1,\dots,d\}$. Writing $\mathbb{H}$ in coordinates therefore yields
\begin{align*}
\mathbb{H}\! &=\!\! \!  \sum_{j,k=1}^d\!  G_{j,k}\left(\!-\!2 a_j^\dagger a_k\!+\!a_j a_k \!+\!a_j^\dagger a_k^\dagger\right)\!+\!2\sum_{j=1}^d \sum_{k>d} \! G_{j,k}\left(a_j\!-\!a_j^\dagger\right)a_k\!-\!2\sum_{j=1}^d \sum_{k>d} G_{j,k}\left(a_j\!-\!a_j^\dagger\right)a_k^\dagger\\
&\ \ \ \ +\sum_{j,k>d}\! \left(Q_{j,k}a_j^\dagger a_k+G_{j,k}a_j a_k+G_{j,k}a_j^\dagger a_k^\dagger\right)\\
&=\big\{c_0+\nu\left(P_1,\dots,P_d\right)\big \}+u\left(P_1,\dots,P_d\right)^\dagger\cdot  a_{>d}+ a_{>d}^\dagger\cdot u\left(P_1,\dots,P_d\right)+\mathbb{H}_\perp,
\end{align*}
where we have used $-2 a_j^\dagger a_k+a_j a_k +a_j^\dagger a_k^\dagger=-4P_j P_k+\delta_{j,k}$ in the second identity.
\end{proof}

\begin{note}
\label{Remark: Completement of the square}
In the subsequent Lemma \ref{Lemma: Transformed Bogoliubov}, we want to get rid of the term $u\left(P_1,\dots,P_d\right)^\dagger\cdot  a_{>d}+ a_{>d}^\dagger\cdot u\left(P_1,\dots,P_d\right)$ in Eq.~(\ref{Equation: P-linear terms}) by completing the square, i.e. by applying a shift $a_{>d}\mapsto a_{>d}+w\left(P_1,\dots,P_d\right)$ where $w(y_1,\dots,y_d)\in \mathcal{H}_\perp$ is a suitable vector. In the following we are going to construct such a $w(y)$. Let us first define the $\mathbb{R}$-linear map $L:\mathcal{H}_\perp\longrightarrow \mathcal{H}_\perp$
\begin{align*}
z^\dagger\cdot L(w):=z^\dagger\cdot Q_\perp\cdot w+2\left(w \otimes z\right)^\dagger \cdot G_\perp,
\end{align*}
for all $z\in \mathcal{H}_\perp$. Furthermore, let us define the real inner product $\braket{z,w}_\mathbb{R}:=\mathfrak{Re}\left[z^\dagger\cdot w\right]$ on $\mathcal{H}_\perp$. Clearly, $L$ is symmetric with respect to this inner product. By Assumption \ref{Assumption: Part II} we have for all $w\in \mathcal{H}_\perp$
\begin{align}
\braket{w,L(w)}_\mathbb{R}=\mathrm{Hess}|_{u_0}\mathcal{E}_\mathrm{H}[w]\geq \eta\, \|w\|^2,
\end{align}
and consequently we can define $w(y)\in \mathcal{H}_\perp$ for all $y\in \mathbb{R}^d$ as the solution of the equation 
\begin{align}
\label{Equation: Definition w vector}
L\cdot w(y)=-u(y).
\end{align}
We note that $w(y)\in \mathrm{dom}[Q_\perp]$ due to the improved coercivity
\begin{align}
\label{Equation: Improved Coercivity}
\braket{w,L(w)}_\mathbb{R}\geq \tilde{c}\, w^\dagger\cdot Q_\perp\cdot w
\end{align}
where $\tilde{c}$ is a suitable constant, which follows from the fact that
\begin{align*}
2\left|\left(w \otimes w\right)^\dagger \!\cdot\! G_\perp\right|\leq w^\dagger\!\cdot\! \left(\epsilon\, Q_\perp\!+\!\epsilon^{-1}c^2\|G_\mathrm{H}\|_*^2\right)\!\cdot\! w\leq \epsilon\, w^\dagger\!\cdot\! Q_\perp\!\cdot \!w\!+\!\frac{c^2\|G_\mathrm{H}\|_*^2}{\epsilon\eta}\braket{w,L(w)}_\mathbb{R}
\end{align*}
for all $\epsilon>0$, where $c$ is the constant in Eq.~(\ref{Equation: constant c}).
\end{note}

\begin{lem}
\label{Lemma: Transformed Bogoliubov}
Let $w:\mathbb{R}^d\rightarrow \mathcal{H}_\perp$ be the function defined by Eq.~(\ref{Equation: Definition w vector}) and let us define the unitary transformation $\mathcal{R}:\mathcal{F}_0\longrightarrow \mathcal{F}_0$
\begin{align*}
\mathcal{R}:&=\exp\left[w\left(P_1,\dots,P_d\right)^\dagger\cdot  a_{>d}- a_{>d}^\dagger\cdot w\left(P_1,\dots,P_d\right)\right].
\end{align*}
Then there exists a non-negative quadratic function $\eta:\mathbb{R}^d\longrightarrow \mathbb{R}$, s.t.
\begin{align}
\label{Equation: R transformation}
\mathcal{R}\, \mathbb{H}\, \mathcal{R}^{-1}=c_0+\eta\left(P_1,\dots,P_d\right)+\mathbb{H}_\perp,
\end{align}
where $c_0$ and $\mathbb{H}_\perp$ are as in Lemma \ref{Lemma: constant c_0} and Definition \ref{Definition: Perpendicular Bogoliubov}.
\end{lem}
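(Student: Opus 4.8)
\textbf{Proof plan for Lemma \ref{Lemma: Transformed Bogoliubov}.}
The plan is to conjugate the expression \eqref{Equation: P-linear terms} for $\mathbb{H}$ by the unitary $\mathcal{R}$ and track what happens to each of the four terms $c_0$, $\nu(P_1,\dots,P_d)$, the cross term $u(P)^\dagger\cdot a_{>d}+a_{>d}^\dagger\cdot u(P)$, and $\mathbb{H}_\perp$. The key observation is that the operators $P_1,\dots,P_d$ commute among themselves and commute with $w(P_1,\dots,P_d)$, so that the generator $w(P)^\dagger\cdot a_{>d}-a_{>d}^\dagger\cdot w(P)$ is, morally, a $P$-parametrized Weyl (shift) operator on the Fock space $\mathcal{F}(\mathcal{H}_\perp)$ over the modes $>d$. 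First I would record the transformation laws: since $[P_j,P_k]=0$ and $[P_j,a_k]=0$ for $k>d$, conjugation by $\mathcal{R}$ acts as $\mathcal{R}\,P_j\,\mathcal{R}^{-1}=P_j$ and $\mathcal{R}\,a_k\,\mathcal{R}^{-1}=a_k+w_k(P_1,\dots,P_d)$ for $k>d$, where $w_k:=u_k^\dagger\cdot w$, exactly as in Lemma \ref{Lemma: Transformation Laws} for $\mathcal{W}_N$ (this is again elementary; one can invoke the Baker--Campbell--Hausdorff computation, the series terminating because $[a_k,w(P)^\dagger\cdot a_{>d}-a_{>d}^\dagger\cdot w(P)]$ is a function of the $P_j$'s only). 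Thus $\mathcal{R}\,a_{>d}\,\mathcal{R}^{-1}=a_{>d}+w(P_1,\dots,P_d)$ and $\mathcal{R}\,\mathbb{H}_\perp\,\mathcal{R}^{-1}$, $\mathcal{R}\,u(P)^\dagger\cdot a_{>d}\,\mathcal{R}^{-1}$ are obtained by substituting this shift.

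Second, I would carry out the completion of the square. Expanding $\mathcal{R}\,\mathbb{H}\,\mathcal{R}^{-1}$ using the substitution $a_{>d}\mapsto a_{>d}+w(P)$ and the defining relation $L\cdot w(y)=-u(y)$ from \eqref{Equation: Definition w vector}, the terms linear in $a_{>d}$ (and in $a_{>d}^\dagger$) collect into
\[
a_{>d}^\dagger\cdot\bigl(Q_\perp\cdot w(P)+\text{(the $G_\perp$ contraction)}+u(P)\bigr)+\text{h.c.}=a_{>d}^\dagger\cdot\bigl(L\cdot w(P)+u(P)\bigr)+\text{h.c.}=0,
\]
by the very definition of $L$ (Remark \ref{Remark: Completement of the square}) and of $w(P)$. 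What remains is $\mathbb{H}_\perp$ (the pure-$a_{>d}$ quadratic part is invariant under the substitution except for the generated linear and constant pieces, which we have just absorbed), the original $c_0+\nu(P)$, plus the purely $P$-dependent leftover $u(P)^\dagger\cdot w(P)+w(P)^\dagger\cdot u(P)+w(P)^\dagger\cdot L\cdot w(P)$ coming from evaluating the quadratic form in $a_{>d}$ at the shifted argument. Using $L\cdot w(P)=-u(P)$ once more, this leftover simplifies to $\nu(P)+u(P)^\dagger\cdot w(P)+w(P)^\dagger\cdot u(P)-w(P)^\dagger\cdot u(P)=\nu(P)-\tfrac12\,\langle w(P),L\,w(P)\rangle_{\mathbb{R}}\cdot 2$; more precisely, setting $\eta(y):=\nu(y)+2\mathfrak{Re}\bigl[u(y)^\dagger\cdot w(y)\bigr]+\langle w(y),L\,w(y)\rangle_{\mathbb{R}}=\nu(y)-\langle w(y),L\,w(y)\rangle_{\mathbb{R}}$, one gets $\mathcal{R}\,\mathbb{H}\,\mathcal{R}^{-1}=c_0+\eta(P_1,\dots,P_d)+\mathbb{H}_\perp$, which is \eqref{Equation: R transformation}. (I would double-check the factors of $2$ and the distribution of real parts here, since $G_\perp$ is not symmetric as a bilinear form; this bookkeeping is the only place small errors can creep in.)

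Third, I would verify that the resulting quadratic function $\eta$ is non-negative. The clean way is to note that $\eta(y)=\nu(y)+2\mathfrak{Re}[u(y)^\dagger\cdot w(y)]+\langle w(y),L\,w(y)\rangle_{\mathbb{R}}$ is precisely $\mathrm{Hess}|_{u_0}\mathcal{E}_\mathrm{H}$ evaluated at the vector $z(y):=i\sum_{j=1}^d y_j u_j+w(y)\in\mathcal{H}_0$ — indeed $\nu(y)=\mathrm{Hess}|_{u_0}\mathcal{E}_\mathrm{H}[i\sum y_j u_j]$, the cross term is the polarization contribution, and $\langle w,Lw\rangle_{\mathbb{R}}=\mathrm{Hess}|_{u_0}\mathcal{E}_\mathrm{H}[w]$ — and this vector is exactly of the form covered by the coercivity hypothesis \eqref{Equation: Non-degenerate Hessian} in Assumption \ref{Assumption: Part II}. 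Hence $\eta(y)=\mathrm{Hess}|_{u_0}\mathcal{E}_\mathrm{H}[z(y)]\geq\eta_0\|z(y)\|^2\geq 0$ for the constant $\eta_0>0$ from Assumption \ref{Assumption: Part II}; in particular $\eta$ is a non-negative quadratic form on $\mathbb{R}^d$, as claimed. Alternatively, and perhaps more transparently, $\mathcal{R}$ is unitary and $\mathbb{H}\geq\inf\sigma(\mathbb{H})>-\infty$ already, so $c_0+\eta(P)+\mathbb{H}_\perp$ is bounded below; testing against coherent states that concentrate $a_{>d}$ near zero while $P$ is free forces $\eta\geq 0$. I expect the main obstacle to be none of deep substance but rather the careful tracking of the non-symmetric pairing with $G_\perp$ through the completion of the square — getting every $\mathfrak{Re}[\cdot]$, every factor $2$, and every term in \eqref{Equation: Definition of epsilon_g,beta}-style expansions right — together with a brief justification (via the improved coercivity \eqref{Equation: Improved Coercivity}) that $w(P)$ maps into $\mathrm{dom}[Q_\perp^{1/2}]$ so that all the quadratic forms appearing are finite on the relevant domain.
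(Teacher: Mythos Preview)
Your proposal is correct and follows essentially the same route as the paper's proof. The only cosmetic difference is the order of operations: the paper first rewrites $\mathbb{H}$ in completed-square form $c_0+\eta(P)+\bigl(a_{>d}-w(P)\bigr)^\dagger Q_\perp\bigl(a_{>d}-w(P)\bigr)+2\mathfrak{Re}\bigl[G_\perp^\dagger\cdot\bigl(a_{>d}-w(P)\bigr)\underline{\otimes}\bigl(a_{>d}-w(P)\bigr)\bigr]$ and then conjugates, whereas you conjugate first and then collect terms; the resulting $\eta(y)=\nu(y)+\langle w(y),u(y)\rangle_{\mathbb{R}}$ and the non-negativity argument via $\eta(y)=\mathrm{Hess}|_{u_0}\mathcal{E}_\mathrm{H}\bigl[i\sum_j y_j u_j+w(y)\bigr]\geq 0$ are identical in both approaches.
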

\begin{proof}
Let us define $\eta(y_1,\dots,y_d):=\nu(y_1,\dots,y_d)+\braket{w(y_1,\dots,y_d),u(y_1,\dots,y_d)}_\mathbb{R}$. With $\eta$ and the vector valued function $w$ at hand, we can rewrite Eq.~(\ref{Equation: P-linear terms}) as
\begin{align}
\mathbb{H}&=c_0+\eta\left(P_1,\dots,P_d\right)+\Big( a_{>d}-w\left(P_1,\dots,P_d\right)\Big)^\dagger\cdot   Q_\perp\cdot \Big( a_{>d}-w\left(P_1,\dots,P_d\right)\Big) \nonumber \\
&\ \ \ +2\mathfrak{Re}\left[  G_\perp^\dagger\cdot \Big( a_{>d}-w\left(P_1,\dots,P_d\right)\Big)\ \underline{\otimes}\ \Big( a_{>d}-w\left(P_1,\dots,P_d\right)\Big)\right].\label{Equation: Completed Square}
\end{align}
Eq.~(\ref{Equation: R transformation}) follows now from the representation of $\mathbb{H}$ in Eq.~(\ref{Equation: Completed Square}) and the fact that
\begin{align*}
\mathcal{R}\,  a_{>d}\, \mathcal{R}^{-1}= a_{>d}+w\left(P_1,\dots,P_d\right).
\end{align*}
In order to see that $\eta$ is indeed non-negative, note that we can use $\eta$ and $w$ to complete the square in $\mathrm{Hess}|_{u_0}\mathcal{E}_\mathrm{H}[z]$ as well, i.e. for $z=\sum_{j=1}^d \left(t_j+is_j\right)u_j+z_{>d}$ with $t,s\in \mathbb{R}^d$ and $z_{>d}\in \mathcal{H}_\perp$ we can write $\mathrm{Hess}|_{u_0}\mathcal{E}_\mathrm{H}[z]$ as
\begin{align*}
\eta(s)\!+\!\left(z_{>d}\!-\!w(s)\right)^\dagger\!\cdot\! Q_\perp\!\cdot\! \left(z_{>d}\!-\!w(s)\right)\!+\!2\mathfrak{Re}\left[G_\perp^\dagger\cdot \left(z_{>d}\!-\!w(s)\right)\!\otimes\! \left(z_{>d}\!-\!w(s)\right)\right].
\end{align*}
Therefore, $\mathrm{Hess}|_{u_0}\mathcal{E}_\mathrm{H}[z]\geq 0$ for all $z$ implies $\eta(s)\geq 0$ for all $s\in \mathbb{R}^d$.
\end{proof}

\begin{proof}[Proof of Theorem \ref{Theorem: Bogoliubov Ground State Energy}]
Since the function $\eta$ in Lemma \ref{Lemma: Transformed Bogoliubov} is non-negative, we immediately obtain the lower bound
\begin{align*}
\inf \sigma\left(\mathbb{H}\right)\geq c_0+\inf \sigma\left(\mathbb{H}_\perp\right)>-\infty.
\end{align*}
In order to verify the bound from below for the operator $\mathbb{H}-r\mathbb{A}$, where $\mathbb{A}$ is defined in Eq.~(\ref{Equation: NT}), we will make use of the improved coercivity
\begin{align}
\label{Equation: Improved Coercivity Hessian}
\mathrm{Hess}|_{u_0}\mathcal{E}_\mathrm{H}[z]\geq r_*\left(\sum_{j=1}^d s_j^2+z_{>d}^\dagger\cdot (T+1)\cdot z_{>d}\right),
\end{align}
where $r_*$ is a suitable constant and $z=\sum_{j=1}^d (t_j+is_j)u_j+z_{>d}$ with $z_{>d}\in \mathcal{H}_\perp$, which can be verified analogously to Eq.~(\ref{Equation: Improved Coercivity}) in Remark \ref{Remark: Completement of the square}. With the definition $\eta_r:=r_*-r$ for $r<r_*$ we obtain, in analogy to Assumption \ref{Assumption: Part II},
\begin{align*}
\mathrm{Hess}|_{u_0}\mathcal{E}_\mathrm{H}[z]-r\left(\sum_{j=1}^d s_j^2+z_{>d}^\dagger\cdot (T+1)\cdot z_{>d}\right)\geq \eta_r \|z\|^2
\end{align*}
for all $z$ of the form $z=i\sum_{j=1}^d s_j u_j+z_{>d}$ with $s_j\in \mathbb{R}$ and $z_{>d}\in \mathcal{H}_\perp$. Therefore we can repeat the proof of the lower bound for the operator $\mathbb{H}-r\mathbb{A}$, which yields
\begin{align}
\label{Equation: Semi-bound by self-adjoint operator}
\mathbb{H}-r\mathbb{A}\geq \inf \sigma\left(\mathbb{H}-r\mathbb{A}\right)>-\infty.
\end{align}
Note that this further implies that the Friedrichs extension of the quadratic form $\mathbb{H}$ is well-defined, i.e. $\mathbb{H}$ is semi-bounded and closeable, since $\mathbb{H}$ is comparable to the non-negative selfadjoint operator $\mathbb{A}$, i.e. there exist constants $\alpha_1,\alpha_2,\beta_1,\beta_2>0$ with
\begin{align*}
 \alpha_1 \mathbb{A}-\beta_1\leq \mathbb{H}\leq \alpha_2\mathbb{
 A}+\beta_2.
\end{align*}
 In order to verify that there exists an approximate sequence of ground states $\Psi_M$ with $\Psi_M\in \mathcal{F}_{\leq M}$ and $\Psi_M\in \mathrm{dom}\left[a_{\geq 1}^\dagger\cdot (T+1)\cdot a_{\geq 1}\right]$, it is enough to prove that such states are dense in $\mathrm{dom}\left[a_{\geq 1}^\dagger\cdot   \left(T+1\right)\cdot  a_{\geq 1}\right]$, the domain of the quadratic form which defines the Bogoliubov operator $\mathbb{H}$ by Friedrichs extension, with respect to the norm $\|\Psi\|^2_{\mathbb{H}}:=\braket{\mathbb{H}+C}_\Psi$ where $C>-\inf \sigma\left(\mathbb{H}\right)$. The lower bound follows from Eq.~(\ref{Equation: Semi-bound by self-adjoint operator}), while the upper bound follows from Eq.~(\ref{Equation: P-linear terms}) and Inequality (\ref{Equality: Bound from above}). Furthermore, we have
\begin{align*}
\|\Psi\|^2_\mathbb{H}\leq \alpha_2\braket{ \mathbb{A}}_\Psi+(\beta_2+C)\|\Psi\|^2\leq \|\Psi\|^2_\diamond,
\end{align*}
for all $\Psi\in \mathcal{F}_0$, where $\|\Psi\|_\diamond^2:=\alpha_2\braket{ a_{\geq 1}^\dagger\cdot   (T+1)\cdot  a_{\geq 1}}_\Psi+(\beta_2+C+\frac{d}{4})\|\Psi\|^2$. Clearly, $\bigcup_M \mathcal{F}_{\leq M}\cap \mathrm{dom}\left[ a_{\geq 1}^\dagger\cdot   (T+1)\cdot  a_{\geq 1}\right]$ is dense in the domain $\mathrm{dom}\left[ a_{\geq 1}^\dagger\cdot   \left(T+1\right)\cdot  a_{\geq 1}\right]$ with respect to the norm $\|.\|_\diamond$ and therefore it is also dense with respect to $\|.\|_{\mathbb{H}}$. 
\end{proof}

\section{Auxiliary Lemmata}
\label{Appendix: B}
In the following section we will derive various operator estimates involving powers of the operators $p,p',b_{>d}$ and functions of $q$, with an emphasis on asymptotic results of the form $A_N=o_*(B_N)$, where the $o_*(\cdot )$ notation is introduced in Definition \ref{Definition: Smallness}. It is a crucial observation that all of our basic variables $q_i,p_j$ and $b_k$ are of order $o_*(1)$, and therefore the product of a basic variable with an operator $A_N$ should be of order $o_*(A_N)$, which we will verify for specific examples $A_N$. Let us first discuss an important tool, which we will repeatedly use, given by the following Cauchy--Schwarz inequality for operators.

\begin{lem}
\label{Lemma: O_* results}
For any $\lambda\in \mathbb{C}$ with $|\lambda|=1$, $t>0$, linear operators $A:\mathcal{H}_1\longrightarrow \mathcal{H}_2$ and $B:\mathcal{H}_1\longrightarrow \mathcal{H}_2$, and selfadjoint operator $Q:\mathcal{H}_2\longrightarrow \mathcal{H}_2$, we have the operator inequality
\begin{align}
\label{Inequality: Schwarz}
\mathfrak{Re}\left[\lambda\ A^\dagger\cdot Q\cdot B\right]\leq t\ A^\dagger\cdot \left|Q\right|\cdot A+t^{-1}\ B^\dagger\cdot \left|Q\right|\cdot B.
\end{align}
Furthermore, let $A_N,B_N$ be sequences of linear operators $\mathcal{H}_1\longrightarrow \mathcal{H}_2$, $Q$ a selfadjoint operator on $\mathcal{H}_2$ and $C_N:\mathcal{H}_1\longrightarrow \mathcal{H}_1$ a sequence of non-negative operators, which satisfy $A_N^\dagger\cdot \left|Q\right|\cdot A_N=O_*\left(C_N\right)$ and $B_N^\dagger\cdot \left|Q\right|\cdot B_N=o_*\left(C_N\right)$. Then,
\begin{align*}
A_N^\dagger\cdot Q\cdot B_N=o_*\left(C_N\right),\\
B_N^\dagger\cdot Q\cdot A_N=o_*\left(C_N\right).
\end{align*}
\end{lem}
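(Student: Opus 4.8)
\textbf{Proof plan for Lemma \ref{Lemma: O_* results}.}

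The first inequality (\ref{Inequality: Schwarz}) is the standard operator Cauchy--Schwarz / Young inequality. The plan is to write, for vectors $\psi$ in the common domain, $\mathfrak{Re}\!\left[\lambda\braket{A\psi, QB\psi}\right]$ and insert the polar decomposition $Q = U|Q|$ with $U$ a (partial) isometry commuting with $|Q|$ in the appropriate sense; equivalently, factor $|Q| = |Q|^{1/2}|Q|^{1/2}$ and $Q = |Q|^{1/2} S |Q|^{1/2}$ with $\|S\|\le 1$ selfadjoint. Then $\mathfrak{Re}\!\left[\lambda\braket{A\psi, QB\psi}\right] = \mathfrak{Re}\!\left[\lambda\braket{|Q|^{1/2}A\psi,\, S\,|Q|^{1/2}B\psi}\right]$, and applying $|\braket{u,Sv}| \le \|u\|\|v\| \le \tfrac{t}{2}\|u\|^2 + \tfrac{1}{2t}\|v\|^2$ with $u = |Q|^{1/2}A\psi$, $v = |Q|^{1/2}B\psi$, and absorbing the factor $2$ into the constant (or keeping track of it), yields $t\,\braket{A^\dagger|Q|A}_\psi + t^{-1}\braket{B^\dagger|Q|B}_\psi$. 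One should be mildly careful to state this as a quadratic-form inequality on the form domain of $A^\dagger|Q|A + B^\dagger|Q|B$ rather than an operator inequality in the strong sense, which is all that is used later.

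For the second part, the plan is to combine (\ref{Inequality: Schwarz}) with the definitions of $O_*$ and $o_*$ from Definition \ref{Definition: Smallness}, using the equivalent projection formulation in Remark \ref{Remark: Projection Formalism}. Fix $\varepsilon > 0$. Since $A_N^\dagger\cdot|Q|\cdot A_N = O_*(C_N)$, there are $K$ and $\delta_0$ with $\pi_{M,N}\,A_N^\dagger|Q|A_N\,\pi_{M,N} \le K\,\pi_{M,N}C_N\pi_{M,N}$ for $M/N \le \delta_0$. Since $B_N^\dagger\cdot|Q|\cdot B_N = o_*(C_N)$, there is $\delta_1 > 0$ with $\pi_{M,N}\,B_N^\dagger|Q|B_N\,\pi_{M,N} \le \tfrac{\varepsilon^2}{4K}\,\pi_{M,N}C_N\pi_{M,N}$ for $M/N \le \delta_1$. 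Now sandwich (\ref{Inequality: Schwarz}) between $\pi_{M,N}$ on both sides, choose the free parameter $t := \varepsilon/(2K)$, and obtain, for $M/N \le \min(\delta_0,\delta_1)$ and any $\lambda$ with $|\lambda| = 1$,
\begin{align*}
\pi_{M,N}\,\mathfrak{Re}\!\left[\lambda\,A_N^\dagger\cdot Q\cdot B_N\right]\pi_{M,N}
&\le t\,K\,\pi_{M,N}C_N\pi_{M,N} + t^{-1}\tfrac{\varepsilon^2}{4K}\,\pi_{M,N}C_N\pi_{M,N}\\
&= \tfrac{\varepsilon}{2}\,\pi_{M,N}C_N\pi_{M,N} + \tfrac{\varepsilon}{2}\,\pi_{M,N}C_N\pi_{M,N}
= \varepsilon\,\pi_{M,N}C_N\pi_{M,N}.
\end{align*}
By Remark \ref{Remark: Projection Formalism} (taking the function $\epsilon(\cdot)$ there to be any function tending to $0$ that dominates these thresholds, e.g. built by a diagonal argument over $\varepsilon = 1/k$), this is exactly $A_N^\dagger\cdot Q\cdot B_N = o_*(C_N)$. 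The statement $B_N^\dagger\cdot Q\cdot A_N = o_*(C_N)$ follows by taking adjoints, since $(B_N^\dagger Q A_N)^\dagger = A_N^\dagger Q B_N$ and the $o_*$ property is manifestly symmetric under $X_N \mapsto X_N^\dagger$ (the condition $|\braket{X_N}_\Psi| \le \epsilon\braket{Y_N}_\Psi$ involves only the real, hence self-adjoint, part).

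The only genuine subtlety — and the one place to be careful rather than a real obstacle — is the domain/closability bookkeeping: the products $A^\dagger\cdot Q\cdot B$ appearing in the applications (e.g. with $Q = \hat v$ unbounded) are defined as quadratic forms in the sense of the Remark following Definition \ref{Definition: Basic Fock}, so (\ref{Inequality: Schwarz}) must be read on the natural form domain $\{\Psi : |Q|^{1/2}A\Psi,\ |Q|^{1/2}B\Psi \in \mathcal{H}_2\}$, and the $o_*/O_*$ hypotheses are precisely quadratic-form statements on $\mathcal{W}_N\mathcal{F}_{\le M}$. Since everything is tested against $\Psi \in \mathcal{W}_N\mathcal{F}_{\le M}$ with $M < \infty$, where all the operators in question are bounded, these domain issues do not actually bite, and it suffices to record that all inequalities are understood in the form sense on $\bigcup_M \mathcal{W}_N\mathcal{F}_{\le M}$.
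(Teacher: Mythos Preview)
Your proposal is correct and follows essentially the same approach as the paper. For \eqref{Inequality: Schwarz} the paper expands $0\le(\sqrt{t}\,A-\sqrt{1/t}\,\lambda\,UB)^\dagger|Q|(\sqrt{t}\,A-\sqrt{1/t}\,\lambda\,UB)$ with $Q=U|Q|$ the polar decomposition, which is the same mechanism as your $|Q|^{1/2}S|Q|^{1/2}$ factorisation; for the second part the paper also sandwiches \eqref{Inequality: Schwarz} by $\pi_{M,N}$ and optimises in $t$, choosing $t=\sqrt{\epsilon(M/N)}$ rather than your $t=\varepsilon/(2K)$, which is an immaterial difference.
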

\begin{proof}
Let $Q=U |Q|$ be the polar decomposition of $Q$. Inequality (\ref{Inequality: Schwarz}) immediately follows from the inequality
\begin{align*}
0\leq \left(\sqrt{t}\ A-\sqrt{\frac{1}{t}}\lambda\ U B\right)^\dagger\cdot |Q|\cdot \left(\sqrt{t}\ A-\sqrt{\frac{1}{t}}\lambda\ U B\right).
\end{align*}

By our assumption $A_N^\dagger\cdot \left|Q\right|\cdot A_N=O_*\left(C_N\right)$ we know that there exist constants $c,\delta_0>0$, such that $\pi_{M,N}\, A_N^\dagger\cdot \left|Q\right|\cdot A_N\, \pi_{M,N}\leq c\braket{C_N}_\Psi$ for all $\frac{M}{N}\leq \delta_0$. Furthermore, by our assumption $B_N^\dagger\cdot \left|Q\right|\cdot B_N=o_*\left(C_N\right)$, there exists a function $\epsilon:\mathbb{R}^+\longrightarrow \mathbb{R}^+$ with $\underset{\delta\rightarrow 0}{\lim}\epsilon(\delta)$, such that $\pi_{M,N}\, B_N^\dagger\cdot \left|Q\right|\cdot B_N\, \pi_{M,N}\leq \epsilon\left(\frac{M}{N}\right)C_N$. Applying Inequality (\ref{Inequality: Schwarz}) with $t:=\sqrt{\epsilon\left(\frac{M}{N}\right)}$ yields for all $\lambda\in \mathbb{C}$ with $|\lambda|=1$ and $\frac{M}{N}\leq \delta_0$
\begin{align*}
\pi_{M,N}\, \mathfrak{Re}\left[\lambda\ A_N^\dagger\cdot Q\cdot B_N\right]\, \pi_{M,N}\leq \sqrt{\epsilon\left(\frac{M}{N}\right)}\ C_N.
\end{align*}
\end{proof}

Consider a function $g:\mathbb{R}^d\longrightarrow \mathbb{R}$. The following Lemma states that the operator $g(q)$ depends, up to an exponentially small error, only on the local data of $g$ in an arbitrary small neighborhood $[-\epsilon,\epsilon]^d$ of the origin, i.e. $g(q)=\widetilde{g}(q)+O_*\left(e^{-\delta N}\right)$ in case $g|_{[-\epsilon,\epsilon]^d}=\widetilde{g}|_{[-\epsilon,\epsilon]^d}$. This property plays a key role in the proof of the main technical Theorem \ref{Theorem: Decomposition}, since the involved functions are (somewhat arbitrary) extensions of locally constructed functions with specific properties, which the extensions no longer have, see for example the definition of $f:\mathbb{R}^d\longrightarrow \mathcal{H}_0$ in Definition \ref{Definition: F}.
\begin{lem}
\label{Lemma: q function}
Let $q_1,\dots,q_d$ be the operators defined in Eq.~(\ref{Equation: Definition q}) and let $g:\mathbb{R}^d\rightarrow \mathbb{R}$ be a function such that $g|_{[-\epsilon,\epsilon]^d}=0$ for some $\epsilon>0$. Furthermore, assume that $g$ satisfies the growth condition $|g(t)|\leq C|t|^{2j}$, with $C>0$ and $j\in \mathbb{N}$. Then
\begin{align*}
g\left(q\right)=O_*\left(e^{-\delta N}\right)
\end{align*}
for some $\delta>0$.
\end{lem}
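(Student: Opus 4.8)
The plan is to exploit the fact that the operators $q_1,\dots,q_d$ are, after rescaling, the position operators $Q_j := \sqrt{N}\,q_j = \tfrac{1}{2\sqrt N}\left(a_j + a_j^\dagger\right)$ of $d$ harmonic oscillators, so that a function $g$ vanishing near the origin evaluated at $q$ is controlled by the tail behaviour of the joint spectral measure of $Q_1,\dots,Q_d$, which decays Gaussian-fast. Concretely, $g(q) = \tilde g(Q/\sqrt N)$ where $\tilde g = g$, and since $g$ vanishes on $[-\epsilon,\epsilon]^d$ we have the pointwise bound $|g(t)| \le C|t|^{2j}\,\mathds{1}_{[|t|_\infty > \epsilon]}(t)$. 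Hence, writing $\mathds{1}_{\epsilon}$ for the spectral projection of the commuting family $(q_1,\dots,q_d)$ onto the complement of the cube $[-\epsilon,\epsilon]^d$, we get $|g(q)| \le C\,\big(q^\dagger\!\cdot q\big)^{j}\,\mathds{1}_{\epsilon}$ in the sense of functional calculus (using $|t|^{2j} \le d^{j}|t|_\infty^{2j} \le d^j (t_1^2+\dots+t_d^2)^j$), so the task reduces to showing $\big(q^\dagger\!\cdot q\big)^{j}\,\mathds{1}_{\epsilon} = O_*\!\left(e^{-\delta N}\right)$ for some $\delta>0$.

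The main step is therefore a deviation estimate: on states $\Psi \in \mathcal{W}_N\mathcal{F}_{\le M}$ with $M/N \le \delta_0$, I want $\big|\langle \big(q^\dagger\!\cdot q\big)^{j}\mathds{1}_\epsilon\rangle_\Psi\big| \le C e^{-\delta N}$. By the Cauchy--Schwarz inequality of Lemma \ref{Lemma: O_* results} it suffices to bound $\langle (q^\dagger\!\cdot q)^{2j}\rangle_\Psi$ by a polynomial in $N$ (this is an elementary normal-ordering estimate, since each $q_r = \tfrac{1}{2\sqrt N}(a_r+a_r^\dagger)$ and $\mathcal{N} \le M \le \delta_0 N$ on these states, giving $\langle (q^\dagger\!\cdot q)^{2j}\rangle_\Psi \lesssim (M/N)^{2j} \lesssim 1$ up to lower-order $1/N$ corrections from commutators) and to bound $\langle \mathds{1}_\epsilon\rangle_\Psi$ by $e^{-2\delta N}$. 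For the latter, I would use that $\mathds{1}_\epsilon \le \mathds{1}_{[q^\dagger\!\cdot q \ge \epsilon^2]} \le e^{-sN\epsilon^2}\,e^{sN\, q^\dagger\cdot q}$ for any $s>0$, and then estimate the exponential moment $\langle e^{sN q^\dagger\cdot q}\rangle_\Psi = \langle e^{s\sum_r Q_r^2}\rangle_\Psi$. Since $\Psi$ lies in $\mathcal{W}_N\mathcal{F}_{\le M}$ and $\mathcal{W}_N$ commutes with the $q_j$ (Lemma \ref{Lemma: Transformation Laws}), we may equivalently take $\Psi' = \mathcal{W}_N^{-1}\Psi \in \mathcal{F}_{\le M}$, and bound the exponential moment of $\sum_r Q_r^2$ on a state with at most $M$ excitations; for small enough $s$ (depending on $\epsilon$ and $\delta_0$ but not $N$) this moment is bounded by $e^{s' M} \le e^{s'\delta_0 N}$, and choosing $\delta_0$ small relative to $\epsilon^2$ makes $sN\epsilon^2 - s'\delta_0 N \ge 2\delta N > 0$.

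The delicate point — and the main obstacle — is the uniform control of the exponential moment $\langle e^{s\sum_{r=1}^d Q_r^2}\rangle_{\Psi'}$ over all $\Psi' \in \mathcal{F}_{\le M}$: the operator $Q_r^2 = \tfrac14(a_r+a_r^\dagger)^2$ is unbounded, so a naive bound fails, and one needs to use the particle-number cutoff genuinely. One clean way is to observe that on $\mathcal{F}_{\le M}$ one has $\tfrac14(a_r+a_r^\dagger)^2 \le \tfrac12(a_r^\dagger a_r + a_r a_r^\dagger) \le a_r^\dagger a_r + \tfrac12 \le \mathcal{N}_{\ge 1} + \tfrac12 \le M + \tfrac12$, hence $s\sum_{r=1}^d Q_r^2 \le s d (M + \tfrac12)$ as an operator inequality on the cutoff space, so in fact $e^{s\sum_r Q_r^2} \le e^{sd(M+1/2)}$ without any work — this already suffices with $s' = sd$. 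Combining, $\langle\mathds{1}_\epsilon\rangle_\Psi \le e^{-sN\epsilon^2}e^{sd(M+1/2)} \le e^{-sN\epsilon^2 + sd(\delta_0 N + 1)}$, and picking $\delta_0 < \epsilon^2/(2d)$ gives $\langle\mathds{1}_\epsilon\rangle_\Psi \le C e^{-\delta N}$ with $\delta = s\epsilon^2/2$. Together with the polynomial bound on $\langle(q^\dagger\!\cdot q)^{2j}\rangle_\Psi$ and Lemma \ref{Lemma: O_* results}, this yields $(q^\dagger\!\cdot q)^j \mathds{1}_\epsilon = O_*(e^{-\delta N})$ and hence $g(q) = O_*(e^{-\delta' N})$, completing the argument. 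A minor bookkeeping check is that the projection formalism of Remark \ref{Remark: Projection Formalism} applies cleanly, i.e. that the estimate holds after conjugating by $\pi_{M,N}$; this is automatic since $\mathds{1}_\epsilon$ and $q^\dagger\!\cdot q$ commute with $\mathbb{L}'$-independent objects and the only input is the cutoff $\mathcal{N} \le M$ on $\mathcal{W}_N^{-1}\pi_{M,N}\mathcal{F}_0 = \mathcal{F}_{\le M}$.
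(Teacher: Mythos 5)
Your overall strategy---a Chernoff-type deviation bound exploiting Gaussian concentration of the $q_r$ operators on states with particle number $\le M$---is the same idea the paper implements via the Hermite-polynomial decomposition, the one-mode density matrix $\gamma_r$, the bound $\gamma_r\le e^{2M+1-H}$ with $H$ the harmonic oscillator Hamiltonian, and the Mehler kernel. You also correctly identify the crux of the matter, namely a uniform bound on the exponential moment $\langle e^{sN\,q^\dagger\cdot q}\rangle_\Psi$ over $\Psi\in\mathcal{F}_{\le M}$.

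However, the step you label ``without any work'' is a genuine gap. You pass from the operator inequality $\tfrac14(a_r+a_r^\dagger)^2 \le a_r^\dagger a_r + \tfrac12 \le M+\tfrac12$ on $\mathcal{F}_{\le M}$ to $e^{s\sum_r Q_r^2}\le e^{sd(M+1/2)}$, but this does not follow. The operator $Q_r^2 = \tfrac14(a_r+a_r^\dagger)^2$ does not preserve $\mathcal{F}_{\le M}$, because $(a_r^\dagger)^2$ raises the particle number; so the cutoff $\mathcal{N}\le M$ yields only the \emph{compressed} inequality $\pi_M\, Q_r^2\, \pi_M \le (M+\tfrac12)\pi_M$, which is an inequality of quadratic forms, not a spectral bound on an invariant subspace. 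The compressed inequality does not control $\langle \Psi, e^{sQ_r^2}\Psi\rangle$: the matrix elements $\langle n|e^{sQ_r^2}|n'\rangle$ for $n,n'\le M$ receive contributions through all intermediate particle numbers, and the exponential function is not operator monotone, so one cannot exponentiate a compressed inequality. Already for $M=0$ one sees the problem: $\langle 0|e^{sQ_r^2}|0\rangle = (1-s/2)^{-1/2}$ is finite only for $s<2$ and blows up as $s\uparrow 2$, whereas $e^{s(0+1/2)}$ stays bounded---so even when a bound of the claimed shape might hold for small $s$, this requires a genuine estimate on the tails of the spectral measure of $Q_r$ on $\mathcal{F}_{\le M}$, which is exactly the Mehler-kernel computation the paper performs. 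In short, the proposal has a gap at precisely the step you yourself flagged as the ``main obstacle.''
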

\begin{proof}
Using the elementary estimate $|t|^{2j}=\left(\sum_{r=1}^d t_r^2\right)^j\leq d^j\max_{1\leq r\leq d} t_r^{2j}$ yields
\begin{align*}
|g(t)|\leq  d^j C \sum_{r=1}^d t_r^{2j}\ \mathds{1}_{(\epsilon,\infty)}\left(|t_r|\right).
\end{align*}
In the following we want to verify that there exist constants $C$, $\delta>0$ and $\delta_0>0$ such that $\braket{q_r^{2j}\mathds{1}_{(\epsilon,\infty)}(|q_r|)}_{\Psi}\leq Ce^{-\delta N}$ for all states $\Psi\in \mathcal{W}_N\mathcal{F}_{\leq M}$,  $\|\Psi\|=1$, with $\frac{M}{N}\leq \delta_0$ and $r\in \{1,\dots,d\}$. Since $\mathcal{W}_N\, q_r\, \mathcal{W}_N^{-1}=q_r$, it is equivalent to verify this for $\Psi\in \mathcal{F}_{\leq M}$ instead. Due to the reflection symmetry $q_r\mapsto -q_r$ of $q_r^{2j}$, it is furthermore enough to verify that $\braket{\mathds{1}_{(\epsilon,\infty)}(q_r)q_r^{2j}}_{\Psi}\leq Ce^{-\delta N}$ for all states $\Psi\in \mathcal{F}_{\leq M}$ with $\frac{M}{N}\leq \delta_0$. Note that the operators $q_r:=\frac{1}{\sqrt{2N}}\frac{a_r+a_r^\dagger}{\sqrt{2}}$ depend on $N$. In the following we will make use of the description of the Fock space $\mathcal{F}_{\leq M}$ in terms of Hermite polynomials $h_n$, i.e. for $r\in \{1,\dots,d\}$ and $\Psi\in \mathcal{F}_{\leq M}$ there exist states $\Psi_n\in \mathcal{F}_{\leq M-n}$ with $a_r\Psi_n=0$, such that $\Psi=\sum_{n= 0}^M h_n\!\left(\frac{a_r+a_r^\dagger}{\sqrt{2}}\right)\Psi_n$, see for example  Eq.~(1.26), respectively Exercise 1(ii), in \cite{MR}. Furthermore we define the density matrix $\gamma_r(x,y):=\sum_{n_1,n_2=0}^M \braket{\Psi_{n_1},\Psi_{n_2}}h_{n_2}(x)h_{n_1}(y)\frac{1}{\sqrt{\pi}}e^{-\frac{x^2+y^2}{2}}$ on $L^2\left(\mathbb{R}\right)$. With $\gamma_r$ at hand we have
\begin{align*}
\braket{\mathds{1}_{(\epsilon,\infty)}(q_r)q_r^{2j}}_{\Psi}=\int_{\sqrt{2N}\epsilon}^\infty \left(\frac{x}{\sqrt{2N}}\right)^{2j}\gamma_r(x,x)\mathrm{d}x.
\end{align*}
In order to estimate this quantity, let us define the harmonic oscillator Hamiltonian $H:=-\frac{\mathrm{d}^2}{\mathrm{d} x^2}+x^2$ on $L^2(\mathbb{R})$. Since $\gamma_r$ involves only eigenfunctions $h_n(x)e^{-\frac{x^2}{2}}$ of $H$ with $n\leq M$, we have the operator inequality $\gamma_r\leq e^{2M+1-H}$. Using the Mehler kernel for $e^{-H}$ therefore yields for $c:=\frac{1}{\sqrt{2\pi\, \mathrm{sinh}(2)}}$ and $\lambda:=\mathrm{coth}(2)-\mathrm{cosech(2)}>0$, and all $M\leq \epsilon^2\lambda  N/2$
\begin{align*}
\int_{\sqrt{2N}\epsilon}^\infty \left(\frac{x}{\sqrt{2N}}\right)^{2j}\gamma_r(x,x)\mathrm{d}x\leq c\,  e^{\epsilon^2\lambda N+1 }\int_{\sqrt{2N}\epsilon}^\infty \left(\frac{x}{\sqrt{2N}}\right)^{2j}e^{-\lambda x^2}\mathrm{d}x=O_{N\rightarrow \infty}\left(e^{-\epsilon^2\lambda  N}\right).
\end{align*}
\end{proof}

The following Lemma is an auxiliary result, which will be useful for the verification of various asymptotic results involving the operator $b_{>d}$.
\begin{lem}
\label{Lemma: Auxiliary}
Recall the operators $\mathcal{W}_N,\mathbb{L}', p_j'$ and $f(q)$ from Definition \ref{Definition: Unitary Transformation} and the definition of $\hat{\pi}_{M,N}$ above Eq.~(\ref{Equation: Definition of hat Pi}). Then, $q_j$ and $p'_j$ commute with $\hat{\pi}_{M,N}$ for $j\in \{1,\dots,d\}$, $\hat{\pi}_{M,N}\mathbb{L}'\Psi=\mathbb{L}'\hat{\pi}_{M,N}\Psi$ for all $\Psi\in \mathcal{W}_N\mathcal{F}_{\leq N}$, and $b_k\, \hat{\pi}_{M,N}=\hat{\pi}_{M,N}\, b_k\, \hat{\pi}_{M,N}$ for all $k>d$. Furthermore, we have for all $M\leq N$ the estimate
\begin{align}
\label{Equation: b+f}
&\hat{\pi}_{M,N} \left(b_{>d}+ f(q)\right)^\dagger\cdot \left(b_{>d}+ f(q)\right) \hat{\pi}_{M,N}\leq \frac{M}{N},\\
\label{Equation: b only}
&\hat{\pi}_{M,N}\, b_{>d}^\dagger\cdot b_{>d}\, \hat{\pi}_{M,N}\leq 4.
\end{align}
\end{lem}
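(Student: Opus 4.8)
The plan is to reduce every assertion to the single structural observation that $\hat{\pi}_{M,N}=\mathcal{W}_N\,\Pi\,\mathcal{W}_N^{-1}$, where $\Pi:=\mathds{1}_{[\mathcal{N}_{>d}\leq M]}$ is the spectral projection of $\mathcal{N}_{>d}:=\sum_{j>d}a_j^\dagger a_j$, and then to transport the relevant operators into the ``$\mathcal{W}_N$-picture'' by means of the transformation laws of Lemma \ref{Lemma: Transformation Laws} and Definition \ref{Definition: Unitary Transformation}. The only facts about $\Pi$ that will be used are that it commutes with every operator built solely from the modes $a_1,\dots,a_d$ (in particular with $q_j$, $p_j$, $b_j$, $b_j^\dagger$ for $j\leq d$), that it commutes with $b_k$ on $\mathrm{ran}\,\Pi$ for $k>d$ because such $b_k$ lower $\mathcal{N}_{>d}$ by one, that it does not raise the total particle number (so it preserves $\mathcal{F}_{\leq N}$), and that, since $\mathcal{N}=\sum_{j\leq d}a_j^\dagger a_j+\mathcal{N}_{>d}$, it commutes with $\mathbb{L}=\tfrac1N\mathcal{N}$ on $\mathcal{F}_{\leq N}$.

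For the commutation statements I would argue as follows. Since $\mathcal{W}_N\,q_j\,\mathcal{W}_N^{-1}=q_j$ and $p_j'=\mathcal{W}_N\,p_j\,\mathcal{W}_N^{-1}$, one has $q_j\,\hat{\pi}_{M,N}=\mathcal{W}_N\,q_j\,\Pi\,\mathcal{W}_N^{-1}=\mathcal{W}_N\,\Pi\,q_j\,\mathcal{W}_N^{-1}=\hat{\pi}_{M,N}\,q_j$, and likewise $p_j'\,\hat{\pi}_{M,N}=\mathcal{W}_N\,p_j\,\Pi\,\mathcal{W}_N^{-1}=\mathcal{W}_N\,\Pi\,p_j\,\mathcal{W}_N^{-1}=\hat{\pi}_{M,N}\,p_j'$. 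The identity $\hat{\pi}_{M,N}\,\mathbb{L}'\Psi=\mathbb{L}'\,\hat{\pi}_{M,N}\Psi$ for $\Psi\in\mathcal{W}_N\mathcal{F}_{\leq N}$ is proved in exactly the same way from $\mathbb{L}'=\mathcal{W}_N\,\mathbb{L}\,\mathcal{W}_N^{-1}$ and $[\Pi,\mathbb{L}]=0$ on $\mathcal{F}_{\leq N}$, the only point to verify being that $\Pi$ maps $\mathcal{F}_{\leq N}$ into itself so that all domains match.

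For the mode $k>d$: from Lemma \ref{Lemma: Transformation Laws} one has $\mathcal{W}_N\,b_k\,\mathcal{W}_N^{-1}=b_k+f_k(q)$, and since $\mathcal{W}_N^{-1}$ is of the same exponential form with $f$ replaced by $-f$, also $\mathcal{W}_N^{-1}\,b_k\,\mathcal{W}_N=b_k-f_k(q)$. Hence $b_k\,\hat{\pi}_{M,N}=\mathcal{W}_N\bigl(b_k-f_k(q)\bigr)\Pi\,\mathcal{W}_N^{-1}$; because $b_k$ maps $\mathrm{ran}\,\Pi$ into itself while $f_k(q)$ commutes with $\Pi$, we get $\bigl(b_k-f_k(q)\bigr)\Pi=\Pi\bigl(b_k-f_k(q)\bigr)\Pi$, and re-inserting $\mathcal{W}_N^{-1}\mathcal{W}_N$ together with $\mathcal{W}_N\bigl(b_k-f_k(q)\bigr)\mathcal{W}_N^{-1}=b_k$ yields $b_k\,\hat{\pi}_{M,N}=\hat{\pi}_{M,N}\,b_k\,\hat{\pi}_{M,N}$.

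For the two quantitative bounds, the key computation is $\mathcal{W}_N^{-1}\bigl(b_{>d}+f(q)\bigr)\mathcal{W}_N=b_{>d}$ (the transformation law of Lemma \ref{Lemma: Transformation Laws} summed over $j>d$), which gives $\mathcal{W}_N^{-1}\bigl(b_{>d}+f(q)\bigr)^\dagger\!\cdot\bigl(b_{>d}+f(q)\bigr)\mathcal{W}_N=b_{>d}^\dagger\!\cdot b_{>d}=\tfrac1N\mathcal{N}_{>d}$. Conjugating by $\hat{\pi}_{M,N}$ and using $\mathcal{N}_{>d}\,\Pi\leq M\,\Pi$ proves (\ref{Equation: b+f}). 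For (\ref{Equation: b only}) I would estimate $b_{>d}^\dagger\!\cdot b_{>d}\leq 2\bigl(b_{>d}+f(q)\bigr)^\dagger\!\cdot\bigl(b_{>d}+f(q)\bigr)+2\,f(q)^\dagger\!\cdot f(q)$, observe that $\|f(t)\|\leq\sigma(t)\leq 1$ for all $t$ by the definition of $f$ in Definition \ref{Definition: F} (the bracket there is an orthogonal projection of the normalized function $u_{0,\lambda^{-1}(t)}$), so that $f(q)^\dagger\!\cdot f(q)=(\|f\|^2)(q)\leq 1$ by functional calculus, and combine this with (\ref{Equation: b+f}) and $M\leq N$ to obtain the constant $4$. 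The computations are all routine; the only point requiring a little care is the bookkeeping of operator domains — in particular that $\mathbb{L}'$ is defined only on $\mathcal{W}_N\mathcal{F}_{\leq N}$ and that the unbounded conjugations and quadratic-form inequalities above are understood on a common core — which I expect to be the main, though minor, obstacle.
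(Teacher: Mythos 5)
Your proposal is correct and follows essentially the same route as the paper's proof: reduce everything to the conjugated picture $\hat{\pi}_{M,N}=\mathcal{W}_N\,\mathds{1}_{[\mathcal{N}_{>d}\leq M]}\,\mathcal{W}_N^{-1}$, use the transformation laws to move $q_j,p_j',b_k,\mathbb{L}'$ past the projection, identify $\big(b_{>d}+f(q)\big)^\dagger\!\cdot\big(b_{>d}+f(q)\big)=\tfrac1N\mathcal{W}_N\mathcal{N}_{>d}\mathcal{W}_N^{-1}$, and then split $b_{>d}^\dagger\!\cdot b_{>d}$ by Cauchy--Schwarz together with $\|f(t)\|\leq 1$. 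The paper writes the same steps (with the exponent convention $b_j=\mathcal{W}_N(b_j-f_j(q))\mathcal{W}_N^{-1}$ rather than your equivalent $\mathcal{W}_N^{-1}(b_{>d}+f(q))\mathcal{W}_N=b_{>d}$), so this is a match.
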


\begin{proof}
Recall $\mathcal{N}:=\sum_{j=1}^\infty a_j^\dagger a_j$ and let us define $\mathcal{N}_+:=\sum_{j>d}a_j^\dagger\cdot a_j$. Since $\mathcal{N}_+$ commutes with $\mathbb{L}$ and $q_j,p_j$ for $j\in \{1,\dots,d\}$, we obtain that $\hat{\pi}_{M,N}=\mathcal{W}_N\, \mathds{1}_{[0,M]}\left(\mathcal{N}_+\right) \mathcal{W}_N^{-1}$ commutes with $q_j=\mathcal{W}_N\, q_j\, \mathcal{W}_N^{-1}$ and $p_j'=\mathcal{W}_N\, p_j\, \mathcal{W}_N^{-1}$. Similarly $\hat{\pi}_{M,N}\mathbb{L}'\Psi=\mathbb{L}'\hat{\pi}_{M,N}\Psi$ for all $\Psi\in \mathcal{W}_N\mathcal{F}_{\leq N}$. Making use of the fact that $b_j=\mathcal{W}_N\, \left(b_j-f_j(q)\right)\, \mathcal{W}_N^{-1}$ yields
\begin{align*}
b_j\, \hat{\pi}_{M,N}&=\mathcal{W}_N \left(b_j-f_j(q)\right) \mathds{1}_{[0,M]}\left(\mathcal{N}_+\right)\mathcal{W}_N^{-1}\\
&=\mathcal{W}_N\, \mathds{1}_{[0,M]}\left(\mathcal{N}_+\right) \left(b_j-f_j(q)\right) \mathds{1}_{[0,M]}\left(\mathcal{N}_+\right) \mathcal{W}_N^{-1}=\hat{\pi}_{M,N}\, b_j\, \hat{\pi}_{M,N}.
\end{align*}
Inequality (\ref{Equation: b+f}) follows from $\big(b_{>d}+ f(q)\big)^\dagger\cdot \big(b_{>d}+ f(q)\big)=\frac{1}{N}\mathcal{W}_N\, \mathcal{N}_+\, \mathcal{W}_N^{-1}$ and
\begin{align*}
\hat{\pi}_{M,N}\, \left(b_{>d}+ f(q)\right)^\dagger\cdot \left(b_{>d}+ f(q)\right)\, \hat{\pi}_{M,N}=\frac{1}{N}\mathcal{W}_N\, \mathcal{N}_+ \mathds{1}_{[0,M]}\left(\mathcal{N}_+\right) \, \mathcal{W}_N^{-1}\leq \frac{M}{N}.
\end{align*}
In order to verify Inequality (\ref{Equation: b only}), note that $f(t)^\dagger\cdot f(t)\leq 1$ for all $t$. Applying the Cauchy--Schwarz inequality as in \ref{Lemma: O_* results} yields
\begin{align*}
\hat{\pi}_{M,N}\, b_{>d}^\dagger\cdot b_{>d}\,\hat{\pi}_{M,N}\!&\leq\! 2\hat{\pi}_{M,N} \left(b_{>d}+ f(q)\right)^\dagger\cdot \left(b_{>d}+ f(q)\right)\hat{\pi}_{M,N}\!+\!2\hat{\pi}_{M,N}\,  f(q)^\dagger\!\cdot  f(q)\, \hat{\pi}_{M,N}\\
&\leq 2\frac{M}{N}+2\leq 4.
\end{align*}
\end{proof}
The proof of the main technical Theorem \ref{Theorem: Decomposition} consists of two steps: First one has to identify the residuum $R_J$, which is carried out in the Lemmata \ref{Lemma: Taylor A} and \ref{Lemma: Taylor B}, and in the second step one has to derive asymptotic results for these residua $R_J$, which is carried out in the Theorems \ref{Theorem: Main A} and \ref{Theorem: Main B}. The following three Lemmata provide asymptotic results for the  types of operators most frequently encountered during our analysis of $R_J$.

\begin{lem}
\label{Lemma: Function of q}
Let $\phi,\Phi:\mathbb{R}^d\longrightarrow \mathbb{R}$ be functions with $|\phi(t)|\leq C|t|^k$ and $|\Phi(t)|\leq C(1+|t|^k)$ for some $k\geq 1$. Then, $\phi(q)=o_*(1)$ and $\Phi(q)=O_*(1)$. Furthermore,
\begin{align}
\label{Equation: phi-b-b}
\phi(q)\, b_{>d}^\dagger\cdot b_{>d}&=o_*\left(b_{>d}^\dagger\cdot b_{>d}+\frac{1}{N}\right),\\
\label{Equation: Phi-b-b}
\Phi(q)\, b_{>d}^\dagger\cdot b_{>d}&=O_*\left(b_{>d}^\dagger\cdot b_{>d}+\frac{1}{N}\right).
\end{align}
\end{lem}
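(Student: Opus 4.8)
\textbf{Proof plan for Lemma \ref{Lemma: Function of q}.}
The plan is to reduce everything to two basic facts: first, that $q=h_1$ is itself of order $o_*(1)$, which follows from the definition of $o_*(\cdot)$ together with the fact that $q^\dagger\cdot q\leq \frac{1}{N}\mathbb{L}'+\tfrac{\text{const}}{N}$ on $\mathcal{W}_N\mathcal{F}_{\leq M}$ (so $\langle q^\dagger\cdot q\rangle_\Psi\leq \tfrac{M}{N}+o(1)$), and second, the localization principle of Lemma \ref{Lemma: q function}, which says that modifying $g$ outside an arbitrarily small cube $[-\epsilon,\epsilon]^d$ costs only an exponentially small error $O_*(e^{-\delta N})$. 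First I would prove $\phi(q)=o_*(1)$: pick any $\epsilon>0$, write $\phi=\phi\,\mathds{1}_{[-\epsilon,\epsilon]^d}+\phi\,(1-\mathds{1}_{[-\epsilon,\epsilon]^d})$; the second piece is $O_*(e^{-\delta N})=o_*(1)$ by (a trivial variant of) Lemma \ref{Lemma: q function}, while on the cube $|\phi(t)|\leq C\epsilon^{k-1}|t|$ since $k\geq 1$, so it suffices to bound $|\,t\,|$ against $1$ in the $o_*$ sense. Concretely, using the characterization in Remark \ref{Remark: Projection Formalism} and $\langle q_r^2\rangle_\Psi\leq \tfrac{1}{N}\langle\mathbb{L}'\rangle_\Psi + \tfrac{c}{N}\leq \tfrac{M}{N}+\tfrac cN$ for $\Psi\in\mathcal{W}_N\mathcal{F}_{\leq M}$, we get $\pm\,\pi_{M,N}\,\phi(q)\,\pi_{M,N}\leq C\epsilon^{k-1}\sqrt{d}\,\sqrt{\tfrac{M}{N}+\tfrac cN}+O_*(e^{-\delta N})$, and letting first $N/M\to\infty$ and then $\epsilon\to0$ shows the bound can be made $\leq\varepsilon$ for any $\varepsilon$, which is exactly $\phi(q)=o_*(1)$. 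The statement $\Phi(q)=O_*(1)$ is even easier: $|\Phi(t)|\leq C(1+|t|^k)$ and $\langle(1+|q|^k)\rangle_\Psi$ is bounded uniformly for $\tfrac{M}{N}\leq\delta_0$ by the same moment estimate for $q_r$ (iterated, or via Lemma \ref{Lemma: q function} to cut off the tail and a crude bound $|q_r|^k\leq \epsilon^{k-2}q_r^2$ on the bulk plus the exponentially small tail).

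For the operator-valued estimates (\ref{Equation: phi-b-b}) and (\ref{Equation: Phi-b-b}), I would combine the scalar results above with the commutation/projection structure from Lemma \ref{Lemma: Auxiliary}. The key point is that $b_{>d}^\dagger\cdot b_{>d}$ and $q$ interact cleanly on the projected spaces: by Lemma \ref{Lemma: Auxiliary}, $q_j$ commutes with $\hat\pi_{M,N}$, $b_k\hat\pi_{M,N}=\hat\pi_{M,N}b_k\hat\pi_{M,N}$ for $k>d$, and $\pi_{M,N}=\hat\pi_{M,N}\pi_{M,N}$. Hence on $\mathcal{W}_N\mathcal{F}_{\leq M}$ one may insert $\hat\pi_{M,N}$ freely around $b_{>d}$, and $\pi_{M,N}\,\phi(q)\,b_{>d}^\dagger\cdot b_{>d}\,\pi_{M,N}$ becomes, after symmetrizing, controlled by $\|\hat\pi_{M,N}\,b_{>d}^\dagger\cdot b_{>d}\,\hat\pi_{M,N}\|\leq 4$ from (\ref{Equation: b only}) times $\pm\pi_{M,N}\phi(q)\pi_{M,N}$ — but that loses the quadratic factor. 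To get the sharp statement with $b_{>d}^\dagger\cdot b_{>d}+\tfrac1N$ on the right, I would instead split $\phi(q)=\phi\mathds{1}_{[-\epsilon,\epsilon]^d}+(\text{exp.\ small})$ as before and, on the bulk, bound $\pm\,\phi(q)\,b_{>d}^\dagger\cdot b_{>d}$ by $C\epsilon^{k-1}\,|q|\cdot b_{>d}^\dagger\cdot b_{>d}$; then apply the operator Cauchy--Schwarz of Lemma \ref{Lemma: O_* results} with $A=|q|^{1/2}b_{>d}$, $B=|q|^{1/2}b_{>d}$ (or more carefully $A=b_{>d}$, $B$ carrying the $q$-factor) to reduce to $\langle q_r\, b_{>d}^\dagger\cdot b_{>d}\rangle$; commuting $q_r$ past $\hat\pi_{M,N}$ and using $q_r^2\leq \tfrac{1}{N}\mathbb{L}'+O(\tfrac1N)$ together with $\mathbb{L}'\hat\pi_{M,N}\leq \tfrac MN$ yields a bound of the form $\epsilon^{k-1}\sqrt{\tfrac MN}\,(b_{>d}^\dagger\cdot b_{>d}+\tfrac1N)$ after one more Cauchy--Schwarz, which is $o_*(b_{>d}^\dagger\cdot b_{>d}+\tfrac1N)$. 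The $O_*$ version (\ref{Equation: Phi-b-b}) follows the same route, replacing $o_*$ estimates by the uniform $\Phi(q)=O_*(1)$ bound and using $\hat\pi_{M,N}b_{>d}^\dagger\cdot b_{>d}\hat\pi_{M,N}\leq 4$ to absorb the constant.

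The main obstacle I anticipate is purely bookkeeping rather than conceptual: carefully tracking which projection ($\pi_{M,N}$ versus $\hat\pi_{M,N}$) is available at each step so that the commutators from Lemma \ref{Lemma: Auxiliary} genuinely apply, and making sure that in (\ref{Equation: phi-b-b})--(\ref{Equation: Phi-b-b}) one extracts a \emph{full} factor of $b_{>d}^\dagger\cdot b_{>d}+\tfrac1N$ on the right — i.e.\ not accidentally degrading $b_{>d}^\dagger\cdot b_{>d}$ to a constant via (\ref{Equation: b only}) when the sharp quadratic form is needed downstream. The cleanest way to avoid that pitfall is to always write $\phi(q)=\phi(q)^{1/2}\cdot(\operatorname{sgn}\phi)\cdot|\phi(q)|^{1/2}$ and distribute the two halves onto the two copies of $b_{>d}$ before applying Lemma \ref{Lemma: O_* results}, so that each factor carries $|\phi(q)|^{1/2}\lesssim \epsilon^{(k-1)/2}|q|^{1/2}$ and the $|q|$'s then combine with the single spare $\tfrac1N$ coming from $q_r^2\leq \tfrac1N\mathbb{L}'+O(\tfrac1N)$. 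Everything else is a routine application of the moment bounds for $q_r$ on $\mathcal{W}_N\mathcal{F}_{\leq M}$ and the exponential localization of Lemma \ref{Lemma: q function}.
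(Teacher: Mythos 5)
Your overall decomposition is the same one the paper uses: split $\phi(q)$ into a bulk piece supported near zero and a tail piece which is killed exponentially by Lemma~\ref{Lemma: q function}, and for the operator estimates rely on the projection/commutation structure of Lemma~\ref{Lemma: Auxiliary} together with Eq.~(\ref{Equation: b only}). The route is correct in spirit, but you over-engineer the bulk term in both halves.

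For $\phi(q)=o_*(1)$ you write $|\phi(t)|\leq C\epsilon^{k-1}|t|$ on the cube and then invoke moment estimates on $q_r$ to control $|q|$. This is unnecessary: the whole point of localizing to $|t|\leq r$ is that $\sup_{|t|\leq r}|\phi(t)|\leq Cr^k=:\epsilon_r\to 0$, so the bulk operator $\tau_r(q)\phi(q)$ is bounded \emph{in operator norm} by $\epsilon_r$, with no moment estimate needed at all. The proof is then a one-liner: $|\phi(q)|\leq\epsilon_r+O_*(e^{-\delta N})$. (Incidentally, your intermediate inequality $q^\dagger\cdot q\leq\tfrac1N\mathbb{L}'+\tfrac{\mathrm{const}}{N}$ has a spurious factor $\tfrac1N$; the correct version is $q^\dagger\cdot q\leq\mathbb{L}'+\tfrac{d}{2N}$, which is what actually yields the $\tfrac MN+O(\tfrac1N)$ moment bound you state. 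But that bound is not needed for this lemma.)

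For Eqs.~(\ref{Equation: phi-b-b})--(\ref{Equation: Phi-b-b}) you correctly identify the pitfall that bounding $\hat\pi_{M,N}\,b_{>d}^\dagger\cdot b_{>d}\,\hat\pi_{M,N}$ by $4$ loses the quadratic factor, and you propose a Cauchy--Schwarz detour with $|q|^{1/2}$-weighted creation operators to avoid it. But the obstacle dissolves once you split \emph{first}: on the bulk there is no need for Cauchy--Schwarz whatsoever, since $\tau_r(q)\phi(q)$ is a self-adjoint multiplication operator commuting with $b_{>d}^\dagger\cdot b_{>d}\geq 0$ and of norm $\leq\epsilon_r$, so directly $\pm\,\tau_r(q)\phi(q)\,b_{>d}^\dagger\cdot b_{>d}\leq\epsilon_r\,b_{>d}^\dagger\cdot b_{>d}$. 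The degradation of $b_{>d}^\dagger\cdot b_{>d}$ to a constant via Eq.~(\ref{Equation: b only}) is then used \emph{only on the tail}, where it is harmless because the prefactor $(1-\tau_r(q))|\phi(q)|$ is $O_*(e^{-\delta N})\leq o_*\!\left(\tfrac1N\right)$ and hence dominated by the $\tfrac1N$ on the right of the target bound. Your proposed route would probably work, but it replaces two trivial steps (commuting multiplication operators and a sup-norm bound) with a chain of Cauchy--Schwarz estimates that the structure of the problem does not require.
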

\begin{proof}
In the following, let $0\leq \tau\leq 1$ be a smooth function with $\mathrm{supp}\left(\tau\right)\subset B_{1}(0)$ and $\tau(t)=1$ for all $t\in B_{\frac{1}{2}}(0)$, and let $\tau_r(t):=\tau\left(\frac{t}{r}\right)$ for $r>0$. Clearly $\phi(q)=\tau_r(q) \phi(q)+(1-\tau_r(q)) \phi(q)$. By our assumptions we know that $|\tau_r \phi|\leq \epsilon_r$ with $\epsilon_r\underset{r\rightarrow 0}{\longrightarrow}0$ and $(1-\tau_r) \phi$ is zero in a neighborhood of zero, hence $(1-\tau_r(q)) \phi(q)=O_*\left(e^{-\delta N}\right)$ by Lemma \ref{Lemma: q function}. We conclude that $|\phi(q)|\leq \epsilon_r+O_*\left(e^{-\delta N}\right)$ for all $r>0$, and consequently $\phi(q)=o_*(1)$. The corresponding statement for $\Phi(q)$ follows from the fact that $\Phi(q)\leq C+\phi(q)$ with $\phi(t):=|t|^k$ and $\phi(q)=o_*(1)$.

Let us write similar to before $\phi(q)\, b_{>d}^\dagger\cdot b_{>d}=\tau_r(q)\phi(q)\, b_{>d}^\dagger\cdot  b_{>d}+(1-\tau_r(q))\phi(q)\, b_{>d}^\dagger\cdot b_{>d}$. In order to verify Eq.~(\ref{Equation: Phi-b-b}). First of all $\tau_r(q)  \phi(q) \,  b_{>d}^\dagger \cdot   b_{>d}\!\leq \epsilon_r  b_{>d}^\dagger \cdot  b_{>d}$, where we use that $q$ commutes with $b_{>d}$. For the treatment of the second term, recall Inequality (\ref{Equation: b only}) and $\pi_{M,N} (1-\tau_r(q))^2 |\phi_r(q)|^2 \pi_{M,N}\leq C^2e^{-2\delta N}$ for $\frac{M}{N}\leq \delta$ with $C,\delta>0$, which follows from Lemma \ref{Lemma: q function}. Hence,
\begin{align*}
&\pi_{M,N} (1\!-\!\tau_r(q)) |\phi(q)|\, b_{>d}^\dagger\cdot b_{>d}\, \pi_{M,N}\!=\!\pi_{M,N}(1\!-\!\tau_r(q)) |\phi(q)| \hat{\pi}_{M,N}\, b_{>d}^\dagger\cdot b_{>d}\, \pi_{M,N}\\
&\ \leq \Big\|\pi_{M,N} (1\!-\!\tau_r(q))|\phi(q)|\Big\| \ \Big\|\hat{\pi}_{M,N}\, b_{>d}^\dagger\cdot b_{>d}\,\hat{\pi}_{M,N}\Big\|\leq 4Ce^{-\delta N}.
\end{align*}
We conclude that $\pi_{M,N}\phi(q)\pi_{M,N}\leq 4Ce^{-\delta N}+\epsilon_r b_{>d}^\dagger \cdot  b_{>d}$ for $\frac{M}{N}\leq \delta$, and therefore $\phi(q)=o_*\left(b_{>d}^\dagger\cdot b_{>d}+\frac{1}{N}\right)$. The corresponding statement for $\Phi(q)\, b_{>d}^\dagger\cdot b_{>d}$ follows as above.
\end{proof}

\begin{lem}
\label{Lemma: b}
Given $w:\mathbb{R}^d\longrightarrow \mathcal{H}$ with $\|w(t)\|\leq c\ |t|^k$ and $W:\mathbb{R}^d\longrightarrow \mathcal{H}$ with $\|W(t)\|\leq c\left(1+|t|^k\right)$ for some $c>0$ and $k\geq 1$, we define $X:= W(q)^\dagger\cdot b_{>d}$ and $Y:= w(q)^\dagger\cdot b_{>d}$. Then, $X^\dagger X$ and $X X^\dagger$ are of order $O_*\left(b_{>d}^\dagger\cdot b_{>d}+\frac{1}{N}\right)$, and $Y^\dagger Y$ and $Y Y^\dagger$ are of order $o_*\left(b_{>d}^\dagger\cdot b_{>d}+\frac{1}{N}\right)$. Furthermore, for $\Phi:\mathbb{R}^d\longrightarrow \mathbb{R}$ with $|\Phi(t)|\leq c\left(1+|t|^j\right)$, we obtain
\begin{align*}
&\Phi(q) \left(b_{>d}^\dagger\cdot b_{>d}\right)^2=o_*\left(b_{>d}^\dagger\cdot b_{>d}+\frac{1}{N}\right).
\end{align*}
Recall the operator $ p'$ from Definition \ref{Definition: Unitary Transformation}. We have
\begin{align*}
\left( p'- p\right)^\dagger\cdot \left( p'- p\right)=o_*\left(b_{>d}^\dagger\cdot b_{>d}+\frac{1}{N}\right).
\end{align*}
\end{lem}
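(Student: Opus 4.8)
The plan is to prove the four assertions in turn, all powered by the same toolkit: the operator Cauchy--Schwarz inequality of Lemma~\ref{Lemma: O_* results}, the localization principle of Lemma~\ref{Lemma: q function} (a polynomially bounded function of $q$ vanishing near the origin is $O_*(e^{-\delta N})$), the bounds $\phi(q)\,b_{>d}^\dagger\!\cdot b_{>d}=o_*(b_{>d}^\dagger\!\cdot b_{>d}+\tfrac{1}{N})$ for $|\phi(t)|\le c|t|^k$ resp.\ $O_*(\cdots)$ for $|\phi(t)|\le c(1+|t|^k)$ from Lemma~\ref{Lemma: Function of q}, and the combinatorial facts of Lemma~\ref{Lemma: Auxiliary} --- in particular that $q_j$ commutes with $\hat\pi_{M,N}$, that $b_k$ preserves $\mathrm{ran}\,\hat\pi_{M,N}$ for $k>d$, that $\hat\pi_{M,N}(b_{>d}+f(q))^\dagger\!\cdot(b_{>d}+f(q))\,\hat\pi_{M,N}\le M/N$, and the elementary fact that $q_1,\dots,q_d$ commute with every $b_k,b_k^\dagger$, $k>d$.

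\emph{The operators $X$ and $Y$.} Since $(q_1,\dots,q_d)$ is a commuting family commuting with all $b_k$, $k>d$, I would diagonalize it and apply ordinary Cauchy--Schwarz in the mode index on each fiber, obtaining the operator inequalities $X^\dagger X\le \Phi_W(q)\,b_{>d}^\dagger\!\cdot b_{>d}$ and $Y^\dagger Y\le \Phi_w(q)\,b_{>d}^\dagger\!\cdot b_{>d}$, where $\Phi_W(t):=\|W(t)\|^2\le c^2(1+|t|^k)^2$ and $\Phi_w(t):=\|w(t)\|^2\le c^2|t|^{2k}$. By Eq.~(\ref{Equation: Phi-b-b}) the first right-hand side is $O_*(b_{>d}^\dagger\!\cdot b_{>d}+\tfrac{1}{N})$; since $\Phi_w$ vanishes at the origin to order $2k\ge 2$, Eq.~(\ref{Equation: phi-b-b}) makes the second $o_*(b_{>d}^\dagger\!\cdot b_{>d}+\tfrac{1}{N})$. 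For $XX^\dagger$ and $YY^\dagger$, commuting the single $b_k^\dagger$ past the single $b_j$ with $[b_j,b_k^\dagger]=\tfrac{1}{N}\delta_{jk}$ gives $XX^\dagger=X^\dagger X+\tfrac{1}{N}\Phi_W(q)$ and $YY^\dagger=Y^\dagger Y+\tfrac{1}{N}\Phi_w(q)$, and since $\Phi_W(q)=O_*(1)$ and $\Phi_w(q)=o_*(1)$ by Lemma~\ref{Lemma: Function of q}, the claimed orders follow.

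\emph{The quartic term.} Here mere boundedness of $\hat\pi_{M,N}b_{>d}^\dagger\!\cdot b_{>d}\,\hat\pi_{M,N}$ only yields $O_*(b_{>d}^\dagger\!\cdot b_{>d})$, so one must genuinely exploit that on $\mathcal{W}_N\mathcal{F}_{\le M}$ not only $b_{>d}$ but also $q$ is small. Conjugating by $\mathcal{W}_N$ and using $\mathcal{W}_N^{-1}b_k\mathcal{W}_N=b_k-f_k(q)$ and $\mathcal{W}_N^{-1}q_j\mathcal{W}_N=q_j$, the expectation of $\Phi(q)(b_{>d}^\dagger\!\cdot b_{>d})^2$ in a state $\mathcal{W}_N\Psi$ with $\Psi\in\mathcal{F}_{\le M}$ equals the expectation of $\Phi(q)\widehat{K}^2$ in $\Psi$, where $\widehat{K}:=(b_{>d}-f(q))^\dagger\!\cdot(b_{>d}-f(q))\le \tfrac{2}{N}\mathcal{N}_+ + 2\,\|f(\cdot)\|^2(q)$ with $\mathcal{N}_+:=\sum_{k>d}a_k^\dagger a_k$. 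On $\mathcal{F}_{\le M}$ one has $\mathcal{N}_+\le M$, while $\|f(t)\|^2\le C|t|^2$ gives $\|f(\cdot)\|^2(q)\le C\,q^\dagger\!\cdot q$ with $\langle q^\dagger\!\cdot q\rangle_\Psi=O((M+1)/N)$ on $\mathcal{F}_{\le M}$; thus the single factor $\widehat{K}$ is of vanishing size there, which lets one trade one factor of $\widehat{K}$ for $M/N$ and reduce the two remaining factors to the already established estimate $\Phi(q)\,b_{>d}^\dagger\!\cdot b_{>d}=O_*(b_{>d}^\dagger\!\cdot b_{>d}+\tfrac{1}{N})$ (applied after passing to $|\Phi|\le c(1+|t|^j)$), with the far piece of $\Phi(q)$ handled by Lemma~\ref{Lemma: q function}. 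I expect the careful bookkeeping of this step --- keeping operator inequalities honest while $\widehat{K}$ and $q^\dagger\!\cdot q$ fail to preserve $\mathcal{F}_{\le M}$ --- to be the main obstacle.

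\emph{The term $(p'-p)^\dagger\!\cdot(p'-p)$.} By the transformation law of Lemma~\ref{Lemma: Transformation Laws}, $p'_j-p_j=-\mathfrak{Im}\big[\partial_{u_j}f(q)^\dagger\!\cdot b_{>d}\big]$ is a real combination of $w_j(q)^\dagger\!\cdot b_{>d}$ and its adjoint, with $w_j:=\partial_{u_j}f$. Since $f(0)=0$ and $\partial_{u_j}f(0)=0$ (this is built into Definition~\ref{Definition: F}, so that $D|_0F=\mathrm{id}$) and since $f$ is supported in $B_{2\delta}(0)$, one gets $\|w_j(t)\|\le c|t|$; hence $Y_j:=w_j(q)^\dagger\!\cdot b_{>d}$ is of the type just handled, so $Y_j^\dagger Y_j$ and $Y_jY_j^\dagger$ are $o_*(b_{>d}^\dagger\!\cdot b_{>d}+\tfrac{1}{N})$, and by the $o_*$-version of Lemma~\ref{Lemma: O_* results} the mixed terms $Y_j^2$ and $(Y_j^\dagger)^2$ are too; summing $(p'_j-p_j)^2=-\tfrac{1}{4}(Y_j-Y_j^\dagger)^2$ over $j$ then gives the claim.
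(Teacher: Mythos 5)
Your treatment of $X^\dagger X$, $XX^\dagger$, $Y^\dagger Y$, $YY^\dagger$ and of $\left(p'-p\right)^\dagger\cdot\left(p'-p\right)$ coincides with the paper's: the fiberwise Cauchy--Schwarz $X^\dagger X\le\|W(q)\|^2\,b_{>d}^\dagger\cdot b_{>d}$ is the paper's rank-one bound $W(t)W(t)^\dagger\le\|W(t)\|^2\,1_\mathcal{H}$, the CCR shift $XX^\dagger=X^\dagger X+\tfrac1N\|W(q)\|^2$ is identical, and the reduction of $p'_j-p_j$ to $Y_j=\partial_jf(q)^\dagger\cdot b_{>d}$ with $\|\partial_jf(t)\|\le c|t|$ is exactly the paper's step. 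Your explicit remark that this last bound hinges on $\partial_jf(0)=0$ (equivalently $D|_0F=\mathrm{id}$) is correct and left implicit in the paper; the only cosmetic difference is that the paper uses $\mathfrak{Im}[Y_j]^2\le\tfrac12\left(Y_j^\dagger Y_j+Y_jY_j^\dagger\right)$ directly rather than going through Lemma~\ref{Lemma: O_* results} for the cross terms $Y_j^2$ and $(Y_j^\dagger)^2$.

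The obstacle you flag in the quartic term is genuine, and what is missing is an identity rather than mere bookkeeping. ``Trading one factor of $\widehat K$ for $M/N$'' does not go through as stated because $\widehat K$ fails to commute with the projection onto $\mathcal F_{\le M}$ (or onto $\mathds{1}_{[0,M]}(\mathcal N_+)\mathcal F_0$); an operator bound for a single $\widehat K$ on $\mathcal F_{\le M}$ does not lift to $\widehat K^2$. The missing ingredient is the CCR-sandwich identity
\begin{align*}
\widehat K^{\,2}\;=\;\left(b_{>d}-f(q)\right)^\dagger\cdot\Big(\widehat K+\tfrac1N\Big)\cdot\left(b_{>d}-f(q)\right),
\end{align*}
which in the paper's (unconjugated) picture is $(b_{>d}^\dagger\cdot b_{>d})^2=b_{>d}^\dagger\cdot\left(b_{>d}^\dagger\cdot b_{>d}+\tfrac1N\right)\cdot b_{>d}$. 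This puts the middle factor between operators which, applied once to $\Psi\in\mathcal F_{\le M}$, land in $\mathds{1}_{[0,M]}(\mathcal N_+)\mathcal F_0$; the projection onto that space commutes with $q$ and hence with $\Phi(q)$ and $\|f\|^2(q)$, so inserting it and using $\tfrac1N\mathcal N_+\le\tfrac MN$ there yields, for $\Phi\ge0$,
\begin{align*}
\left\langle\Phi(q)\,\widehat K^{\,2}\right\rangle_\Psi\;\le\;\Big\langle\Phi(q)\Big(\tfrac{2M+1}{N}+2C\,q^\dagger\cdot q\Big)\,\widehat K\Big\rangle_\Psi,
\end{align*}
which is exactly the point at which the already established estimate $\Phi(q)\,b_{>d}^\dagger\cdot b_{>d}=O_*\left(b_{>d}^\dagger\cdot b_{>d}+\tfrac1N\right)$ (together with $|t|^2\cdot(1+|t|^j)$ vanishing to order $\ge2$ at the origin) finishes the job. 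The paper does all of this without conjugating by $\mathcal W_N$, working with $\hat\pi_{M,N}=\mathcal W_N\,\mathds{1}_{[0,M]}(\mathcal N_+)\,\mathcal W_N^{-1}$ from Lemma~\ref{Lemma: Auxiliary}; your conjugated variant is the $\mathcal W_N$-translate of the same argument, so the approach is not different, but the sandwich identity, the paper's very first move in this case, is the piece you did not supply.
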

\begin{proof}
Let us define $G(t):=W(t)^\dagger\cdot W(t)$ and $g(t):=w(t)^\dagger\cdot w(t)$. Then we obtain by Lemma \ref{Lemma: Function of q} together with the inequality $W(t)\cdot  W(t)^\dagger\leq G(t)\ 1_\mathcal{H}$ the estimate
\begin{align*}
X^\dagger X&=b_{>d}^\dagger\cdot  W(q)\cdot  W(q)^\dagger\cdot b_{>d}\leq G(q)\, b_{>d}^\dagger\cdot b_{>d}=O_*\left(b_{>d}^\dagger\cdot b_{>d}+\frac{1}{N}\right).
\end{align*}
Similarly, $Y^\dagger Y\leq g(q)\, b_{>d}^\dagger\cdot b_{>d}=o_*\left(b_{>d}^\dagger\cdot b_{>d}+\frac{1}{N}\right)$. For the reversed order, we use the fact that $\|G(q)\|^2=O_*\left(1\right)$ and $\|g(q)\|^2=o_*\left(1\right)$
\begin{align*}
X X^\dagger&=X^\dagger X+\frac{1}{N}\|G(q)\|^2=O_*\left(b_{>d}^\dagger\cdot b_{>d}+\frac{1}{N}\right),\\
Y Y^\dagger&=Y^\dagger Y+\frac{1}{N}\|g(q)\|^2=o_*\left(b_{>d}^\dagger\cdot b_{>d}+\frac{1}{N}\right).
\end{align*}
For the next statement, note that we have $\left(b_{>d}^\dagger\cdot b_{>d}\right)^2=b_{>d}^\dagger\cdot \left(b_{>d}^\dagger\cdot b_{>d}+\frac{1}{N}\right)\cdot b_{>d}$ and $b_{>d}^\dagger\cdot b_{>d}\leq 2(b_{>d}\!+\! f(q))^\dagger\!\cdot \!(b_{>d}\!+\! f(q))+2f(q)^\dagger\!\cdot\! f(q)$, and consequently
\begin{align*}
&\Phi(q) \left(b_{>d}^\dagger\cdot b_{>d}\right)^2=\Phi(q)\, b_{>d}^\dagger\cdot \left(b_{>d}^\dagger\cdot b_{>d}+\frac{1}{N}\right)\cdot b_{>d}\\
&\leq 2 b_{>d}^\dagger\!\cdot\!  (b_{>d}\!+\! f(q))^\dagger\!\cdot\!\Phi(q)\!\cdot \!(b_{>d}\!+\! f(q))\!\cdot\! b_{>d}\!+\!2\Phi(q) f(q)^\dagger\!\cdot\! f(q)\, b_{>d}^\dagger\!\cdot\! b_{>d}\!+\!\frac{\Phi(q)}{N}\, b_{>d}^\dagger\!\cdot\! b_{>d}.
\end{align*}
Note that $2\Phi(q) f(q)^\dagger\!\cdot\! f(q)\, b_{>d}^\dagger\!\cdot\! b_{>d}$ and $\frac{\Phi(q)}{N}\, b_{>d}^\dagger\!\cdot\! b_{>d}$ are of order $o_*\left(b_{>d}^\dagger\cdot b_{>d}+\frac{1}{N}\right)$ by Lemma \ref{Lemma: Function of q}. For the other term in the inequality above, note that we have the estimate
\begin{align*}
&\ \ \ \ \pi_{M,N}\,  b_{>d}^\dagger\cdot \left[\Phi(q) ( b_{>d}+ f(q))^\dagger\cdot ( b_{>d}+ f(q))\right]\cdot  b_{>d}\, \pi_{M,N}\\
&=\pi_{M,N}\,  b_{>d}^\dagger\cdot \left[\Phi(q) \hat{\pi}_{M,N} ( b_{>d}+ f(q))^\dagger\cdot ( b_{>d}+ f(q)) \hat{\pi}_{M,N}\right]\cdot  b_{>d}\, \pi_{M,N}\\
&\leq \frac{M}{N}\ \pi_{M,N}\Phi(q)\,  b_{>d}^\dagger\cdot  b_{>d}\, \pi_{M,N}\leq C\ \frac{M}{N}\pi_{M,N} \left( b_{>d}^\dagger\cdot  b_{>d}+\frac{1}{N}\right)\, \pi_{M,N},
\end{align*}
where we have used that $\pi_{M,N}\Phi(q)\,  b_{>d}^\dagger\cdot  b_{>d}\, \pi_{M,N}\leq C\ \pi_{M,N}\left( b_{>d}^\dagger\cdot  b_{>d}+\frac{1}{N}\right) \pi_{M,N}$ for $\frac{M}{N}\leq \delta_0<1$, see Lemma \ref{Lemma: Function of q}. In order to verify the last part of the Lemma, let us define the operators $Y_\ell:=\partial_\ell f(q)^\dagger\cdot  b_{>d}$. From the previous part of this Lemma we know
\begin{align*}
\left( p'- p\right)^\dagger\cdot \left( p'- p\right)=\sum_{\ell=1}^d \mathfrak{Im}\left[Y_\ell\right]^2\leq \frac{1}{2}\sum_{\ell=1}^d\left(Y_\ell^\dagger Y_\ell+Y_\ell Y_\ell^\dagger\right)=o_*\left( b_{>d}^\dagger\cdot  b_{>d}+\frac{1}{N}\right).
\end{align*}
\end{proof}

For the following Lemma \ref{Lemma: p'} as well as for the results in Appendix \ref{Appendix: C}, it is convenient to define the operator
\begin{align}
\label{Equation: Definition of Q}
\mathbb{Q}_N:= p^\dagger\cdot  p+ b_{>d}^\dagger\cdot  b_{>d}+\frac{1}{N}.
\end{align}
Since $\mathbb{Q}_N\leq \mathbb{T}_N$, where $\mathbb{T}_N$ is defined in Eq.~(\ref{Equation: Definition T}), any sequence with $X_N=O_*\left(\mathbb{Q}_N\right)$, respectively $X_N=o_*\left(\mathbb{Q}_N\right)$, satisfies $X_N=O_*\left(\mathbb{T}_N\right)$, respectively $X_N=o_*\left(\mathbb{T}_N\right)$, as well.\\

\begin{lem}
\label{Lemma: p'}
Let $\phi:\mathbb{R}^d\longrightarrow \mathbb{R}$ be a function with $|\phi(t)|\leq c\ |t|^k$ and $\Phi:\mathbb{R}^d\longrightarrow \mathbb{R}$ with $|\Phi(t)|\leq c\left(1+|t|^k\right)$ for some constant $c$ and $k\geq 1$. Then
\begin{align*}
\left( p'\right)^\dagger\cdot\phi(q) \cdot  p'=o_*\left(\mathbb{Q}_N\right),\\
\left( p'\right)^\dagger\cdot\Phi(q) \cdot  p'=O_*\left(\mathbb{Q}_N\right).
\end{align*}
In case the partial derivatives $\partial_{i}\phi(t)$, $\partial_j\Phi(t)$ and $\partial_i \partial_j\Phi(t)$ are bounded by $c\left(1+|t|^j\right)$, we also have
\begin{align*}
\phi(q) \left( p'\right)^\dagger\cdot  p'\, \phi(q)=o_*\left(\mathbb{Q}_N\right),\\
\Phi(q) \left[\left( p'\right)^\dagger\cdot  p'\right]^2 \Phi(q)=o_*\left(\mathbb{Q}_N\right).\\
\end{align*}
\end{lem}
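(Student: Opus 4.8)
The plan is to recall the transformation law $p'_j=p_j-\mathfrak{Im}\left[\partial_{u_j}f(q)^\dagger\cdot b_{>d}\right]$ from Lemma~\ref{Lemma: Transformation Laws} and use it to reduce every statement about $p'$ to a combination of statements about $p$ and $b_{>d}$, which we have already under control. First I would write $p'_j = p_j + R_j$ with $R_j := -\mathfrak{Im}\left[Y_j\right]$ and $Y_j := \partial_{u_j}f(q)^\dagger\cdot b_{>d}$, and recall from Lemma~\ref{Lemma: b} (applied with $w(t) = \partial_{u_j}f(t)$, which satisfies $\|w(t)\|\leq c|t|$ since $f(0)=0$ and $f$ is $C^2$, so in particular $\partial_{u_j}f(0)=0$) that $R_j^\dagger R_j = o_*(b_{>d}^\dagger\cdot b_{>d}+\frac{1}{N})=o_*(\mathbb{Q}_N)$; also the plain $p$ satisfies $p^\dagger\cdot p \leq \mathbb{Q}_N$, hence $p_j^2 = O_*(\mathbb{Q}_N)$. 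For the first two displays I would insert $\phi(q)^{1/2}$ (or $\Phi(q)^{1/2}$, after first replacing $\Phi$ by $c + |t|^k$ which dominates it, and noting $\phi,\Phi\geq 0$ can be arranged by taking absolute values and using $|\phi(q)|\le |\phi|(q)$) between the two factors: $(p')^\dagger\cdot\phi(q)\cdot p' = \sum_j (\phi(q)^{1/2}p'_j)^\dagger(\phi(q)^{1/2}p'_j)$ is $\leq 2\sum_j\big[(\phi(q)^{1/2}p_j)^\dagger(\phi(q)^{1/2}p_j) + (\phi(q)^{1/2}R_j)^\dagger(\phi(q)^{1/2}R_j)\big]$. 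The first sum is bounded using Lemma~\ref{Lemma: Function of q} (smallness of $\phi(q)$, commuting $\phi(q)^{1/2}$ past nothing problematic) together with $p_j^2=O_*(\mathbb{Q}_N)$ and Lemma~\ref{Lemma: O_* results}; the second sum is handled by the estimate $\|\phi(q)\|=o_*(1)$ (or $O_*(1)$) and $R_j^\dagger R_j = o_*(\mathbb{Q}_N)$.

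For the reversed-order statements $\phi(q)\,(p')^\dagger\cdot p'\,\phi(q) = o_*(\mathbb{Q}_N)$ and $\Phi(q)\,[(p')^\dagger\cdot p']^2\,\Phi(q)=o_*(\mathbb{Q}_N)$, the extra hypothesis on derivatives is there precisely so that one can commute the functions of $q$ past the $p'_j$ at the cost of controlled $\frac1N$-order commutators: recall $[g(q),ip_\ell] = -\frac{1}{2N}\partial_\ell g(q)$ and $[ip'_\ell, b] = -\frac{1}{2N}\partial_{u_\ell}f(q)$. I would first expand $p'_j = p_j + R_j$ again, then for $\phi(q)(p')^\dagger\cdot p'\phi(q)$ write $\phi(q)p'_j = p'_j\phi(q) + [\phi(q),p'_j]$, where $[\phi(q),p'_j]=[\phi(q),p_j]=\frac{i}{2N}\partial_j\phi(q)$ (the $R_j$ part of $p'_j$ involves only $b_{>d}$ and $f(q)$, and $\phi(q)$ commutes with $b_{>d}$ but not with $f(q)$; one must be a little careful here, but $[\phi(q),Y_j]$ produces terms of order $\frac1N$ times bounded functions of $q$ by the derivative bounds, hence $o_*(\frac1N) = o_*(\mathbb{Q}_N)$). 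After commuting, one is left with $(p'_j\phi(q))^\dagger(p'_j\phi(q)) = \phi(q)(p'_j)^2\phi(q)$-type pieces plus genuinely $\frac1N$-small remainders; but now $\phi(q)$ sits on the outside and one applies the already-proven first display with the function $\phi^2$ (or uses $(p')^\dagger\cdot\phi(q)^2\cdot p' = o_*(\mathbb{Q}_N)$ directly). For the quartic term $[(p')^\dagger\cdot p']^2$ I would interpolate through $(p')^\dagger\cdot[(p')^\dagger\cdot p']\cdot p'$, noting that $\|\pi_{\leq d}\,(p')^\dagger\cdot p'\,\pi_{\leq d}\|$ is itself $o_*(1)$ — or more precisely that $(p')^\dagger\cdot p' = o_*(\mathbb{Q}_N)$ from the first display with $\phi\equiv 1$ replaced by... wait, $\phi\equiv 1$ is not $o_*$; rather I would use $((p')^\dagger\cdot p')^2 \leq \|\text{(something)}\|(p')^\dagger\cdot\psi(q)\cdot p'$ after sandwiching, mimicking exactly the computation in the proof of Lemma~\ref{Lemma: Smallness of quadratic terms} for the case $i=j=3$, which asserts $((p')^\dagger\cdot p')^2 = o_*(\mathbb{T}_N)$; I would either cite that or re-derive it, and then multiply by $\Phi(q)$ using Lemma~\ref{Lemma: Function of q} and Lemma~\ref{Lemma: O_* results}.

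The main obstacle I anticipate is bookkeeping the commutators cleanly: $p'_j$ does \emph{not} commute with arbitrary functions $g(q)$ (only $q_j$ does), and $f(q)$ inside $R_j$ does not commute with $g(q)$ either, so the naive ``move $\Phi(q)$ to the outside'' step generates several commutator terms, each of which must be identified as $O_*$ or $o_*$ of $\frac1N$ using the derivative hypotheses and Lemma~\ref{Lemma: q function} (exponential localization of functions of $q$ vanishing near the origin). The safe route is to always work on the truncated space via the projection formalism of Remark~\ref{Remark: Projection Formalism}: fix $\Psi\in\mathcal{W}_N\mathcal{F}_{\leq M}$, insert $\pi_{M,N}$ and $\hat{\pi}_{M,N}$ where needed (using $[\hat{\pi}_{M,N},p'_j]=0$ and $[\hat{\pi}_{M,N},q_j]=0$ from Lemma~\ref{Lemma: Auxiliary}), split each $q$-function as $\tau_r(q)\cdot(\,\cdot\,) + (1-\tau_r(q))\cdot(\,\cdot\,)$ as in the proof of Lemma~\ref{Lemma: Function of q} so that the non-local part is exponentially small and the local part has a small sup-norm, and finish with the operator Cauchy--Schwarz of Lemma~\ref{Lemma: O_* results}. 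No single step is deep; the risk is purely in keeping the finitely many error terms organized, so I would present the proof by first stating the decomposition $p'_j = p_j + R_j$ and the three input facts ($p_j^2=O_*(\mathbb{Q}_N)$, $R_j^\dagger R_j = o_*(\mathbb{Q}_N)$, $\phi(q)=o_*(1)$, $\Phi(q)=O_*(1)$), then doing the four displays in order, reusing each previous one.
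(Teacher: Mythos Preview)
Your decomposition $p'_j=p_j+R_j$ is the right move for establishing the preliminary fact $(p')^\dagger\cdot p'=O_*(\mathbb{Q}_N)$ (since $p^\dagger\cdot p\leq\mathbb{Q}_N$ is a pointwise operator inequality and $R^\dagger R=o_*(\mathbb{Q}_N)$ from Lemma~\ref{Lemma: b}), and the paper does exactly this. But for the first two displays your plan breaks down: after splitting $(p')^\dagger\phi(q)p'\leq 2p^\dagger\phi(q)p+2R^\dagger\phi(q)R$, the term $p^\dagger\phi(q)p$ is \emph{not} any easier than the original. The obstruction is that the bare $p_j$ does \emph{not} map $\mathcal{W}_N\mathcal{F}_{\leq M}$ into any $\mathcal{W}_N\mathcal{F}_{\leq M'}$ (because $p_j$ fails to commute with $\mathcal{W}_N$), so you cannot insert $\pi_{M',N}$ around $\phi(q)$ after applying $p$, and the smallness of $\phi(q)$ from Lemma~\ref{Lemma: Function of q} is stated only on $\mathcal{W}_N\mathcal{F}_{\leq M}$. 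Your sentence ``commuting $\phi(q)^{1/2}$ past nothing problematic'' hides precisely this issue; moreover, commuting $\phi(q)$ past $p_j$ would require $\partial_j\phi$, which is not assumed for the first two displays.

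The paper's key observation, which you miss, is that the \emph{conjugated} momentum does have the needed mapping property: since $p'_j=\mathcal{W}_Np_j\mathcal{W}_N^{-1}$ and $p_j$ shifts particle number by at most one, one has $p'\,\pi_{M,N}=\pi_{M+1,N}\,p'\,\pi_{M,N}$. This lets one sandwich $\phi(q)$ by $\pi_{M+1,N}$ and then feed in $\pi_{M+1,N}\phi(q)\pi_{M+1,N}\leq\epsilon((M{+}1)/N)$ from Lemma~\ref{Lemma: Function of q} together with the already-established $(p')^\dagger\cdot p'=O_*(\mathbb{Q}_N)$. For the third and fourth displays your commutator strategy is correct and matches the paper: use $[ip'_j,q_k]=\tfrac{1}{2N}\delta_{jk}$ to write $\phi(q)(p')^\dagger\cdot p'\,\phi(q)=\big(\phi(q)\,ip'+\tfrac{1}{N}w(q)\big)^\dagger\big(\phi(q)\,ip'+\tfrac{1}{N}w(q)\big)$ with $w(t)=\tfrac12\sum_\ell\partial_\ell\phi(t)\,u_\ell$, and then apply the first display with $\phi^2$. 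Two small corrections: (i) $\phi(q)$ \emph{does} commute with $f(q)$ and hence with $Y_j$ and $R_j$ exactly, since all are functions of the commuting operators $q_1,\dots,q_d$; (ii) citing Lemma~\ref{Lemma: Smallness of quadratic terms} for $\big((p')^\dagger\cdot p'\big)^2=o_*(\mathbb{T}_N)$ would be circular, as that lemma invokes the present one.
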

\begin{proof}
Since $ p^\dagger\cdot  p\leq \mathbb{Q}_N$ and $\left( p'- p\right)^\dagger\cdot \left( p'- p\right)=O_*\left(\mathbb{Q}_N\right)$ by Lemma \ref{Lemma: b}, we obtain $( p')^\dagger\cdot  p'=O_*\left(\mathbb{Q}_N\right)$ as well, i.e.
\begin{align*}
\pi_{M,N}( p')^\dagger\cdot  p' \pi_{M,N}\leq C_1\ \pi_{M,N}\mathbb{Q}_N \pi_{M,N}
\end{align*}
for all $M,N$ with $\frac{M}{N}\leq \delta_1<1$ where $\delta_1$ and $C_1$ are suitable constants. By Lemma \ref{Lemma: Function of q}, we know that $\pi_{M,N} \Phi(q) \pi_{M,N}\leq C_2$ for all $\frac{M}{N}\leq \delta_2<1$ where $\delta_2$ and $C_2$ are suitable constants, and $\pi_{M,N} \phi(q) \pi_{M,N}\leq \epsilon\left(\frac{M}{N}\right)$ with $\lim_{\delta\rightarrow 0}\ \epsilon(\delta)=0$. Based on the observation that $ p'\, \pi_{M,N}=\pi_{M+1,N}\,  p'\, \pi_{M,N}$, we obtain for all $M,N$ that satisfy $\frac{M}{N}\leq \delta:=2\mathrm{min}\{\delta_1,\delta_2\}$
\begin{align*}
&\pi_{M,N} \left( p'\right)^\dagger\cdot\Phi(q) \cdot  p'\, \pi_{M,N}\!=\!\pi_{M,N} \left( p'\right)^\dagger\cdot\pi_{M\!+\!1,N}\Phi(q) \pi_{M\!+\!1,N} \cdot  p'\, \pi_{M,N}\\
&\ \ \ \ \leq C_1C_2\ \pi_{M,N}\cdot\mathbb{Q}_N\cdot \pi_{M,N}.
\end{align*}
Similarly, we have $\pi_{M,N} \left( p'\right)^\dagger\cdot\phi(q)\cdot  p'\, \pi_{M,N}\leq C_1\epsilon\left(\frac{M}{N}\right)\ \pi_{M,N}\, \mathbb{Q}_N\, \pi_{M,N}$. Hence, $\left( p'\right)^\dagger\cdot\Phi(q) \cdot  p'=O_*\left(\mathbb{Q}_N\right)$ and $\left( p'\right)^\dagger\cdot\phi(q) \cdot  p'=o_*\left(\mathbb{Q}_N\right)$. In case we have a polynomial bound on the partial derivatives as well, let us define $w(t):=\frac{1}{2}\sum_{\ell=1}^d \partial_\ell \phi(t)\otimes u_\ell$ and $W(t):=\frac{1}{2}\sum_{\ell=1}^d \partial_\ell \Phi(t)\otimes u_\ell$. Using the commutation relation $[ip'_j,q_k]=\frac{\delta_{j,k}}{2N}$, we compute
\begin{align*}
\phi(q)\, \left( p'\right)^\dagger\cdot  p'\, \phi(q)=\left(\phi(q)\cdot i p'+\frac{1}{N} w(q)\right)^\dagger\cdot \left(\phi(q)\cdot i p'+\frac{1}{N} w(q)\right).
\end{align*}
From the previous part, we know that $\left(\phi(q)\cdot i p'\right)^\dagger\cdot \phi(q)\cdot i p'=o_*\left(\mathbb{Q}_N\right)$. Furthermore, Lemma \ref{Lemma: Function of q} tells us that $w(q)^\dagger\cdot w(q)=O_*\left(1\right)$, and therefore $\frac{1}{N} w(q)^\dagger\cdot \frac{1}{N} w(q)=o_*\left(\mathbb{Q}_N\right)$. Hence, $\phi(q) \left( p'\right)^\dagger\cdot  p'\, \phi(q)$ is of order $o_*\left(\mathbb{Q}_N\right)$ as well. The last estimate in the Lemma can be verified analogously.
\end{proof}

\section{Analysis of the Operator Square Root}
In the following section we derive asymptotic results for operators involving the square root $\sqrt{1-\mathbb{L}'}$, where $\mathbb{L}'$ is defined in Definition \ref{Definition: Unitary Transformation}, allowing us to prove a Taylor approximation for the operators $\left(1-\mathbb{L}'\right)^{\frac{m}{2}}$, see Definition \ref{Definition: Taylor of the square root}. The easiest case $m=2$ will be discussed in the following Lemma \ref{Lemma: Taylor of the square root - order two}, the case $m=1$ is the content of Lemma \ref{Lemma: Taylor of the square root} and the case $m=3$ is covered by Corollary \ref{Corollary: Taylor of the square root}.
\label{Appendix: C}
\begin{lem}
\label{Lemma: Taylor of the square root - order two}
Recall the operator $\mathbb{Q}_N$ from Eq.~(\ref{Equation: Definition of Q}) and the function $f$ from Definition \ref{Definition: F}, and let us define $g(t):=\sum_{j=1}^d t_j^2+f(t)^\dagger\cdot f(t)$. Then, 
\begin{align}
\label{Equation: Approximation of L'}
\left[\mathbb{L}'-g(q)\right]^2=o_*\left(\mathbb{Q}_N\right).
\end{align}
\end{lem}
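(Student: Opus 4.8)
The plan is to compute $\mathbb{L}'$ explicitly using the transformation law $\mathcal{W}_N b_{\geq 1}\mathcal{W}_N^{-1}=q+f(q)+ip'+b_{>d}$ from Lemma \ref{Lemma: Transformation Laws}, and then identify which terms survive after subtracting $g(q)$. Writing $\mathbb{L}'=\mathcal{W}_N(b_{\geq 1}^\dagger\cdot b_{\geq 1})\mathcal{W}_N^{-1}=(q+f(q)+ip'+b_{>d})^\dagger\cdot(q+f(q)+ip'+b_{>d})$ and expanding, the diagonal contributions $q^\dagger\cdot q+f(q)^\dagger\cdot f(q)=g(q)$ cancel exactly against $g(q)$ (using $u_j^\dagger\cdot f(t)=0$ for $j\le d$, so there is no cross term between $q$ and $f(q)$). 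The cross terms between $\{q,f(q)\}$ and $\{ip',b_{>d}\}$, the terms $(ip')^\dagger\cdot(ip')$, $b_{>d}^\dagger\cdot b_{>d}$, $(ip')^\dagger\cdot b_{>d}+b_{>d}^\dagger\cdot ip'$, and the commutator constant from reordering all contribute to $\mathbb{L}'-g(q)$; so $\mathbb{L}'-g(q)$ is a sum of a handful of explicit operators, each of which is at most linear in the "small" variables $ip'$, $b_{>d}$ after pulling functions of $q$ to one side, plus an $O(1/N)$ scalar. This is essentially the computation already carried out in the $J=(0,0)$ case of the proof of Theorem \ref{Theorem: Main A}, where $1-\mathbb{L}'$ was decomposed; I would reuse that decomposition.

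The next step is to square $\mathbb{L}'-g(q)$ and estimate each resulting term against $\mathbb{Q}_N$. Schematically $\mathbb{L}'-g(q)=\sum_k X_k$ where each $X_k$ is of the form (function of $q$)$\times$(one or two factors among $p',b_{>d}$) or an $O_*(1/N)$ scalar, and $[\mathbb{L}'-g(q)]^2=\sum_{k,k'}X_k^\dagger X_{k'}$. For the diagonal terms $X_k^\dagger X_k$ I would invoke Lemma \ref{Lemma: Function of q}, Lemma \ref{Lemma: b} and Lemma \ref{Lemma: p'}: these tell us that $b_{>d}^\dagger\cdot\phi(q)\cdot b_{>d}$, $(p')^\dagger\cdot\phi(q)\cdot p'$, $\phi(q)(b_{>d}^\dagger\cdot b_{>d})^2$, $\Phi(q)[(p')^\dagger\cdot p']^2\Phi(q)$, and $(p'-p)^\dagger\cdot(p'-p)$ are all $o_*(\mathbb{Q}_N)$ (note any bounded coefficient $\phi$ coming from $f$ and its derivatives satisfies the needed polynomial bounds, and $f(0)=0$ gives the extra smallness where required). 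The scalar piece contributes $O_*(1/N^2)=o_*(\mathbb{Q}_N)$ since $1/N\le\mathbb{Q}_N$. Then the off-diagonal terms $X_k^\dagger X_{k'}$ are handled by the operator Cauchy--Schwarz inequality in Lemma \ref{Lemma: O_* results}: since each $X_k^\dagger X_k$ is either $O_*(\mathbb{Q}_N)$ or $o_*(\mathbb{Q}_N)$, and at least one factor in every cross term is genuinely $o_*(\mathbb{Q}_N)$ (there is no term that is quadratic in a non-small quantity), the product is $o_*(\mathbb{Q}_N)$. Summing the finitely many contributions gives $[\mathbb{L}'-g(q)]^2=o_*(\mathbb{Q}_N)$.

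The main obstacle I anticipate is bookkeeping rather than a conceptual difficulty: one must be careful that the coefficient functions of $q$ that arise (namely $f_j$, $\partial_j f_j$ and products thereof) genuinely satisfy the hypotheses of Lemmata \ref{Lemma: Function of q}, \ref{Lemma: b}, \ref{Lemma: p'} — in particular that those multiplying a single small factor vanish at $t=0$ (to get $o_*$ rather than merely $O_*$), while those multiplying two small factors may be merely bounded. Here $f(0)=0$ and the smoothness of $t\mapsto f(t)$ (from Assumption \ref{Assumption: Part II}) give $\|f(t)\|\le C|t|$ and polynomial bounds on derivatives, which is exactly what is needed. A secondary point is the reordering of $p'$ and $b_{>d}$ past functions of $q$: the commutators $[ip'_j,q_k]=\tfrac{1}{2N}\delta_{jk}$ and $[b_k,q_j]=0$ produce only $O(1/N)$ corrections, which are absorbed. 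Once these checks are in place the estimate follows by collecting terms exactly as in the $J=(0,0)$ analysis of Theorem \ref{Theorem: Main A}.
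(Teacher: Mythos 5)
Your proposal follows the paper's proof essentially step for step: expand $\mathbb{L}'=\mathcal{W}_N\,b_{\geq 1}^\dagger\cdot b_{\geq 1}\,\mathcal{W}_N^{-1}$ via Lemma \ref{Lemma: Transformation Laws}, cancel $g(q)$, identify the remaining finite list of operators, and then square and bound each term with Lemmata \ref{Lemma: Function of q}, \ref{Lemma: b}, \ref{Lemma: p'} together with the operator Cauchy--Schwarz inequality. One small inaccuracy in your bookkeeping: the term $(ip')^\dagger\cdot b_{>d}+b_{>d}^\dagger\cdot ip'$ you list as a contributor actually vanishes identically, because $ip'$ is supported on $u_1,\dots,u_d$ while $b_{>d}$ is supported on $\{u_j:j>d\}$, so the $\mathcal{H}$-contraction is zero (the paper's explicit expansion has only $f(q)^\dagger\cdot b_{>d}+b_{>d}^\dagger\cdot f(q)+(p')^\dagger\cdot p'+b_{>d}^\dagger\cdot b_{>d}-\tfrac{d}{2N}$); this does not affect the validity of the argument but slightly overstates what needs to be estimated.
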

\begin{proof}
Using the transformation laws in Lemma \ref{Lemma: Transformation Laws} we obtain
\begin{align*}
\mathbb{L}'-g(q)= f(q)^\dagger\cdot  b_{>d}+ b_{>d}^\dagger\cdot  f(q)+\left( p'\right)^\dagger\cdot  p'+ b_{>d}^\dagger\cdot  b_{>d}-\frac{d}{2N}.
\end{align*}
By Lemma \ref{Lemma: b}, we know that $\left[ f(q)^\dagger\cdot  b_{>d}+ b_{>d}^\dagger\cdot  f(q)\right]^2=o_*\left(\mathbb{Q}_N\right)$ and $\left[ b_{>d}^\dagger\cdot  b_{>d}\right]^2=o_*\left(\mathbb{Q}_N\right)$, and by Lemma \ref{Lemma: p'} we know that $\left[\left( p'\right)^\dagger\cdot  p'\right]^2=o_*\left(\mathbb{Q}_N\right)$. 
\end{proof}

\begin{lem}
\label{Lemma: Taylor of the square root}
Let Assumption \ref{Assumption: Part II} hold and recall the function $\eta_1$ from Eq.~(\ref{Equation: eta_m}). Then,
\begin{align}
\label{Equation: First order Taylor residuum of sqrt}
\left[\sqrt{1-\mathbb{L}'}-\eta_1(q)\right]^2=o_*\left(\mathbb{Q}_N\right).
\end{align}
Furthermore for any function $V:\mathbb{R}^d\longrightarrow \mathbb{R}$ with $|V(t)|\leq c\left(|t|+|t|^k\right)$ and bounded derivatives $|\partial_{t_i} V(t)|+|\partial_{t_i}\partial_{t_j} V(t)|\leq c\left(1+|t|^k\right)$ for some $k\geq 1$, we have
\begin{align}
\label{Equation: Second order Taylor residuum of sqrt}
V(q)  &\Big[\sqrt{1-\mathbb{L}'}-\eta_1(q)-D_{\mathcal{V}}\big|_q \eta_1\big(b_{\geq 1}\big)\Big]=o_*\left(\mathbb{Q}_N\right).
\end{align}
\end{lem}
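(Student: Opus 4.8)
\textbf{Proof plan for Lemma~\ref{Lemma: Taylor of the square root}.}
The strategy is to transfer both estimates to the already-established second-order control of $\mathbb{L}'$ provided by Lemma~\ref{Lemma: Taylor of the square root - order two}, using a Taylor expansion of the analytic function $x\mapsto\sqrt{1-x}$ around $x=g(q)$, where $g(t):=\sum_{j=1}^d t_j^2+f(t)^\dagger\cdot f(t)$ as in that Lemma. The key point is that on the range of $\pi_{M,N}$ the operator $\mathbb{L}'$ is bounded by roughly $M/N$, and $g(q)$ is likewise small (indeed $g(q)=o_*(1)$ by Lemma~\ref{Lemma: Function of q} since $|g(t)|\leq C|t|^2$ and $f$ is bounded), so all functional-calculus manipulations take place in a small neighbourhood of $0$ where $\sqrt{1-x}$ is smooth with bounded derivatives.

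For Eq.~(\ref{Equation: First order Taylor residuum of sqrt}): first I would note that $\eta_1(q)=\chi(g(q))\sqrt{1-g(q)}$, and that since $\|F(z)\|^2$ agrees with $g$ locally (i.e.\ on $[-\epsilon,\epsilon]^d$, up to the $z_{>d}$-part which vanishes when evaluating at $q$), the cutoff $\chi$ equals $1$ near $q=0$, so $\eta_1(q)=\sqrt{1-g(q)}+O_*(e^{-\delta N})$ by Lemma~\ref{Lemma: q function}. Then write
\[
\sqrt{1-\mathbb{L}'}-\sqrt{1-g(q)}=-\frac{\mathbb{L}'-g(q)}{\sqrt{1-\mathbb{L}'}+\sqrt{1-g(q)}},
\]
which on $\pi_{M,N}$ (for $M/N$ small, so both square roots are bounded below) shows
\[
\bigl[\sqrt{1-\mathbb{L}'}-\sqrt{1-g(q)}\bigr]^2\leq C\,[\mathbb{L}'-g(q)]^2=o_*(\mathbb{Q}_N)
\]
by Lemma~\ref{Lemma: Taylor of the square root - order two}, where I must be slightly careful since $\sqrt{1-\mathbb{L}'}$ shifts the particle number by at most one so I would insert $\pi_{M+1,N}$ appropriately as in the proof of Lemma~\ref{Lemma: p'}. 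The exponentially small $\chi$-correction is harmless, giving Eq.~(\ref{Equation: First order Taylor residuum of sqrt}).

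For Eq.~(\ref{Equation: Second order Taylor residuum of sqrt}) the residuum is now a genuine second-order Taylor remainder, so I expect this to be the main obstacle. The plan is to expand $\sqrt{1-\mathbb{L}'}$ one further order: writing $\mathbb{L}'-g(q)=D_{\mathcal{V}}|_q\eta_2(b_{\geq 1})/(-1)+\{\text{higher order in }p',b_{>d}\}$ via Lemma~\ref{Lemma: Transformation Laws} (exactly as in the computation of $E_2^1$ in the proof of Theorem~\ref{Theorem: Main B}), and combining with the algebraic identity $D_{\mathcal{V}}|_q\eta_1(b_{\geq 1})=-\tfrac12 (1-g(q))^{-1/2}D_{\mathcal{V}}|_q\eta_2(b_{\geq 1})$ (chain rule for the symbol $\eta_1=\sqrt{\eta_2}$), one checks that
\[
\sqrt{1-\mathbb{L}'}-\eta_1(q)-D_{\mathcal{V}}|_q\eta_1(b_{\geq 1})
\]
is a sum of terms each of which is either a second-order Taylor remainder of $x\mapsto(1-x)^{1/2}$ applied to the first-order piece of $\mathbb{L}'-g(q)$ — controlled by its square, hence by $[\,\cdot\,]^2=o_*(\mathbb{Q}_N)$ via Lemma~\ref{Lemma: b} and Lemma~\ref{Lemma: p'} — or a product of $(1-g(q))^{-1/2}$-type bounded factors with the higher-order operators $(p')^\dagger\cdot p'$, $b_{>d}^\dagger\cdot b_{>d}$, $f(q)^\dagger\cdot b_{>d}+b_{>d}^\dagger\cdot f(q)$, each of which is $o_*(\mathbb{Q}_N)$. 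Finally multiply on the left by $V(q)$: since $|V(t)|\leq c(|t|+|t|^k)$ we have $V(q)=o_*(1)$, and using the commutator bookkeeping $[V(q),ip'_j]=-\tfrac{1}{2N}\partial_j V(q)$ together with the polynomial bounds on $\partial V,\partial^2 V$ (so that all commutator corrections carry an extra $1/N$ and are absorbed into $o_*(\mathbb{Q}_N)$ by Lemma~\ref{Lemma: Function of q} and Lemma~\ref{Lemma: p'}), one moves $V(q)$ past the $p'$'s at the cost of $o_*$-errors and concludes $V(q)\,E_1^1=o_*(\mathbb{Q}_N)=o_*(\mathbb{T}_N)$. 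The delicate bookkeeping is matching the symbol-level chain rule with the operator-level expansion so that the first-order terms cancel exactly; once that cancellation is in place, every surviving term is quadratic in the basic variables times a bounded function of $q$, and the auxiliary Lemmata of Appendix~\ref{Appendix: B} finish the estimate.
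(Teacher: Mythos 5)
The central algebraic step of your plan, the identity
\begin{align*}
\sqrt{1-\mathbb{L}'}-\sqrt{1-g(q)}=-\frac{\mathbb{L}'-g(q)}{\sqrt{1-\mathbb{L}'}+\sqrt{1-g(q)}},
\end{align*}
is not an operator identity, because $\mathbb{L}'$ and $g(q)$ do not commute. Indeed, by Lemma~\ref{Lemma: Transformation Laws} one has
$\mathbb{L}'=g(q)+f(q)^\dagger\cdot b_{>d}+b_{>d}^\dagger\cdot f(q)+(p')^\dagger\cdot p'+b_{>d}^\dagger\cdot b_{>d}-\tfrac{d}{2N}$,
and the summand $(p')^\dagger\cdot p'$ fails to commute with $g(q)$, since $[p_j',\phi(q)]=[p_j,\phi(q)]=\tfrac{1}{2iN}(\partial_j\phi)(q)$. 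For non-commuting positive operators $A,B$ one only has the Sylvester relation $(A-B)A+B(A-B)=A^2-B^2$, which has no closed-form solution of the type you use. (Your parenthetical worry about particle-number shifts is, by contrast, a non-issue: $\mathbb{L}'$ is a function of the $\mathcal{W}_N$-conjugated number operator, so $\sqrt{1-\mathbb{L}'}$ commutes with $\pi_{M,N}$.) The real difficulty you are implicitly sweeping under the rug is the commutator $[\sqrt{1-g(q)},\sqrt{1-\mathbb{L}'}]$, which is morally of size $O(1/N)$, but quantifying that at the level of the square-root functional calculus is precisely the technical content of the lemma.

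The paper's proof controls this non-commutativity by replacing $\sqrt{1-x}$ with the cut-off $h(x)=\chi(x)\sqrt{1-x}$ (legitimate on $\pi_{M,N}$ for $M/N<\tfrac12$), writing $h$ in Fourier form $h(x)=\int\hat h(z)e^{izx}\,\mathrm{d}z$, and applying the Duhamel identity
\begin{align*}
e^{iz\mathbb{L}'}-e^{izQ}=i\int_0^z e^{iy\mathbb{L}'}\left(\mathbb{L}'-Q\right)e^{i(z-y)Q}\,\mathrm{d}y,\qquad Q:=g(q),
\end{align*}
which yields a main term $B_z\,(\mathbb{L}'-Q)$ with $\|B_z\|\leq|z|$, controlled by Lemma~\ref{Lemma: Taylor of the square root - order two}, plus a genuine commutator remainder $R_z$ satisfying $\|R_z\,\pi_{M,N}\|\leq C(1+|z|^3)/N$. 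Some such mechanism for quantifying the commutator is indispensable, and becomes even more so for Eq.~(\ref{Equation: Second order Taylor residuum of sqrt}), where one must iterate the Duhamel expansion once more and then carefully commute $V(q)$ through the remainders. Without a substitute for this commutator bookkeeping, the first-order cancellation that you flag as ``delicate bookkeeping'' cannot actually be established.
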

\begin{proof}
Let us define $h(x):=\chi\left(x\right)\sqrt{1-x}$, where $\chi:[0,\infty)\longrightarrow [0,1]$ is the function from the definition of $\eta_1$ in Eq.~(\ref{Equation: eta_m}), as well as the operator $Q:= q^\dagger\cdot  q+ f(q)^\dagger \cdot  f(q)$. By the support properties of $\chi$ we have for all $\frac{M}{N}<\frac{1}{2}$ and $\Psi\in \mathcal{W}_N\mathcal{F}_{\leq M}$
\begin{align*}
\sqrt{1-\mathbb{L}'}\, \Psi=h\left(\mathbb{L}'\right) \Psi,
\end{align*}
and therefore it is enough to verify the statements of this Lemma for $h\left(\mathbb{L}'\right)$ instead of $\sqrt{1-\mathbb{L}'}$. With $h$ at hand, we have $\eta_1(q)=h\left(\|F(q)\|^2\right)=h\left(Q\right)$ and 
\begin{align*}
D_{\mathcal{V}}\big|_q \eta_1(v)=w(q)^\dagger\cdot v+v^\dagger\cdot w(q)
\end{align*}
with $w(t):=h'\left(\sum_{j=1}^d t_j^2+f(t)^\dagger\cdot f(t)\right) f(t)$, for all $v\in \mathcal{H}_0$. Hence $w(q)=h'\left(Q\right) f(q)$. In the following, let $\hat{h}$ be the Fourier transform of the smooth function $h$, normalized such that $h(x)=\int \hat{h}(z)\ e^{iz x}\ \mathrm{d}z$. Then,
\begin{align*}
h\left(\mathbb{L}'\right)-h\left(Q\right)=\int \hat{h}(z)\left(e^{iz\mathbb{L}'}-e^{izQ}\right)\ \mathrm{d}z.
\end{align*}
In order to investigate the integrand, we use the following integral representation
\begin{align*}
e^{iz\mathbb{L}'}-e^{izQ}&=i\int_0^z e^{iy\mathbb{L}'}\left(\mathbb{L}'-Q\right)e^{i(z-y)Q}\ \mathrm{d}y\\
&=i\int_0^z e^{iy\mathbb{L}'}e^{i(z-y)Q}\ \mathrm{d}y\left(\mathbb{L}'-Q\right)+i\int_0^z e^{iy\mathbb{L}'}\left[\mathbb{L}',e^{i(z-y)Q}\right]\ \mathrm{d}y.
\end{align*}
Let us define the operators $B_z:=i\int_0^z e^{iy\mathbb{L}'}e^{i(z-y)Q}\ \mathrm{d}y$ and $R_z:=i\int_0^z e^{iy\mathbb{L}'}\left[\mathbb{L}',e^{i(z\!-\!y)Q}\right]\mathrm{d}y$. Clearly, $\|B_z\|\leq |z|$. Regarding $R_z$, note that every term in the definition of $\mathbb{L}'$ commutes with $Q$, except $\left( p'\right)\cdot  p'$, which satisfies the relation $\left[p_j',\phi(q)\right]=\left[p_j,\phi(q)\right]=\frac{1}{i2N}\left(\partial_j \phi\right)(q)$. We define the family of functions
\begin{align}
\label{Equation: Definition of phi_x}
\phi_x(t):=e^{ix\left(\sum_{j=1}^d t_j^2+f(t)^\dagger\cdot f(t)\right)}
\end{align}
and compute
\begin{align*}
\left[\mathbb{L}',e^{ixQ}\right]&=\left[\left( p'\right)\cdot  p',\phi_{x}(q)\right]=\sum_{j=1}^d\left[\left(p_j'\right)^2,\phi_{x}(q)\right]\\
&=\frac{1}{iN}\sum_{j=1}^d \partial_j \phi_{x}(q) p_j'-\frac{1}{4N^2}\sum_{j=1}^d \partial_j^2 \phi_{x}(q).
\end{align*}
We have the estimates $\left|\partial_j \phi_{x}(t)\right|\leq c|x|\ |t|$ and $\left|\partial^2_j \phi_{x}(t)\right|\leq c\left(1+|x|^2\right) \left(1+|t|^2\right)$ for some $c>0$, where we use the fact that $t\mapsto f(t)$ is a $C^2\left(\mathbb{R}^d,\mathcal{H}_0\right)$ function, see Definition \ref{Definition: F}. As before, let $\pi_{M,N}$ be the orthogonal projection onto $\mathcal{W}_N\left(\mathcal{F}_{\leq M}\right)$. By Lemma \ref{Lemma: Function of q}
\begin{align*}
\|\partial_j \phi_{x}(q) \pi_{M,N}\|&\leq c|x| \ \|\ |q|\, \pi_{M,N} \|\leq \tilde{c}|x|,\\
\|\partial_j^2 \phi_{x}(q) \pi_{M,N}\|&\leq c\left(1+|x|^2\right)\|\left(1+|q|^2\right) \pi_{M,N}\|\leq \tilde{c}\left(1+|x|^2\right),
\end{align*}
for some constant $\tilde{c}$ and all $x\in \mathbb{R}$ and all $M\leq N$. Note that $p_j' \mathcal{W}_N\mathcal{F}_{\leq M}\subset \mathcal{W}_N\mathcal{F}_{\leq M+1}$ and $\|p_j'\, \pi_{M,N}\|\leq \sqrt{\frac{M+1}{N}}$, and consequently we have for all $M\leq N-1$
\begin{align*}
\|\!\left[\mathbb{L}',e^{ixQ}\right] \pi_{M,N}\|&\leq \frac{2}{N}\sum_{j=1}^d\! \|\partial_j \phi_{x}(q)\, \pi_{M\!+\!1,N}\|\ \|p_j'\, \pi_{M,N}\|\!+\!\frac{1}{N^2}\sum_{j=1}^d\!\|\partial_j^2 \phi_{x}(q)\, \pi_{M,N}\|\\
&\leq \frac{d\tilde{c}}{N}|x|+\frac{d\tilde{c}}{4N^2}\left(1+|x|^2\right)\leq \frac{2d\tilde{c}}{N}\left(1+|x|^2\right).
\end{align*}
Therefore, $\|R_z \pi_{M,N}\|\leq \frac{C}{N}\left(1+|z|^3\right)$ for some constant $C$. 

Let us define $B:=\int \hat{h}(z)B_z\ \mathrm{d}z$ and $R:=\int \hat{h}(z)R_z\ \mathrm{d}z$. From our estimates on $B_z,R_z$, we deduce $\|B\|\leq \int |\hat{h}(z)|\ |z|\ \mathrm{d}z:=C_1<\infty$ and $\|R\, \pi_{M,N}\|\leq \frac{C}{N}\int |\hat{h}(z)|\left(1+|z|^3\right)\ \mathrm{d}z:=\frac{C_2}{N}<\infty$. Hence, $R^\dagger R=o_*\left(\mathbb{Q}_N\right)$. Since $h\left(\mathbb{L}'\right)-h\left(Q\right)=B\, \left(\mathbb{L}'-Q\right)+R$, we obtain the estimate
\begin{align*}
&\left[h\left(\mathbb{L}'\right)-h\left(Q\right)\right]^2=[B \left(\mathbb{L}'-Q\right)+R]^\dagger [B \left(\mathbb{L}'-Q\right)+R]\\
&\ \ \ \leq 2\left(\mathbb{L}'-Q\right) B^\dagger B \left(\mathbb{L}'-Q\right)+2R^\dagger R\\
&\ \ \ \leq 2(C_1)^2\left(\mathbb{L}'-Q\right)^2+2R^\dagger R=o_*\left(\mathbb{Q}_N\right),
\end{align*}
where we have used that $\left(\mathbb{L}'-Q\right)^2$ is of order $o_*\left(\mathbb{Q}_N\right)$, see Lemma \ref{Lemma: Taylor of the square root - order two}. This proves Eq.~(\ref{Equation: First order Taylor residuum of sqrt}).

In order to verify Eq.~(\ref{Equation: Second order Taylor residuum of sqrt}) let us compute
\begin{align}
\nonumber &\sqrt{1-\mathbb{L}'}-\eta_1(q)-D_{\mathcal{V}}\big|_q \eta_1\big(b_{\geq 1}\big)=\!h\left(\mathbb{L}'\right)\!-\!h\left(Q\right)-h'\left(Q\right)\left( b_{>d}^\dagger\cdot  f(q)\!+\! f(q)^\dagger\cdot  b_{>d}\right)\\
\nonumber &\ \ =\int \hat{h}(z)\left[\!i\!\int_0^z e^{iy\mathbb{L}'} \left(\mathbb{L}'\!-\!Q\right) e^{i(z-y)Q}\ \mathrm{d}y-iz\ e^{izQ}\left( f(q)^\dagger\cdot  b_{>d}\!+\! b_{>d}^\dagger\cdot  f(q)\right)\right]\ \mathrm{d}z\\
\nonumber &\ \ =R\!+\!\int \hat{h}(z)\left[\!i\!\int_0^z e^{iy\mathbb{L}'}e^{i(z-y)Q} \mathrm{d}y \left(\mathbb{L}'\!-\!Q\right)\!-\!iz\ e^{izQ}\left( f(q)^\dagger\cdot  b_{>d}\!+\! b_{>d}^\dagger\cdot  f(q)\right)\right] \mathrm{d}z\\
\label{Equation: Square Root Second order expansion} &\ \ =R+i\int \hat{h}(z)\int_0^z \left(e^{iy\mathbb{L}'}-e^{iyQ} \right) e^{i(z-y)Q}\ \mathrm{d}y\ \mathrm{d}z \left(\mathbb{L}'-Q\right)  \\
\nonumber &\ \ \ \ \ +i\int \hat{h}(z)z\ e^{izQ}\ \mathrm{d}z \left(\mathbb{L}'-Q- f(q)^\dagger\cdot  b_{>d}- b_{>d}^\dagger\cdot  f(q)\right).
\end{align}
Let $V$ be a function that satisfies the assumptions of the Lemma. To complete the proof, we need to verify that $V(q) \left[h\left(\mathbb{L}'\right)-h\left(Q\right)-h'\left(Q\right) \left( f(q)^\dagger\cdot  b_{>d}+ b_{>d}^\dagger\cdot  f(q)\right)\right]$ is of order $o_*\left(\mathbb{Q}_N\right)$. By Lemma \ref{Lemma: Function of q}, we know that $|V|^2(q)=o_*(1)$ and from the previous part it is clear that $\pi_{M,N}\, R^\dagger R\, \pi_{M,N}=O_*(\frac{1}{N^2})$. Hence, $V(q) R=o_*(\frac{1}{N})$ and especially $V(q) R=o_*(\mathbb{Q}_N)$. Regarding the second term in Eq.~(\ref{Equation: Square Root Second order expansion}), recall that $e^{iy\mathbb{L}'}-e^{iyQ}=\left(\mathbb{L}'-Q\right)\, B_{-y}^\dagger+R_{-y}^\dagger$. Therefore,
\begin{align*}
&\int \hat{h}(z)i\int_0^z \left(e^{iy\mathbb{L}'}-e^{iyQ} \right) e^{i(z-y)Q}\ \mathrm{d}y\  \mathrm{d}z \left(\mathbb{L}'-Q\right)\\
&\ \ \ =\int \hat{h}(z)i\int_0^z \left[\left(\mathbb{L}'-Q\right) B_{-y}^\dagger+R^\dagger_{-y}\right] e^{i(z-y)Q}\ \mathrm{d}y\ \mathrm{d}z \left(\mathbb{L}'-Q\right)\\
&\ \ \ =\left[\left(\mathbb{L}'-Q\right) \widetilde{B}^\dagger+\widetilde{R}^\dagger\right] \left(\mathbb{L}'-Q\right),
\end{align*}
with $\widetilde{B}:=-i\int \hat{h}(z)\int_0^z e^{i(y-z)Q} B_{-y}\ \mathrm{d}y\ \mathrm{d}z$ and $\widetilde{R}:=-i\int \hat{h}(z)\int_0^z  e^{i(y-z)Q} R_{-y}\ \mathrm{d}y\ \mathrm{d}z$. In the following we want to verify that $V(q) \left[\left(\mathbb{L}'-Q\right) \widetilde{B}^\dagger+\widetilde{R}^\dagger\right] \left(\mathbb{L}'-Q\right)=o_*\left(\mathbb{Q}_N\right)$. Since $\left(\mathbb{L}'-Q\right)^2=o_*\left(\mathbb{Q}_N\right)$ by Lemma \ref{Lemma: Taylor of the square root - order two}, it is enough to verify that $V(q) \widetilde{R}^\dagger  \widetilde{R} V(q)$ and $V(q) \left(\mathbb{L}'-Q\right) \widetilde{B}^\dagger \widetilde{B} \left(\mathbb{L}'-Q\right) V(q)$ are of order $o_*\left(\mathbb{Q}_N\right)$. Recall that we have the identity $R_y=i\int_0^y e^{ix\mathbb{L}'}\left[\mathbb{L}'\!-\!Q,e^{i(y\!-\!x)Q}\right]\mathrm{d}x=i\int_0^y e^{ix\mathbb{L}'}\left[(p')^\dagger\cdot p',\phi_x(q)\right]\mathrm{d}x$ with the function $\phi_x$ from Eq.~(\ref{Equation: Definition of phi_x}). We can further express $[(p')^\dagger\cdot p',\phi_x(q)]V(q)$ as
\begin{align*}
\sum_{j=1}^d\left(\frac{1}{iN}\partial_j \phi_{x}(q) V(q) p_j'- \frac{1}{2N^2}\partial_j \phi_{x}(q) \partial_j V(q)-\frac{1}{4N^2} \partial_j^2 \phi_{x}(q) V(q)\right).
\end{align*}
Similar to before, this leads to the estimate $\|R_z V(q) \pi_{M,N}\|\leq \frac{\widetilde{C}}{N}\left(1+|z|^3\right)$ for some constant $\widetilde{C}$, and consequently $\|\widetilde{R} V(q) \pi_{M,N}\|\leq \frac{\widetilde{C}_1}{N}$ for some constant $\widetilde{C}_1$. Hence we have $V(q) \widetilde{R}^\dagger  \widetilde{R} V(q)=o_*\left(\mathbb{Q}_N\right)$. Regarding the term $V(q) \left(\mathbb{L}'-Q\right) \widetilde{B}^\dagger \widetilde{B} \left(\mathbb{L}'-Q\right) V(q)$, note that $\|\widetilde{B}\|^2=:\tilde{C}_2<\infty$. Applying the Cauchy--Schwarz yields
\begin{align*}
&V(q) \left(\mathbb{L}'-Q\right) \widetilde{B}^\dagger \widetilde{B} \left(\mathbb{L}'-Q\right) V(q)\leq \tilde{C}_2\ V(q) \left(\mathbb{L}'-Q\right)^2 V(q)\\
&\!\leq\! 5 \tilde{C}_2 V(q)\!\left[\!f(q)^\dagger\!\cdot \! b_{>d}\, b_{>d}^\dagger\!\cdot\!  f(q) \!+\! b_{>d}^\dagger\!\cdot\!  f(q) f(q)^\dagger\!\cdot\!  b_{>d}\!+\!\left(\!( p')^\dagger\!\cdot\!  p'\!\right)^2\!+\!\left(\! b_{>d}^\dagger\!\cdot\!  b_{>d}\!\right)^2\! +\!\frac{d^2}{N^2}\!\right]\!V(q).
\end{align*}
Let us define the function $w(t):=V(t)f(t)$. By Lemma \ref{Lemma: b} we obtain that
\begin{align*}
V(q)\,  f(q)^\dagger\cdot  b_{>d}\  b_{>d}^\dagger\cdot  f(q)\, V(q)=w(q)^\dagger\cdot b_{>d}\  b_{>d}^\dagger\cdot w(q)&=o_*\left(\mathbb{Q}_N\right),\\
V(q)\,  b_{>d}^\dagger\cdot  f(q)\  f(q)^\dagger\cdot  b_{>d}\, V(q)= b_{>d}^\dagger\cdot  w(q)\,  w(q)^\dagger\cdot  b_{>d}&=o_*\left(\mathbb{Q}_N\right),
\end{align*}
and $V(q) \left( b_{>d}^\dagger\cdot  b_{>d}\right)^2 V(q)=o_*\left(\mathbb{Q}_N\right)$. Furthermore, $V(q) \left[\left( p'\right)^\dagger\cdot  p'\right]^2 V(q)=o_*\left(\mathbb{Q}_N\right)$ by Lemma \ref{Lemma: p'}. We conclude that $V(q) \left(\mathbb{L}'-Q\right) \widetilde{B}^\dagger \widetilde{B} \left(\mathbb{L}'-Q\right) V(q)=o_*\left(\mathbb{Q}_N\right)$. 

Let us now verify that the final term $\widetilde{V}(q) \left(\mathbb{L}'-Q- b_{>d}^\dagger\cdot  f(q)- f(q)^\dagger\cdot  b_{>d}\right)$ in Eq.~(\ref{Equation: Square Root Second order expansion}) is of order $o_*\left(\mathbb{Q}_N\right)$, where $\widetilde{V}(t):=V(t) \int \hat{h}(z)iz\ e^{iz\left(\sum_{j=1}^d t_j^2+f(t)^\dagger\cdot f(t)\right)}\ \mathrm{d}z$. By the definition of $\mathbb{L}'$ and $Q$, we have the identity
\begin{align*}
\widetilde{V}(q)\left(\mathbb{L}'-Q- f(q)^\dagger\cdot  b_{>d}- b_{>d}^\dagger\cdot  f(q)\right)=\widetilde{V}(q)  b_{>d}^\dagger\cdot  b_{>d}+\widetilde{V}(q)\left( p'\right)^\dagger\cdot   p'-\frac{d}{2N}V(q).
\end{align*}
The first term is of order $o_*\left(\mathbb{Q}_N\right)$ by Lemma \ref{Lemma: Function of q}, the second term is by Lemma \ref{Lemma: p'} and regarding the last term we know that $\frac{d}{2N}V(q)=o_*\left(\mathbb{Q}_N\right)$ by Lemma \ref{Lemma: Function of q}.
\end{proof}

Before we can verify the Taylor approximation for the operator $\left(1-\mathbb{L}\right)^{\frac{3}{2}}$ in Corollary \ref{Corollary: Taylor of the square root}, we need the following two results, which are of independent relevance for the proof of Theorem \ref{Theorem: Main B}. 

\begin{lem}
\label{Lemma: Lifting}
We have $\left(\mathbb{L}'\right)^2=o_*\left(1\right)$, and furthermore
\begin{align}
\sqrt{1-\mathbb{L}'}\, &\mathbb{Q}_N\, \sqrt{1-\mathbb{L}'}=O_*\left(\mathbb{Q}_N\right),\label{Equations: L' one}\\
\label{Equations: L' two}\mathbb{L}'\,  &\mathbb{Q}_N\, \mathbb{L}'=o_*\left(\mathbb{Q}_N\right).
\end{align}
\end{lem}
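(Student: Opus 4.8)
The plan is to treat the trivial claim first and then reduce both sandwich estimates to a single key bound. The identity $(\mathbb{L}')^2=o_*(1)$ is immediate: since $\mathbb{L}=\frac{1}{N}\mathcal{N}$ obeys $0\leq\mathbb{L}\leq\frac{M}{N}$ on $\mathcal{F}_{\leq M}$, conjugation by the unitary $\mathcal{W}_N$ gives $0\leq\mathbb{L}'\leq\frac{M}{N}$ on $\mathcal{W}_N\mathcal{F}_{\leq M}$, hence $\pi_{M,N}(\mathbb{L}')^2\pi_{M,N}\leq\big(\frac{M}{N}\big)^2\pi_{M,N}$, which is $o_*(1)$ with $\epsilon(\delta)=\delta^2$.

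For (\ref{Equations: L' one}) and (\ref{Equations: L' two}) I would first reduce to showing $Z\,\mathbb{Q}_N\,Z=o_*(\mathbb{Q}_N)$ for $Z\in\{A,\mathbb{L}'\}$, where $A:=1-\sqrt{1-\mathbb{L}'}$. Using $0\leq 1-\sqrt{1-x}\leq x$ on $[0,1]$ one has $0\leq A\leq\mathbb{L}'$, and $A$ is a function of $\mathbb{L}'$, so it commutes with $\pi_{M,N}$ and preserves $\mathcal{W}_N\mathcal{F}_{\leq M}$. Writing $\sqrt{1-\mathbb{L}'}\,\mathbb{Q}_N\,\sqrt{1-\mathbb{L}'}=\mathbb{Q}_N-2\mathfrak{Re}[\mathbb{Q}_N A]+A\mathbb{Q}_N A$ and bounding $-2\mathfrak{Re}[\mathbb{Q}_N A]=-2\mathfrak{Re}\big[\mathbb{Q}_N^{1/2}\cdot\mathbb{Q}_N^{1/2}A\big]\leq\mathbb{Q}_N+A\mathbb{Q}_N A$ by the operator Cauchy--Schwarz inequality of Lemma~\ref{Lemma: O_* results} (as quadratic forms on $\mathcal{W}_N\mathcal{F}_{\leq M}$) gives $\sqrt{1-\mathbb{L}'}\,\mathbb{Q}_N\,\sqrt{1-\mathbb{L}'}\leq 2\mathbb{Q}_N+2A\mathbb{Q}_N A$, so $A\mathbb{Q}_N A=o_*(\mathbb{Q}_N)$ yields (\ref{Equations: L' one}); and $\mathbb{L}'\mathbb{Q}_N\mathbb{L}'=o_*(\mathbb{Q}_N)$ is exactly (\ref{Equations: L' two}).

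To prove $Z\,\mathbb{Q}_N\,Z=o_*(\mathbb{Q}_N)$ I would exploit the explicit form of $\mathbb{L}'$. By Lemma~\ref{Lemma: Transformation Laws}, $\mathbb{L}'=(q+ip'+b_{>d}+f(q))^\dagger\cdot(q+ip'+b_{>d}+f(q))$, and expanding as in the case $J=(0,0)$ of the proof of Theorem~\ref{Theorem: Main A} gives $\mathbb{L}'=\big(q^\dagger\cdot q+f(q)^\dagger\cdot f(q)\big)+b_{>d}^\dagger\cdot b_{>d}+(p')^\dagger\cdot p'+2\mathfrak{Re}\big[f(q)^\dagger\cdot b_{>d}\big]-\frac{d}{2N}$. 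Splitting $\mathbb{Q}_N=p^\dagger\cdot p+b_{>d}^\dagger\cdot b_{>d}+\frac{1}{N}$ as in Eq.~(\ref{Equation: Definition of Q}), I would for each of the three pieces commute the remaining creation/annihilation factors $p_j$ and $b_k$ ($k>d$) through $Z$: for $Z=\mathbb{L}'$ this uses $[b_k,\mathbb{L}']=\frac{1}{N}\big(b_k+f_k(q)\big)$ and $[p_j,\mathbb{L}']=O(N^{-1})$ (read off from the displayed expression via $[b_k,b_l^\dagger]=\frac{1}{N}\delta_{kl}$, $[p_j,q_k]=\frac{1}{2iN}\delta_{jk}$), and to move the same factors through $\sqrt{1-\mathbb{L}'}$ — hence through $A$ — I would use the Fourier representation of $x\mapsto\sqrt{1-x}$ exactly as in the proof of Lemma~\ref{Lemma: Taylor of the square root}, turning these $O(N^{-1})$ commutators into $O(N^{-1})$ corrections. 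The resulting leading terms are controlled by $0\leq\mathbb{L}'\leq\frac{M}{N}$, by the bounds $\hat{\pi}_{M,N}\,b_{>d}^\dagger\cdot b_{>d}\,\hat{\pi}_{M,N}\leq 4$ and $\hat{\pi}_{M,N}\,(b_{>d}+f(q))^\dagger\cdot(b_{>d}+f(q))\,\hat{\pi}_{M,N}\leq\frac{M}{N}$ from Lemma~\ref{Lemma: Auxiliary}, and by the auxiliary Lemmata~\ref{Lemma: Function of q}, \ref{Lemma: b} and \ref{Lemma: p'} (applicable because $f(0)=0$, so $\|f(t)\|$ and its derivatives are $O(|t|)$ near the origin, after the cut-off argument of Lemma~\ref{Lemma: q function}). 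The extra factor $\mathbb{L}'\leq\frac{M}{N}$ available in $\mathbb{L}'\mathbb{Q}_N\mathbb{L}'$ — as opposed to $A\mathbb{Q}_N A$, where $A$ is bounded by $\mathbb{L}'$ only once — is what makes the estimate $o_*$ rather than merely $O_*$.

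The hard part will be a genuine domain subtlety rather than the algebra: $\mathbb{L}'$, and hence $\sqrt{1-\mathbb{L}'}$ and $A$, is only defined on $\mathcal{W}_N\mathcal{F}_{\leq N}$, while the annihilation operators $b_k$, $k>d$, do not preserve this subspace, since $\mathcal{W}_N^{-1}b_k\mathcal{W}_N=b_k-f_k(q)$ and $f_k(q)$ spreads a vector over all particle numbers. Thus the commutator manipulations above must be organised entirely at the level of quadratic forms on the core $\mathcal{W}_N\mathcal{F}_{\leq M}$, with every occurrence of $\sqrt{1-\mathbb{L}'}$ arranged to act only on vectors of $\mathcal{W}_N\mathcal{F}_{\leq M}$; the projections $\hat{\pi}_{M,N}$ of Lemma~\ref{Lemma: Auxiliary}, which commute with $q_j$, $p'_j$ and $\mathbb{L}'$ and absorb $b_k$ for $k>d$, are precisely what make this bookkeeping go through.
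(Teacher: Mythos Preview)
Your treatment of $(\mathbb{L}')^2=o_*(1)$ matches the paper's. For (\ref{Equations: L' one}) and (\ref{Equations: L' two}), however, your route diverges from the paper's, and the domain subtlety you correctly flag is precisely where your plan remains a sketch rather than a proof. The paper never attempts to commute $b_k$ (for $k>d$) through $\sqrt{1-\mathbb{L}'}$ or $\mathbb{L}'$, and uses no Fourier representation in this lemma at all. Instead, after observing that it suffices to sandwich $(p')^\dagger\cdot p'$ and $b_{>d}^\dagger\cdot b_{>d}$ (since $p^\dagger\cdot p\leq 2(p')^\dagger\cdot p'+2(p'-p)^\dagger\cdot(p'-p)$ and the last term is controlled via Lemma~\ref{Lemma: b}), the paper handles each piece by a different elementary device. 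For $\xi=p'$, it exploits that $p_j'=\mathcal{W}_Np_j\mathcal{W}_N^{-1}$ is an \emph{exact} ladder for $\mathbb{L}'=\mathcal{W}_N\mathbb{L}\mathcal{W}_N^{-1}$, giving the closed identity
\[
p_j'\sqrt{1-\mathbb{L}'}\,\Psi=\tfrac{1}{2}\big(\sqrt{1-\mathbb{L}'-\tfrac{1}{N}}+\sqrt{1-\mathbb{L}'+\tfrac{1}{N}}\,\big)p_j'\Psi+\tfrac{1}{2}\big(\sqrt{1-\mathbb{L}'-\tfrac{1}{N}}-\sqrt{1-\mathbb{L}'+\tfrac{1}{N}}\,\big)q_j\Psi
\]
on $\mathcal{W}_N\mathcal{F}_{\leq N-1}$, from which the bound is read off by estimating the two operator coefficients; an analogous shift law handles the $\mathbb{L}'$ sandwich. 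For $\xi=b_{>d}$, the paper writes $\sqrt{1-\mathbb{L}'}=\eta_1(q)+\big[\sqrt{1-\mathbb{L}'}-\eta_1(q)\big]$: since $\eta_1(q)$ commutes with $b_{>d}$, that contribution reduces to Lemma~\ref{Lemma: Function of q}, while the remainder is controlled by combining the already-proved $\big[\sqrt{1-\mathbb{L}'}-\eta_1(q)\big]^2=o_*(\mathbb{Q}_N)$ from Lemma~\ref{Lemma: Taylor of the square root} with $\hat{\pi}_{M,N}\,b_{>d}^\dagger\cdot b_{>d}\,\hat{\pi}_{M,N}\leq 4$ (using that $\sqrt{1-\mathbb{L}'}-\eta_1(q)$ commutes with $\hat{\pi}_{M,N}$). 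The $\mathbb{L}'$ sandwich is done ``in a similar fashion'' using Lemma~\ref{Lemma: Taylor of the square root - order two}.

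What this buys is precisely the avoidance of the problem you identify: at no point is $\mathbb{L}'$ or $\sqrt{1-\mathbb{L}'}$ applied to a vector outside $\mathcal{W}_N\mathcal{F}_{\leq N}$. By contrast, your commutator strategy needs expressions like $\mathbb{L}'b_k\Psi$ or $e^{iy\mathbb{L}'}b_k\Psi$; since $b_k\Psi\notin\mathcal{W}_N\mathcal{F}_{\leq N}$ in general, making sense of these forces you to extend $\mathbb{L}'$ to all of $\mathcal{F}_0$, where it is unbounded and the crucial estimate $\mathbb{L}'\leq M/N$ is lost. Your remark that $\hat{\pi}_{M,N}$ ``makes this bookkeeping go through'' is not enough: $\hat{\pi}_{M,N}$ controls the $b_{>d}$-number but not the total particle number, so it does not restore the domain of $\mathbb{L}'$. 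Your reduction of (\ref{Equations: L' one}) to $A\mathbb{Q}_NA=o_*(\mathbb{Q}_N)$ via Cauchy--Schwarz is valid, but it is a detour that leaves you with essentially the same estimate as (\ref{Equations: L' two}) still to prove.
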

\begin{proof}
Note that $\|\mathbb{L}' \pi_{M,N}\|=\frac{M}{N}$ for all $M\leq N$, and therefore we immediately obtain $\left(\mathbb{L}'\right)^2=o_*\left(1\right)$. In order to verify Equations (\ref{Equations: L' one}) and (\ref{Equations: L' two}), it is enough to prove that $\sqrt{1-\mathbb{L}'} \left(\xi^\dagger\cdot  \xi\right) \sqrt{1-\mathbb{L}'}=O_*\left(\mathbb{Q}_N\right)$ and $\mathbb{L}' \left(\xi^\dagger\cdot  \xi\right) \mathbb{L}'=o_*\left(\mathbb{Q}_N\right)$ for $\xi\in \{p',b_{>d}\}$.\\

\underline{The case $\xi= p'$}: In order to verify $\sqrt{1-\mathbb{L}'} \left(\xi^\dagger\cdot  \xi\right) \sqrt{1-\mathbb{L}'}=O_*\left(\mathbb{Q}_N\right)$, observe that we have for all $\Psi\in \mathcal{W}_N \mathcal{F}_{\leq N-1}$ the commutation law
\begin{align*}
p_j' \sqrt{1-\mathbb{L}'}\Psi=\frac{\sqrt{1-\mathbb{L}'-\frac{1}{N}}+\sqrt{1-\mathbb{L}'+\frac{1}{N}}}{2} p_j'\Psi+\frac{\sqrt{1-\mathbb{L}'-\frac{1}{N}}-\sqrt{1-\mathbb{L}'+\frac{1}{N}}}{2} q_j\Psi.
\end{align*}
For $M\leq N-2$, let us define the operators $B_{M,N}:=\frac{\sqrt{1-\mathbb{L}'-\frac{1}{N}}+\sqrt{1-\mathbb{L}'+\frac{1}{N}}}{2} \pi_{M+1,N}$ and $\widetilde{B}_{M,N}:=\frac{\sqrt{1-\mathbb{L}'-\frac{1}{N}}-\sqrt{1-\mathbb{L}'+\frac{1}{N}}}{2} \pi_{M+1,N}$. Note that $\|B_{M,N}\|\leq 1$ and $\|\widetilde{B}_{M,N}\|^2\leq \frac{C}{N^2}$ for all $\frac{M}{N}\leq \delta_0$, where $C$ and $0<\delta<1$ are suitable constants. Consequently
\begin{align*}
&\pi_{M,N} \sqrt{1-\mathbb{L}'}\, \left(p'\right)^\dagger\!\cdot\! p'\,\sqrt{1-\mathbb{L}'}\pi_{M,N}=\left| \left(B_{M,N}\!\otimes\! 1_{\mathcal{H}}\!\cdot\! p'\!+\!\widetilde{B}_{M,N}\!\otimes\! 1_{\mathcal{H}}\!\cdot\!  q\right) \pi_{M,N}\right|^2\\
&\ \ \leq 2\left|B_{M,N}\otimes 1_{\mathcal{H}}\cdot p'\, \pi_{M,N}\right|^2+2\left|\widetilde{B}_{M,N}\otimes 1_{\mathcal{H}}\cdot  q\, \pi_{M,N}\right|^2\\
&\ \ \leq \pi_{M,N}\, \left(p'\right)^\dagger\cdot p'\, \pi_{M,N}+\frac{C(d+1)}{N^2},
\end{align*}
which concludes the proof of $\sqrt{1-\mathbb{L}'} \left(p'\right)^\dagger\cdot p'\, \sqrt{1-\mathbb{L}'}=O_*\left(\mathbb{Q}_N\right)$. The estimate $\mathbb{L}' \left(p'\right)^\dagger\cdot p'\, \mathbb{L}'=o_*\left(\mathbb{Q}_N\right)$ follows from an analogue commutation law.\\

\underline{The case $\xi= b_{>d}$}: In order to verify $\sqrt{1-\mathbb{L}'}\, b_{>d}^\dagger\cdot  b_{>d}\, \sqrt{1-\mathbb{L}'}=O_*\left(\mathbb{Q}_N\right)$, note that $\left[\sqrt{1-\mathbb{L}'}-\eta_1(q)\right]^2=o_*\left(\mathbb{Q}_N\right)$ by \ref{Lemma: Taylor of the square root}, i.e. there exists a function $\epsilon$ with $\epsilon(\delta)\underset{\delta\rightarrow 0}{\longrightarrow}0$ and $\pi_{M,N} \left[\sqrt{1-\mathbb{L}'}-\eta_1(q)\right]^2 \pi_{M,N}\leq \epsilon\left(\frac{M}{N}\right)\ \pi_{M,N}\, \mathbb{Q}_N\, \pi_{M,N}$. By Lemma \ref{Lemma: Auxiliary}, we know that $\hat{\pi}_{M,N}\,  b_{>d}^\dagger\cdot  b_{>d}\, \hat{\pi}_{M,N}\leq C$ for a constant $C$. Furthermore $\left[\sqrt{1-\mathbb{L}'}-\eta_1(q)\right] \hat{\pi}_{M,N}=\hat{\pi}_{M,N} \left[\sqrt{1-\mathbb{L}'}-\eta_1(q)\right]$. Let us define $S:=\left[\sqrt{1-\mathbb{L}'}-\eta_1(q)\right]  b_{>d}^\dagger\cdot  b_{>d} \left[\sqrt{1-\mathbb{L}'}-\eta_1(q)\right] $, and estimate
\begin{align*}
\pi_{M,N} S \pi_{M,N}&=\pi_{M,N} \left[\sqrt{1-\mathbb{L}'}-\eta_1(q)\right]  \hat{\pi}_{M,N}\,  b_{>d}^\dagger\!\cdot\!  b_{>d}\, \hat{\pi}_{M,N} \left[\sqrt{1-\mathbb{L}'}-\eta_1(q)\right] \pi_{M,N}\\
&\leq 4\ \pi_{M,N}\left[\sqrt{1-\mathbb{L}'}-\eta_1(q)\right] ^2 \pi_{M,N}\leq 4\ \epsilon\left(\frac{M}{N}\right)\ \pi_{M,N}\, \mathbb{Q}_N\, \pi_{M,N}.
\end{align*}
Hence, $S=o_*\left(\mathbb{Q}_N\right)$ and therefore
\begin{align*}
\sqrt{1-\mathbb{L}'}\, b_{>d}^\dagger\cdot  b_{>d}\, \sqrt{1-\mathbb{L}'}\leq 2\left(S+\eta_1(q)\, b_{>d}^\dagger\cdot b_{>d}\, \eta_1(q)\right)=O_*\left(\mathbb{Q}_N\right).
\end{align*}
The proof of $\mathbb{L}'\, b_{>d}^\dagger\cdot  b_{>d}\, \mathbb{L}'=o_*\left(\mathbb{Q}_N\right)$ can be carried out in a similar fashion.
\end{proof}

\begin{cor}
\label{Corollary: Lifting}
Let $X_N$ be a sequence with $X_N=O_*\left(\mathbb{Q}_N\right)$ and $Y_N$ a sequence with $Y_N=o_*\left(\mathbb{Q}_N\right)$. Then,
\begin{align}
\label{Equation: Lifting 1}
\sqrt{1-\mathbb{L}'} &X_N\sqrt{1-\mathbb{L}'}=O_*\left(\mathbb{Q}_N\right),\\
\label{Equation: Lifting 2}\sqrt{1-\mathbb{L}'}&Y_N\sqrt{1-\mathbb{L}'}=o_*\left(\mathbb{Q}_N\right),\\
\label{Equation: Lifting 3}\mathbb{L}'&X_N\mathbb{L}'=o_*\left(\mathbb{Q}_N\right).
\end{align}
\end{cor}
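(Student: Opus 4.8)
The plan is to reduce all three statements to the two operator estimates already established in Lemma \ref{Lemma: Lifting}, namely $\sqrt{1-\mathbb{L}'}\,\mathbb{Q}_N\,\sqrt{1-\mathbb{L}'}=O_*\left(\mathbb{Q}_N\right)$ and $\mathbb{L}'\,\mathbb{Q}_N\,\mathbb{L}'=o_*\left(\mathbb{Q}_N\right)$, by checking that conjugation by $\sqrt{1-\mathbb{L}'}$ (respectively by $\mathbb{L}'$) is compatible with the projection formalism of Remark \ref{Remark: Projection Formalism}. The key structural observation is that $\sqrt{1-\mathbb{L}'}=\mathcal{W}_N\,\sqrt{1-\mathbb{L}}\,\mathcal{W}_N^{-1}$ and $\mathbb{L}'=\mathcal{W}_N\,\mathbb{L}\,\mathcal{W}_N^{-1}$, and since $\mathbb{L}=\frac{1}{N}\mathcal{N}\big|_{\mathcal{F}_{\leq N}}$ is diagonal in the particle number it leaves every $\mathcal{F}_{\leq M}$ invariant; hence both $\sqrt{1-\mathbb{L}'}$ and $\mathbb{L}'$ are bounded (by $1$ on $\mathcal{W}_N\mathcal{F}_{\leq N}$), self-adjoint, and commute with the projections $\pi_{M,N}=\mathcal{W}_N\,\pi_M\,\mathcal{W}_N^{-1}$.

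With this in hand, take first $X_N=O_*\left(\mathbb{Q}_N\right)$. By Remark \ref{Remark: Projection Formalism} there are constants $C,\delta_0>0$ with $\pi_{M,N}\,\mathfrak{Re}\left[\lambda X_N\right]\,\pi_{M,N}\leq C\,\pi_{M,N}\,\mathbb{Q}_N\,\pi_{M,N}$ for all $\lambda\in\mathbb{C}$ with $|\lambda|=1$ and $M/N\leq\delta_0$. Using that $\sqrt{1-\mathbb{L}'}$ is self-adjoint one has $\mathfrak{Re}\left[\lambda\,\sqrt{1-\mathbb{L}'}\,X_N\,\sqrt{1-\mathbb{L}'}\right]=\sqrt{1-\mathbb{L}'}\,\mathfrak{Re}\left[\lambda X_N\right]\,\sqrt{1-\mathbb{L}'}$; sandwiching the displayed inequality between two copies of $\sqrt{1-\mathbb{L}'}$ and commuting the square roots through the $\pi_{M,N}$'s yields
\begin{align*}
\pi_{M,N}\,\mathfrak{Re}\big[\lambda\,\sqrt{1-\mathbb{L}'}\,X_N\,\sqrt{1-\mathbb{L}'}\big]\,\pi_{M,N}\leq C\,\pi_{M,N}\,\sqrt{1-\mathbb{L}'}\,\mathbb{Q}_N\,\sqrt{1-\mathbb{L}'}\,\pi_{M,N}.
\end{align*}
Now Lemma \ref{Lemma: Lifting} bounds the right-hand side by $C'\,\pi_{M,N}\,\mathbb{Q}_N\,\pi_{M,N}$ after replacing $\delta_0$ by the minimum of the two thresholds, which is precisely the characterisation of $\sqrt{1-\mathbb{L}'}\,X_N\,\sqrt{1-\mathbb{L}'}=O_*\left(\mathbb{Q}_N\right)$, i.e. Eq.~(\ref{Equation: Lifting 1}). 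For Eq.~(\ref{Equation: Lifting 2}) the argument is identical but one starts from the $o_*$ estimate $\pi_{M,N}\,\mathfrak{Re}\left[\lambda Y_N\right]\,\pi_{M,N}\leq\epsilon\left(M/N\right)\,\pi_{M,N}\,\mathbb{Q}_N\,\pi_{M,N}$ with $\epsilon(\delta)\to0$; the same sandwiching together with the $O_*$ bound on $\sqrt{1-\mathbb{L}'}\,\mathbb{Q}_N\,\sqrt{1-\mathbb{L}'}$ from Lemma \ref{Lemma: Lifting} produces a bound with prefactor $C'\epsilon\left(M/N\right)\to0$. For Eq.~(\ref{Equation: Lifting 3}) we repeat the first computation with $\mathbb{L}'$ in place of $\sqrt{1-\mathbb{L}'}$: since $\mathbb{L}'$ is self-adjoint and commutes with $\pi_{M,N}$, the estimate $X_N=O_*\left(\mathbb{Q}_N\right)$ gives $\pi_{M,N}\,\mathfrak{Re}\left[\lambda\,\mathbb{L}'\,X_N\,\mathbb{L}'\right]\,\pi_{M,N}\leq C\,\pi_{M,N}\,\mathbb{L}'\,\mathbb{Q}_N\,\mathbb{L}'\,\pi_{M,N}$, and the $o_*$ estimate $\mathbb{L}'\,\mathbb{Q}_N\,\mathbb{L}'=o_*\left(\mathbb{Q}_N\right)$ of Lemma \ref{Lemma: Lifting} converts the right-hand side into $C\epsilon\left(M/N\right)\,\pi_{M,N}\,\mathbb{Q}_N\,\pi_{M,N}$, establishing the claim.

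There is essentially no hard part here; the only point requiring a moment of care is the interchange of the operator $O_*/o_*$ inequalities with conjugation, which works precisely because $\sqrt{1-\mathbb{L}'}$ and $\mathbb{L}'$ are bounded, self-adjoint and leave the subspaces $\mathcal{W}_N\mathcal{F}_{\leq M}$ invariant, so that no normalisation or domain issues arise and the thresholds $\delta_0$ coming from the various inputs can be taken as a common minimum. Finally, since $\mathbb{Q}_N\leq\mathbb{T}_N$, all three conclusions transfer automatically to statements with $\mathbb{T}_N$ in place of $\mathbb{Q}_N$ wherever they are invoked later.
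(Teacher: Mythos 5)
Your proof is correct and follows the same route as the paper's: both exploit that $\sqrt{1-\mathbb{L}'}$ and $\mathbb{L}'$ are self-adjoint and commute with $\pi_{M,N}$ to move the conjugation past the projections, and then invoke Lemma \ref{Lemma: Lifting} to control $\sqrt{1-\mathbb{L}'}\,\mathbb{Q}_N\,\sqrt{1-\mathbb{L}'}$ and $\mathbb{L}'\,\mathbb{Q}_N\,\mathbb{L}'$. You are in fact slightly more explicit than the paper in tracking the $\mathfrak{Re}[\lambda\,\cdot\,]$ from Remark \ref{Remark: Projection Formalism}, which is a welcome clarification but not a different argument.
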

\begin{proof}
The Corollary follows from Lemma \ref{Lemma: Lifting} and the fact that $\pi_{M,N}$ commutes with $\sqrt{1-\mathbb{L}'}$ and $\mathbb{L}'$. For the purpose of illustration, let us verify Eq.~(\ref{Equation: Lifting 1}). By the assumptions of the Corollary we know that there exist constants $C$ and $\delta>0$, such that $\pi_{M,N}\, X_N\, \pi_{M,N}\leq C \pi_{M,N}\, \mathbb{Q}_N\, \pi_{M,N}$. Consequently
\begin{align*}
\pi_{M,N}&\, \sqrt{1-\mathbb{L}'}X_N\sqrt{1-\mathbb{L}'}\, \pi_{M,N}=\sqrt{1-\mathbb{L}'}\pi_{M,N}\, X_N\, \pi_{M,N}\sqrt{1-\mathbb{L}'}\\
&\leq C \pi_{M,N}\, \sqrt{1-\mathbb{L}'}\mathbb{Q}_N\sqrt{1-\mathbb{L}'}\, \pi_{M,N}=O_*\left(\mathbb{Q}_N\right),
\end{align*}
where we have used Eq.~(\ref{Equations: L' one}) from Lemma \ref{Lemma: Lifting} in the last equality.
\end{proof}

\begin{cor}
\label{Corollary: Taylor of the square root}
Let Assumption \ref{Assumption: Part II} hold and let $\eta_m$ be the functions from Eq.~(\ref{Equation: eta_m}), with $m\in \{0,\dots,3\}$. Then
\begin{align*}
\left[\left(1-\mathbb{L}'\right)^{\frac{m}{2}}-\eta_m(q)\right]^2=o_*\left(\mathbb{Q}_N\right).\\
\end{align*}
\end{cor}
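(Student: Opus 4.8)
The plan is to treat the cases $m=0,1,2,3$ separately, using the results already established for the individual building blocks. The case $m=0$ is trivial since $\eta_0(q)=1=(1-\mathbb{L}')^0$, so the residuum vanishes identically. The case $m=1$ is exactly Eq.~(\ref{Equation: First order Taylor residuum of sqrt}) in Lemma~\ref{Lemma: Taylor of the square root}, and the case $m=2$ follows from Lemma~\ref{Lemma: Taylor of the square root - order two}: indeed, recalling $\eta_2(t)=1-\|F(t)\|^2=1-g(t)$ with $g(t):=\sum_{j=1}^d t_j^2+f(t)^\dagger\cdot f(t)$, we have $\left[(1-\mathbb{L}')-\eta_2(q)\right]^2=\left[\mathbb{L}'-g(q)\right]^2=o_*(\mathbb{Q}_N)$ directly. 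Hence only the case $m=3$ requires a genuine argument, which is where the effort goes.

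For $m=3$, I would write $(1-\mathbb{L}')^{3/2}=(1-\mathbb{L}')\sqrt{1-\mathbb{L}'}$ and interpolate through the two intermediate approximations $\eta_2(q)\sqrt{1-\mathbb{L}'}$ and $\eta_2(q)\eta_1(q)=\eta_3(q)$ (the last identity holding at least on the range of $\hat{\pi}_{M,N}$, up to the support cutoff $\chi$, which is harmless for $M/N<1/2$). Concretely, decompose
\begin{align*}
(1-\mathbb{L}')^{\frac{3}{2}}-\eta_3(q)=\big[(1-\mathbb{L}')-\eta_2(q)\big]\sqrt{1-\mathbb{L}'}+\eta_2(q)\big[\sqrt{1-\mathbb{L}'}-\eta_1(q)\big].
\end{align*}
For the second summand, $\eta_2$ is a bounded function of $q$ satisfying a polynomial bound, so $\eta_2(q)=O_*(1)$ by Lemma~\ref{Lemma: Function of q}, and $\big[\sqrt{1-\mathbb{L}'}-\eta_1(q)\big]^2=o_*(\mathbb{Q}_N)$ by Lemma~\ref{Lemma: Taylor of the square root}; combining these via the operator Cauchy--Schwarz inequality in Lemma~\ref{Lemma: O_* results} gives that the square of $\eta_2(q)\big[\sqrt{1-\mathbb{L}'}-\eta_1(q)\big]$ is $o_*(\mathbb{Q}_N)$. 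For the first summand, I would use that $\big[(1-\mathbb{L}')-\eta_2(q)\big]^2=\big[\mathbb{L}'-g(q)\big]^2=o_*(\mathbb{Q}_N)$ by Lemma~\ref{Lemma: Taylor of the square root - order two}, while $\sqrt{1-\mathbb{L}'}$ is a contraction and, more to the point, conjugation by $\sqrt{1-\mathbb{L}'}$ preserves the $o_*(\mathbb{Q}_N)$ class by Corollary~\ref{Corollary: Lifting} (Eq.~(\ref{Equation: Lifting 2})); thus $\sqrt{1-\mathbb{L}'}\big[\mathbb{L}'-g(q)\big]^2\sqrt{1-\mathbb{L}'}=o_*(\mathbb{Q}_N)$, and since $\big(A\sqrt{1-\mathbb{L}'}\big)^\dagger\big(A\sqrt{1-\mathbb{L}'}\big)=\sqrt{1-\mathbb{L}'}\,A^\dagger A\,\sqrt{1-\mathbb{L}'}$ with $A:=\mathbb{L}'-g(q)$, the square of the first summand is $o_*(\mathbb{Q}_N)$ as well. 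Finally, $\big[(1-\mathbb{L}')^{3/2}-\eta_3(q)\big]^2\le 2(\text{first})^\dagger(\text{first})+2(\text{second})^\dagger(\text{second})=o_*(\mathbb{Q}_N)$, and since $\mathbb{Q}_N\le\mathbb{T}_N$ this also yields the statement with $\mathbb{Q}_N$ as claimed.

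The only mildly delicate point — the ``hard part'' — is bookkeeping around the cutoff function $\chi$ in the definition of $\eta_1$ and $\eta_3$: one must note that on states in $\mathcal{W}_N\mathcal{F}_{\le M}$ with $M/N<\tfrac12$ one has $\mathbb{L}'<\tfrac12$ (in the relevant sense), so the factors $\chi(\|F(q)\|^2)$ and $\chi(\mathbb{L}')$ act as the identity and the algebraic identity $\eta_2(q)\eta_1(q)=\eta_3(q)$ is exact there, exactly as in the opening paragraph of the proof of Lemma~\ref{Lemma: Taylor of the square root}. Apart from this, every estimate needed is already packaged in Lemmata~\ref{Lemma: Taylor of the square root - order two}, \ref{Lemma: Taylor of the square root}, \ref{Lemma: Function of q} and \ref{Lemma: O_* results} together with Corollary~\ref{Corollary: Lifting}, so the proof is a short assembly of these.
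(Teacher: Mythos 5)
Your reduction of the statement to the single non-trivial case $m=3$, the decomposition
\begin{align*}
(1-\mathbb{L}')^{\frac{3}{2}}-\eta_3(q)=\big[(1-\mathbb{L}')-\eta_2(q)\big]\sqrt{1-\mathbb{L}'}+\eta_2(q)\big[\sqrt{1-\mathbb{L}'}-\eta_1(q)\big],
\end{align*}
and your treatment of the first summand are all correct. In fact the first summand is handled slightly more cleanly than in the paper: you invoke Lemma~\ref{Lemma: Taylor of the square root - order two} to get $\left[\mathbb{L}'-g(q)\right]^2=o_*(\mathbb{Q}_N)$ and then conjugate once by $\sqrt{1-\mathbb{L}'}$ via Corollary~\ref{Corollary: Lifting}, whereas the paper re-expands $\mathbb{L}'-g(q)$ into its four constituent pieces and estimates each conjugated square separately.

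The gap is in the second summand. You assert that $\eta_2(q)=O_*(1)$ and $\big[\sqrt{1-\mathbb{L}'}-\eta_1(q)\big]^2=o_*(\mathbb{Q}_N)$ can be ``combined via the operator Cauchy--Schwarz inequality'' to give $\big(\eta_2(q)B\big)^\dagger\big(\eta_2(q)B\big)=o_*(\mathbb{Q}_N)$ with $B:=\sqrt{1-\mathbb{L}'}-\eta_1(q)$. This does not follow. First, $\eta_2(t)=1-\|F(t)\|^2$ grows like $-|t|^2$ and is \emph{not} a bounded function, so $\eta_2(q)$ is not a bounded operator; the statement $\eta_2(q)=O_*(1)$ from Lemma~\ref{Lemma: Function of q} only controls expectations over states in $\mathcal{W}_N\mathcal{F}_{\leq M}$, not the operator norm. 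Second, $B$ does not map $\mathcal{W}_N\mathcal{F}_{\leq M}$ into any ball $\mathcal{W}_N\mathcal{F}_{\leq M'}$ with $M'/N$ small, since $\eta_1(q)$ involves functions of $q$ that raise the particle number arbitrarily; thus $\langle B^\dagger\eta_2(q)^2 B\rangle_\Psi=\|\eta_2(q)B\Psi\|^2$ cannot be bounded by applying the $O_*(1)$ estimate to the intermediate state $B\Psi$. Third, $\eta_2(q)$ and $B$ do not commute (since $\mathbb{L}'$ contains $(p')^\dagger\cdot p'$), so one cannot pass to $\eta_2(q)B^2\eta_2(q)$. Finally, the composition rule in Lemma~\ref{Lemma: O_* results} controls cross terms of the form $A^\dagger\cdot Q\cdot B$ given control of $A^\dagger\cdot|Q|\cdot A$ and $B^\dagger\cdot|Q|\cdot B$; it does not yield a bound on the sandwich $B^\dagger\cdot Q^2\cdot B$.

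The paper closes exactly this gap by splitting $\eta_2=\tilde{\eta}+\tau_r\eta_2$ with $\tilde\eta:=(1-\tau_r)\eta_2$ supported in a ball and hence genuinely bounded, so that $\tilde\eta^2(q)\leq c\,1$ holds in operator norm and the sandwich estimate $B^\dagger\tilde\eta^2(q)B\leq c\,B^2=o_*(\mathbb{Q}_N)$ is immediate; the remaining piece $\eta':=(\tau_r\eta_2)^2$ vanishes near the origin, so $\eta'(q)=O_*(e^{-\delta N})$ by Lemma~\ref{Lemma: q function}, and $B^\dagger\eta'(q)B$ is then bounded by $2\sqrt{1-\mathbb{L}'}\eta'(q)\sqrt{1-\mathbb{L}'}+2\eta_1^2(q)\eta'(q)$, each term of which is $o_*(\mathbb{Q}_N)$ by Corollary~\ref{Corollary: Lifting} and Lemma~\ref{Lemma: q function} respectively. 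This bounded-tail-plus-exponentially-small-remainder decomposition is the missing idea in your argument.
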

\begin{proof}
The case $m=0$ is trivial. The case $m=1$ is the content of Lemma \ref{Lemma: Taylor of the square root} and the case $m=2$ follows from Lemma \ref{Lemma: Taylor of the square root - order two}. Let us now verify the statement in the case $m=3$. Using the fact that $\eta_3(t)=\eta_2(t)\eta_1(t)$, we obtain
\begin{align*}
&\left(1-\mathbb{L}'\right) \sqrt{1-\mathbb{L}'}-\eta_3(q)=\left[\left(1-\mathbb{L}'\right)-\eta_2(q)\right] \sqrt{1-\mathbb{L}'}+\eta_2(q) \left[\sqrt{1-\mathbb{L}'}-\eta_1(q)\right]\\
&=-\left(f(q)^\dagger\cdot  b_{>d} + b_{>d}^\dagger\cdot  f(q)+( p')^\dagger\cdot  p'+ b_{>d}^\dagger\cdot  b_{>d}-\frac{d}{2N}\right) \sqrt{1-\mathbb{L}'}\\
&\ \ \ \ +\tau_r(q) \eta_2(q) \left[\sqrt{1-\mathbb{L}'}-\eta_1(q)\right]+\left[1-\tau_r(q)\right] \eta_2(q) \left[\sqrt{1-\mathbb{L}'}-\eta_1(q)\right],
\end{align*}
where $\tau:\mathbb{R}^d\longrightarrow \mathbb{R}$ is a function with $\tau|_{B_1(0)}=0$, $\tau|_{\mathbb{R}^d\setminus B_2(0)}=1$ and $0\leq \tau \leq 1$. Since the function $\tilde{\eta}=(1-\tau)\eta_2$ is bounded by a constant $c$, we obtain using Lemma \ref{Lemma: Taylor of the square root}
\begin{align*}
\left[\sqrt{1-\mathbb{L}'}-\eta_1(q)\right] \tilde{\eta}^2(q) \left[\sqrt{1-\mathbb{L}'}-\eta_1(q)\right]\leq c\ \left[\sqrt{1-\mathbb{L}'}-\eta_1(q)\right]^2=o_*\left(\mathbb{Q}_N\right).
\end{align*}
Note that $\eta':=\left(\tau\ \eta_2\right)^2$ is zero in a neighborhood of zero. Therefore, $\eta'(q)=O_*\left(e^{-\delta N}\right)$ and $\eta^2_1(q) \eta'(q)=O_*\left(e^{-\delta N}\right)$ for some $\delta>0$ by Lemma \ref{Lemma: q function}. By Corollary \ref{Corollary: Lifting}, we obtain in particular that $\sqrt{1-\mathbb{L}'} \eta'(q) \sqrt{1-\mathbb{L}'}=o_*\left(\mathbb{Q}_N\right)$. Hence we have the estimate
\begin{align*}
\left[\! \sqrt{1\!-\!\mathbb{L}'}\!-\!\eta_1(q)\!\right] \eta'(q)\left[\sqrt{1\!-\!\mathbb{L}'}\!-\!\eta_1(q)\!\right]\!\leq\! 2\sqrt{1\!-\!\mathbb{L}'} \eta'(q)\sqrt{1\!-\!\mathbb{L}'}\!+\!2 \eta^2_1(q) \eta'(q)\!=\!o_*\left(\mathbb{Q}_N\right).
\end{align*}
By Lemma \ref{Lemma: b}, Lemma \ref{Lemma: p'} and Corollary \ref{Corollary: Lifting}, we know that the operators
\begin{align*}
\sqrt{1-\mathbb{L}'} \left( b_{>d}^\dagger\cdot  f(q)+ f(q)^\dagger\cdot  b_{>d}\right)^2 \sqrt{1-\mathbb{L}'}
\end{align*}
$\sqrt{1-\mathbb{L}'} \left(( p')^\dagger\cdot  p'\right)^2  \sqrt{1-\mathbb{L}'}$ as well as $\sqrt{1-\mathbb{L}'} \left( b_{>d}^\dagger\cdot  b_{>d}\right)^2  \sqrt{1-\mathbb{L}'}$ are of order $o_*\left(\mathbb{Q}_N\right)$ as well, and therefore
\begin{align*}
\sqrt{1-\mathbb{L}'} \left( b_{>d}^\dagger\cdot  f(q)+ f(q)^\dagger\cdot  b_{>d}+( p')^\dagger\cdot  p'+ b_{>d}^\dagger\cdot  b_{>d}-\frac{d}{2N}\right)^2 \sqrt{1-\mathbb{L}'}=o_*\left(\mathbb{Q}_N\right).
\end{align*}
We conclude that $\left(1-\mathbb{L}'\right) \sqrt{1-\mathbb{L}'}-\eta_3(q)=T_1+T_2+T_3$ is a sum of terms with $T_i^\dagger T_i=o_*\left(\mathbb{Q}_N\right)$, and therefore $\left[\left(1-\mathbb{L}'\right) \sqrt{1-\mathbb{L}'}-\eta_3(q)\right]^2=o_*\left(\mathbb{Q}_N\right)$.
\end{proof}
%\end{appendices}

\begin{center}
\textsc{Acknowledgments}
\end{center}

We are grateful to Rupert Frank for helpful discussions at an early stage of this project.
Funding from the European Union’s Horizon 2020 research and innovation programme
under the ERC grant agreement No 694227 is  acknowledged.

\bibliographystyle{plain}
%\bibliography{biblBoseGas}

\end{document}